\theoremstyle{plain}
\theoremstyle{plain}
\newtheorem{prop}{\protect\propositionname}
\theoremstyle{plain}
\newtheorem{lem}{\protect\lemmaname}
\theoremstyle{plain}
\newtheorem{thm}{\protect\theoremname}
\theoremstyle{plain}
\newtheorem{cor}{\protect\corollaryname}  
\theoremstyle{definition}
\theoremstyle{definition}
\newtheorem{assump}{\protect\assumptionname}
\theoremstyle{definition}
\newtheorem{rem}{\protect\remarkname}
\providecommand{\claimname}{Claim}
\providecommand{\lemmaname}{Lemma}
\providecommand{\propositionname}{Proposition}
\providecommand{\theoremname}{Theorem}
\providecommand{\corollaryname}{Corollary} 
\providecommand{\definitionname}{Definition}
\providecommand{\assumptionname}{Assumption}
\providecommand{\remarkname}{Remark}
\newcommand{\overbar}[1]{\mkern 1.25mu\overline{\mkern-1.25mu#1\mkern-0.25mu}\mkern 0.25mu}
\newcommand{\openone}{\mathds{1}}
\newcommand{\sbar}{\overbar{s}}
\newcommand{\Sdif}{S_{\mathrm{dif}}}
\newcommand{\Seq}{S_{\mathrm{eq}}}
\newcommand{\sdif}{s_{\mathrm{dif}}}
\newcommand{\seq}{s_{\mathrm{eq}}}
\newcommand{\sdifhat}{\hat{s}_{\mathrm{dif}}}
\newcommand{\Xvdif}{\mathbf{X}_{\overbar{s} \backslash s}}
\newcommand{\Xveq}{\mathbf{X}_{\overbar{s} \cap s}}
\newcommand{\xvdif}{\mathbf{x}_{\overbar{s} \backslash s}}
\newcommand{\xveq}{\mathbf{x}_{\overbar{s} \cap s}}
\newcommand{\dmax}{d_{\mathrm{max}}}
\newcommand{\Zhat}{\hat{Z}}
\newcommand{\Zhatmax}{\hat{Z}_{\mathrm{max}}}
\newcommand{\bmin}{b_{\mathrm{min}}}
\newcommand{\Bmin}{\beta_{\mathrm{min}}}
\newcommand{\bmax}{b_{\mathrm{max}}}
\newcommand{\sigbeta}{\sigma_{\beta}}
\newcommand{\siglv}{\sigma_{\sdif}}
\newcommand{\Tr}{\mathrm{Tr}}
\newcommand{\sign}{\mathrm{sign}}
\newcommand{\SNRdB}{\mathrm{SNR}_{\mathrm{dB}}}
\newcommand{\Weq}{W_{\mathrm{eq}}}
\newcommand{\Wdif}{W_{\mathrm{dif}}}
\newcommand{\weq}{w_{\mathrm{eq}}}
\newcommand{\wdif}{w_{\mathrm{dif}}}
\newcommand{\feq}{f_{\mathrm{eq}}}
\newcommand{\fdif}{f_{\mathrm{dif}}}
\newcommand{\btil}{\widetilde{b}}
\newcommand{\betatil}{\widetilde{\beta}}
\newcommand{\betatileq}{\widetilde{\beta}_{\mathrm{eq}}}
\newcommand{\betatildif}{\widetilde{\beta}_{\mathrm{dif}}}
\newcommand{\btileq}{\widetilde{b}_{\mathrm{eq}}}
\newcommand{\btildif}{\widetilde{b}_{\mathrm{dif}}}
\newcommand{\Sdifhat}{\hat{S}_{\mathrm{dif}}}
\newcommand{\Bcfano}{\mathcal{B}'_{\mathrm{Fano}}}
\newcommand{\Sceq}{\mathcal{S}_{\mathrm{eq}}}
\newcommand{\Scdif}{\mathcal{S}_{\mathrm{dif}}}
\newcommand{\pe}{P_{\mathrm{e}}}
\newcommand{\xv}{\mathbf{x}}
\newcommand{\Xv}{\mathbf{X}}
\newcommand{\yv}{\mathbf{y}}
\newcommand{\Yv}{\mathbf{Y}}
\newcommand{\Zv}{\mathbf{Z}}
\newcommand{\Ac}{\mathcal{A}}
\newcommand{\Bc}{\mathcal{B}}
\newcommand{\Dc}{\mathcal{D}}
\newcommand{\Ec}{\mathcal{E}}
\newcommand{\Sc}{\mathcal{S}}
\newcommand{\Tc}{\mathcal{T}}
\newcommand{\Yc}{\mathcal{Y}}
\newcommand{\EE}{\mathbb{E}}
\newcommand{\PP}{\mathbb{P}}
\newcommand{\RR}{\mathbb{R}}
\newcommand{\ZZ}{\mathbb{Z}}
\newcommand{\var}{\mathrm{Var}}
\newcommand{\Av}{\mathbf{A}}
\newcommand{\Bv}{\mathbf{B}}
\newcommand{\Lc}{\mathcal{L}}
\newcommand{\Iv}{\mathbf{I}}
\newcommand{\Mv}{\mathbf{M}}
\newcommand{\bzero}{\mathbf{0}}
\providecommand{\tabularnewline}{\\}
\providecommand{\algorithmname}{Algorithm}
\newcommand{\manuallabel}[2]{\def\@currentlabel{#2}\label{#1}}
\begin{document} 

\title{Limits on Support Recovery with Probabilistic Models: An Information-Theoretic Framework}
\author{Jonathan Scarlett and Volkan Cevher}
\maketitle

\begin{abstract}
The support recovery problem consists of determining a sparse subset of a set of variables that is relevant in generating a set of observations, and arises in a diverse range of settings such as compressive sensing, and subset selection in regression, and group testing.  In this paper, we take a unified approach to support recovery problems, considering general probabilistic models relating a sparse data vector to an observation vector.  We study the information-theoretic limits of both exact and partial support recovery, taking a novel approach motivated by thresholding techniques in channel coding.  We provide general achievability and converse bounds characterizing the trade-off between the error probability and number of measurements, and we specialize these to the linear, 1-bit, and group testing models.  In several cases, our bounds not only provide matching scaling laws in the necessary and sufficient number of measurements, but also sharp thresholds with matching constant factors.  Our approach has several advantages over previous approaches: For the achievability part, we obtain sharp thresholds under broader scalings of the sparsity level and other parameters (e.g., signal-to-noise ratio) compared to several previous works, and for the converse part, we not only provide conditions under which the error probability fails to vanish, but also conditions under which it tends to one.
\end{abstract}
\begin{IEEEkeywords}
    Support recovery, sparsity pattern recovery, information-theoretic limits, compressive sensing, non-linear models, 1-bit compressive sensing, group testing, phase transitions, strong converse
\end{IEEEkeywords}

\long\def\symbolfootnote[#1]#2{\begingroup\def\thefootnote{\fnsymbol{footnote}}\footnote[#1]{#2}\endgroup}

\symbolfootnote[0]{ The authors are with the Laboratory for Information and Inference Systems (LIONS), \'Ecole Polytechnique F\'ed\'erale de Lausanne (EPFL) (e-mail: \{jonathan.scarlett,volkan.cevher\}@epfl.ch).

This work was supported in part by the European Commission under Grant ERC Future Proof, SNF 200021-146750 and SNF CRSII2-147633, and EPFL Fellows Horizon2020 grant 665667.

This work was presented in part at the IEEE International Symposium on Information Theory (2015), and at the ACM-SIAM Symposium on Discrete Algorithms (2016).}
\vspace*{-0.5cm}

%
%
\section{Introduction}

The support recovery problem consists of determining a sparse subset of a set of variables that is relevant in producing a set of observations, and arises frequently in disciplines such as group testing \cite{Mal13, Ati12}, compressive sensing (CS) \cite{Fou13}, and subset selection in regression \cite{Mil02}.  The observation models can vary significantly among these disciplines, and it is of considerable interest to consider these in a unified fashion.  This can be done via probabilistic models relating the sparse vector $\beta \in \RR^p$ to a single observation $Y \in \RR$ in the following manner:
\begin{equation}
    (Y|S=s,X=x,\beta=b) \sim P_{Y|X_S\beta_S}(\,\cdot\,|x_s,b_s), \label{eq:intr_model}
\end{equation}
where $S \subseteq \{1,\dotsc,p\}$ represents the set of relevant variables, $X \in \RR^{p}$ is a measurement vector, $X_S$ (respectively, $\beta_S$) is the subvector of $X$ (respectively, $\beta_S$) containing the entries indexed by $S$, and $P_{Y|X_S\beta_S}$ is a given probability distribution.  Given a collection of measurements $\Yv\in\RR^{n}$ and the corresponding measurement matrix $\Xv \in \RR^{n \times p}$ (with each row containing a single measurement vector), the goal is to find the conditions under which the support $S$ can be recovered either \emph{perfectly} or \emph{partially}.  In this paper, we study the information-theoretic limits for this problem, characterizing the number of measurements $n$ required in terms of the sparsity level $k$ and ambient dimension $p$ regardless of the computational complexity.  Such studies are useful for assessing the performance of practical techniques and determining to what extent improvements are possible.

Before proceeding, we state some important examples of models that are captured by \eqref{eq:intr_model}.

\subsubsection*{Linear Model}

The linear model \cite{Wai09,Wai09a} is ubiquitous in signal processing, statistics, and machine learning, and in itself covers an extensive range of applications.  Each observation takes the form
\begin{equation}
    Y  = \langle X, \beta \rangle + Z, \label{eq:intr_linear}
\end{equation}
where $\langle\cdot,\cdot\rangle$ denotes the inner product, and $Z$ is additive noise.  An important quantity in this setting is the signal-to-noise ratio (SNR) $\frac{\EE[\langle X, \beta \rangle^2]}{\EE[Z^2]}$, and in the context of support recovery, the smallest non-zero absolute value $\beta_{\mathrm{min}}$ in $\beta$ has also been shown to play a key role \cite{Wai09,Wan10,Rad11}.

\subsubsection*{Quantized Linear Models}

Quantized variants of the linear model are of significant interest in applications with hardware limitations. An example that we will consider in this paper is the 1-bit model \cite{Bou08}, given by
\begin{equation}
    Y  = \sign\big(\langle X, \beta \rangle + Z\big),
\end{equation}
where the $\sign$ function equals $1$ if its argument is non-negative, and $-1$ if it is negative. 

 \subsubsection*{Group Testing}
 
 Studies of group testing problems began several decades ago \cite{Dor43,Mal78}, and have recently regained significant attention \cite{Ati12,Cha11}, with applications including medical testing, database systems, computational biology, and fault detection.  The goal is to determine a small number of ``defective'' items within a larger subset of items. The items involved in a single test are indicated by $X \in \{0,1\}^p$, and each observation takes the form 
 \begin{equation}
     Y = \openone\bigg\{ \bigcup_{i \in S} \{X_i = 1\} \bigg\} \oplus Z,
 \end{equation}
 with $S$ representing the defective items, $Y$ indicating whether the test contains at least one defective item, and $Z$ representing possible noise (here $\oplus$ denotes modulo-2 addition).  In this setting, one can think of $\beta$ as deterministically having entries equaling one on $S$, and zero on $S^c$. 

The above examples highlight that \eqref{eq:intr_model} captures both discrete and continuous models.  Beyond these examples, several other non-linear models are captured by \eqref{eq:intr_model}, including the logistic, Poisson, and gamma models.  

\subsection{Previous Work and Contributions} \label{sec:PREVIOUS_WORK}

Numerous previous works on the information-theoretic limits of support recovery have focused on the linear model \cite{Wai09,Wan10,Rad11,Fle09,Aks13,Ree12,Ree13,Jin11,Aer10,Tan10,Akc10,Tul13,Sca13e}.  The main aim of these works, and of that the present paper, is to develop necessary and sufficient conditions for which an ``error probability'' vanishes as $p \to \infty$.  However, there are several distinctions that can be made, including:
\begin{itemize}
    \item Random measurement matrices \cite{Wai09,Wan10,Fle09,Rad11} vs.~arbitrary measurement matrices \cite{Aer10,Ree13,Tan10};
    \item Exact support recovery \cite{Wai09,Wan10,Fle09,Rad11} vs.~partial support recovery \cite{Ree12,Ree13,Akc10};
    \item Minimax characterizations for $\beta$ in a given class \cite{Wai09,Wan10,Fle09,Rad11} vs.~average performance bounds for random $\beta$ \cite{Aks13,Tul13,Ree13}.
\end{itemize}
Perhaps the most widely-studied combination of these is that of minimax characterizations for exact support recovery with random measurement matrices.  In this setting, within the class of vectors $\beta$ whose non-zero entries have an absolute value exceeding some threshold $\Bmin$, necessary and sufficient conditions on $n$ are available with matching scaling laws \cite{Wan10,Rad11}.  See also \cite{Wu10,Wu12} for information-theoretic studies of the linear model with a mean square error criterion.

Compared to the linear model, research on the information-theoretic limits of support recovery for non-linear models is relatively scarce. The system model that we have adopted follows those of a line of works seeking mutual information characterizations of sparsity problems \cite{Mal78,Ati12,Aks13,Tan14}, though we make use of significantly different analysis techniques.  Similarly to these works, we focus on random measurement matrices and random non-zero entries of $\beta$.  Other works considering non-linear models have used vastly different approaches such as regularized $M$-estimators \cite{Lee15,Li14} and approximate message passing \cite{Tan14d}.

{\bf High-level Contributions:}  We consider an approach using thresholding techniques akin to those used in information-spectrum methods \cite{Han03}, thus providing a new alternative to previous approaches based on maximum-likelihood decoding and Fano's inequality.  Our key contributions and the advantages of our framework are as follows:
\begin{enumerate}
    \item Considering both exact and partial support recovery, we provide non-asymptotic performance bounds applying to general probabilistic models, along with a procedure for applying them to specific models (\emph{cf.}~Section \ref{sec:APPLYING}).
    \item We explicitly provide the constant factors in our bounds, allowing for more precise characterizations of the performance compared to works focusing on scaling laws (e.g., see \cite{Wai09,Akc10,Rad11}).  In several cases, the resulting necessary and sufficient conditions on the number of measurements coincide up to a multiplicative $1+o(1)$ term, thus providing \emph{exact} asymptotic thresholds (sometimes referred to as \emph{phase transitions} \cite{Ame14,Wu12}) on the number of measurements.
    \item As evidenced in our examples outlined below, our framework often leads to such exact or near-exact thresholds for significantly more general scalings of $k$, SNR, etc.~compared to previous works.
    \item The majority of previous works have developed converse results using Fano's inequality, leading to necessary conditions for $\PP[\mathrm{error}]\to0$.  In contrast, our converse results provide necessary conditions for $\PP[\mathrm{error}]\not\to1$.  The distinction between these two conditions is important from a practical perspective: One may not expect a condition such as $\PP[\mathrm{error}] \ge 10^{-10}$ to be significant, whereas the condition $\PP[\mathrm{error}]\to1$ is inarguably so.
\end{enumerate}

\addtolength{\abovedisplayskip}{-2ex}
\addtolength{\belowdisplayskip}{-2ex}

    \begin{table*}
        \begin{centering}
        \begin{tabular}{|>{\centering}m{1.5cm}|>{\centering}m{0.8cm}|>{\centering}m{2.3cm}|>{\centering}m{1.6cm}|>{\centering}m{4cm}|>{\centering}m{3.8cm}|}
        \hline 
        Model  & Result  & Parameters  & Distributions & Sufficient $n$ for $\PP[\mathrm{error}] \to 0$ & Necessary $n$ for $\PP[\mathrm{error}] \not\to 1$\tabularnewline
        \hline 
           
        \multirow{2}{1.5cm}{\vspace*{-0.7cm}\centering Linear}  & Cor. \ref{cor:linear_genK}  & $k=o(p)$  & Discrete $\beta_{S}$
        Gaussian $\mathbf{X}$ & 
        \[
        \max_{\ell=1,\dotsc k}\frac{(1+\Delta_{\ref*{cor:linear_genK}})\log{p-k \choose \ell}}{f_{\ref*{cor:linear_genK}}(\ell)}
        \]
        
        {\medskip\scriptsize $(\Delta_{\ref*{cor:linear_genK}}\to0$ for various scalings)} & 
        \[
        \max_{\ell=1,\dotsc k}\frac{\log{p-k+\ell \choose \ell}}{f_{\ref*{cor:linear_genK}}(\ell)}
        \]
        \tabularnewline
        \cline{2-6} 
         & Cor. \ref{cor:linear_partial}  & $k\to\infty,k=o(p)$
        
        Partial recovery of proportion $1-\alpha^*$  & Gaussian $\beta_{S}$
        
        Gaussian $\mathbf{X}$ & 
        \[
        \max_{\alpha\in[\alpha^*,1]}\frac{\alpha k\log\frac{p}{k}}{f_{\ref*{cor:linear_partial}}(\alpha)}
        \]
         & 
        \[
        \max_{\alpha\in[\alpha^*,1]}\frac{(\alpha-\alpha^*)k\log\frac{p}{k}}{f_{\ref*{cor:linear_partial}}(\alpha)}
        \]
        \tabularnewline
        \hline 
         & Cor. \ref{cor:1bit_fixedK}  & $k=\Theta(1)$
        
        Low SNR  & Discrete $\beta_{S}$
        
        Gaussian $\mathbf{X}$ & 
        \[
        \max_{\ell=1,\dotsc k}\frac{\ell\log p}{f_{\ref*{cor:1bit_fixedK}}(\ell)}
        \]

        {\scriptsize \medskip (within a factor $\frac{\pi}{2}$ of linear model)} & 
        \[
        \max_{\ell=1,\dotsc k}\frac{\ell\log p}{f_{\ref*{cor:1bit_fixedK}}(\ell)}
        \]
        
        {\scriptsize \medskip (within a factor $\frac{\pi}{2}$ of linear model)}\tabularnewline
        \cline{2-6} 
        1-bit & Cor. \ref{cor:1bit_genK}  & $k=\Theta(p)$
        
        High SNR  & Fixed $\beta_{S}$
        
        Gaussian $\mathbf{X}$ & - & \smallskip $\Omega(p\sqrt{\log p})$
        
        {\scriptsize(compared to $\Theta(p)$ for linear model)}\tabularnewline
        \cline{2-6} 
         & Cor. \ref{cor:1bit_partial}  & $k \to \infty, k=o(p)$
        
        Partial recovery of proportion $1-\alpha^*$ & Gaussian $\beta_{S}$
        
        Gaussian $\mathbf{X}$ & 
        \[
        \max_{\alpha\in[\alpha^*,1]}\frac{\alpha k\log\frac{p}{k}}{f_{\ref*{cor:1bit_partial}}(\alpha)}
        \]
         & 
        \[
        \max_{\alpha\in[\alpha^*,1]}\frac{(\alpha-\alpha^*)k\log\frac{p}{k}}{f_{\ref*{cor:1bit_partial}}(\alpha)}
        \]
        \tabularnewline
        \hline 
        
        \multirow{3}{1.5cm}{ {\vspace*{-1.3cm}\centering Group~testing}}  & Cor. \ref{cor:gt}  & $k=\Theta(p^{\theta})$  & Fixed $\beta_{S}$
        
        Bernoulli $\Xv$ & 
        \[
        \frac{k\log\frac{p}{k}}{f_{\ref*{cor:gt}}(\theta)}
        \]
        
        {\scriptsize ($f_{\ref*{cor:gt}}(\theta) = \log2$ for $\theta \le \frac{1}{3}$)}
        & 
        \[
        \frac{k\log\frac{p}{k}}{\log 2}
        \]
        \tabularnewline
        \cline{2-6}  
        & Cor. \ref{cor:gt_noisy}  & $k=\Theta(p^{\theta})$ 
        
        Noisy (crossover probability $\rho$)  & Fixed $\beta_{S}$
        
        Bernoulli $\mathbf{X}$ & 
        \[
        \frac{k\log\frac{p}{k}}{f_{\ref*{cor:gt_noisy}}(\theta)}
        \]
        
        {\scriptsize ($f_{\ref*{cor:gt_noisy}}(\theta) = \log2 - H_2(\rho)$ for  small $\theta$)}
        & 
        \[
        \frac{k\log\frac{p}{k}}{\log 2 - H_2(\rho)}
        \]
        \tabularnewline
        \cline{2-6} 
        & Cor. \ref{cor:gt_partial}  & $k\to\infty,k=o(p)$  
        
        Partial recovery of proportion $1-\alpha^*$  & Fixed $\beta_{S}$
        
        Bernoulli $\mathbf{X}$ & 
        \[
        \frac{k\log\frac{p}{k}}{\log 2 - H_2(\rho)}
        \]
        & 
        \[
        \frac{(1-\alpha^{*})\big(k\log\frac{p}{k}\big)}{\log 2 - H_2(\rho)}
        \]
        
        \tabularnewline
        \hline     
        
        General discrete observations  & Cor. \ref{cor:gen_finite}  & Arbitrary  & Arbitrary  & - & 
        \[
        \max_{\ell=1,\dotsc k}\frac{\log{p-k+\ell \choose \ell}}{f_{\ref*{cor:gen_finite}}(\ell)+\Delta_{\ref*{cor:gen_finite}}}
        \]
        \tabularnewline
        \hline 
        \end{tabular}
        \par
        \end{centering}
        
        \caption{Overview of main results for exact or partial support recovery under various observation models.  In the necessary and sufficient number of measurements, asymptotically negligible terms have been omitted.  All quantities are defined precisely in Section \ref{sec:EXAMPLES}. } \label{tbl:overview}
    \end{table*}

\addtolength{\abovedisplayskip}{2ex}
\addtolength{\belowdisplayskip}{2ex}

{\bf Contributions for Specific Models:} An overview of our bounds for specific models is given in Table \ref{tbl:overview}, where we state the derived bounds with the asymptotically negligible terms omitted.  All of the models and their parameters are defined precisely in Section \ref{sec:EXAMPLES}; in particular, the functions $f_1,\dotsc,f_{\ref*{cor:gen_finite}}$ and the remainder terms $(\Delta_{\ref*{cor:linear_genK}},\Delta_{\ref*{cor:gen_finite}})$ are given explicitly, and are easy to evaluate.  We proceed by discussing these contributions in more detail, and comparing them to various existing results in the literature:
\begin{enumerate}
    \item \emph{(Linear model)} In the case of exact recovery, we recover the exact thresholds on the required number of measurements given by Jin \emph{et al.} \cite{Jin11}, as well as handling a broader range of scalings of $\beta_{\mathrm{min}} := \min\{|\beta_i| \,:\, \beta_i \ne 0\}$ (see Section \ref{sec:GAUSSIAN_DISC} for details) and strengthening the converse by considering the more stringent condition $\PP[\mathrm{error}] \to 1$.  Our results for partial recovery provide near-matching necessary and sufficient conditions under scalings with $k=o(p)$, thus complementing the extensive study of the scaling $k=\Theta(p)$ by Reeves and Gastpar \cite{Ree12,Ree13}.
    \item \emph{(1-bit model)}  We provide two surprising observations regarding the 1-bit model: Corollary \ref{cor:1bit_fixedK} provides a low-SNR setting where the quantization only increases the asymptotic number of measurements by a factor of $\frac{\pi}{2}$, whereas Corollary \ref{cor:1bit_genK} provides a high-SNR setting where the scaling law is strictly worse than the linear model.  Similar behavior will be observed for partial recovery (Corollaries \ref{cor:linear_partial} and \ref{cor:1bit_partial}) by numerically comparing the bounds for various SNR values.
    \item \emph{(Group testing)} Asymptotic thresholds for group testing with $k=\Theta(1)$ were given previously by Malyutov \cite{Mal78} and Atia and Saligrama \cite{Ati12}.  However, for the case that $k\to\infty$, the sufficient conditions of \cite{Ati12} that introduced additional logarithmic factors.  In contrast, we obtain matching $\Theta\big(k\log\frac{p}{k}\big)$ scaling laws for any sublinear scaling of the form $k=O(p^\theta)$ ($\theta\in(0,1)$).  Moreover, for sufficiently small $\theta$ we obtain exact thresholds.  In particular, for the noiseless setting we show that $n \approx k\log_2\frac{p}{k}$ measurements are both necessary and sufficient for $\theta\le\frac{1}{3}$.  This is in fact the same threshold as that for adaptive group testing \cite{Bal13}, thus proving that non-adaptive Bernoulli measurement matrices are \emph{asymptotically optimal} even when adaptivity is allowed; this was previously known only in the limit as $\theta\to0$ \cite{Ald14}.  For the noisy case, we prove an analogous claim for sufficiently small $\theta$.  A shortened and simplified version of this paper focusing exclusively on group testing can be found in \cite{Sca15b}.
    \item \emph{(General discrete observations)} Our converse for the case of general discrete observations (Corollary \ref{cor:gen_finite}) recovers  that of Tan and Atia \cite{Tan14} for the case that $\beta_S$ is fixed, strengthens it due to a smaller remainder term $\Delta_{\ref*{cor:gen_finite}}$, and provides a generalization to the case that $\beta_S$ is random.
\end{enumerate}

\subsection{Structure of the Paper}

In Section \ref{sec:SETUP}, we introduce our system model. In Section \ref{sec:GENERAL}, we present our main non-asymptotic achievability and converse results for general observation models, and the procedure for applying them to specific problems.  Several applications of our results to specific models are presented in Section \ref{sec:EXAMPLES}.  The proofs of the general bounds are given in Section \ref{sec:PROOFS}, and conclusions are drawn in Section \ref{sec:CONCLUSION}.

\subsection{Notation} \label{sec:NOTATION}

We use upper-case letters for random variables, and lower-case variables for their realizations.  A non-bold character may be a scalar or a vector, whereas a bold character refers to a collection of $n$ scalars (e.g., $\Yv \in \RR^n$) or vectors (e.g., $\Xv \in \RR^{n\times p}$).  We write $\beta_{S}$ to denote the subvector of $\beta$ at the columns indexed by $S$, and $\Xv_{S}$ to denote the submatrix of $\Xv$ containing the columns indexed by $S$.  The complement with respect to $\{1,\dotsc,p\}$ is denoted by $(\cdot)^c$.

The symbol $\sim$ means ``distributed as''.  For a given joint distribution $P_{XY}$, the corresponding marginal distributions are denoted by $P_{X}$ and $P_{Y}$, and similarly for conditional marginals (e.g., $P_{Y|X}$).  We write $\PP[\cdot]$ for probabilities, $\EE[\cdot]$ for expectations, and $\var[\cdot]$ for variances.  We use usual notations for the entropy (e.g., $H(X)$) and mutual information (e.g., $I(X;Y)$), and their conditional counterparts (e.g., $H(X|Z)$, $I(X;Y|Z)$).  Note that $H$ may also denote the differential entropy for continuous random variables; the distinction will be clear from the context.  We define the binary entropy function $H_2(\rho) := -\rho\log\rho - (1-\rho)\log(1-\rho)$, and the Q-function $Q(x) := \PP[W \ge x]$ ($W \sim N(0,1)$).

We make use of the standard asymptotic notations $O(\cdot)$, $o(\cdot)$, $\Theta(\cdot)$, $\Omega(\cdot)$ and $\omega(\cdot)$.  We define the function $[\cdot]^+ = \max\{0,\cdot\}$, and write the floor function as $\lfloor\cdot\rfloor$.  The function $\log$ has base $e$.

\section{Problem Setup} \label{sec:SETUP}

\subsection{Model and Assumptions}

Recall that $p$ denotes the ambient dimension, $k$ denotes the sparsity level, and $n$ denotes the number of measurements.  We let $\Sc$ be the set of subsets of $\{1,\dotsc,p\}$ having cardinality $k$. The key random variables in our setup are the support set $S \in \Sc$, the data vector $\beta\in\RR^p$, the measurement matrix $\Xv\in\RR^{n \times p}$, and the observation vector $\Yv \in \RR^{n}$.\footnote{Extensions to more general alphabets beyond $\RR$ are straightforward.}

The support set $S$ is assumed to be equiprobable on the $p \choose k$ subsets within $\Sc$.  Given $S$, the entries of $\beta_{S^c}$ are deterministically set to zero, and the remaining entries are generated according to some distribution $\beta_S \sim P_{\beta_S}$.  We assume that these non-zero entries follow the same distribution for all of the $p \choose k$ possible realizations of $S$, and that this distribution is permutation-invariant.

The measurement matrix $\Xv$ is assumed to have i.i.d.~values on some distribution $P_{X}$.  We write $P_{X}^{n \times p}$, to denote the corresponding i.i.d.~distributions for matrices, and we write $P_{X}^k$ as a shorthand for $P_{X}^{k \times 1}$.  Given $S$, $\Xv$, and $\beta$, each entry of the observation vector $\Yv$ is generated in a conditionally independent manner, with the $i$-th entry $Y^{(i)}$ distributed according to
\begin{equation}
    (Y^{(i)}|S=s,X^{(i)}=x^{(i)},\beta=b) \sim P_{Y|X_S\beta_S}(\,\cdot\,|x^{(i)}_s,b_s), \label{eq:single_output}
\end{equation}
for some conditional distribution $P_{Y|X_{S}\beta_S}$.  We again assume symmetry with respect to $S$, namely, that $P_{Y|X_{S}\beta_S}$ does not depend on the specific realization, and that the distribution is invariant when the columns of $X_S$ and the entries of $\beta_S$ undergo a common permutation.

Given $\Xv$ and $\Yv$, a decoder forms an estimate $\hat{S}$ of $S$.  Similarly to previous works studying information-theoretic limits on support recovery, we assume that the decoder knows the system model.  We consider two related performance measures.  In the case of exact support recovery, the error probability is given by 
\begin{equation}
     \pe := \PP[\hat{S} \ne S], \label{eq:pe}
\end{equation}
and is taken with respect to the realizations of $S$, $\beta$, $\Xv$, and $\Yv$; the decoder is assumed to be deterministic.  We also consider a less stringent performance criterion requiring that only $k - \dmax$ entries of $S$ are successfully recovered, for some $\dmax \in \{1,\dotsc,k-1\}$.  Following \cite{Ree12,Ree13}, the error probability is given by
\begin{equation}
    \pe(\dmax) := \PP\big[|S \backslash \hat{S}| > \dmax \cup |\hat{S} \backslash S| > \dmax\big]. \label{eq:pe_partial}
\end{equation}
Note that if both $S$ and $\hat{S}$ have cardinality $k$ with probability one, then the two events in the union are identical, and hence either of the two can be removed.

For clarity, we formally state our main assumptions as follows:
{ \setenumerate[1]{label=[A\arabic*]}
\begin{enumerate}
    \item The support set $S$ is uniform on the ${p \choose k}$ subsets of $\{1,\dotsc,p\}$ of size $k$, and the measurement matrix $\Xv$ is i.i.d.~on some distribution $P_X$.
    \item The non-zero entries $\beta_S$ are distributed according to $P_{\beta_S}$, and this distribution is permutation-invariant and the same for all realizations of $S$.
    \item The observation vector $\Yv$ is conditionally i.i.d.~according to $P_{Y|X_S\beta_S}$, and this distribution is the same for all realizations of $S$, and invariant to common permutations of the columns of $X_S$ and entries of $\beta_S$.
    \item The decoder is given $(\Xv,\Yv)$, and also knows the system model including $k$, $P_{Y|X_S\beta_S}$, and $P_{\beta_S}$.
\end{enumerate} }

Our main goal is to derive necessary and sufficient conditions on $n$ and $k$ (as functions of $p$) such that $\pe$ or $\pe(\dmax)$ vanishes as $p\to\infty$.  Moreover, when considering converse results, we will not only be interested in conditions under which $\pe \not\to 0$, but also conditions under which the stronger statement $\pe\to1$ holds.  

In particular, we introduce the terminology that the \emph{strong converse} holds if there exists a sequence of values $n^{*}$, indexed by $p$, such that for all $\eta > 0$, we have $\pe\to0$ when $n \ge n^{*} (1+\eta)$, and $\pe\to1$ when $n \le n^{*} (1-\eta)$.  This is related to the notion of a \emph{phase transition} \cite{Ame14,Wu12}.  More generally, we will refer to conditions under which $\pe\to1$ as \emph{strong impossibility results}, not necessarily requiring matching achievability bounds.  That is, the strong converse conclusively gives a sharp threshold between failure and success, whereas a strong impossibility result may not.

It will prove convenient to work with random variables that are implicitly conditioned on a fixed value of $S$, say $s=\{1,\dotsc,k\}$.  We write $P_{\beta_s}$ and $P_{Y|X_{s}\beta_s}$ in place of $P_{\beta_S}$ and $P_{Y|X_{S}\beta_S}$ to emphasize that $S=s$.  Moreover, we define the corresponding joint distribution
\begin{equation}
    P_{\beta_s X_s Y}(b_s,x_s,y) := P_{\beta_s}(b_s)P_{X}^{k}(x_s)P_{Y|X_s\beta_s}(y|x_s,b_s), \label{eq:distr_sl}
\end{equation}
and its multiple-observation counterpart
\begin{equation}
    P_{\beta_s\Xv_s\Yv}(b_s,\xv_s,\yv) := P_{\beta_s}(b_s)P_X^{n \times k}(\xv_s) P^{n}_{Y|X_s\beta_s}(\yv|\xv_{s},b_s). \label{eq:vector_distr} 
\end{equation}
where $P^{n}_{Y|X_s\beta_s}(\cdot|\cdot,b_s)$ is the $n$-fold product of $P_{Y|X_s\beta_s}(\cdot|\cdot,b_s)$.  

Except where stated otherwise, the random variables $(\beta_s,X_s,Y)$ and $(\beta_s,\Xv_s,\Yv)$ appearing throughout this paper are distributed as
\begin{align}
    (\beta_s,X_s,Y) &\sim  P_{\beta_s X_s Y} \label{eq:joint_dist} \\
    (\beta_s,\Xv_s,\Yv) &\sim P_{\beta_s\Xv_s\Yv}, \label{eq:joint_dist_n}
\end{align}
with the remaining entries of the measurement matrix being distributed as $\Xv_{s^c} \sim P_X^{n \times (p-k)}$, and with $\beta_{s^c}=0$ deterministically.  That is, we condition on a fixed $S=s$ except where stated otherwise.

For notational convenience, the main parts of our analysis are presented with $P_{\beta_s}$, $P_X$ and $P_{Y|X_s\beta_s}$ representing probability mass functions (PMFs), and with the corresponding averages written using summations.  However, except where stated otherwise, our analysis is directly applicable to case that these distributions instead represent probability density functions (PDFs), with the summations replaced by integrals where necessary.  The same applies to mixed discrete-continuous distributions.

\subsection{Information-Theoretic Definitions}

Before introducing the required definitions for support recovery, it is instructive to discuss thresholding techniques in channel coding studies.  These commenced in early works such as \cite{Fei54,Sha57}, and have recently been used extensively in information-spectrum methods \cite{Ver94, Han03}.

\subsubsection{Channel Coding}

We first recall the mutual information, which is ubiquitous in information theory:
\begin{equation}
    I(X;Y) := \sum_{x,y} P_{XY}(x,y)\log\frac{P_{Y|X}(y|x)}{P_{Y}(y)}.
\end{equation}
In deriving asymptotic and non-asymptotic performance bounds, it is common to work directly with the logarithm,
\begin{equation}
    \imath(x;y) := \log\frac{P_{Y|X}(y|x)}{P_{Y}(y)},
\end{equation}
which is commonly known as the \emph{information density}.  The thresholding techniques work by manipulating probabilities of events of the form $\sum_{i=1}^{n} \imath(X_i;Y_i) \le \gamma$ and $\sum_{i=1}^{n} \imath(X_i;Y_i) > \gamma$.  For the former, one can perform a \emph{change of measure} from the conditional distribution $\Yv$ given $\Xv$ to the unconditional distribution of $\Yv$, with a multiplicative constant $e^{-\gamma}$.  For the latter, one can similarly perform a change of measure from $\Yv$ to $(\Yv|\Xv)$.  Hence, in both cases, there is a simple relation between the conditional and unconditional probabilities of the output sequences.

Using these methods, one can get upper and lower bounds on the error probability such that the dominant term is
\begin{equation}
    \PP\bigg[ \frac{1}{n}\sum_{i=1}^{n} \imath(X_i;Y_i) \le I(X;Y) + \zeta_n \bigg]
\end{equation}
for some $\zeta_n = o(1)$.  Assuming that $\{(X_i,Y_i)\}_{i=1}^n$ has some form of i.i.d.~structure, one can analyze this expression using tools from probability theory.  The law of large numbers yields the channel capacity $C=\max_{P_X}I(X;Y)$, and refined characterizations can be obtained using variations of the central limit theorem \cite{Pol10}.


Among the channel coding literature, our analysis is most similar to that of mixed channels \cite[Sec.~3.3]{Han03}, where the relation between the input and output sequences is not i.i.d., but instead conditionally i.i.d.~given another random variable.  In our setting, $\beta_s$ will play the role of this random variable.  See Figure \ref{fig:channel_diagram} for a depiction of this connection.

\begin{figure}
    \begin{centering}
        \includegraphics[width=1\columnwidth]{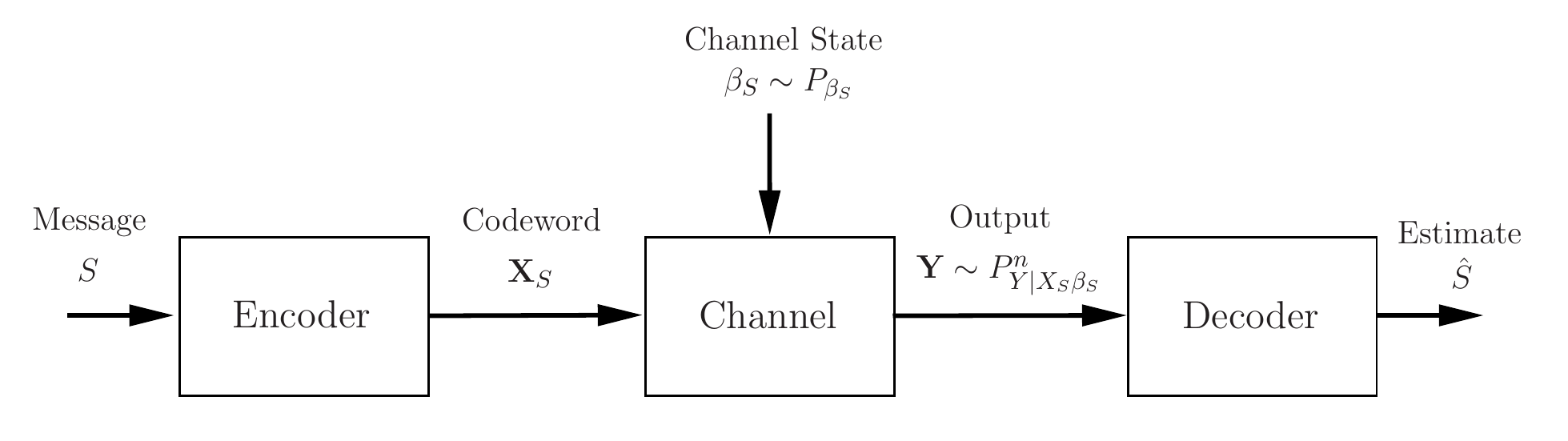}
        \par
    \end{centering}
 
    \caption{ Connection between support recovery and coding over a mixed channel. \label{fig:channel_diagram}}
\end{figure}

\subsubsection{Support Recovery}

As in \cite{Ati12,Aks13}, we will consider partitions of the support set $s \in \Sc$ into two sets $\sdif \ne \emptyset$ and $\seq$.  As will be seen in the proofs, $\seq$ will typically correspond to an overlap between $s$ and some other set $\sbar$ (i.e., $s \cap \sbar$), whereas $\sdif$ will correspond to the indices in one set but not the other (e.g., $s \backslash \sbar$).  There are $2^k - 1$ ways of performing such a partition with $\sdif \ne \emptyset$.

For fixed $s\in\Sc$ and a corresponding pair $(\sdif,\seq)$, we introduce the notation
\begin{align}
    P_{Y|\Xv_{\sdif}\Xv_{\seq}}(\yv|\xv_{\sdif},\xv_{\seq}) &:= P_{\Yv|\Xv_s}(\yv|\xv_s) \label{eq:p_split1} \\
    P_{Y|X_{\sdif}X_{\seq}\beta_s}(y|x_{\sdif},x_{\seq},b_s) &:= P_{Y|X_s\beta_s}(y|x_s,b_s), \label{eq:p_split2}
\end{align}
where $P_{\Yv|\Xv_s}$ is the marginal distribution of \eqref{eq:vector_distr}.  While the left-hand sides of \eqref{eq:p_split1}--\eqref{eq:p_split2} represent the same quantities for any such $(\sdif,\seq)$, it will still prove convenient to work with these in place of the right-hand sides.  In particular, this allows us to introduce the marginal distributions
\begin{align}
    & P_{\Yv|\Xv_{\seq}}(\yv|\xv_{\seq}) \nonumber \\
        &:= \sum_{\xv_{\sdif}} P_X^{n \times \ell}(\xv_{\sdif})P_{Y|\Xv_{\sdif}\Xv_{\seq}}(\yv|\xv_{\sdif},\xv_{\seq}) \\
    & P_{Y|X_{\seq}\beta_s}(y|x_{\seq},b_s) \nonumber \\
        &:= \sum_{x_{\sdif}} P_X^{\ell}(x_{\sdif})P_{Y|X_{\sdif}X_{\seq}\beta_s}(y|x_{\sdif},x_{\seq},b_s),
\end{align}
where $\ell := |\sdif|$.  Using the preceding definitions, we introduce two information densities.  The first contains probabilities averaged over $\beta_s$,
\begin{equation}
    \imath(\xv_{\sdif}; \yv | \xv_{\seq} ) := \log\frac{ P_{\Yv|\Xv_{\sdif}\Xv_{\seq}}(\yv|\xv_{\sdif},\xv_{\seq}) }{ P_{\Yv|\Xv_{\seq}}(\yv|\xv_{\seq}) }, \label{eq:idens_multi}
\end{equation}
whereas the second conditions on $\beta_s=b_s$:
\begin{equation}
\imath^n(\xv_{\sdif}; \yv | \xv_{\seq}, b_s ) := \sum_{i=1}^{n} \imath(x^{(i)}_{\sdif}; y^{(i)} | x^{(i)}_{\seq}, b_s), \label{eq:idens_bn}
\end{equation}
where the single-letter information density is  
\begin{equation}
	\imath(x_{\sdif}; y | x_{\seq}, b_s) := \log\frac{ P_{Y|X_{\sdif}X_{\seq}\beta_s}(y|x_{\sdif},x_{\seq},b_s) }{ P_{Y|X_{\seq}\beta_s}(y|x_{\seq},b_s) }. \label{eq:idens_b}
\end{equation}
As mentioned above, we will generally work with discrete random variables for clarity of exposition, in which case the ratio is between two PMFs.  In the case of continuous observations the ratio is instead between two PDFs, and more generally this can be replaced by the Radon-Nikodym derivative as in the channel coding setting \cite{Pol10}.

Averaging \eqref{eq:idens_b} with respect to the random variables in \eqref{eq:joint_dist} conditioned on $\beta_s = b_s$ yields a conditional mutual information, which we denote by
\begin{equation}
	 I_{\sdif,\seq}(b_s) := I(X_{\sdif};Y|X_{\seq},\beta_s=b_s). \label{eq:condMI}
\end{equation}
This quantity will play a key role in our bounds, which will typically have the form
\begin{equation}
    \pe \approx \PP\bigg[ n \le \max_{(\sdif,\seq)}\frac{{p \choose |\sdif|}}{ I_{\sdif,\seq}(\beta_s) } \bigg],
\end{equation}
as will be made more precise in the subsequent sections.

%
%
\section{General Achievability and Converse Bounds} \label{sec:GENERAL}

In this section, we provide general results holding for arbitrary models satisfying the assumptions given in Section \ref{sec:SETUP}.  Each of the results for exact recovery has a direct counterpart for partial recovery.  For clarity, we focus on the former throughout Sections \ref{sec:ACHIEVABILITY} and \ref{sec:APPLYING}, and then proceed with the latter in Section \ref{sec:PARTIAL}.

\subsection{Initial Non-Asymptotic Bounds} \label{sec:ACHIEVABILITY}

Here we provide our main non-asymptotic upper and lower bounds on the error probability.  These bounds bear a strong resemblance to analogous bounds from the channel coding literature \cite{Han03}; in each case, the dominant term involves tail probabilities of the information density given in \eqref{eq:idens_bn}.  The mean of the information density is the mutual information in \eqref{eq:condMI}, which thus arises naturally in the subsequent necessary and sufficient conditions on $n$ upon showing that the deviation from the mean is small with high probability.  The procedure for doing this given a specific model will be given in Section \ref{sec:APPLYING}.

We start with our achievability result. Here and throughout this section, we make use of the random variables defined in \eqref{eq:joint_dist_n}. 

\begin{thm} \label{thm:ach1}
    For any constants $\delta_1 > 0$ and $\gamma$, there exists a decoder such that
    \begin{multline}
    \pe \le \PP\bigg[ \bigcup_{(\sdif,\seq)\,:\,\sdif\ne\emptyset} \bigg\{ \imath^n(\Xv_{\sdif}; \Yv | \Xv_{\seq}, \beta_s ) \\ \le \log {{p-k} \choose |\sdif|} + \log\bigg(\frac{k^2}{\delta_1^2}{k \choose |\sdif|}^2\bigg) + \gamma \bigg\} \bigg] + P_0(\gamma) + 2\delta_1, \label{eq:thresh_sl}
    \end{multline}
    where
    \begin{equation}
        P_0(\gamma) := \PP\bigg[ \log \frac{P_{\Yv|\Xv_s,\beta_s}(\Yv|\Xv_s,\beta_s)}{P_{\Yv|\Xv_{s}}(\Yv|\Xv_{s})} > \gamma \bigg]. \label{eq:P_0}
   \end{equation}
\end{thm}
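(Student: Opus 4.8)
The plan is to use a random coding argument with a threshold decoder, closely mirroring the information-spectrum / mixed-channel analysis of Han \cite{Han03}, but adapted to the support-recovery combinatorial structure. First I would fix the true support $S = s$ (by symmetry [A1]--[A3] this loses nothing) and generate the columns $\Xv_j$, $j \in \{1,\dotsc,p\}$, i.i.d.\ according to $P_X^{n\times 1}$, with $\beta_s \sim P_{\beta_s}$ drawn independently. The decoder I would analyze is a thresholding rule: declare $\hat{s}$ to be a set such that, for \emph{every} partition $\hat{s} = \sdif \cup \seq$ with $\seq = \hat{s}\cap s$ playing the role of the ``common'' part, the empirical information density $\imath^n(\Xv_{\sdif};\Yv \mid \Xv_{\seq}, \beta_s)$ exceeds a threshold of the form $\log\binom{p-k}{|\sdif|} + (\text{slack})$; equivalently, a set is rejected if some such partition fails the test. (Since the decoder does not know $\beta_s$, I would in fact need a version of the decoder that either integrates out $\beta_s$ or uses the $\beta_s$-averaged density $\imath(\xv_{\sdif};\yv\mid\xv_{\seq})$ of \eqref{eq:idens_multi}; this is precisely why $P_0(\gamma)$ appears --- it accounts for the change of measure between conditioning on $\beta_s$ and averaging it out, with multiplicative cost $e^{\gamma}$.)

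The core of the argument decomposes the error event. An error occurs only if either (i) the true support $s$ itself fails its own threshold test for some partition --- call this the ``true-support-atypicality'' event --- or (ii) some incorrect set $\sbar \ne s$ passes all of its tests. For (i), conditioned on $\beta_s = b_s$, the test on the true set with $\sdif$ a subset of $s$ is exactly the event $\{\imath^n(\Xv_{\sdif};\Yv\mid\Xv_{\seq},\beta_s) \le \text{threshold}\}$, which is the first term in \eqref{eq:thresh_sl} (after taking a union bound over the $2^k - 1$ partitions and over $b_s$, and after using $P_0(\gamma)$ to pass from the conditional law to the $\beta_s$-averaged one so that the threshold statement is written in terms of $\binom{p-k}{|\sdif|}$). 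For (ii), I would union-bound over incorrect sets $\sbar$, grouped by the size $\ell = |\sbar \setminus s|$ of their ``difference'' part; there are at most $\binom{k}{\ell}\binom{p-k}{\ell}$ such sets for each $\ell$. For each fixed $\sbar$ and each realization of $(\Xv_{\seq},\Yv,\beta_s)$, the probability (over the fresh columns $\Xv_{\sdif}$ indexed by $\sbar\setminus s$) that $\imath^n(\Xv_{\sdif};\Yv\mid\Xv_{\seq},\beta_s) > \gamma'$ is at most $e^{-\gamma'}$ by the standard change-of-measure bound $\PP[\imath > \gamma'] \le e^{-\gamma'}$ applied to the likelihood ratio in \eqref{eq:idens_b}--\eqref{eq:idens_bn}. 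Choosing the threshold $\gamma' = \log\binom{p-k}{\ell} + \log\!\big(\frac{k^2}{\delta_1^2}\binom{k}{\ell}^2\big)$ makes the per-set probability at most $\frac{\delta_1^2}{k^2}\binom{k}{\ell}^{-2}\binom{p-k}{\ell}^{-1}$; multiplying by the $\binom{k}{\ell}\binom{p-k}{\ell}$ sets and summing over $\ell = 1,\dotsc,k$ gives at most $\sum_\ell \frac{\delta_1^2}{k^2}\binom{k}{\ell}^{-1} \le \delta_1^2 \le \delta_1$, and a second $\delta_1$ absorbs the event that the conditioning variables $(\Xv_{\seq},\Yv,\beta_s)$ themselves land in a bad set (or, depending on how the decoder handles ties / the partial-overlap bookkeeping, a second application of the same Markov-type bound) --- accounting for the $2\delta_1$ in \eqref{eq:thresh_sl}.

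The bookkeeping I would be most careful about is the partition structure: an incorrect $\sbar$ with $|\sbar\cap s| = k - \ell$ has its own induced partition $(\sbar\setminus s,\, \sbar\cap s)$, but this is a partition of $\sbar$, not of $s$, so the relevant information density must be read off using the symmetry conventions \eqref{eq:p_split1}--\eqref{eq:p_split2} which assert that $P_{Y|\Xv_{\sdif}\Xv_{\seq}}$ and $P_{Y|X_{\seq}\beta_s}$ depend only on the \emph{sizes} of the index sets, not their identities; this is what lets me quote a single threshold $\log\binom{p-k}{\ell}$ (the count of candidate ``difference'' parts disjoint from $s$) uniformly over $\sbar$. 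The main obstacle --- really the only nonroutine point --- is the handling of the unknown $\beta_s$: the decoder cannot condition on the true $\beta_s$, so I must either have it average over $\beta_s$ (leading naturally to $\imath(\xv_{\sdif};\yv\mid\xv_{\seq})$ of \eqref{eq:idens_multi} and to the $P_0(\gamma)$ penalty quantifying how much the true-$\beta_s$ likelihood can exceed the $\beta_s$-averaged one) or use a two-stage bound; getting the constant $\gamma$ to enter exactly as in \eqref{eq:P_0} and \eqref{eq:thresh_sl} requires tracking this change of measure with care, whereas the union bounds over sets and partitions are entirely mechanical.
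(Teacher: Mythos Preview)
Your high-level plan matches the paper's proof: a threshold decoder based on the $\beta_s$-averaged information density \eqref{eq:idens_multi}, a two-way error decomposition (true set fails / incorrect set passes), the standard change-of-measure bound $\PP[\imath > \gamma'] \le e^{-\gamma'}$ for each incorrect $\sbar$, and then a conversion from the averaged density to the conditional density $\imath^n(\cdot;\cdot\mid\cdot,\beta_s)$ to reach \eqref{eq:thresh_sl}. Where your account is imprecise is the origin of the second $\delta_1$ and, relatedly, the allocation of the slack inside the threshold.

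In the paper, the decoder's threshold is only $\gamma_\ell = \log\bigl(\tfrac{k}{\delta_1}\binom{p-k}{\ell}\binom{k}{\ell}\bigr)$; the incorrect-set union bound then gives exactly $\sum_\ell \binom{p-k}{\ell}\binom{k}{\ell}e^{-\gamma_\ell} \le \delta_1$. The remaining slack $\gamma'_\ell = \log\bigl(\tfrac{k}{\delta_1}\binom{k}{\ell}\bigr)$ is spent on a \emph{separate} change of measure: the ``true set fails'' event involves the averaged density $\log\frac{P_{\Yv|\Xv_s}}{P_{\Yv|\Xv_{\seq}}}$, and to reach the conditional density $\log\frac{P_{\Yv|\Xv_s\beta_s}}{P_{\Yv|\Xv_{\seq}\beta_s}}$ one must change \emph{both} the numerator (cost $P_0(\gamma)$) and the denominator. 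The denominator swap $P_{\Yv|\Xv_{\seq}} \to P_{\Yv|\Xv_{\seq}\beta_s}$ is handled by exactly the same Markov-type argument you used for incorrect sets, applied now over the $2^k-1$ partitions of the true $s$, and it costs $\sum_\ell \binom{k}{\ell}e^{-\gamma'_\ell} \le \delta_1$ --- this is the second $\delta_1$. Your proposal puts the full $\log\bigl(\tfrac{k^2}{\delta_1^2}\binom{k}{\ell}^2\bigr)$ into the decoder threshold and then has nothing left for the denominator change; the vague explanation that the second $\delta_1$ ``absorbs the event that the conditioning variables land in a bad set'' is not what is happening. Once you split the slack as $\gamma_\ell + \gamma'_\ell$ and recognize that the final threshold in \eqref{eq:thresh_sl} is $\gamma_\ell + \gamma'_\ell + \gamma$, the bookkeeping closes cleanly.
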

\begin{proof}
    See Section \ref{sec:PROOF_GEN_ACH}.
\end{proof}

\begin{rem} \label{rem:P_0}
    The probability in the definition of $P_0(\gamma)$ is not an i.i.d.~sum, and the techniques for ensuring that $P_0(\gamma) \to 0$ vary between different settings.  The following approaches will suffice for all of the applications in this paper:
    \begin{enumerate}
        \item In the case that $P_{\beta_S}$ is discrete, $P_{\Yv|\Xv_{s}}(\yv|\xv_{s}) = \sum_{b_s} P_{\beta_s}(b_s) P_{\Yv|\Xv_s,\beta_s}(\yv|\xv_s,b_s)$, and it follows that 
        \begin{equation}
            \gamma = \log\frac{1}{\min_{b_s}P_{\beta_s}(b_s)} \implies P_0(\gamma) = 0. \label{eq:discrete_gamma}
       \end{equation}
        Moreover, this can be strengthened by noting from the proof of Theorem \ref{thm:ach1} that $\gamma$ may depend on $\beta_s$, and choosing $\gamma(b_s) = \log\frac{1}{P_{\beta_s}(b_s)}$ accordingly.
        \item Defining     
            \begin{align}
                I_0 &:= I(\beta_s;\Yv|\Xv_{s}) \label{eq:I_0} \\
                V_0 &:= \var\bigg[\log \frac{P_{\Yv|\Xv_s,\beta_s}(\Yv|\Xv_{s},\beta_s)}{P_{\Yv|\Xv_{s}}(\Yv|\Xv_{s})}\bigg], \label{eq:V_0} 
            \end{align}
            we have for any $\delta_0 > 0$ that
            \begin{equation}
              \gamma = I_0 + \sqrt{\frac{V_0}{\delta_0}} \implies P_0(\gamma) \le \delta_0. \label{eq:gamma_cheby}
          \end{equation}
             This follows directly from Chebyshev's inequality.
        \item Defining 
            \begin{equation}
            I_{0,+} := \EE\bigg[\bigg[\log \frac{P_{\Yv|\Xv_s\beta_s}(\Yv|\Xv_{s},\beta_s)}{P_{\Yv|\Xv_{s}}(\Yv|\Xv_{s})}\bigg]^+\bigg], \label{eq:I_0+} 
          \end{equation} 
          we have for any $\delta_0 > 0$ that
            \begin{equation}
              \gamma = \frac{I_{0,+}}{\delta_0} \implies P_0(\gamma) \le \delta_0. \label{eq:gamma_markov}
          \end{equation}
             This follows directly from Markov's inequality.
    \end{enumerate}
\end{rem}

The proof of Theorem \ref{thm:ach1} is based on a decoder the searches for a unique support set $s$ such that 
\begin{equation}
\imath(\xv_{\sdif}; \yv | \xv_{\seq} ) > \gamma_{|\sdif|} \label{eq:decoder_cond}
\end{equation}
for some $\{\gamma_{\ell}\}_{\ell=1}^k$ and all $2^{k} - 1$ partitions $(\sdif,\seq)$ of $s$ with $\sdif\ne\emptyset$.  Since the numerator in \eqref{eq:idens_multi} is the likelihood of $\yv$ given $(\xv_{\sdif},\xv_{\seq})$, this decoder can be thought of a weakened version of the maximum-likelihood (ML) decoder.  Like the ML decoder, computational considerations make its implementation intractable.

The following theorem provides a general non-asymptotic converse bound.  

\begin{thm} \label{thm:conv}
    Fix $\delta_1 > 0$, and let $(\sdif(b_s),\seq(b_s))$ be an arbitrary partition of $s=\{1,\dotsc,k\}$ (with $\sdif\ne\emptyset$) depending on $b_s\in\RR^k$.  For any decoder, we have
    \begin{multline}
    \pe \ge \PP\bigg[ \imath^n(\Xv_{\sdif(\beta_s)};\Yv|\Xv_{\seq(\beta_s)},\beta_s) \\ \le \log {{p-k+|\sdif(\beta_s)|} \choose |\sdif(\beta_s)|} + \log\delta_1 \bigg] - \delta_1. \label{eq:conv_sl}
    \end{multline} 
\end{thm}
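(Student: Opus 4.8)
The plan is to establish \eqref{eq:conv_sl} as an information‑spectrum converse via a single change of measure, with the binomial coefficient $\binom{p-k+|\sdif(\beta_s)|}{|\sdif(\beta_s)|}$ playing the role of an ``effective number of messages.'' First I would reduce to a fixed support: condition on $S=s=\{1,\dots,k\}$, note that by the permutation symmetry in the model (assumptions [A1]--[A3]) the resulting bound is the same for every $S$, and observe that for the average‑error criterion any decoder has error at least that of the MAP decoder, which is permutation‑invariant --- so it suffices to prove the bound for permutation‑invariant decoders with $S$ fixed to $s$. Next, fix a realization $\beta_s=b_s$, set $\ell:=|\sdif(b_s)|$, $\seq:=\seq(b_s)$, and $M:=\binom{p-k+\ell}{\ell}$, the number of size‑$k$ sets that agree with $s$ on $\seq$ (the candidates the decoder must rule out). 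The key device is the reference measure $\widetilde P$ on $(\Xv,\Yv)$ under which $\Xv\sim P_X^{n\times p}$ exactly as under the true conditional law $P$, but $\Yv$ is generated from $P^n_{Y|X_{\seq}\beta_s}(\,\cdot\mid\Xv_{\seq},b_s)$ --- that is, $\Xv_{\sdif}$ is replaced by an independent copy in producing $\Yv$. Chaining \eqref{eq:p_split2}, \eqref{eq:idens_b} and \eqref{eq:idens_bn} shows that the density ratio is \emph{exactly} $\frac{dP}{d\widetilde P}(\Xv,\Yv)=\exp\big(\imath^n(\Xv_{\sdif};\Yv\mid\Xv_{\seq},b_s)\big)$; this is precisely why $\widetilde P$ must marginalize out the block $\Xv_{\sdif}$.

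Writing $\imath^n$ for $\imath^n(\Xv_{\sdif};\Yv\mid\Xv_{\seq},b_s)$ and $\tau:=\log M+\log\delta_1$, the core computation is
\begin{align*}
\PP_P[\hat S=s]
&=\EE_{\widetilde P}\!\big[\openone\{\hat S=s\}\,e^{\imath^n}\big]\\
&\le M\delta_1\,\PP_{\widetilde P}[\hat S=s]\;+\;\EE_P\!\big[\openone\{\hat S=s\}\,\openone\{\imath^n>\tau\}\big]\\
&\le M\delta_1\,\PP_{\widetilde P}[\hat S=s]\;+\;\PP_P[\imath^n>\tau],
\end{align*}
where the first bound uses $e^{\imath^n}\le e^{\tau}=M\delta_1$ on $\{\imath^n\le\tau\}$ together with a change of measure back to $P$ on $\{\imath^n>\tau\}$. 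The decisive step is $\PP_{\widetilde P}[\hat S=s]\le 1/M$: under $\widetilde P$ the law of $(\Xv,\Yv)$ is invariant under permutations of the $p-k+\ell$ columns outside $\seq$ (since $\Yv$ then depends on $\Xv$ only through $\Xv_{\seq}$), so for a permutation‑invariant decoder $\PP_{\widetilde P}[\hat S=\seq\cup T]$ is the same for all $\ell$‑subsets $T$ of $\{1,\dots,p\}\setminus\seq$; there are $M$ of them, they are disjoint, hence each has probability at most $1/M$. Substituting yields $\PP_P[\hat S=s]\le\delta_1+\PP_P[\imath^n>\tau]$, i.e.\ $\PP_P[\hat S\neq s]\ge\PP_P[\imath^n\le\tau]-\delta_1$; taking the expectation over $\beta_s$ turns $b_s,\sdif(b_s),\seq(b_s),\ell$ into the random quantities in \eqref{eq:conv_sl} and finishes the proof.

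The mechanical parts --- the change of measure and the tail split --- are routine (reading the density ratios as Radon--Nikodym derivatives in the continuous or mixed case). I expect the substance to lie in two places. First, the verification that marginalizing out exactly $\Xv_{\sdif}$ makes the likelihood ratio equal to $e^{\imath^n(\Xv_{\sdif};\Yv\mid\Xv_{\seq},b_s)}$ with no residual factor; this is a short calculation, but it is what pins down the form of $\widetilde P$. Second, and this is the real obstacle, the bound $\PP_{\widetilde P}[\hat S=s]\le1/M$: this is where the factor $\binom{p-k+\ell}{\ell}$ is born, and it rests on the exchangeability of the non‑$\seq$ columns under $\widetilde P$ combined with the reduction to permutation‑invariant decoders (equivalently, an averaging of the pointwise estimate over $S$). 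Making that symmetry/averaging reduction airtight, given that the map $b_s\mapsto(\sdif(b_s),\seq(b_s))$ is itself data‑dependent, is the step I would be most careful with.
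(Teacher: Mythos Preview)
Your argument is correct; the change of measure to $\widetilde P$ and the tail split are exactly the Verd\'u--Han computation the paper performs in \eqref{eq:vh_deriv1}--\eqref{eq:vh_deriv3}. The only organizational difference is how the factor $1/\binom{p-k+\ell}{\ell}$ arises. You extract it from exchangeability of the non-$\seq$ columns under $\widetilde P$ plus a reduction to permutation-invariant decoders --- precisely the step you flag as needing care. The paper instead uses a \emph{genie}: it reveals to the decoder the set $\Seq$ and the subvectors $(\betatildif,\betatileq)$, so the residual task is a standard $M$-ary hypothesis test with $\Sdif$ uniform over $M=\binom{p-k+\ell}{\ell}$ candidates; the $1/M$ then falls out of this uniform prior together with the disjointness of the decoding regions $\Dc(\sdif\mid\cdot)$, with no symmetry hypothesis on the decoder at all. (The symmetry reduction the paper does impose, Assumption~\ref{ass:genie}, is on the partition map $b_s\mapsto(\sdif,\seq)$, not on the decoder.) Your route is slightly leaner; the paper's sidesteps the decoder-symmetrization you are worried about via the one-line ``a genie can only lower $\pe$,'' and it also extends more directly to partial recovery (Section~\ref{sec:PROOF_PARTIAL}), where the regions $\Dc(\sdif\mid\cdot)$ overlap and one simply counts multiplicities.
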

\begin{proof}
    See Section \ref{sec:PROOF_GEN_CONV}.
\end{proof} 

The proof of Theorem \ref{thm:conv} is based on Verd\'u-Han type bounding techniques \cite{Ver94}.

\subsection{Techniques for Applying Theorems \ref{thm:ach1} and \ref{thm:conv}} \label{sec:APPLYING}

The bounds presented in the preceding theorems do not directly reveal the number of measurements required to achieving a vanishing error probability.  In this subsection, we present the steps that can be used to obtain such conditions.  We provide examples in Section \ref{sec:EXAMPLES}.

The idea is to use a concentration inequality to bound the first term in \eqref{eq:thresh_sl} (or \eqref{eq:conv_sl}), which is possible due to the fact that each summation $\imath^n$ is conditionally i.i.d.~given $\beta_s$.  We proceed by providing the details of these steps separately for the achievability and converse.  We start with the former.
\begin{enumerate}
    \item Observe that, conditioned on $\beta_s=b_s$, the mean of $\imath^n(\Xv_{\sdif}; \Yv | \Xv_{\seq}, \beta_s )$ is $nI_{\sdif,\seq}(b_s)$, where $I_{\sdif,\seq}(b_s)$ is defined in \eqref{eq:condMI}.
    \item Fix $\delta_2 \in (0,1)$, and suppose that for a fixed value $b_s$ of $\beta_s$, we have for all $(\sdif,\seq)$ that
    \begin{multline}
        \log{{p-k} \choose |\sdif|} + \log\bigg(\frac{k^2}{\delta_1^2}{k \choose |\sdif|}^2\bigg) + \gamma \\ \le n(1 - \delta_2) I_{\sdif,\seq}(b_s), \label{eq:B_cond_ach0}
    \end{multline}
    and 
    \begin{multline}
         \hspace*{-6ex}\PP\Big[ \imath^n(\Xv_{\sdif}; \Yv | \Xv_{\seq}, b_s) \le n(1 - \delta_2)I_{\sdif,\seq}(b_s) \,\big|\, \beta_s = b_s \Big] \\ \le \psi_{|\sdif|}(n,\delta_2) \label{eq:psi_ach}
    \end{multline}
    for some  functions $\{\psi_\ell\}_{\ell=1}^{k}$ (e.g., these may arise from Chebyshev's inequality or Bernstein's inequality \cite[Ch.~2]{Bou13}).   Combining these conditions with the union bound, we obtain
    \begin{multline}
        \hspace*{-3ex} \PP\bigg[ \bigcup_{(\sdif,\seq)\,:\,\sdif\ne\emptyset} \bigg\{ \imath^n(\Xv_{\sdif}; \Yv | \Xv_{\seq}, \beta_s ) \\ \le \log {{p-k} \choose |\sdif|} + \log\bigg(\frac{k^2}{\delta_1^2}{k \choose |\sdif|}^2\bigg) + \gamma \bigg\} \,\Big|\,\beta_s=b_s\bigg] \\ \le \sum_{\ell=1}^k{k \choose \ell} \psi_\ell(n,\delta_2). 
    \end{multline}
    \item Observe that the condition in \eqref{eq:B_cond_ach0} can be written as
    \begin{equation}
        n \ge \frac{ \log{{p-k} \choose |\sdif|} + \log\Big(\frac{k^2}{\delta_1^2}{k \choose |\sdif|}^2\Big) + \gamma }{ I_{\sdif,\seq}(b_s) (1-\delta_2) }. \label{eq:final_ach}
    \end{equation}
\end{enumerate}

\noindent We summarize the preceding findings in the following.

\begin{thm} \label{thm:simplified_ach}
    For any constants $\delta_1>0$, $\delta_2\in(0,1)$ and $\gamma$, and functions $\{\psi_{\ell}\}_{\ell=1}^{k}$ ($\psi_\ell : \ZZ\times\RR\to\RR$), define the set
    \begin{multline}
        \Bc(\delta_1,\delta_2,\gamma) := \big\{ b_s \,:\, \text{\eqref{eq:psi_ach} and \eqref{eq:final_ach} hold for all } \\ (\sdif,\seq)\text{ with }\sdif\ne\emptyset\big\}. \label{eq:setB}
    \end{multline}
    Then we have
    \begin{equation}
    \pe \le \PP\big[ \beta_s \notin \Bc(\delta_1,\delta_2,\gamma)\big] + \sum_{\ell=1}^k{k \choose \ell}\psi_\ell(n,\delta_2) + P_0(\gamma) + 2\delta_1. \label{eq:nonasymp_ach}
    \end{equation}
\end{thm}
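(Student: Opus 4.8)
The plan is to obtain Theorem~\ref{thm:simplified_ach} directly from Theorem~\ref{thm:ach1} by conditioning on $\beta_s$ and invoking the hypothesized concentration bounds. Starting from \eqref{eq:thresh_sl}, the first term is the probability of a union, over the $2^k-1$ partitions $(\sdif,\seq)$ of $s$ with $\sdif\ne\emptyset$, of the events $\big\{\imath^n(\Xv_{\sdif};\Yv|\Xv_{\seq},\beta_s)\le\log\binom{p-k}{|\sdif|}+\log\big(\tfrac{k^2}{\delta_1^2}\binom{k}{|\sdif|}^2\big)+\gamma\big\}$. I would write this probability as $\EE_{\beta_s}\big[\PP[\,\cdot\,\mid\beta_s]\big]$ and split the expectation according to whether $\beta_s\in\Bc(\delta_1,\delta_2,\gamma)$ or not. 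On the event $\{\beta_s\notin\Bc\}$, bound the conditional probability trivially by $1$; this contributes exactly $\PP[\beta_s\notin\Bc(\delta_1,\delta_2,\gamma)]$.

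For a fixed $b_s\in\Bc(\delta_1,\delta_2,\gamma)$, the defining property of $\Bc$ guarantees that \eqref{eq:final_ach} holds for every partition $(\sdif,\seq)$, and rearranging \eqref{eq:final_ach} shows it is equivalent to \eqref{eq:B_cond_ach0}, i.e.\ $\log\binom{p-k}{|\sdif|}+\log\big(\tfrac{k^2}{\delta_1^2}\binom{k}{|\sdif|}^2\big)+\gamma\le n(1-\delta_2)I_{\sdif,\seq}(b_s)$. Hence, conditioned on $\beta_s=b_s$, each event in the union is contained in $\big\{\imath^n(\Xv_{\sdif};\Yv|\Xv_{\seq},b_s)\le n(1-\delta_2)I_{\sdif,\seq}(b_s)\big\}$, whose conditional probability is at most $\psi_{|\sdif|}(n,\delta_2)$ by the other defining condition \eqref{eq:psi_ach} of $\Bc$. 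Taking a union bound over the partitions---of which there are $\binom{k}{\ell}$ with $|\sdif|=\ell$, since $\seq=s\setminus\sdif$ is then determined, for each $\ell\in\{1,\dotsc,k\}$---gives the conditional bound $\sum_{\ell=1}^{k}\binom{k}{\ell}\psi_\ell(n,\delta_2)$, which importantly does not depend on the particular $b_s\in\Bc$.

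Combining the two cases, the average over $\beta_s$ of the union probability is at most $\PP[\beta_s\notin\Bc(\delta_1,\delta_2,\gamma)]+\sum_{\ell=1}^{k}\binom{k}{\ell}\psi_\ell(n,\delta_2)$. Substituting this back into \eqref{eq:thresh_sl}, and retaining the additive terms $P_0(\gamma)$ and $2\delta_1$ supplied by Theorem~\ref{thm:ach1}, yields \eqref{eq:nonasymp_ach}.

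The argument is essentially bookkeeping, since the substantive work---the thresholding/change-of-measure decoder analysis producing \eqref{eq:thresh_sl}---is already carried out in Theorem~\ref{thm:ach1}, which I am free to assume. The only points needing care are: (i) the concentration hypothesis \eqref{eq:psi_ach} and the dimension-counting requirement \eqref{eq:final_ach} must hold \emph{simultaneously} for all $2^k-1$ partitions, which is precisely why $\Bc$ is defined as the set of $b_s$ meeting this joint requirement; and (ii) the per-$b_s$ bound $\sum_{\ell}\binom{k}{\ell}\psi_\ell(n,\delta_2)$ is uniform over $b_s\in\Bc$, so it passes through the expectation over $\beta_s$ unchanged. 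I do not anticipate a genuine obstacle here; the result is really a repackaging of Theorem~\ref{thm:ach1} into a form amenable to the model-specific computations in Section~\ref{sec:EXAMPLES}.
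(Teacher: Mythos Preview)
Your proposal is correct and follows essentially the same route as the paper: the paper presents Theorem~\ref{thm:simplified_ach} as a direct summary of the three steps preceding it, namely conditioning on $\beta_s$, noting that \eqref{eq:final_ach} is a rearrangement of \eqref{eq:B_cond_ach0}, and applying the union bound over the $\binom{k}{\ell}$ partitions with $|\sdif|=\ell$ together with the concentration hypothesis \eqref{eq:psi_ach}. Your splitting into $\beta_s\in\Bc$ versus $\beta_s\notin\Bc$ and your observation that the per-$b_s$ bound is uniform over $\Bc$ match the paper's argument exactly.
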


\begin{rem} \label{rem:delta2}
    The preceding arguments remain unchanged when $\delta_2$ also depends on $\ell = |\sdif|$.  We leave this possible dependence implicit throughout this section, since a fixed value will suffice for all but one of the models considered in Section \ref{sec:EXAMPLES}.
\end{rem}

In the case that \eqref{eq:psi_ach} holds for all $b_s$ (or more generally, within a set whose probability under $P_{\beta_s}$ tends to one) and the final three terms in \eqref{eq:nonasymp_ach} vanish, the overall upper bound approaches the probability, with respect to $P_{\beta_s}$, that \eqref{eq:final_ach} fails to hold.  In many cases, the second logarithm in the numerator therein is dominated by the first.  It should be noted that the condition that the second term in \eqref{eq:nonasymp_ach} vanishes can also impose conditions on $n$.  For most of the examples presented in Section \ref{sec:EXAMPLES}, the condition in \eqref{eq:final_ach} will be the dominant one; however, this need not always be the case, and it depends  on the concentration inequality used in \eqref{eq:psi_ach}.

The application of Theorem \ref{thm:conv} is done using similar steps, so we provide less detail.  Fix $\delta_2 > 0$, and suppose that, for a fixed value $b_s$ of $\beta_s$, the pair $(\sdif,\seq)=(\sdif(b_s),\seq(b_s))$ is such that
\begin{equation}
    \log{{p-k+|\sdif|} \choose |\sdif|} - \log \delta_1 \ge n(1 + \delta_2)I_{\sdif,\seq}(b_s), \label{eq:B_cond_conv}
\end{equation}
and
\begin{multline}
\hspace*{-2ex}\PP\Big[ \imath^n(\Xv_{\sdif}; \Yv | \Xv_{\seq}, b_s) \le n(1 + \delta_2)I_{\sdif,\seq}(b_s) \,\big|\, \beta_s = b_s \Big] \\ \ge 1 - \psi'_{|\sdif|}(n,\delta_2) \label{eq:psi_conv}
\end{multline}
for some function $\psi'_{|\sdif|}$.  Combining these conditions, we see that the first probability in \eqref{eq:conv_sl}, with an added conditioning on $\beta_s = b_s$, is lower bounded by $1 - \psi'_{|\sdif|}(n,\delta_2)$.  In the case that $\psi'_{\ell}$ is defined for multiple $\ell$ values corresponding to different values of $b_s$, we can further lower bound this by $1 - \max_{\ell}\psi'_\ell(n,\delta_2)$.
    
Next, we observe that \eqref{eq:B_cond_conv} holds if and only if
\begin{equation}
n \le \frac{\log {{p-k+|\sdif|} \choose |\sdif|} - \log\delta_1 }{ I_{\sdif,\seq}(b_s)(1 + \delta_2) }. \label{eq:final_conv}
\end{equation} 
Recalling that the partition $(\sdif,\seq)$ is an arbitrary function of $\beta_s$, we can ensure that this coincides with
\begin{equation}
n \le \max_{(\sdif,\seq) \,:\,\sdif\ne\emptyset} \frac{\log {{p-k+|\sdif|} \choose |\sdif|}  - \log\delta_1 }{ I_{\sdif,\seq}(b_s) (1 + \delta_2) }  \label{eq:final_conv_max}
\end{equation}
by choosing each pair $(\sdif,\seq)$ as a function of $b_s$ to achieve this maximum.

Finally, we note that the maximum over $\ell$ in the above-derived term  $1 - \max_{\ell}\psi'_\ell(n,\delta_2)$ may be restricted to any set $\Lc \subseteq \{1,\dotsc,k\}$ provided that $|\sdif|$ is constrained similarly in \eqref{eq:final_conv_max}; one simply chooses the partition $(\sdif(b_s),\seq(b_s))$ so that $\ell = |\sdif|$ always lies in this set.  Putting everything together, we have the following.

\begin{thm} \label{thm:simplified_conv}
	For any set $\Lc \subseteq \{1,\dotsc,k\}$, constants $\delta_1>0$ and $\delta_2 > 0$, and functions $\{\psi'_{\ell}\}_{\ell\in\Lc}$ ($\psi'_\ell : \ZZ\times\RR\to\RR$), define the set
    \begin{multline}
        \Bc'(\delta_1,\delta_2) := \big\{ b_s \,:\, \text{\eqref{eq:psi_conv} and \eqref{eq:final_conv} hold for all } \\ (\sdif,\seq)\text{ with }|\sdif| \in \Lc \big\}. \label{eq:setB'}
    \end{multline}
    Then we have
    \begin{equation}
    \pe \ge \PP\big[ \beta_s \in \Bc'(\delta_1,\delta_2) \big]\Big(1 - \max_{\ell \in \Lc}\psi'_\ell(n,\delta_2)\Big) - \delta_1. \label{eq:nonasymp_conv}
    \end{equation}
\end{thm}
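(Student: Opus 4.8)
The plan is to specialize the general converse Theorem~\ref{thm:conv} to a $b_s$-dependent partition for which the threshold appearing in \eqref{eq:conv_sl} dominates $n(1+\delta_2)I_{\sdif,\seq}(b_s)$ whenever $b_s\in\Bc'(\delta_1,\delta_2)$, and then to invoke the concentration estimate \eqref{eq:psi_conv}. Concretely, for each $b_s$ I would fix $(\sdif(b_s),\seq(b_s))$ to be a partition of $s=\{1,\dots,k\}$ with $|\sdif(b_s)|\in\Lc$ --- for definiteness, the lexicographically-first one attaining the maximum in \eqref{eq:final_conv_max} among partitions with $|\sdif|\in\Lc$ (any other such choice would serve equally well here). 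Since for each $b_s$ there are only finitely many partitions, this is a legitimate (measurable) selection, and it is well-defined as long as $\Lc\neq\emptyset$. Applying Theorem~\ref{thm:conv} with this partition gives $\pe\ge\PP[\Ec]-\delta_1$, where $\Ec$ denotes the event inside the probability in \eqref{eq:conv_sl}.

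Next I would condition on $\beta_s$ and split according to whether $b_s$ lies in $\Bc':=\Bc'(\delta_1,\delta_2)$. For $b_s\notin\Bc'$ I would use only the trivial bound $\PP[\Ec\mid\beta_s=b_s]\ge0$. For $b_s\in\Bc'$, the definition of $\Bc'$ guarantees that \eqref{eq:psi_conv} and \eqref{eq:final_conv} --- equivalently \eqref{eq:B_cond_conv} --- hold for \emph{every} partition with $|\sdif|\in\Lc$, in particular for $(\sdif(b_s),\seq(b_s))$. Condition \eqref{eq:B_cond_conv} ensures that, conditionally on $\beta_s=b_s$, the event $\Ec$ contains $\{\imath^n(\Xv_{\sdif(b_s)};\Yv\mid\Xv_{\seq(b_s)},b_s)\le n(1+\delta_2)I_{\sdif(b_s),\seq(b_s)}(b_s)\}$, and by \eqref{eq:psi_conv} this latter event has conditional probability at least $1-\psi'_{|\sdif(b_s)|}(n,\delta_2)\ge1-\max_{\ell\in\Lc}\psi'_\ell(n,\delta_2)$. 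Hence $\PP[\Ec\mid\beta_s=b_s]\ge1-\max_{\ell\in\Lc}\psi'_\ell(n,\delta_2)$ for all $b_s\in\Bc'$.

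Finally I would average over $\beta_s$: since the contribution from $\{\beta_s\notin\Bc'\}$ is nonnegative, $\PP[\Ec]\ge\PP[\beta_s\in\Bc']\big(1-\max_{\ell\in\Lc}\psi'_\ell(n,\delta_2)\big)$, and combining with $\pe\ge\PP[\Ec]-\delta_1$ yields \eqref{eq:nonasymp_conv}; this chain is valid regardless of the sign of $1-\max_{\ell\in\Lc}\psi'_\ell(n,\delta_2)$ (if that factor is negative, \eqref{eq:nonasymp_conv} holds trivially, its right-hand side being at most $-\delta_1<0\le\pe$). I do not expect a genuine analytic obstacle: all the probabilistic content is already carried by Theorem~\ref{thm:conv} and by the two hypotheses \eqref{eq:psi_conv} and \eqref{eq:B_cond_conv}, and the argument is precisely the converse counterpart of the bookkeeping behind Theorem~\ref{thm:simplified_ach}. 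The only points requiring a little care are exhibiting the partition as a measurable function of $b_s$ with $|\sdif|$ constrained to $\Lc$ (which forces $\Lc\ne\emptyset$ but is otherwise immediate, the partition ranging over a finite set), passing from ``\eqref{eq:psi_conv}/\eqref{eq:B_cond_conv} hold for all partitions in $\Lc$'' in the definition of $\Bc'$ to ``they hold for the selected partition'', and the harmless sign and degenerate ($I_{\sdif,\seq}(b_s)=0$) cases.
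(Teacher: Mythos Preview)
Your proposal is correct and follows essentially the same route as the paper: start from Theorem~\ref{thm:conv} with a $b_s$-dependent partition maximizing \eqref{eq:final_conv_max} subject to $|\sdif|\in\Lc$, condition on $\beta_s$, use \eqref{eq:B_cond_conv}--\eqref{eq:psi_conv} on $\Bc'$ to lower bound the conditional probability by $1-\max_{\ell\in\Lc}\psi'_\ell(n,\delta_2)$, and average. Your added care about measurability of the selection, the sign of $1-\max_{\ell}\psi'_\ell$, and the requirement $\Lc\ne\emptyset$ goes slightly beyond what the paper spells out, but does not change the argument.
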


If the pair $(\sdif,\seq)$ had been fixed in Theorem \ref{thm:conv}, as opposed to being a function of $\beta_s$, then we would have only obtained a weaker result with the statement ``for all $(\sdif,\seq)$'' in \eqref{eq:setB'} replaced by a fixed pair.  Assuming that the remainder terms in \eqref{eq:nonasymp_conv} are insignificant, this weaker result is of the form $\pe \gtrsim \max_{(\sdif,\seq)} \PP\big[n \le f(\sdif,\seq,\beta_s)\big]$ rather than $\pe \gtrsim \PP\big[n \le \max_{(\sdif,\seq)} f(\sdif,\seq,\beta_s)\big]$.  This can lead to significantly different bounds on the sample complexity, and the distinction is crucial in our applications in Section \ref{sec:EXAMPLES}.  As described in the proof in Section \ref{sec:PROOFS}, the key to obtaining this difference is in applying a refined version of an argument based on a genie.

The general steps in applying Theorems \ref{thm:simplified_ach} and \ref{thm:simplified_conv} to specific problems are outlined in Procedure \ref{pr:procedure}. 

\begin{algorithm}
\caption*{ \manuallabel{pr:procedure}{1} \textbf{Procedure 1:} Steps for Obtaining Necessary and Sufficient Conditions on $n$ from Theorems \ref{thm:simplified_ach} and \ref{thm:simplified_conv} }

\begin{enumerate}
\item \emph{(Identify a Typical Set)} Construct a sequence of ``typical'' sets $\Tc_{\beta} \subseteq \RR^k$ of non-zero entries, indexed by $p$, such that $\PP\big[\beta_s \in \Tc_{\beta}\big] \to 1$, thus restricting the vectors $b_s$ for which  $\imath(X_{\sdif};Y|X_{\seq},b_s)$ needs to be characterized.
\item \emph{(Bound the Information Density Tail Probabilities)} Using a concentration inequality for i.i.d.~summations (e.g., Chebyshev, Bernstein), bound the  tail probabilities in \eqref{eq:psi_ach} and \eqref{eq:psi_conv} for each $(\sdif,\seq)$ and $b_s \in \Tc_{\beta}$, with a fixed constant $\delta_2$.  Upon making these dependent on $(\sdif,\seq,b_s)$ only through $\ell := |\sdif|$, the bounds are denoted by $\psi_{\ell}(n,\delta_2)$ and $\psi'_{\ell}(n,\delta_2)$.
\item \emph{(Control the Remainder Terms)} By suitable rearrangements, find conditions on $n$ under which the terms $\sum_{\ell}{k \choose \ell}\psi_\ell(n,\delta_2)$ and $\max_{\ell \in \Lc}\psi'_\ell(n,\delta_2)$ in \eqref{eq:nonasymp_ach} and \eqref{eq:nonasymp_conv} vanish, thus ensuring that their contribution is negligible.  Similarly,  choose $\delta_1$ to vanish with $p$ so that its contribution is negligible, and for the achievability part, choose $\gamma$ such that the remainder term $P_0(\gamma)$ vanishes (\emph{cf.}~Remark \ref{rem:P_0}).
\item \emph{(Combine and Simplify)} Combine the previous steps as follows:
\begin{enumerate}
    \item Construct the set of non-zero entries $\Bc(\delta_1,\delta_2,\gamma) \subseteq \RR^{k}$ (respectively, $\Bc'(\delta_1,\delta_2)$) in \eqref{eq:setB} (respectively, \eqref{eq:setB'});
    \item Deduce from \eqref{eq:nonasymp_ach} (respectively, \eqref{eq:nonasymp_conv}) and Step 3 that $\pe \le \PP[\beta_s\notin\Bc(\delta_1,\delta_2,\gamma)] + o(1)$ (respectively,  $\pe \ge \PP[\beta_s\in\Bc'(\delta_1,\delta_2)] + o(1)$);
    \item From the properties of the typical set $\Tc_{\beta}$ in Steps 1--2, deduce that $\pe\to0$ (respectively, $\pe\to1$) when $n$ satisfies \eqref{eq:final_ach} (respectively, \eqref{eq:final_conv}) for all $b_s \in \Tc_{\beta}$;
    \item Augment this condition on $n$ with Step 3.
\end{enumerate} 
\end{enumerate}
\end{algorithm}

In our experience, the choice of $\Tc_{\beta}$ in the first step of Procedure \ref{pr:procedure} usually comes naturally given the specific model.  On the other hand, it is often less straightforward to find a sufficiently powerful concentration inequality in Step 2.  A simple choice is Chebyshev's inequality, which expresses $\psi_{\ell}$ and $\psi'_{\ell}$ in terms of $I_{\sdif,\seq}(b_s)$ (see \eqref{eq:condMI}) and the corresponding variances of the information densities.  This choice is usually effective for the converse, wheres the achievability part typically requires sharper concentration inequalities such as Bernstein's inequality, due to the combinatorial terms in \eqref{eq:nonasymp_ach}.

\subsection{Extensions to Partial Recovery} \label{sec:PARTIAL}

We now turn to the partial support recovery criterion in \eqref{eq:pe_partial}.  The changes in the analysis required to generalize Theorems \ref{thm:ach1} and \ref{thm:conv} to this setting are given in Section \ref{sec:PROOF_PARTIAL}; rather than repeating each of these, we focus our attention on the resulting analogues of Theorems \ref{thm:simplified_ach} and \ref{thm:simplified_conv}.

\begin{thm} \label{thm:simplified_ach_p}
    For any constants $\delta_1>0$, $\delta_2\in(0,1)$ and $\gamma>0$, and functions $\{\psi_{\ell}\}_{\ell=\dmax+1}^{k}$ ($\psi_\ell : \ZZ\times\RR\to\RR$), define the set
    \begin{multline}
        \Bc(\delta_1,\delta_2,\gamma) := \big\{ b_s \,:\, \text{\eqref{eq:psi_ach} and \eqref{eq:final_ach} hold for all }\\ (\sdif,\seq)\text{ with }|\sdif|\in\{\dmax+1,\dotsc,k\} \big\}. \label{eq:setB_p}
    \end{multline}
    Then we have
    \begin{multline}
    \pe(\dmax) \le \PP\big[ \beta_s \notin \Bc(\delta_1,\delta_2,\gamma)\big] \\ + \sum_{\ell=\dmax+1}^{k} {k \choose \ell}\psi_\ell(n,\delta_2) + P_0(\gamma) + 2\delta_1, \label{eq:nonasymp_ach_p}
    \end{multline}
    where $P_0$ is defined in \eqref{eq:P_0}.
\end{thm}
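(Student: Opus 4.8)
The plan is to mirror, almost verbatim, the derivation of Theorem \ref{thm:simplified_ach} from Theorem \ref{thm:ach1} carried out in Section \ref{sec:APPLYING}, with the one change that every union over partitions $(\sdif,\seq)$ is restricted to those with $|\sdif| > \dmax$. The starting point will be the partial-recovery analogue of Theorem \ref{thm:ach1} established in Section \ref{sec:PROOF_PARTIAL}: there is a decoder for the criterion in \eqref{eq:pe_partial} whose error probability obeys a bound of exactly the form \eqref{eq:thresh_sl} and \eqref{eq:P_0}, except that the union is over $(\sdif,\seq)$ with $|\sdif| \in \{\dmax+1,\dotsc,k\}$ rather than over all $\sdif \ne \emptyset$. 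The heuristic is that a thresholding rule seeking a support set passing the test \eqref{eq:decoder_cond} for every partition with $|\sdif| > \dmax$ fails, in the partial-recovery sense, only if the true support fails one of its own tests or some set differing from it in more than $\dmax$ positions passes all of its tests; for each $\ell = |\sdif| > \dmax$ there are at most $\binom{p-k}{\ell}\binom{k}{\ell}$ such competitors, which is precisely what the $\ell$-dependent combinatorial offset in the threshold is designed to absorb.

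Granting that, the rest is a routine repetition of Steps 1--3 of Section \ref{sec:APPLYING}. First I would condition on $\beta_s = b_s$ and recall from \eqref{eq:condMI} that the conditional mean of $\imath^n(\Xv_{\sdif};\Yv|\Xv_{\seq},\beta_s)$ equals $nI_{\sdif,\seq}(b_s)$. For $b_s \in \Bc(\delta_1,\delta_2,\gamma)$ as defined in \eqref{eq:setB_p}, inequality \eqref{eq:final_ach} ensures that the threshold $\log\binom{p-k}{|\sdif|} + \log\bigl(\tfrac{k^2}{\delta_1^2}\binom{k}{|\sdif|}^2\bigr) + \gamma$ is at most $n(1-\delta_2)I_{\sdif,\seq}(b_s)$ for every $(\sdif,\seq)$ with $|\sdif| \in \{\dmax+1,\dotsc,k\}$, while \eqref{eq:psi_ach} bounds the probability that $\imath^n$ lies below that value by $\psi_{|\sdif|}(n,\delta_2)$. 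A union bound over the $\binom{k}{\ell}$ partitions with $|\sdif| = \ell$, followed by a union over $\ell \in \{\dmax+1,\dotsc,k\}$, then bounds the conditional-on-$\beta_s = b_s$ probability of the bad event in the (restricted) version of \eqref{eq:thresh_sl} by $\sum_{\ell=\dmax+1}^{k}\binom{k}{\ell}\psi_\ell(n,\delta_2)$. For $b_s \notin \Bc(\delta_1,\delta_2,\gamma)$ I would bound this conditional probability trivially by $1$; taking the expectation over $\beta_s$ produces the first two terms of \eqref{eq:nonasymp_ach_p}, and the leftover $P_0(\gamma) + 2\delta_1$ is carried over unchanged from the partial-recovery version of Theorem \ref{thm:ach1}.

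The step I expect to be the main obstacle is the restricted-union version of Theorem \ref{thm:ach1} itself --- that is, confirming that for the partial-recovery criterion it is enough to run the threshold tests only over partitions with $|\sdif| > \dmax$, that every decoder output $\hat s$ with $|\hat s \setminus s| \le \dmax$ (equivalently $|s\setminus\hat s|\le\dmax$, since both sets have size $k$) is then declared a partial success, and that the ambiguity events (no candidate passes, or two distinct candidates pass) are still absorbed into the $2\delta_1$ overhead under this weaker test. That verification is the content of Section \ref{sec:PROOF_PARTIAL} and will simply be invoked here. Everything downstream of it is identical to the exact-recovery argument with $\{1,\dotsc,k\}$ replaced by $\{\dmax+1,\dotsc,k\}$ wherever the partition size $|\sdif|$ ranges, so no new concentration or counting input is needed; the treatment of $P_0(\gamma)$ follows Remark \ref{rem:P_0} exactly as in the exact-recovery case.
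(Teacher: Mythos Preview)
Your proposal is correct and follows essentially the same route as the paper: invoke the partial-recovery version of Theorem~\ref{thm:ach1} from Section~\ref{sec:PROOF_PARTIAL} (the decoder tests only partitions with $|\sdif|>\dmax$ and picks arbitrarily among candidates that pass), then repeat the conditioning-and-union-bound argument of Section~\ref{sec:APPLYING} with the index range $\{1,\dotsc,k\}$ replaced by $\{\dmax+1,\dotsc,k\}$. One small imprecision: in the partial-recovery decoder the ``multiple candidates pass'' event is not itself an error---the decoder simply picks one---so the second $\delta_1$ in the $2\delta_1$ overhead comes, as in the exact-recovery proof, from the change-of-measure step \eqref{eq:ach_denom_ub} rather than from any ambiguity resolution; but this does not affect the validity of your argument.
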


For the converse part, \eqref{eq:final_conv} is replaced by 
\begin{equation}
    n \ge \frac{\log {{p-k+|\sdif|} \choose |\sdif|} - \log\sum_{d=0}^{\dmax}{{p-k} \choose d}{|\sdif| \choose d} - \log\delta_1 }{ I_{\sdif,\seq}(b_s) }, \label{eq:final_conv_p}
\end{equation}
and we have the following analog of Theorem \ref{thm:simplified_conv}.

\begin{thm} \label{thm:simplified_conv_p}
	For any set $\Lc \subseteq \{\dmax+1,\dotsc,k\}$, constants $\delta_1>0$ and $\delta_2\in(0,1)$, and functions $\{\psi'_{\ell}\}_{\ell\in\Lc}$ ($\psi'_\ell : \ZZ\times\RR\to\RR$), define the set
    \begin{multline}
        \Bc'(\delta_1,\delta_2) := \big\{ b_s \,:\, \text{\eqref{eq:psi_conv} and \eqref{eq:final_conv_p} hold for all } \\ (\sdif,\seq)\text{ with }|\sdif|\in\Lc \big\}. \label{eq:setB'_p}
    \end{multline}
    Then we have
    \begin{equation}
    \pe(\dmax) \ge \PP\big[ \beta_s \in \Bc'(\delta_1,\delta_2) \big]\Big(1 - \max_{\ell\in\Lc}\psi'_\ell(n,\delta_2)\Big) - \delta_1. \label{eq:nonasymp_conv_p}
    \end{equation}
\end{thm}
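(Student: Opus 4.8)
The plan is to mimic the derivation of Theorem~\ref{thm:simplified_conv} but with the partial-recovery converse bound (the analogue of Theorem~\ref{thm:conv} established in Section~\ref{sec:PROOF_PARTIAL}) in place of the exact-recovery one. Specifically, the partial-recovery analogue of \eqref{eq:conv_sl} should read
\begin{equation}
\pe(\dmax) \ge \PP\bigg[ \imath^n(\Xv_{\sdif(\beta_s)};\Yv|\Xv_{\seq(\beta_s)},\beta_s) \le \log\frac{ {{p-k+|\sdif(\beta_s)|} \choose |\sdif(\beta_s)|} }{ \sum_{d=0}^{\dmax}{{p-k}\choose d}{|\sdif(\beta_s)|\choose d} } + \log\delta_1 \bigg] - \delta_1, \nonumber
\end{equation}
the extra denominator accounting for the number of sets within distance $\dmax$ of a given candidate. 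I would take this as given from Section~\ref{sec:PROOF_PARTIAL} and treat $(\sdif,\seq)$ as an arbitrary function of $b_s$, exactly as in Theorem~\ref{thm:conv}.

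First I would condition on $\beta_s = b_s$ and invoke \eqref{eq:psi_conv}: whenever $b_s$ is such that \eqref{eq:psi_conv} holds for the chosen partition with $|\sdif|\in\Lc$, the i.i.d.\ sum $\imath^n(\Xv_{\sdif};\Yv|\Xv_{\seq},b_s)$ lies below $n(1+\delta_2)I_{\sdif,\seq}(b_s)$ with probability at least $1-\psi'_{|\sdif|}(n,\delta_2) \ge 1 - \max_{\ell\in\Lc}\psi'_\ell(n,\delta_2)$. Next, observe that the threshold event in the displayed converse bound is implied whenever
\begin{equation}
n(1+\delta_2)I_{\sdif,\seq}(b_s) \le \log\frac{ {{p-k+|\sdif|} \choose |\sdif|} }{ \sum_{d=0}^{\dmax}{{p-k}\choose d}{|\sdif|\choose d} } + \log\delta_1, \nonumber
\end{equation}
which, after rearranging and absorbing $\log\delta_1$ (note $\delta_1<1$ so $-\log\delta_1>0$; one can also simply fold the $(1+\delta_2)$ into the statement), is precisely the condition \eqref{eq:final_conv_p} — up to the harmless replacement of $I_{\sdif,\seq}(b_s)$ by $I_{\sdif,\seq}(b_s)(1+\delta_2)$, with $\delta_2$ taken to $0$ at the end as in Procedure~\ref{pr:procedure}. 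So for every $b_s$ lying in the set $\Bc'(\delta_1,\delta_2)$ of \eqref{eq:setB'_p}, one can choose the partition achieving membership and conclude that the conditional probability inside the displayed bound is at least $1-\max_{\ell\in\Lc}\psi'_\ell(n,\delta_2)$.

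Finally I would average over $\beta_s$. Restricting attention to the event $\{\beta_s\in\Bc'(\delta_1,\delta_2)\}$ and using the per-$b_s$ lower bound just obtained gives
\begin{equation}
\PP\big[ \imath^n \le \cdots \big] \ge \PP\big[\beta_s\in\Bc'(\delta_1,\delta_2)\big]\Big(1-\max_{\ell\in\Lc}\psi'_\ell(n,\delta_2)\Big), \nonumber
\end{equation}
and substituting into the partial-recovery converse bound yields \eqref{eq:nonasymp_conv_p}. The one point needing slight care — and the main obstacle — is the bookkeeping that makes the partition $(\sdif(b_s),\seq(b_s))$ a genuine function of $b_s$ while keeping the decoder-independent genie argument of Theorem~\ref{thm:conv} intact in the partial-recovery setting: one must verify that, as in the exact case, choosing the partition to maximise the right-hand side of \eqref{eq:final_conv_p} pointwise in $b_s$ is legitimate (this is why the max-over-partitions appears inside the probability rather than outside), and that constraining $|\sdif|$ to a fixed set $\Lc$ is compatible with this choice. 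Both follow exactly as in the proof of Theorem~\ref{thm:simplified_conv}, so no new ideas are required beyond carrying the modified combinatorial threshold \eqref{eq:final_conv_p} through the argument.
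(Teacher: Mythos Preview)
Your proposal is correct and follows essentially the same approach as the paper: it takes the partial-recovery analogue of Theorem~\ref{thm:conv} established in Section~\ref{sec:PROOF_PARTIAL} (with the modified combinatorial threshold) and then repeats verbatim the simplification steps of Section~\ref{sec:APPLYING} that led from Theorem~\ref{thm:conv} to Theorem~\ref{thm:simplified_conv}, namely conditioning on $\beta_s=b_s$, invoking the concentration bound \eqref{eq:psi_conv}, rearranging into the form \eqref{eq:final_conv_p}, choosing the partition pointwise in $b_s$ with $|\sdif|\in\Lc$, and averaging. Your observation about the $(1+\delta_2)$ factor is apt---the paper's \eqref{eq:final_conv_p} as stated omits it (and has the inequality direction reversed), but the intended meaning is the direct analogue of \eqref{eq:final_conv}, and your reading is the correct one.
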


The applications of Theorems \ref{thm:simplified_ach_p} and \ref{thm:simplified_conv_p} follow identical steps to Procedure \ref{pr:procedure}.  However, it will be seen that the restriction $|\sdif| > \dmax$ can in fact considerably simplify these steps, since it removes the need to obtain concentration inequalities for smaller values of $|\sdif|$.

\subsection{Comparison to Fano's Inequality} \label{sec:FANO}

Most previous works on the information-theoretic limits of sparsity recovery have made use of Fano's inequality \cite[Sec.~2.11]{Cov01}.  For this reason, we provide here a discussion on the relative merits of this approach and our approach.  To this end, we consider the following bound, which can be obtained by combining the analysis of \cite{Ati12,Aks13} with our refined genie argument:
\begin{equation}
    \pe \ge \sum_{b_s}P_{\beta_S}(b_s)\max\bigg\{0, 1 - \frac{nI_{\sdif(b_s),\seq(b_s)}(b_s) + 1}{\log{ p-k+|\sdif(b_s)| \choose |\sdif(b_s)| }}\bigg\}
\end{equation}
in the notation of Theorem \ref{thm:conv}.  By analyzing this bound similarly to Section \ref{sec:APPLYING}, we obtain for any $\delta_2 > 0$ that
\begin{equation}
    \pe \ge \delta_2\, \PP\big[\beta_s \in \Bcfano(\delta_2)\big] - \frac{1}{\log(p-k+1)}, \label{eq:pe_Fano}
\end{equation}
where
\begin{multline}
    \Bcfano(\delta_2) := \bigg\{ b_s \,:\, n \le \frac{\log {{p - k + |\sdif|} \choose |\sdif|}}{I_{\sdif,\seq}(b_s)}(1-\delta_2) \\ \text{ for all }(\sdif,\seq)\text{ with }\sdif\ne\emptyset\bigg\}. \label{eq:pe_Fano2}
\end{multline}
A similar result for partial recovery can also be derived by incorporating the arguments from \cite{Ree13} and the present paper.  

As discussed in the introduction, the key advantage of Theorem \ref{thm:simplified_conv} is that it provides a more precise characterization of how far the error probability is from zero, and in particular, the conditions under which $\pe \to 1$ (strong impossibility results). On the other hand, the bound on $\pe$ in \eqref{eq:pe_Fano} is always bounded away from one for fixed $\delta_2$, and becomes increasingly weak for small $\delta_2$. 

The advantage of Fano's inequality is that it only requires the mutual information to be computed, whereas our approach also requires the application of a concentration inequality.  This, in turn, typically requires the variance of the information density to be characterized, which is not always straightforward.  However, as discussed following Procedure \ref{pr:procedure}, the main difficulty associated with these concentration inequalities is typically in finding one which is sufficiently powerful for the \emph{achievability} part.  Thus, the added difficulty in the converse may not add to the overall difficulty in deriving matching achievability and converse bounds.  

\section{Applications to Specific Models} \label{sec:EXAMPLES}

In this section, we present applications of Theorems \ref{thm:simplified_ach}--\ref{thm:simplified_conv_p} to the linear \cite{Wai09}, 1-bit \cite{Bou08}, and group testing \cite{Ati12} models, and to more general models with discrete observations \cite{Tan14}.  Throughout the section, we make use of general concentration inequalities given in Appendix \ref{sec:CONCENTRATION}.  We also make use of the following variance quantity:
\begin{equation}
V_{\sdif,\seq}(b_s) := \var\big[ \imath(X_{\sdif}; Y | X_{\seq}, b_s) \,\big|\, \beta_s = b_s \big]. \label{eq:Vb}
\end{equation}

\subsection{Linear Model with Discrete $\beta_s$} \label{sec:GAUSSIAN_DISC}

Here we consider the linear model, where each observation takes the form
\begin{equation}
    Y = \langle X, \beta \rangle + Z, \label{eq:linear_model}
\end{equation}
where $Z \sim N(0,\sigma^2)$ for some $\sigma > 0$. 

Without loss of generality, we consider the fixed support set $s=\{1,\dotsc,k\}$.  Following the setup of \cite{Jin11}, we let $\beta_s$ be a uniformly random permutation of a fixed vector $(b_1,\dotsc,b_k)$, and we choose $P_X \sim N(0,1)$.  Since both the measurement matrices and the noise are Gaussian, the mutual information in \eqref{eq:condMI} is given by \cite[Ch.~10]{Cov01}
\begin{equation}
    I_{\sdif,\seq}(b_s) = \frac{1}{2}\log\Big(1 + \frac{1}{\sigma^2} \sum_{i \in \sdif}b_i^2\Big). \label{eq:I_linear}
\end{equation}

Throughout this subsection, we denote $\bmin := \min_{i} |b_i|$ and $\bmax := \max_{i} |b_i|$.  We assume that $\sigma^2 = \Theta(1)$, and that $\bmin = \Theta(\bmax)$ and $0<\bmin = O(1)$; note that $\bmin = o(1)$ is allowed.  The steps of Procedure \ref{pr:procedure} are as follows.

\subsubsection*{Step 1}

We trivially choose the typical set $\Tc_{\beta}$ to contain all vectors on the support of $P_{\beta_S}$.

\subsubsection*{Step 2}

We make use of the following concentration inequality based on Bernstein's inequality.

\begin{prop} \label{prop:conc_linear2}
    Under the preceding setup for the linear model, we have for all $(\sdif,\seq)$ and $b_s$ that
    \begin{multline}
        \PP\bigg[ \big|\imath^n(\Xv_{\sdif}; \Yv | \Xv_{\seq}, b_s) - nI_{\sdif,\seq}(b_s) \big| \ge n\delta \,\Big|\, \beta_s = b_s \bigg] \\ \le 2\exp\bigg(-\frac{\delta^2 n}{2(4\alpha_{\sdif}^2+\delta\alpha_{\sdif})} \bigg),
    \end{multline}
    where
    \begin{equation}
        \alpha_{\sdif} := \frac{2\siglv(\sigma + \siglv)}{\sigma^2 + \siglv^2} \label{eq:alpha_linear}
   \end{equation}
   with $\siglv^2 := \sum_{i \in \sdif} b_i^2$.
\end{prop}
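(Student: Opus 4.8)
The plan is to reduce the claim to a Bernstein-type tail bound for a sum of i.i.d.\ sub-exponential variables, after writing the single-letter information density in \eqref{eq:idens_b} in closed form. Fix a partition $(\sdif,\seq)$ with $|\sdif|=\ell$ and a vector $b_s$, and condition on $\beta_s=b_s$. Conditioned further on $X_{\seq}=x_{\seq}$, the model \eqref{eq:linear_model} gives $Y=\langle X_{\sdif},b_{\sdif}\rangle+\langle x_{\seq},b_{\seq}\rangle+Z$ with $Z\sim N(0,\sigma^2)$, so $P_{Y|X_{\sdif}X_{\seq}\beta_s}(\cdot\mid x_{\sdif},x_{\seq},b_s)$ is the $N\big(\langle x_{\sdif},b_{\sdif}\rangle+\langle x_{\seq},b_{\seq}\rangle,\sigma^2\big)$ density, while marginalizing $X_{\sdif}\sim N(0,I_\ell)$ (so that $\langle X_{\sdif},b_{\sdif}\rangle\sim N(0,\siglv^2)$, independent of $Z$) gives the denominator $P_{Y|X_{\seq}\beta_s}(\cdot\mid x_{\seq},b_s)$ as the $N\big(\langle x_{\seq},b_{\seq}\rangle,\sigma^2+\siglv^2\big)$ density. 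Substituting $Y=\langle X_{\sdif},b_{\sdif}\rangle+\langle X_{\seq},b_{\seq}\rangle+Z$ into the log-ratio, the $\langle X_{\seq},b_{\seq}\rangle$ terms cancel, and combining the two quadratic terms over a common denominator yields, with $M:=\langle X_{\sdif},b_{\sdif}\rangle$,
\[
\imath(X_{\sdif};Y\mid X_{\seq},b_s)-I_{\sdif,\seq}(b_s)
=\frac{\sigma^2 M^2+2\sigma^2 M Z-\siglv^2 Z^2}{2\sigma^2(\sigma^2+\siglv^2)},
\]
where $I_{\sdif,\seq}(b_s)=\tfrac12\log(1+\siglv^2/\sigma^2)$ is as in \eqref{eq:I_linear}; one checks directly that the right-hand side has mean zero.

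Next I would dominate this variable by a scaled chi-square. Writing $M=\siglv G_2$ and $Z=\sigma G_1$ with $G_1,G_2$ independent standard normals, the display becomes $\frac{\siglv}{2(\sigma^2+\siglv^2)}\big(\siglv G_2^2+2\sigma G_1G_2-\siglv G_1^2\big)$, and the inequality $2|G_1G_2|\le G_1^2+G_2^2$ gives the almost-sure bound
\[
\big|\imath(X_{\sdif};Y\mid X_{\seq},b_s)-I_{\sdif,\seq}(b_s)\big|
\le \frac{\siglv(\sigma+\siglv)}{2(\sigma^2+\siglv^2)}(G_1^2+G_2^2)
=\frac{\alpha_{\sdif}}{4}(G_1^2+G_2^2),
\]
with $\alpha_{\sdif}$ exactly as in \eqref{eq:alpha_linear}. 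Since $G_1^2+G_2^2\sim\chi^2_2$ has moments $\EE[(G_1^2+G_2^2)^m]=2^m m!$, the centered single-letter information density satisfies a Bernstein moment condition $\EE\big[\,|\imath-I_{\sdif,\seq}(b_s)|^m\,\big]\le\tfrac{m!}{2}\,v\,c^{m-2}$ for all $m\ge 2$, with $v$ and $c$ of order $\alpha_{\sdif}^2$ and $\alpha_{\sdif}$ respectively; crucially, $v$ and $c$ depend on $(\sdif,\seq,b_s)$ only through $\siglv^2=\sum_{i\in\sdif}b_i^2$, which is what makes the resulting bound uniform over all partitions and all $b_s$.

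Finally, since conditioned on $\beta_s=b_s$ the quantity $\imath^n(\Xv_{\sdif};\Yv\mid\Xv_{\seq},b_s)$ is a sum of $n$ i.i.d.\ copies of $\imath(X_{\sdif};Y\mid X_{\seq},b_s)$ (the $n$ rows of $\Xv_{\sdif}$ and the $n$ noise samples being independent), I would invoke the Bernstein-type inequality of Appendix~\ref{sec:CONCENTRATION} with $t=n\delta$, producing a bound of the form $2\exp\!\big(-\tfrac{n\delta^2}{2(v+c\delta)}\big)$; using the (deliberately loose) $\chi^2_2$ domination to take $v\le 4\alpha_{\sdif}^2$ and $c\le\alpha_{\sdif}$ then yields precisely the stated inequality. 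The computation is essentially routine; the only step requiring real care is the algebraic simplification of the log-likelihood ratio together with choosing the domination above so that the Bernstein scale parameter is controlled by the single quantity $\alpha_{\sdif}$ (rather than by a less convenient combination of $\siglv^2$ and $\siglv^2/\sigma^2$), which is what keeps the final bound clean and uniform.
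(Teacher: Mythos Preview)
Your proposal is correct and follows essentially the same approach as the paper: write the centered single-letter information density as a quadratic form in two independent standard normals, dominate it pointwise by a multiple of a chi-square--type quantity to verify the Bernstein moment condition, and then apply Lemma~\ref{lem:bernstein} with $\tau = 4n\alpha_{\sdif}^2$ and $c=\alpha_{\sdif}$. The only cosmetic difference is that the paper bounds via $\Zhatmax^2 = \max\{G_1^2,G_2^2\}$ and Gaussian moment formulas, whereas you use $2|G_1G_2|\le G_1^2+G_2^2$ together with the exact $\chi^2_2$ moments $\EE[(G_1^2+G_2^2)^m]=2^m m!$; both routes yield the same constants in the final bound.
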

\begin{proof}
    See Appendix \ref{sec:LINEAR_PROOFS}.
\end{proof} 

Setting $\delta = \delta_2 I_{\sdif,\seq}(b_s)$, it follows that in \eqref{eq:psi_ach} and \eqref{eq:psi_conv} we can set 
\begin{multline}
\psi_{\ell}(n,\delta_2) = \psi'_{\ell}(n,\delta_2)  = 2\max_{(\sdif,\seq,b_s)\,:\,|\sdif|=\ell} \\ \exp\bigg(- \frac{ ( \delta_2 I_{\sdif,\seq}(b_s) )^2 n}{2(4\alpha_{\sdif}+\delta_2 I_{\sdif,\seq}(b_s))\alpha_{\sdif} } \bigg). \label{eq:linear_psi}
\end{multline}

\subsubsection*{Step 3}

In accordance with the first item of Remark \ref{rem:P_0}, we set $\gamma$ as in \eqref{eq:discrete_gamma} so that $P_0(\gamma) = 0$. 

We focus on the conditions on $n$ under which the term $\sum_{\ell=1}^k{k \choose \ell}\psi_\ell(n,\delta_2)$ in \eqref{eq:nonasymp_ach} vanishes; the term containing $\psi'_{\ell}$ in \eqref{eq:nonasymp_conv} can be handled in a similar yet simpler fashion.  By the assumptions $\sigma^2 = \Theta(1)$ and $\bmax = \Theta(\bmin)$, we readily obtain $I_{\sdif,\seq}(b_s) = \Theta(\log(1+\ell\bmin^2))$ and $\alpha_{\sdif}^2 = \Theta(\min\{1,\ell\bmin^2\})$ using \eqref{eq:I_linear} and \eqref{eq:alpha_linear}, where $\ell = |\sdif|$. Using these growth rates and upper bounding the summation in \eqref{eq:nonasymp_ach} by $k$ times the corresponding maximum, we see that $\sum_{\ell=1}^k{k \choose \ell}\psi_\ell(n,\delta_2) \to 0$ provided that the following holds for some sufficiently small constant $\zeta$ (depending on $\delta_2$):
\begin{multline}
    \frac{ n \log^2(1+\ell\bmin^2)}{\min\{1,\ell\bmin^2\} + \log(1+\ell\bmin^2) \sqrt{\min\{1,\ell\bmin^2\}} }\zeta \\ - \ell \log\frac{k}{\ell} - \log k \to \infty \label{eq:lin_n_cond1}
\end{multline}
for all $\ell$.  We now treat two cases separately:
\begin{itemize}
    \item If $\ell\bmin^2 = o(1)$, the first term in \eqref{eq:lin_n_cond1} behaves as $\Theta(n \ell\bmin^2)$; by rearranging, we conclude that it suffices that $n\to\infty$ and $n = \Omega\big(\frac{\log k}{\bmin^2}\big)$ with a sufficiently large implied constant.
    \item If $\ell\bmin^2 = \Omega(1)$, the first term in \eqref{eq:lin_n_cond1} behaves as $\Omega(n)$, and it thus suffices that $n\to\infty$ and $n=\Omega(k)$ with a sufficiently large implied constant.
\end{itemize}
Thus, the overall condition that we require is $n\to\infty$ and
\begin{equation}
    n = \Omega\Big( \frac{\log k}{\bmin^2} \Big) \quad \text{ and } \quad n = \Omega(k), \label{eq:linear_n_cond}
\end{equation}
with sufficiently large implied constants.  For the converse, the analogous condition to \eqref{eq:lin_n_cond1} contains only the first term on the left-hand side (the difference being due to the fact that the combinatorial term in \eqref{eq:nonasymp_ach} is not present in \eqref{eq:nonasymp_conv}), and a similar argument reveals that it suffices that $n = \omega\big( \frac{1}{\bmin^2} \big)$.

\subsubsection*{Step 4}

Combining the preceding steps and applying asymptotic simplifications, we obtain the following.

\begin{cor}  \label{cor:linear_genK}
    Under the preceding setup for the linear model with $\sigma^2 = \Theta(1)$, $\bmin=\Theta(\bmax)$, $\bmin^2 = O(1)$, $k=o(p)$, and $m_{\beta}$ distinct elements in $(b_1,\dotsc,b_k)$, we have $\pe \to 0$ as $p\to\infty$ provided that
    \begin{equation}
    n \ge \max_{\sdif\ne\emptyset} \frac{ \log {{p-k} \choose |\sdif|} }{ \frac{1}{2}\log\Big(1 + \frac{1}{\sigma^2} \sum_{i \in \sdif}b_i^2 \Big)} (1 + \eta), \label{eq:linear2_ach}
    \end{equation}
    under any one of the following additional conditions: (i) $k = \Theta(1)$; (ii) $k=o(\log p)$ and $m_{\beta} = \Theta(1)$; (iii) $k = O((\log p)^{\theta})$ for some $\theta>0$, and $m_{\beta} = 1$; (iv) $k=\Theta(p^{\theta})$ for some $\theta\in(0,1)$, $\bmin^2 = \Theta\big( \frac{\log k}{k} \big)$, and $m_{\beta} = 1$.
    
    Conversely, without any additional conditions, we have $\pe \to 1$ as $p\to\infty$ whenever
    \begin{equation}
    n \le \max_{\sdif\ne\emptyset} \frac{ \log {p-k+|\sdif| \choose |\sdif|} }{ \frac{1}{2}\log\Big(1 + \frac{1}{\sigma^2} \sum_{i \in \sdif}b_i^2 \Big)} (1 - \eta) \label{eq:linear2_conv}
    \end{equation}
    for some $\eta > 0 $. 
\end{cor}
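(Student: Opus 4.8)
The plan is to derive this corollary from Theorems~\ref{thm:simplified_ach} and~\ref{thm:simplified_conv} by carrying out Step~4 of Procedure~\ref{pr:procedure}, Steps~1--3 having already been done above: the typical set $\Tc_{\beta}$ is the full support of $P_{\beta_S}$; the tail bounds $\psi_{\ell}=\psi'_{\ell}$ in \eqref{eq:psi_ach}--\eqref{eq:psi_conv} are those in \eqref{eq:linear_psi}, from Proposition~\ref{prop:conc_linear2} with a small fixed $\delta_2$; the choice of $\gamma$ in \eqref{eq:discrete_gamma} gives $P_0(\gamma)=0$ (Remark~\ref{rem:P_0}); and \eqref{eq:linear_n_cond} ensures $\sum_{\ell}\binom{k}{\ell}\psi_{\ell}(n,\delta_2)\to 0$. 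The one structural fact I would isolate first: since $\beta_s$ is a uniformly random permutation of the fixed vector $(b_1,\dotsc,b_k)$, the mutual information \eqref{eq:I_linear} depends on $(\sdif,\seq,b_s)$ only through the multiset $\{b_i : i\in\sdif\}$; hence the maxima over partitions in \eqref{eq:final_ach} and \eqref{eq:final_conv_max} take the same value for \emph{every} realization $b_s$, equal to the right-hand side of \eqref{eq:linear2_ach} (resp.\ \eqref{eq:linear2_conv}), with the inner maximum over size-$\ell$ subsets selecting the $\ell$ entries of smallest (resp.\ largest) magnitude.

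\emph{Achievability.} Take $\delta_1\to 0$ slowly (say $\delta_1 = 1/\log p$), $\gamma$ as above, and $\delta_2$ a small constant; then in \eqref{eq:nonasymp_ach} the terms $P_0(\gamma)$ and $2\delta_1$ vanish and, once \eqref{eq:linear_n_cond} holds, so does $\sum_{\ell}\binom{k}{\ell}\psi_{\ell}$, whence $\pe\le\PP[\beta_s\notin\Bc(\delta_1,\delta_2,\gamma)]+o(1)$. Since $\Bc(\delta_1,\delta_2,\gamma)$ in \eqref{eq:setB} contains the whole support as soon as \eqref{eq:final_ach} holds for all such $b_s$, it suffices to show that \eqref{eq:linear2_ach} implies both \eqref{eq:linear_n_cond} and \eqref{eq:final_ach}. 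The first is routine: \eqref{eq:linear2_ach} already forces $n = \Omega(\log p / \bmin^2)$ via the $\ell=1$ term, which dominates $\log k / \bmin^2$ and $k$ (using $\bmin^2 = \Theta(\log k / k)$ in case~(iv)) provided $\delta_2$ is small relative to $\eta$. For the second, write $h_{\ell} := \log\big(\tfrac{k^2}{\delta_1^2}\binom{k}{\ell}^2\big) + \gamma$ for the overhead in the numerator of \eqref{eq:final_ach}; since $h_{\ell}/I_{\sdif,\seq}(b_s) \le \big(\max_j h_j/\log\binom{p-k}{j}\big)\cdot\max_j \log\binom{p-k}{j}/I_{\sdif,\seq}(b_s)$, by the structural fact it is enough that $\max_{\ell} h_{\ell}/\log\binom{p-k}{\ell}\to 0$ (and $\delta_2 < \eta$). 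In cases (i)--(iii) I would bound $\log\binom{k}{\ell}\le\ell\log\tfrac{ek}{\ell} = O(\ell\log\log p)$, note $\gamma = 0$ when $m_{\beta}=1$ and $\gamma = O(k) = o(\log p)$ when $m_{\beta}=\Theta(1)$, and use $\log\binom{p-k}{\ell}\ge\ell\log\tfrac{p-k}{k} = \Theta(\ell\log p)$ to conclude.

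\emph{Converse.} Apply Theorem~\ref{thm:simplified_conv} with $\Lc = \{\ell^{*}\}$ a singleton, $\ell^{*}$ attaining the outer maximum over partition sizes in \eqref{eq:final_conv_max} (equivalently in \eqref{eq:linear2_conv}); for each $b_s$ pick the size-$\ell^{*}$ partition achieving that maximum, i.e.\ $\sdif(b_s)$ indexing the $\ell^{*}$ smallest $|b_i|$'s. By the structural fact this maximal ratio is realization-independent, so $\Bc'(\delta_1,\delta_2)$ contains the full support once \eqref{eq:linear2_conv} holds and $\delta_1,\delta_2\to 0$ slowly (so $\log\delta_1$ and $\delta_2$ are absorbed into the slack $\eta$); hence $\PP[\beta_s\in\Bc']=1$ and \eqref{eq:nonasymp_conv} gives $\pe \ge 1 - \psi'_{\ell^{*}}(n,\delta_2) - \delta_1$. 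It remains to drive $\psi'_{\ell^{*}}(n,\delta_2)\to 0$; the converse version of Step~3 gives this whenever $n = \omega(1/\bmin^2)$, which holds except in a degenerate regime of very small $n$ (where $n I_{\sdif,\seq}(b_s)\ll\log\binom{p-k+\ell^{*}}{\ell^{*}}$) --- there I would instead take $\ell^{*} = k$ and let $\delta_2\to\infty$ slowly, so that \eqref{eq:B_cond_conv} still holds and a Chebyshev bound forces $\psi'_k(n,\delta_2)\to 0$.

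\emph{Main obstacle.} The delicate case is (iv) of the achievability: when $k = \Theta(p^{\theta})$, $\log\binom{k}{\ell}^2$ is genuinely of the same order as $\log\binom{p-k}{\ell}$ for small $\ell$, so $\max_{\ell} h_{\ell}/\log\binom{p-k}{\ell}$ does \emph{not} tend to $0$ and the term-by-term bound above fails. Here I would use $m_{\beta}=1$ (so $\gamma=0$) and $\bmin^2 = \Theta(\log k / k)$ to show that $\max_{\ell}\log\binom{p-k}{\ell}/I_{\sdif,\seq}(b_s)$ is attained, up to constant factors, at $\ell = \Theta(k)$ --- where $I_{\sdif,\seq}(b_s) = \Theta(\log\log k)$ and $h_{\ell} = O(\log p) = o\big(\log\binom{p-k}{\ell}\big)$ --- and that for every smaller $\ell$ even the inflated ratio $(\log\binom{p-k}{\ell}+h_{\ell})/I_{\sdif,\seq}(b_s)$ stays $\Theta(k)$, hence below $(1+o(1))$ times the maximum. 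Getting these scaling estimates right is the crux; a secondary, purely bookkeeping difficulty is coordinating $\delta_1,\delta_2,\gamma$ (and the degenerate small-$n$ case in the converse) so that all remainder terms vanish simultaneously across exactly the stated ranges.
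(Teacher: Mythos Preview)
Your plan matches the paper's proof: both derive the achievability from \eqref{eq:final_ach} by absorbing the overhead $h_\ell$ into $\eta$, and the converse from \eqref{eq:final_conv}, using Proposition~\ref{prop:conc_linear2} for concentration and \eqref{eq:discrete_gamma} for $\gamma$. Two small fixes are needed. First, your handling of the degenerate small-$n$ regime in the converse (switching to $\ell^{*}=k$ with $\delta_2\to\infty$) is fragile, since large $\delta_2$ shrinks the right side of \eqref{eq:final_conv} and can empty $\Bc'$; the paper instead uses monotonicity---the decoder can discard measurements, so $\pe$ is non-increasing in $n$, and one may assume equality in \eqref{eq:linear2_conv}, after which the $\ell=1$ term gives $n=\Theta(\log p/\bmin^2)=\omega(1/\bmin^2)$ directly. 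Second, in case~(iv) you check $n=\Omega(k)$ from \eqref{eq:linear_n_cond} via the $\ell=1$ term of \eqref{eq:linear2_ach}, but that term is only $\Theta(\log p/\bmin^2)=\Theta(k)$ under $\bmin^2=\Theta(\log k/k)$, which need not beat the $\delta_2$-dependent constant; the paper uses the $\ell=k$ term instead, which is $\Theta(k\log k/\log\log k)=\omega(k)$. Your case-(iv) treatment of the overhead $h_\ell$ (showing the inflated ratio at small $\ell$ stays $\Theta(k)$ while the max sits at $\ell=\Theta(k)$) is otherwise exactly what the paper does. Finally, your parenthetical ``smallest (resp.\ largest)'' is a slip: in both \eqref{eq:linear2_ach} and \eqref{eq:linear2_conv} the maximizing $\sdif$ of a given size selects the \emph{smallest} $|b_i|$'s, as you correctly write later.
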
 
\begin{proof}
    The converse part follows from \eqref{eq:final_conv} with $\delta_1\to0$ sufficiently slowly.  To check the condition $n = \omega\big( \frac{1}{\bmin^2} \big)$ stated following \eqref{eq:linear_n_cond}, we may assume without loss of generality that \eqref{eq:linear2_conv} holds with equality, since the decoder can always choose to ignore additional measurements.  When equality holds, we observe that for the worst-case $\sdif$ with $\ell=1$, the denominator therein behaves as $O(\bmin^2)$ (since $\bmin^2 = O(1)$) and the numerator behaves as $\Theta(\log p)$, and hence, the condition $n = \omega\big( \frac{1}{\bmin^2} \big)$ is satisfied.
    
     For the achievability part, we first use \eqref{eq:final_ach} to obtain 
    \begin{equation}
    n \ge \max_{\sdif\ne\emptyset} \frac{ \log {{p-k} \choose |\sdif|} + 2\log\big(k{k\choose|\sdif|}\big) + k\log m_{\beta} }{ \frac{1}{2}\log\Big(1 + \frac{1}{\sigma^2} \sum_{i \in \sdif}b_i^2 \Big)} (1 + \eta), \label{eq:linear2_ach_0}
    \end{equation}
    where the final term in the numerator arises from \eqref{eq:discrete_gamma} since $P_{\beta_s}(b_s)$ is the same for all permutations of $(b_1,\dotsc,b_k)$, and is lower bounded by $m_{\beta}^{-k}$.   Observe that the first term in the numerator behaves as $\Theta(|\sdif| \log p)$ for each of the cases in the corollary statement, and the second term behaves as $\Theta\big(\log k + |\sdif| \log \frac{k}{|\sdif|}\big)$.
    
    In cases (i)--(iii), we have $\log k = o(\log p)$, and it immediately follows that the numerator in \eqref{eq:linear2_ach_0} is dominated by the first term, and hence, the others can be factored into $\eta$ in \eqref{eq:linear2_ach}.  Moreover, in case (i), both conditions in \eqref{eq:linear_n_cond} are dominated by the objective in \eqref{eq:linear2_ach_0} with $\ell:=|\sdif|=1$, which behaves as $\Theta\big(\frac{\log p}{\bmin^2}\big)$.  In cases (ii)--(iii), the first condition in \eqref{eq:linear_n_cond} is again dominated by the term in \eqref{eq:linear2_ach_0} with $\ell=1$.  The second condition is dominated by the term with $\ell = k$, which behaves as $\Theta\big( \frac{k\log p}{ \log(1+k\bmin^2) } \big) = \Omega\big(k \frac{\log p}{\log k}\big)$.
    
    In case (iv), the first term in the numerator of \eqref{eq:linear2_ach_0} may not be dominant for small $\ell := |\sdif|$, since $\log k = \Theta(\log p)$.  However, by observing that the objective scales as $\Theta\big( \frac{\ell \log p}{ \log(1+\ell\bmin^2) } \big)$ and using the assumed scaling of $\bmin^2$, it is readily verified that the maximum can only be achieved with $\ell = \Theta(k)$.  For any such maximizer, we have $\log{ p-k \choose \ell } = \Theta(k \log p)$, and hence, the second term in the numerator of \eqref{eq:linear2_ach_0} can be factored into $\eta$, as it behaves as $O(k)$.  The two conditions in \eqref{eq:linear_n_cond} are identical under the given scaling of $\bmin^2$, and are dominated by the objective in \eqref{eq:linear2_ach} with $\ell=k$, which behaves as $\Theta\big( \frac{k \log k}{\log\log k} \big)$.
\end{proof}

In the case that $\bmin = \Theta(1)$, the thresholds given in Corollary \ref{cor:linear_genK} coincide with those given in the main results of \cite{Jin11}.  Our framework has the advantage of handling the case that $\bmin = o(1)$, as well as providing the strong converse ($\pe \to 1$) instead of the weak converse ($\pe \not\to 0$).  However, it should be noted that the achievability parts of \cite{Jin11} have the notable advantage of using a decoder that does not depend on the distribution of $\beta_s$.

On first glance, the bounds in \eqref{eq:linear2_ach}--\eqref{eq:linear2_conv} may appear to be difficult to evaluate, since the maximizations are over $2^{k}-1$ non-empty subsets $\sdif$.  However, it is in fact only $k$ of them that need to be computed, since for any given $\ell = |\sdif|$ the maximizing $\sdif$ is the one with the smallest corresponding value of $\sum_{i \in \sdif}b_i^2$.  

\subsubsection*{Comparison to the LASSO}

Conditions for the support recovery of the computationally tractable LASSO algorithm were given by Wainwright \cite{Wai09a}.  Several comparisons to the information-theoretic limits were given in \cite{Wai09,Wai09a} in terms of scaling laws; here we complement these comparisons by briefly discussing the corresponding constant factors.  For simplicity, we focus on the case that the non-zero entries are all equal to a common value $b_0 = \frac{c_{\beta}}{k}$ (for some constant $c_{\beta}$ representing the per-sample SNR) and $k$ is poly-logarithmic in $p$, corresponding to case (iii) of Corollary \ref{cor:linear_genK}.

The results of \cite{Wai09a} state that LASSO requires at least $(2k\log p)(1+o(1))$ measurements regardless of $c_{\beta}$, and that this bound is also achievable in the limit as $c_{\beta} \to \infty$.  On the other hand, Corollary \ref{cor:linear_genK} reveals that for the optimal decoder, the coefficient to $k \log p$ can be arbitrarily small provided that $c_{\beta}$ is large enough.  More precisely, applying some simple manipulations to \eqref{eq:linear2_ach}, we find that the coefficient to $k \log p$ is $\sup_{\alpha\in(0,1]} \frac{\alpha}{\frac{1}{2}\log(1+c_{\beta}\alpha)}$, where $\alpha$ represents the ratio $\frac{|\sdif|}{k}$.  It is easy to verify that the maximum is achieved at $\alpha = 1$, yielding the constant $\frac{2}{\log(1+c_{\beta})}$.  We conclude that the LASSO provably yields a suboptimal constant when $c_{\beta} > 1$, and fails to achieve the optimal logarithmic decay.  However, it should be noted that our decoder requires knowledge of $k$ and $c_{\beta}$, whereas the LASSO does not (except possibly via their role in determining the regularization parameter).

\subsection{Linear Model with Gaussian $\beta_S$ and Partial Recovery} \label{sec:GAUSSIAN_CONT}

In this subsection, we consider the setup of Section \ref{sec:GAUSSIAN_DISC} with two changes: We let the distribution of $\beta_s$ be continuous rather than discrete, and we consider partial recovery instead of exact recovery.  More specifically, we let $\beta_s$ be i.i.d.~on $N(0,\sigbeta^2)$ for some variance $\sigbeta^2$, and we consider the recovery condition in \eqref{eq:pe_partial} with
\begin{equation}
    \dmax = \lfloor \alpha^* k \rfloor
\end{equation}
for some $\alpha^* \in (0,1)$ not varying with $p$.  We again choose $P_X \sim N(0,1)$.  We assume $\sigbeta^2 = \frac{c_{\beta}}{k}$ for some $c_{\beta} > 0$ not depending on $p$, corresponding to a fixed per-sample SNR.

We begin with the following auxiliary result.

\begin{prop} \label{prop:boundMI}
    Under the preceding setup for the linear model, the quantities $I_0$ and $V_0$ defined in \eqref{eq:I_0}--\eqref{eq:V_0} satisfy
    \begin{align}
        I_0  &\le \frac{k}{2} \log\Big( 1 + \frac{n\sigbeta^2}{\sigma^2} \Big) \label{eq:bmi_I0} \\
        V_0 &\le 2n. \label{eq:bmi_V0}
    \end{align}
\end{prop}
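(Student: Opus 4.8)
The plan is to treat the two bounds separately, in both cases by conditioning on the measurement submatrix $\Xv_s$, after which $\beta_s\mapsto\Yv$ is the ordinary Gaussian vector channel $\Yv=\Xv_s\beta_s+\Zv$ with $\beta_s\sim N(\bzero,\sigbeta^2\Iv_k)$ and $\Zv\sim N(\bzero,\sigma^2\Iv_n)$ independent. For \eqref{eq:bmi_I0} this is essentially immediate: the standard Gaussian vector-channel formula gives $I(\beta_s;\Yv\mid\Xv_s=\xv_s)=\tfrac12\log\det\!\big(\Iv_k+\tfrac{\sigbeta^2}{\sigma^2}\xv_s^T\xv_s\big)$, so $I_0=\EE\big[\tfrac12\log\det(\Iv_k+\tfrac{\sigbeta^2}{\sigma^2}\Xv_s^T\Xv_s)\big]$, and since $A\mapsto\log\det(\Iv_k+cA)$ is concave on the positive-semidefinite cone for $c>0$ and $\EE[\Xv_s^T\Xv_s]=n\Iv_k$ (each column of $\Xv_s$ being i.i.d.\ $N(\bzero,\Iv_n)$), Jensen's inequality yields $I_0\le\tfrac12\log\det(\Iv_k+\tfrac{n\sigbeta^2}{\sigma^2}\Iv_k)=\tfrac k2\log(1+\tfrac{n\sigbeta^2}{\sigma^2})$.

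For \eqref{eq:bmi_V0}, write $L$ for the log-likelihood ratio appearing in \eqref{eq:V_0} and use the law of total variance, $V_0=\EE[\var(L\mid\Xv_s)]+\var(\EE[L\mid\Xv_s])$. The conditional mean is $\EE[L\mid\Xv_s=\xv_s]=I(\beta_s;\Yv\mid\Xv_s=\xv_s)=\tfrac12\log\det(\Iv_n+\tfrac{\sigbeta^2}{\sigma^2}\xv_s\xv_s^T)$, so the second term is $\tfrac14\var\big[\log\det(\Iv_n+\tfrac{\sigbeta^2}{\sigma^2}\Xv_s\Xv_s^T)\big]$; I would bound this with the Gaussian Poincar\'e inequality applied to the i.i.d.\ standard-normal entries of $\Xv_s$, for which a direct gradient computation gives $\big\|\nabla_{\Xv_s}\log\det(\Iv_n+\tfrac{\sigbeta^2}{\sigma^2}\Xv_s\Xv_s^T)\big\|^2=4\big(\tfrac{\sigbeta^2}{\sigma^2}\big)^2\Tr\big[(\Iv_n+\tfrac{\sigbeta^2}{\sigma^2}\Xv_s\Xv_s^T)^{-2}\Xv_s\Xv_s^T\big]\le\tfrac{\sigbeta^2 k}{\sigma^2}$ (each of the at most $k$ nonzero eigenvalues of the traced matrix is of the form $\mu/(1+\tfrac{\sigbeta^2}{\sigma^2}\mu)^2\le\tfrac{\sigma^2}{4\sigbeta^2}$); under the assumed scaling $\sigbeta^2=c_\beta/k$ this term is $O(1)$, hence negligible against $n$.

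For the conditional variance, fix $\xv_s$ and pass to its singular value decomposition: rotating $\beta_s$ and $\Yv$ by orthogonal matrices leaves their Gaussian laws unchanged and splits the channel into $r:=\mathrm{rank}(\xv_s)\le\min(n,k)$ independent scalar Gaussian channels with signal-to-noise ratios $a_1,\dots,a_r$ equal to $\tfrac{\sigbeta^2}{\sigma^2}$ times the nonzero eigenvalues of $\xv_s^T\xv_s$. Then $L$ is a sum of the corresponding independent per-channel information densities, and a one-line computation for a scalar channel $Y=hU+Z$ with Gaussian input shows its information density is an explicit quadratic-plus-bilinear form in two independent standard normals, of variance exactly $\tfrac{a}{1+a}<1$. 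Hence $\var(L\mid\Xv_s=\xv_s)=\sum_{i=1}^r\tfrac{a_i}{1+a_i}<r\le n$, so $\EE[\var(L\mid\Xv_s)]\le n$ and $V_0\le n+O(1)\le 2n$ for all large $p$, which is all that is needed. (A cruder alternative that makes the constant $2n$ even more transparent: $\log P_{\Yv|\Xv_s\beta_s}(\Yv|\Xv_s,\beta_s)=-\tfrac n2\log(2\pi\sigma^2)-\tfrac{\|\Zv\|^2}{2\sigma^2}$ depends only on $\Zv$, so its variance is $\tfrac{1}{4\sigma^4}\var[\|\Zv\|^2]=\tfrac n2$; conditioning on $\Xv_s$ shows $\var[\log P_{\Yv|\Xv_s}(\Yv|\Xv_s)]=\tfrac n2+O(1)$ likewise; and $\var(A-B)\le 2\var A+2\var B$ then gives $V_0\le 2n+O(1)$.)

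The one step that is not purely mechanical is controlling $\var\big[\log\det(\Iv_n+\tfrac{\sigbeta^2}{\sigma^2}\Xv_s\Xv_s^T)\big]$, i.e.\ the fluctuation of the conditional mutual information around $I_0$; the Gaussian Poincar\'e inequality (or, equivalently, Bartlett's decomposition of the Wishart determinant) is the cleanest way to see it is $O(1)$ under the stated scaling. Everything else reduces to the fact that information densities of Gaussian channels are affine--quadratic forms in jointly Gaussian variables with explicitly computable first two moments. I would note in passing that $2n$ is a deliberately loose bound — the argument above in fact gives $V_0=O(k)$ — but the weaker form is all the applications use, since $V_0$ enters only through Chebyshev-type remainder terms.
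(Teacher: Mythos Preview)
Your argument for \eqref{eq:bmi_I0} is the same as the paper's: both compute $I_0=\tfrac12\,\EE\big[\log\det(\Iv_k+\tfrac{\sigbeta^2}{\sigma^2}\Xv_s^T\Xv_s)\big]$ and apply Jensen (the paper does so eigenvalue-by-eigenvalue, you invoke concavity of $\log\det$ on the positive-semidefinite cone directly; these are equivalent).

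For \eqref{eq:bmi_V0} the two routes differ. The paper writes $L$ explicitly as a normalizing term plus $-\tfrac{1}{2\sigma^2}\|\Zv\|^2$ plus $\tfrac12\,\Yv^T(\sigma^2\Iv+\sigbeta^2\Xv_s\Xv_s^T)^{-1}\Yv$, and observes that the last expression is exactly $\tfrac12\chi^2_n$ \emph{regardless} of $\Xv_s$ (since $\Yv\mid\Xv_s$ has covariance $\sigma^2\Iv+\sigbeta^2\Xv_s\Xv_s^T$); both quadratic pieces thus have variance $n/2$, and the crude inequality $\var[A+B]\le\var[A]+\var[B]+2\max\{\var[A],\var[B]\}$ gives $2n$. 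This is essentially your ``cruder alternative''. The paper in fact writes the normalizing term simply as the constant $I_0$, glossing over the fluctuation of $\tfrac12\log\det(\Iv_n+\tfrac{\sigbeta^2}{\sigma^2}\Xv_s\Xv_s^T)$ that you isolate via the law of total variance and control by Gaussian Poincar\'e; on this point your treatment is the more careful one. Your main route through the SVD is extra work but buys the sharper conclusion $\var(L\mid\Xv_s)=\sum_i a_i/(1+a_i)\le\min(n,k)$, hence $V_0=O(k)$ under the assumed scaling --- a refinement the paper does not pursue, since (as you note) only the crude $O(n)$ bound is used downstream.
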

\begin{proof}
    See Appendix \ref{sec:LINEAR_PROOFS}.
\end{proof}

We now proceed with the steps of Procedure \ref{pr:procedure} (with the suitable changes from exact recovery to partial recovery, \emph{cf.}~Section \ref{sec:PARTIAL}).

\subsubsection*{Step 1}

Our choice of the typical set $\Tc_{\beta}$ is based on the following proposition characterizing the behavior of the $\lfloor \alpha k \rfloor$ entries of $\beta_s$ having the smallest magnitude for fixed $\alpha$.  We define the random variable $\beta'_s$ to be the permutation of $\beta_s$ whose entries are listed in increasing order of magnitude.

\begin{prop} \label{prop:empiricalB}
    For any $\alpha\in(0,1]$, we have
    \begin{equation}
        \lim_{k\to\infty} \frac{1}{k\sigbeta^2}\sum_{i=1}^{\lfloor \alpha k \rfloor} (\beta'_s)_i^2 = g(\alpha) \label{eq:empiricalB}
    \end{equation}
    with probability one, where
    \begin{equation}
        g(\alpha) := \int_{0}^{\infty} \big[ \alpha - F_{\chi^2}(u)\big]^+ \,du, \label{eq:g_alpha}
    \end{equation}
    and $F_{\chi^2}$ is the cumulative distribution function of a $\chi^2$ random variable with one degree of freedom.
\end{prop}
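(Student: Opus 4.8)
The plan is to reduce the claim to the strong law of large numbers (SLLN) for suitably truncated i.i.d.\ sums, exploiting that the summands are i.i.d. First I would normalize by setting $U_i := (\beta_s)_i^2/\sigbeta^2$, so that $U_1,\dotsc,U_k$ are i.i.d.\ $\chi^2$ random variables with one degree of freedom, with continuous distribution function $F_{\chi^2}$ that is strictly increasing on $(0,\infty)$. Writing $U_{(1)} \le \dotsb \le U_{(k)}$ for the order statistics, the left-hand side of \eqref{eq:empiricalB} equals $\frac{1}{k}\sum_{i=1}^{\lfloor\alpha k\rfloor} U_{(i)}$, i.e.\ $\frac1k$ times the sum of the $\lfloor\alpha k\rfloor$ smallest values. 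I would also record two preliminary facts about $g$ in \eqref{eq:g_alpha}: writing $q_\alpha := F_{\chi^2}^{-1}(\alpha)$ for $\alpha\in(0,1)$ and using $u = \int_0^\infty \openone\{t<u\}\,dt$ together with Fubini's theorem, one gets $g(\alpha) = \int_0^{q_\alpha} u\,dF_{\chi^2}(u) = \EE\big[U\,\openone\{U \le q_\alpha\}\big]$; moreover $g$ is continuous on $(0,1]$ by dominated convergence (the integrand $[\alpha - F_{\chi^2}(u)]^+$ is monotone in $\alpha$, pointwise continuous, and dominated by an integrable function on compact $\alpha$-intervals), with $g(1) = \EE[U] = 1$.

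Next I would set up a sandwich using the SLLN. For a fixed threshold $t > 0$, the truncated averages $A_k(t) := \frac{1}{k}\sum_{i=1}^k U_i\,\openone\{U_i \le t\}$ converge almost surely to $\EE\big[U\,\openone\{U\le t\}\big] = \int_0^t u\,dF_{\chi^2}(u)$, and the empirical proportions $N_k(t)/k$, with $N_k(t) := \#\{i : U_i \le t\}$, converge almost surely to $F_{\chi^2}(t)$. Fix $\varepsilon > 0$ small enough that $\varepsilon < \alpha$ and (when $\alpha < 1$) $\alpha + \varepsilon < 1$, and put $t^- := q_{\alpha-\varepsilon}$ and $t^+ := q_{\alpha+\varepsilon}$. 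On the almost-sure event where all these convergences hold, for all large $k$ we have $N_k(t^-) \le \lfloor\alpha k\rfloor \le N_k(t^+)$; the left inequality forces the $\lfloor\alpha k\rfloor$ smallest values to include every $U_i \le t^-$, while the right one forces them all to be $\le t^+$. Since the $U_i$ are nonnegative, this yields $A_k(t^-) \le \frac{1}{k}\sum_{i=1}^{\lfloor\alpha k\rfloor} U_{(i)} \le A_k(t^+)$ for all large $k$.

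Taking $k\to\infty$ and using the preliminary facts then gives $g(\alpha-\varepsilon) \le \liminf_k \frac{1}{k}\sum_{i=1}^{\lfloor\alpha k\rfloor} U_{(i)} \le \limsup_k \frac{1}{k}\sum_{i=1}^{\lfloor\alpha k\rfloor} U_{(i)} \le g(\alpha+\varepsilon)$ almost surely; letting $\varepsilon\downarrow 0$ along a countable sequence and invoking continuity of $g$ completes the proof for $\alpha\in(0,1)$, while $\alpha = 1$ follows immediately from the plain SLLN since then $\frac{1}{k}\sum_{i=1}^k U_i \to \EE[U] = 1 = g(1)$. The one point requiring care is the order-statistics sandwiching: the number of retained terms $\lfloor\alpha k\rfloor$ does not exactly match the random count $N_k(t)$ below any deterministic threshold, and it is precisely the insertion of the quantile thresholds $q_{\alpha\pm\varepsilon}$ together with continuity of $g$ that bridges this mismatch. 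Everything else is a routine application of the SLLN and Fubini's theorem.
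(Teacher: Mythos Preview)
Your proof is correct, but it takes a different route from the paper's own argument. The paper invokes the Glivenko--Cantelli theorem to get uniform convergence $\sup_u|\hat{F}_k(u)-F_{\chi^2}(u)|\to0$ almost surely for the empirical CDF of the $U_i$'s, from which the convergence of $\frac{1}{k}\sum_{i=1}^{\lfloor\alpha k\rfloor}U_{(i)}$ to $\int_0^\alpha F_{\chi^2}^{-1}(u)\,du$ follows in one line (this integral is then identified with $g(\alpha)$ via the same integration-by-parts/Fubini computation you carry out). Your approach is more elementary: you avoid Glivenko--Cantelli entirely, using only the pointwise SLLN at the fixed thresholds $q_{\alpha\pm\varepsilon}$ and squeezing via continuity of $g$. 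The trade-off is that the paper's argument is terser and gives uniformity in $\alpha$ for free (which is in fact used later, when the paper defines the typical set $\Tc_\beta$ via a uniform empirical-CDF bound), whereas yours is self-contained but establishes the limit only for each fixed $\alpha$; if you later need the uniform statement, you would have to upgrade the argument.
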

\begin{proof}
    Letting $\hat{F}_k$ be the empirical distribution of the values $\big\{\frac{1}{\sigbeta^2}\beta_i^2\big\}_{i=1}^k$, we have from the Glivenko-Cantelli theorem \cite[Thm.~19.1]{Van00} that $\sup_{u} |\hat{F}_k(u) - F_{\chi^2}(u)| \to 0$ almost surely. This immediately implies that the sum of the $\lfloor \alpha k \rfloor$ smallest values in $\big\{\frac{1}{\sigbeta^2}\beta_i^2\big\}$, normalized by the number of values $k$, converges almost surely to the integral of $F_{\chi^2}^{-1}(u)$ from $0$ to $\alpha$.  It is easily verified graphically that this integral can equivalently be written as \eqref{eq:g_alpha}.
\end{proof}

Based on this result and its proof, we set $\Tc_{\beta}$ to be the set of vectors $b_s$ such that $\sup_{u} |\hat{F}_k(u) - F_{\chi^2}(u)| \le \epsilon$, where $\epsilon$ is chosen to decay sufficiently slowly so that $\PP[\beta_s \in \Tc_{\beta}] \to 1$.  Thus, within the typical set, the empirical distribution of the non-zero entries closely follows a $\chi^2$ random variable.

An important consequence of this choice of typical set regards the behavior of the mutual information in \eqref{eq:I_linear}.  For a fixed set size $|\sdif|$, the partition $(\sdif,\seq)$ minimizing this mutual information is the one with the smallest value of $\sum_{i\in\sdif}b_i^2$.  Within the typical set, we immediately obtain from Proposition \ref{prop:empiricalB} that the corresponding mutual information behaves as follows when $|\sdif| = \lfloor \alpha k \rfloor$:
\begin{equation}
    I_{\sdif,\seq}(b_s) \to \frac{1}{2} \log\Big(1 + \frac{c_{\beta}}{\sigma^2} g(\alpha) \Big), \label{eq:linear_I_asymp}
\end{equation}
where we recall that $c_{\beta} = k\sigbeta^2$ is a constant.

\subsubsection*{Step 2}

We again make use of Proposition \ref{prop:conc_linear2} and its subsequent expression for $\psi_{\ell}$ and $\psi'_{\ell}$ in \eqref{eq:linear_psi}.

\subsubsection*{Step 3}

We choose $\gamma = I_0 + \sqrt{\frac{V_0}{\delta_0}}$ as in \eqref{eq:gamma_cheby} for some $\delta_0 > 0$, thus ensuring that $P_0(\gamma) \le \delta_0$.

For the terms in Theorems \ref{thm:simplified_ach_p}--\ref{thm:simplified_conv_p} containing $\psi_{\ell}$ and $\psi'_{\ell}$, we first note that since we are considering partial recovery, we may focus on values of $\ell=|\sdif|$ greater than $\alpha^{*} k$.  By our choice of $\Tc_{\beta}$, we may also focus on realizations $b_s$ of $\beta_s$ satisfying \eqref{eq:empiricalB}.  For such realizations, we have for all $\sdif$ with $|\sdif|=\ell=\Theta(k)$ that $\sum_{i\in\sdif} b_i^2 = \Omega(1)$, which implies that $\alpha_{\sdif}^2 = \Theta(1)$ in \eqref{eq:alpha_linear} and $I_{\sdif,\seq}(b_s) = \Omega(1)$ in \eqref{eq:I_linear}.  The analogous condition to \eqref{eq:lin_n_cond1} thus simplifies to $n I' \gg k$ for some $I' = \Omega(1)$, giving the following condition under which the second term in \eqref{eq:nonasymp_ach} vanishes:
\begin{equation}
    n = \Omega(k), \label{eq:lin_cond_mod}
\end{equation}
with a sufficiently large implied constant.  For the converse part, it suffices to have the weaker condition $n=\omega(1)$.

\subsubsection*{Step 4}

Combining the above steps, we get the following.

\begin{cor} \label{cor:linear_partial}
    Under the preceding setup for the linear model with $k\to\infty$, $k=o(p)$, $\sigbeta^2 = \frac{c_{\beta}}{k}$ for some $c_{\beta} > 0$, and $\dmax = \lfloor \alpha^* k \rfloor$ for some $\alpha^*\in(0,1)$, we have $\pe(\dmax) \to 0$ as $p\to\infty$ provided that
    \begin{equation}
    n \ge \max_{\alpha\in[\alpha^*,1]} \frac{ \alpha k \log \frac{p}{k} }{ \frac{1}{2}\log\Big(1 + \frac{c_{\beta}}{\sigma^2} g(\alpha) \Big)} (1 + \eta) \label{eq:linear3_ach}
    \end{equation}
    for some $\eta > 0$, where $g(\cdot)$ is defined in \eqref{eq:g_alpha}. Conversely, $\pe(\dmax) \to 1$ as $p\to\infty$ whenever
    \begin{equation}
    n \le \max_{\alpha\in[\alpha^*,1]} \frac{ (\alpha - \alpha^*) k \log \frac{p}{k} }{ \frac{1}{2}\log\Big(1 + \frac{c_{\beta}}{\sigma^2} g(\alpha) \Big)} (1 - \eta) \label{eq:linear3_conv}
    \end{equation}
    for some $\eta > 0 $. 
\end{cor}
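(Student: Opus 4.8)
The plan is to carry out Step~4 of Procedure~\ref{pr:procedure}: feed the typical set $\Tc_\beta$ (Step~1), the Bernstein-type bounds $\psi_\ell=\psi'_\ell$ of \eqref{eq:linear_psi} (with the maximum understood over $b_s\in\Tc_\beta$; Step~2), and the choice $\gamma=I_0+\sqrt{V_0/\delta_0}$ of \eqref{eq:gamma_cheby} (Step~3) into Theorems~\ref{thm:simplified_ach_p} and \ref{thm:simplified_conv_p}, and then simplify. I would let $\delta_1,\delta_0\to0$ slowly (say $\delta_1=\delta_0=1/\log p$) and $\delta_2$ be a small constant later sent to $0$, so that the terms $2\delta_1$, $P_0(\gamma)\le\delta_0$ and $\delta_1$ in \eqref{eq:nonasymp_ach_p}--\eqref{eq:nonasymp_conv_p} are negligible. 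What remains is (a) to control $\sum_\ell\binom k\ell\psi_\ell$ and $\max_{\ell\in\Lc}\psi'_\ell$, and (b) to show that the defining conditions \eqref{eq:final_ach} of $\Bc$ and \eqref{eq:final_conv_p} of $\Bc'$ reduce, on $\Tc_\beta$, to \eqref{eq:linear3_ach} and \eqref{eq:linear3_conv}, so that $\PP[\beta_s\notin\Bc]\to0$ and $\PP[\beta_s\in\Bc']\to1$.

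\emph{Achievability.} Fix $\sdif$ with $|\sdif|=\ell=\lfloor\alpha k\rfloor$; since only $\ell\ge\dmax+1$ arise in Theorem~\ref{thm:simplified_ach_p}, we have $\alpha\in(\alpha^*,1]$, hence $\ell=\Theta(k)$. In the numerator of \eqref{eq:final_ach}: using $k=o(p)$ one gets $\log\binom{p-k}{\ell}=\alpha k\log\tfrac pk\,(1+o(1))$ uniformly over $\alpha\in[\alpha^*,1]$ (the $\ell\log\tfrac1\alpha$ and $O(\ell)$ corrections are $o(k\log\tfrac pk)$ because $\log\tfrac pk\to\infty$); the term $\log\!\big(\tfrac{k^2}{\delta_1^2}\binom{k}{\ell}^2\big)=O(k)-2\log\delta_1=o(k\log\tfrac pk)$; and by Proposition~\ref{prop:boundMI}, since the relevant $n$ is $\Theta(k\log\tfrac pk)$ and $\sigbeta^2=c_\beta/k$, one has $\gamma\le\tfrac k2\log\!\big(1+\tfrac{n\sigbeta^2}{\sigma^2}\big)+\sqrt{2n/\delta_0}=O(k\log\log\tfrac pk)+o(k\log\tfrac pk)$. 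For the denominator, I would upgrade Proposition~\ref{prop:empiricalB} to a uniform-in-$\alpha$ statement on $\Tc_\beta$, so that for every $b_s\in\Tc_\beta$ the smallest value of $\tfrac1{k\sigbeta^2}\sum_{i\in\sdif}b_i^2$ over $|\sdif|=\ell$ is at least $g(\alpha)(1-o(1))$, whence $I_{\sdif,\seq}(b_s)\ge\tfrac12\log\!\big(1+\tfrac{c_\beta}{\sigma^2}g(\alpha)\big)(1-o(1))$ by \eqref{eq:I_linear}; this produces the $g(\cdot)$ in \eqref{eq:linear3_ach}. Taking the worst $\sdif$ of each size and then the worst size, any $n$ meeting \eqref{eq:linear3_ach} meets \eqref{eq:final_ach} for all admissible $(\sdif,\seq)$ and $b_s\in\Tc_\beta$ once $\delta_2$ and the $o(1)$ factors are absorbed into $\eta$; hence $\Tc_\beta\subseteq\Bc$ and $\PP[\beta_s\notin\Bc]\to0$. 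Finally, on $\Tc_\beta$ with $\ell=\Theta(k)$ we have $I_{\sdif,\seq}(b_s)=\Omega(1)$ and $\alpha_{\sdif}=\Theta(1)$, so $\psi_\ell=2e^{-\Theta(n)}$ and $\sum_\ell\binom k\ell\psi_\ell\le k\,2^k\cdot2e^{-\Theta(n)}\to0$ under $n=\Omega(k)$ with a large enough constant (\emph{cf.}~\eqref{eq:lin_cond_mod}), which is implied by \eqref{eq:linear3_ach}. Theorem~\ref{thm:simplified_ach_p} then gives $\pe(\dmax)\to0$.

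\emph{Converse.} Apply Theorem~\ref{thm:simplified_conv_p} with $\Lc=\{\lfloor\alpha^\dagger k\rfloor\}$, where $\alpha^\dagger\in(\alpha^*,1]$ is (asymptotically) the maximizer in \eqref{eq:linear3_conv}, the genie choosing for each $b_s\in\Tc_\beta$ the size-$\lfloor\alpha^\dagger k\rfloor$ partition minimizing $\sum_{i\in\sdif}b_i^2$; the uniform form of Proposition~\ref{prop:empiricalB} then gives $I_{\sdif,\seq}(b_s)\to\tfrac12\log\!\big(1+\tfrac{c_\beta}{\sigma^2}g(\alpha^\dagger)\big)$ as in \eqref{eq:linear_I_asymp}. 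The main new estimate is the numerator of \eqref{eq:final_conv_p}: as before $\log\binom{p-k+|\sdif|}{|\sdif|}=\alpha^\dagger k\log\tfrac pk\,(1+o(1))$, while for the subtracted combinatorial term one shows $\log\sum_{d=0}^{\dmax}\binom{p-k}{d}\binom{|\sdif|}{d}=\alpha^* k\log\tfrac pk\,(1+o(1))$ (the $d=\dmax$ summand dominates; $\binom{|\sdif|}{d}\le2^k$ and the $\log(\dmax+1)$ from the sum are $O(k)=o(k\log\tfrac pk)$), and $-\log\delta_1=o(k\log\tfrac pk)$; altogether the numerator is $(\alpha^\dagger-\alpha^*)k\log\tfrac pk\,(1+o(1))$. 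Hence \eqref{eq:final_conv_p} holds for all $b_s\in\Tc_\beta$ whenever $n$ satisfies \eqref{eq:linear3_conv}, so $\PP[\beta_s\in\Bc']\to1$; and since the optimal error probability is monotone in $n$ we may take $n$ equal to the right side of \eqref{eq:linear3_conv}, which is $\Theta(k\log\tfrac pk)\to\infty$, so $\max_{\ell\in\Lc}\psi'_\ell=2e^{-\Theta(n)}\to0$. Theorem~\ref{thm:simplified_conv_p} then yields $\pe(\dmax)\to1$ (again absorbing $\delta_2$ and the $o(1)$ factors into $\eta$).

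\emph{Main obstacle.} The crux is the uniform-in-$\alpha$ upgrade of Proposition~\ref{prop:empiricalB}: one must turn the sup-norm bound $\sup_u|\hat F_k(u)-F_{\chi^2}(u)|\le\epsilon$ defining $\Tc_\beta$ into a bound $\big|\tfrac1{k\sigbeta^2}\sum_{i=1}^{\lfloor\alpha k\rfloor}(\beta'_s)_i^2-g(\alpha)\big|\le\omega_\epsilon(\alpha)$ that is uniform over $\alpha\in[\alpha^*,1]$ and vanishes with $\epsilon$, so that the maximum over $\alpha$ in \eqref{eq:linear3_ach}--\eqref{eq:linear3_conv} is handled simultaneously for all $\sdif$ rather than one $\alpha$ at a time. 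A secondary nuisance, specific to the achievability part, is keeping $\gamma$ negligible: this relies on Proposition~\ref{prop:boundMI} making $I_0$ only logarithmic (not linear) in $n$, so that with $n/k=\Theta(\log\tfrac pk)$ one pays merely $O(k\log\log\tfrac pk)=o(k\log\tfrac pk)$, together with choosing $\delta_0\to0$ slowly; one must also verify the $\Lc$/genie bookkeeping in Theorem~\ref{thm:simplified_conv_p} so the converse realizes $\max_\alpha$ and not just a per-$\alpha$ bound.
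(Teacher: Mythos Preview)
Your proposal is correct and follows essentially the same approach as the paper's proof: feed the Bernstein bound \eqref{eq:linear_psi}, the Chebyshev choice $\gamma=I_0+\sqrt{V_0/\delta_0}$ controlled by Proposition~\ref{prop:boundMI}, and the Glivenko--Cantelli typical set into Theorems~\ref{thm:simplified_ach_p}--\ref{thm:simplified_conv_p}, then simplify the combinatorial numerators exactly as you describe (the paper bounds $\log\sum_d\binom{p-k}{d}\binom{\ell}{d}$ by $\log\bigl[(1+\dmax)\max_d\binom{p-k}{d}\binom{k}{d}\bigr]$ and absorbs the $O(k)$ pieces into $\eta$, matching your estimate). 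Your explicit concern about a uniform-in-$\alpha$ version of Proposition~\ref{prop:empiricalB} is well placed but not an obstacle---the sup-norm control $\sup_u|\hat F_k(u)-F_{\chi^2}(u)|\le\epsilon$ defining $\Tc_\beta$ already yields it, and the paper simply uses \eqref{eq:linear_I_asymp} without comment; likewise your monotonicity-in-$n$ argument for the converse matches the paper's ``worst-case'' device of taking equality in the threshold.
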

\begin{proof}
    The condition in \eqref{eq:linear3_ach} is obtained using \eqref{eq:final_ach} and \eqref{eq:linear_I_asymp}.  By the assumption $k=o(p)$, the numerator in \eqref{eq:linear3_ach} coincides with $\log {{p-k} \choose \lfloor \alpha k \rfloor}$ up to remainder terms in Stirling's approximation that can be factored into $\eta$.  The factor $\log\big(\frac{k^2}{\delta_1^2}{k \choose |\sdif|}^2\big)$ in \eqref{eq:final_ach} has been factored into $\eta$; this is valid when $\delta_1 \to 0$ sufficiently slowly due to the fact that $\log\big(k {k \choose |\sdif|}\big) = O(k)$, whereas (again using the assumption $k=o(p)$) the numerator in \eqref{eq:linear3_ach} behaves as $\omega(k)$.  We claim that the factor $\gamma = I_0 + \sqrt{\frac{V_0}{\delta_0}}$ resulting from \eqref{eq:gamma_cheby} can also be factored into $\eta$ for some vanishing sequence of parameters $\delta_0$ indexed by $p$.  To see this, we consider without loss of generality the ``worst-case'' setting in which \eqref{eq:linear3_ach} holds with equality.  We readily obtain $n = \Theta( k \log \frac{p}{k} )$, which in turn implies from Proposition \ref{prop:boundMI} that $I_0 = O\big(k \log(1 + k\sigbeta^2 \log\frac{p}{k} )\big) = O(k\log\log\frac{p}{k}) $ and $\sqrt{V_0} = O(\sqrt{k \log \frac{p}{k}})$.  Thus, $I_0 + \sqrt{\frac{V_0}{\delta_0}}$ is dominated by the numerator of \eqref{eq:linear3_ach} if $\delta_0$ is chosen to decay as (for example) $\Theta\big( \frac{1}{\log k} \big)$.  The fact that $n = \Theta( k \log \frac{p}{k} )$ also implies \eqref{eq:lin_cond_mod}.
    
    The converse bound in \eqref{eq:linear3_ach} is obtained similarly using \eqref{eq:final_ach}, except for the term $\alpha^*$ in the numerator.  To see how this arises, we consider an arbitrary value of $\alpha\in(\alpha^*,1]$ and set $\ell = \lfloor \alpha k\rfloor$; the case $\alpha=\alpha^*$ follows by continuity.   The term $\log {{p-k+\ell} \choose \ell}$ is handled in the same way as the term $\log {{p-k} \choose \ell}$ above, so we focus on the term $\log\sum_{d=0}^{\dmax}{{p-k} \choose d}{\ell \choose d}$.  This is upper bounded by $\max_{d=0,\dotsc,\dmax}\log\big((1+\dmax){{p-k} \choose d}{k \choose d}\big)$.  Similarly to the achievability part, we can factor $\log\big((1+\dmax)\log {k \choose d}\big)$ into $\eta$, so we are left with $\log {{p-k} \choose \dmax}$.  Approximating this using Stirling's approximation as before, and recalling that $\dmax = \lfloor \alpha^* k \rfloor$, we obtain the desired term $\alpha^{*}k\log\frac{p}{k}$.
\end{proof}

While the achievability and converse bounds in Corollary \ref{cor:linear_partial} do not have the same constants, the two are similar, and always have the same scaling laws.  In the limit as $c_{\beta} \to \infty$, we have $\frac{1}{2}\log\big(1 + \frac{c_{\beta}}{\sigma^2} g(\alpha) \big) = \frac{1}{2}(\log c_{\beta})(1+o(1))$; in this case, the maxima in \eqref{eq:linear3_ach}--\eqref{eq:linear3_conv} are both achieved with $\alpha\to1$, and hence, the two bounds coincide to within a multiplicative factor of $\frac{1}{1-\alpha^*}$.

Corollary \ref{cor:linear_partial} is related to the setting studied by Reeves and Gastpar \cite{Ree12,Ree13}, but considers $k = o(p)$ instead of $k = \Theta(p)$.  Despite this difference, it is instructive to compare the bounds upon letting the implied constant in the $\Theta(p)$ scaling tend to zero.  A careful comparison reveals that the converse bounds coincide in this limit, whereas our achievability bound is slightly better, in that the analogous bound in \cite{Ree12} multiplies $\frac{c_{\beta}}{\sigma^2} g(\alpha)$ by $(\sqrt{2}-1)^2 \approx 0.17$; see \cite[Eq.~(21)]{Ree12} and \cite[Eq.~(25)]{Ree13}.

In Section \ref{sec:NUMERICAL_PARTIAL}, we present some numerical results for this setting.

\subsection{1-bit Model with Discrete $\beta_S$}  \label{sec:1BIT_DISC}

We now turn to the quantized counterpart of \eqref{eq:linear_model}:
\begin{equation}
    Y = \sign\big(\langle X_S, \beta_S \rangle + Z\big).
\end{equation}
As in Section \ref{sec:GAUSSIAN_DISC}, we fix $s=\{1,\dotsc,k\}$ and let $\beta_s$ be a uniformly random permutation of a fixed vector $(b_1,\dotsc,b_k)$, and we set $P_{X} \sim N(0,1)$.   We again write the minimum and maximum absolute values of $\{b_i\}_{i=1}^k$ as $\bmin$ and $\bmax$.  

The following proposition gives the required characterizations on the mutual information terms and the corresponding variance terms.  Recall the binary entropy function $H_2(\cdot)$ and the Q-function $Q(\cdot)$ defined in Section \ref{sec:NOTATION}.

\begin{prop} \label{prop:1bit_moments}
    Under the preceding setup for the 1-bit model, we have the following:
    
    (i) The mutual information $I_{\sdif,\seq}(b_s)$ is given by
    \begin{multline}
       I_{\sdif,\seq}(b_s) =  \EE\bigg[ H_2\bigg( Q\bigg( W \sqrt{ \frac{\sum_{i\in\seq}b_i^2}{\sigma^2 + \sum_{i\in\sdif}b_i^2} } \bigg) \bigg) \\  - H_2\bigg( Q\bigg( W \sqrt{\frac{1}{\sigma^2}\sum_{i \in s} b_i^2} \bigg) \bigg) \bigg], \label{eq:1bit_Ilv}
   \end{multline}
    where $W \sim N(0,1)$.
    
    (ii) If $k=\Theta(1)$, $\sigma^2=\Theta(1)$, $\bmin = \Theta(\bmax)$, and $\bmin^2 = o(1)$, then 
    \begin{equation}
        I_{\sdif,\seq}(b_s) = \bigg(\frac{1}{\pi\sigma^2} \sum_{i\in\sdif}b_i^2\bigg)\big(1+o(1)\big). \label{eq:1bit_Ilv_asymp}
    \end{equation}
    
    (iii) If $k=\Theta(p)$, $\sigma^2 = \Theta(1)$, and the entries of $b_s$ all equal a common value $b_0$ such that $b_0^2 = \Theta\big(\frac{\log p}{p}\big)$, then the mutual information quantities $I_{\sdif,\seq}(b_s)$ with $|\sdif|=1$ all equal a common value $I_1$ satisfying
    \begin{align}
        I_1 &= \frac{1}{2} \frac{\frac{b_0^2}{\sigma^2}}{\sqrt{2\pi k\frac{b_0^2}{\sigma^2}}} \EE\Big[ W \log\frac{1-Q(W)}{Q(W)} \Big] (1+o(1)) \label{eq:1bit_I1_exact} \\
              &= \Theta\bigg( \frac{\sqrt{\log p}}{p} \bigg), \label{eq:1bit_I1}
    \end{align}
    where $W \sim N(0,1)$.
    
    (iv) The variance $V_{\sdif,\seq}(b_s)$ defined in \eqref{eq:Vb} satisfies
    \begin{multline}
        V_{\sdif,\seq}(b_s) \le c_0 \Bigg( \frac{1}{\sigma^2}\sum_{i\in\sdif}b_i^2 +
        \bigg(\frac{1}{\sigma^2}\sum_{i\in\sdif}b_i^2\bigg)^2 \\ +
        \min\bigg\{1,\bigg(\frac{1}{\sigma^2}\sum_{i\in\sdif}b_i^2\bigg)^2\bigg\}\frac{1}{\sigma^2}\sum_{i\in\seq}b_i^2
        \Bigg) \label{eq:1bit_Vlv}
    \end{multline}
    for some universal constant $c_0$.
\end{prop}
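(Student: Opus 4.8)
The plan is to handle the four parts in order, writing $A := \sum_{i\in\sdif}b_i^2$, $C := \sum_{i\in\seq}b_i^2$, and letting $\phi$, $Q$ denote the standard normal density and complementary CDF and $g(u) := H_2(Q(u))$; note $g$ is even since $H_2(p) = H_2(1-p)$ and $Q(-u) = 1 - Q(u)$. Part (i) is a direct computation using $P_X \sim N(0,1)$: writing $I_{\sdif,\seq}(b_s) = H(Y\mid X_{\seq},\beta_s{=}b_s) - H(Y\mid X_{\sdif},X_{\seq},\beta_s{=}b_s)$, the second term equals $\EE[H_2(Q(\langle X_s,b_s\rangle/\sigma))]$ because given $X_s = x_s$ and $\beta_s = b_s$ the output $Y = \sign(\langle x_s,b_s\rangle + Z)$ has $\PP[Y{=}1] = 1 - Q(\langle x_s,b_s\rangle/\sigma)$ and $H_2$ is symmetric, with $\langle X_s,b_s\rangle \sim N(0,\sum_{i\in s}b_i^2)$; similarly, given only $X_{\seq} = x_{\seq}$ one has $\langle X_{\sdif},b_{\sdif}\rangle + Z \sim N(0,\sigma^2+A)$, so $\PP[Y{=}1\mid x_{\seq},b_s] = 1 - Q(\langle x_{\seq},b_{\seq}\rangle/\sqrt{\sigma^2+A})$ with $\langle X_{\seq},b_{\seq}\rangle \sim N(0,C)$, and averaging gives the first expectation in \eqref{eq:1bit_Ilv}.

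For part (ii), define $f(x) := \log 2 - \EE[g(W\sqrt x)]$, so that (i) reads $I_{\sdif,\seq}(b_s) = f\big(\tfrac1{\sigma^2}\sum_{i\in s}b_i^2\big) - f\big(\tfrac{C}{\sigma^2+A}\big)$. Since $g$ is even and smooth with $g(0) = \log 2$, Taylor's theorem gives $g(u) = \log 2 + \tfrac12 g''(0)u^2 + O(u^4)$, and a short calculation from $H_2''(1/2) = -4$, $Q'(0) = -1/\sqrt{2\pi}$, $Q''(0) = 0$ gives $g''(0) = -2/\pi$, hence $f(x) = x/\pi + O(x^2)$; the $O(\cdot)$ is uniform after splitting the expectation defining $f$ at $|W|\sqrt x \le 1$ and using $0 \le g \le \log 2$ on the complement. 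Under the stated scalings all arguments of $f$ are $o(1)$, and direct algebra gives $\tfrac1{\sigma^2}\sum_{i\in s}b_i^2 - \tfrac{C}{\sigma^2+A} = \tfrac{A(\sigma^2+A+C)}{\sigma^2(\sigma^2+A)} = \tfrac{A}{\sigma^2}(1+o(1))$, while the remainder contributions are $O((A+C)^2) = o(A/\sigma^2)$ because $\bmin = \Theta(\bmax)$ and $k = \Theta(1)$; this yields \eqref{eq:1bit_Ilv_asymp}, uniformly over the finitely many partitions.

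For part (iii) we take $|\sdif| = 1$, so $A = b_0^2$, $C = (k-1)b_0^2$, and set $\rho_1^2 := \tfrac{C}{\sigma^2+A}$, $\rho_2^2 := \tfrac{A+C}{\sigma^2}$, noting $\rho_1,\rho_2\to\infty$ while $\rho_2^2 - \rho_1^2 = \tfrac{A(\sigma^2+A+C)}{\sigma^2(\sigma^2+A)}\to0$. Differentiating under the expectation, $\tfrac{d}{d\rho}\EE[g(W\rho)] = -\Psi(\rho)$ with $\Psi(\rho) := \EE\big[W\phi(W\rho)\log\tfrac{1-Q(W\rho)}{Q(W\rho)}\big]$, so (i) gives $I_1 = \int_{\rho_1}^{\rho_2}\Psi(\rho)\,d\rho$ (the interchange is justified since $\log\tfrac{1-Q(u)}{Q(u)} = O(1+u^2)$). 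The substitution $u = w\rho$ gives $\Psi(\rho) = \tfrac1{2\pi\rho^2}\int e^{-u^2/(2\rho^2)}\,u\,e^{-u^2/2}\log\tfrac{1-Q(u)}{Q(u)}\,du$, and dominated convergence gives $\Psi(\rho) = \tfrac{K}{\rho^2}(1+o(1))$ uniformly for $\rho \ge \rho_1$, where $K := \tfrac1{\sqrt{2\pi}}\EE\big[W\log\tfrac{1-Q(W)}{Q(W)}\big]$. Hence $I_1 = K\big(\tfrac1{\rho_1}-\tfrac1{\rho_2}\big)(1+o(1)) = K\tfrac{\rho_2^2-\rho_1^2}{\rho_1\rho_2(\rho_1+\rho_2)}(1+o(1))$, and substituting the values of $\rho_1^2,\rho_2^2$ and simplifying with $A = o(1)$ and $C\to\infty$ gives \eqref{eq:1bit_I1_exact}; plugging $b_0^2 = \Theta(\log p / p)$ and $k = \Theta(p)$ then gives the scaling \eqref{eq:1bit_I1}.

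Part (iv) is the main obstacle. Conditioning on $X_{\seq} = x_{\seq}$, the information density $\imath(X_{\sdif};Y\mid x_{\seq},b_s)$ is that of the binary-output channel with Gaussian input $T := \langle X_{\sdif},b_{\sdif}\rangle \sim N(0,A)$ and output $Y = \sign(T + \mu_2 + Z)$, $\mu_2 := \langle x_{\seq},b_{\seq}\rangle$; explicitly it equals $\log\tfrac{1-Q((T+\mu_2)/\sigma)}{1-Q(\mu_2/\sqrt{\sigma^2+A})}$ on $\{Y{=}1\}$ and $\log\tfrac{Q((T+\mu_2)/\sigma)}{Q(\mu_2/\sqrt{\sigma^2+A})}$ on $\{Y{=}-1\}$. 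The law of total variance over $X_{\seq}$ writes $V_{\sdif,\seq}(b_s) = \EE_{X_{\seq}}\big[\var(\imath\mid X_{\seq})\big] + \var_{X_{\seq}}\big(I(X_{\sdif};Y\mid X_{\seq},\beta_s{=}b_s)\big)$. The first term I would bound by $\EE[\imath^2\mid X_{\seq}]$ using elementary estimates of the form $\big|\log\tfrac{Q(a)}{Q(b)}\big| \le c(|a-b| + |a^2-b^2|)$ and $\big|\log\tfrac{1-Q(a)}{1-Q(b)}\big| \le c|a-b|$ together with $\EE[T^2]/\sigma^2 = A/\sigma^2$ and Gaussian moment bounds, producing the $\tfrac{A}{\sigma^2} + \big(\tfrac{A}{\sigma^2}\big)^2$ contribution; the second term I would control by showing that $I(X_{\sdif};Y\mid X_{\seq}{=}x_{\seq},\beta_s{=}b_s)$, as a function of $\mu_2$, is smooth with derivative bounded in magnitude by a quantity of order $\min\{1,A/\sigma^2\}/\sigma$, so that its variance over $\mu_2 \sim N(0,C)$ is $O\big(\min\{1,(A/\sigma^2)^2\}\tfrac{C}{\sigma^2}\big)$. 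The crux is the case analysis separating $A/\sigma^2 \lesssim 1$ from $A/\sigma^2 \gtrsim 1$ when establishing these $\min\{1,\cdot\}$ factors, for which one repeatedly uses that $H_2\circ Q$ together with its first few derivatives are bounded on all of $\RR$; none of the individual estimates is difficult, but organizing them to land exactly on \eqref{eq:1bit_Vlv} is the delicate part.
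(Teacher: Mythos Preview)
Parts (i)--(iii) are correct and essentially match the paper; your integral representation $I_1=\int_{\rho_1}^{\rho_2}\Psi(\rho)\,d\rho$ in (iii) is a clean alternative to the paper's Taylor expansion of $H_2\circ Q$ about $\rho_2$, and both routes land on the same leading term.

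Part (iv) has a genuine gap. The inequality you propose, $\big|\log\tfrac{Q(a)}{Q(b)}\big|\le c\big(|a-b|+|a^2-b^2|\big)$ for a universal $c$, is false: take $a=-b>0$, so $|a^2-b^2|=0$ while $-\log Q(a)\sim a^2/2$, hence the left side grows like $a^2$ and the right like $2a$. This case does arise in your setup, since $T$ can flip the sign of $(T+\mu_2)/\sigma$ relative to $\mu_2/\sqrt{\sigma^2+A}$. More structurally, the law of total variance does not cleanly separate the three terms of \eqref{eq:1bit_Vlv}: once you bound $\EE_{X_{\seq}}[\var(\imath\mid X_{\seq})]$ by $\EE[\imath^2]$ you have already committed to controlling the full second moment, and no saving comes from the decomposition.

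The paper bounds $\EE[\imath^2]$ directly. The key device you are missing is the convexity of $x\mapsto-\log Q(x)$, which yields
\[
\Big|\log\frac{Q(a)}{Q(b)}\Big|\le\Big(\frac{\phi(a)}{Q(a)}+\frac{\phi(b)}{Q(b)}\Big)|a-b|,
\]
and, crucially, the probability weight: after averaging over $Y$ one gets a factor $Q(\Wdif+\Weq)$ multiplying the square of the log-ratio, and $Q(x)\big(\phi(x)/Q(x)\big)^2=\phi(x)^2/Q(x)$ is uniformly bounded. This handles the term with derivative at $a=\Wdif+\Weq$ immediately, giving the $A/\sigma^2+(A/\sigma^2)^2$ contribution. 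The term with derivative at $b=\tau\Weq$ is trickier; the paper splits it according to whether $|\Wdif|\le\tfrac12|\Weq|$ (where $\sign(\Wdif+\Weq)=\sign(\Weq)$, so the weight $Q(\Wdif+\Weq)$ again controls $\big(\phi(\tau\Weq)/Q(\tau\Weq)\big)^2$) or $|\Wdif|>\tfrac12|\Weq|$ (where $|\Weq|\le 2|\Wdif|$ lets you absorb everything into moments of $\Wdif$). The $\min\{1,(A/\sigma^2)^2\}$ factor on the $C/\sigma^2$ term then comes from the behavior of $(1-\tau)^2$ with $\tau=(1+A/\sigma^2)^{-1/2}$. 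Your Poincar\'e-style idea for the variance of the conditional mutual information could perhaps be made to work for the third term, but it would still leave the first two to a second-moment argument that needs the convexity/weighting trick above.
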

\begin{proof}
    See Appendix \ref{sec:1BIT_PROOFS}.
\end{proof}

Below we present two corollaries corresponding to different scalings of $k$ and the SNR, namely, those given in parts (ii) and (iii) of Proposition \ref{prop:1bit_moments}.  We proceed by simultaneously presenting the steps of Procedure \ref{pr:procedure} for both settings.

\subsubsection*{Step 1}

As in Section \ref{sec:GAUSSIAN_DISC}, we choose the trivial typical set $\Tc_{\beta}$ containing all vectors on the support of $P_{\beta_S}$.

\subsubsection*{Step 2}

We make use of Chebyshev's inequality in Proposition \ref{prop:chebyshev} in Appendix \ref{sec:CONCENTRATION}. Choosing $\delta = \delta_2 I_{\sdif,\seq}(b_s)$ in \eqref{eq:conc_linear1}, it follows that we may set
\begin{multline}
    \psi_{\ell}(n,\delta_2) = \psi'_{\ell}(n,\delta_2) \\ = \max_{(\sdif,\seq,b_s)\,:\,|\sdif|=\ell} \frac{V_{\sdif,\seq}(b_s)}{n\delta_2^2 I_{\sdif,\seq}(b_s)^2}. \label{eq:1bit_cheby}
\end{multline}

\subsubsection*{Step 3}

We again choose $\gamma$ as in \eqref{eq:discrete_gamma} so that $P_0(\gamma) = 0$. Consider the setting described in part (ii) of Proposition \ref{prop:1bit_moments}.  Under the scalings therein, \eqref{eq:1bit_Ilv_asymp} and \eqref{eq:1bit_Vlv} both behave as $\Theta( \bmin^2 )$.  Hence, and using \eqref{eq:1bit_cheby} and the fact that $k=\Theta(1)$, the second term in \eqref{eq:nonasymp_ach} vanishes provided that 
\begin{equation}
    n=\omega\Big( \frac{1}{\bmin^2} \Big). \label{eq:1bit_n_cond_a}
\end{equation}
The setting described in part (iii) of Proposition \ref{prop:1bit_moments} is handled similarly.  We set $\Lc=\{1\}$ in Theorem \ref{thm:simplified_conv}, thus focusing only on $\ell := |\sdif| = 1$.  Denoting the corresponding variance $V_{\sdif,\seq}(b_s)$ by $V_1$, it follows by substituting the scalings of $k$, $\sigma^2$ and $b_0^2$ into \eqref{eq:1bit_Vlv} that $V_1 = O\big( \frac{\log p}{p} \big)$. It thus follows from \eqref{eq:1bit_I1} and \eqref{eq:1bit_cheby} that $\psi'_{1}(n,\delta_2)$ vanishes provided that 
\begin{equation}
    n=\omega(p). \label{eq:1bit_n_cond_b}
\end{equation}

\subsubsection*{Step 4}

Combining the above steps and applying asymptotic simplifications, we obtain the following corollaries.
 
\begin{cor} \label{cor:1bit_fixedK}
    Under the preceding setup for the 1-bit model with $k=\Theta(1)$, $\sigma^2 = \Theta(1)$, $\bmin = \Theta(\bmax)$, and $\bmin = o(1)$, we have $\pe \to 0$ as $p\to\infty$ provided that
    \begin{equation}
    n \ge \max_{\sdif\ne\emptyset} \frac{ |\sdif| \log p }{ \frac{1}{\pi\sigma^2}\sum_{i \in \sdif}b_i^2} (1 + \eta) \label{eq:1bit_n_cond1}
    \end{equation}
    for some $\eta > 0$. Conversely, $\pe \to 1$ as $p\to\infty$ whenever
    \begin{equation}
    n \le \max_{\sdif\ne\emptyset} \frac{ |\sdif| \log p }{ \frac{1}{\pi\sigma^2}\sum_{i \in \sdif}b_i^2} (1 - \eta)  \label{eq:1bit_n_cond2}
    \end{equation}
    for some $\eta > 0$. 
\end{cor}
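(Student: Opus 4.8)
The plan is to invoke Theorems~\ref{thm:simplified_ach} and~\ref{thm:simplified_conv} following Procedure~\ref{pr:procedure}, using the choices of typical set $\Tc_\beta$, of $\{\psi_\ell\}$ and $\{\psi'_\ell\}$, and of $\gamma$ that were already fixed in Steps~1--3 above, together with the moment characterizations of Proposition~\ref{prop:1bit_moments}(i), (ii), (iv). Throughout, $k=\Theta(1)$ is used to treat $k$, $k!$, and ${k \choose \ell}$ as $\Theta(1)$ quantities, and we let $\delta_1\to0$ slowly enough that $\log\frac1{\delta_1}=o(\log p)$. Since $|\sdif|\le k=O(1)$ and $\bmin=\Theta(\bmax)$, Proposition~\ref{prop:1bit_moments}(ii) and (iv) give, \emph{uniformly over all $2^k-1$ partitions} and all $b_s$ on the support of $P_{\beta_s}$, that $I_{\sdif,\seq}(b_s)=\frac{1}{\pi\sigma^2}\sum_{i\in\sdif}b_i^2(1+o(1))=\Theta(\bmin^2)$ and $V_{\sdif,\seq}(b_s)=\Theta(\bmin^2)$; also ${p-k\choose|\sdif|}={p-k+|\sdif|\choose|\sdif|}(1+o(1))=\Theta(p^{|\sdif|})$, so $\log$ of either equals $|\sdif|\log p\,(1+o(1))$.

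\emph{Achievability.} With $\gamma$ as in \eqref{eq:discrete_gamma} we get $P_0(\gamma)=0$ and $\gamma=\log\frac{1}{\min_{b_s}P_{\beta_s}(b_s)}\le\log(k!)=O(1)$, since $P_{\beta_s}$ is uniform over at most $k!$ permutations of $(b_1,\dotsc,b_k)$. Hence the numerator of \eqref{eq:final_ach} is $\log{p-k\choose|\sdif|}+\log\big(\frac{k^2}{\delta_1^2}{k\choose|\sdif|}^2\big)+\gamma=|\sdif|\log p\,(1+o(1))$, and dividing by $I_{\sdif,\seq}(b_s)(1-\delta_2)$ shows that, for $\delta_2$ small, \eqref{eq:1bit_n_cond1} implies \eqref{eq:final_ach} for every partition and every $b_s$ in the (trivial) typical set, so $\PP[\beta_s\notin\Bc(\delta_1,\delta_2,\gamma)]=0$. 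For the remaining terms in \eqref{eq:nonasymp_ach}: $P_0(\gamma)=0$, $2\delta_1\to0$, and by \eqref{eq:1bit_cheby} together with the above orders, $\sum_{\ell}{k\choose\ell}\psi_\ell(n,\delta_2)=\Theta\big(\tfrac{1}{n\bmin^2}\big)$, which vanishes because any $n\ge n^*(1+\eta)$ in \eqref{eq:1bit_n_cond1} satisfies $n=\Theta\big(\tfrac{\log p}{\bmin^2}\big)=\omega\big(\tfrac{1}{\bmin^2}\big)$ (the term $\frac{|\sdif|\log p}{I_{\sdif,\seq}(b_s)}$ is $\Theta(\log p/\bmin^2)$ for every $\sdif$). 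Thus $\pe\to0$.

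\emph{Converse.} Apply Theorem~\ref{thm:simplified_conv} with $\Lc=\{1,\dotsc,k\}$. In \eqref{eq:final_conv}, $\log{p-k+|\sdif|\choose|\sdif|}-\log\delta_1=|\sdif|\log p\,(1+o(1))$ and the denominator is $I_{\sdif,\seq}(b_s)(1+\delta_2)=\frac{1+\delta_2}{\pi\sigma^2}\sum_{i\in\sdif}b_i^2(1+o(1))$, so for $\delta_2$ small enough \eqref{eq:1bit_n_cond2} forces \eqref{eq:final_conv} for every partition and every $b_s$, giving $\PP[\beta_s\in\Bc'(\delta_1,\delta_2)]=1$. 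By monotonicity of the (best-decoder) error probability in $n$, it suffices to establish $\pe\to1$ at $n=\lfloor n^*(1-\eta)\rfloor$, for which $n=\Theta\big(\tfrac{\log p}{\bmin^2}\big)=\omega\big(\tfrac{1}{\bmin^2}\big)$, so $\max_{\ell\in\Lc}\psi'_\ell(n,\delta_2)=\Theta\big(\tfrac{1}{n\bmin^2}\big)\to0$ by \eqref{eq:1bit_cheby}. Then \eqref{eq:nonasymp_conv} yields $\pe\ge1\cdot(1-o(1))-\delta_1\to1$.

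\emph{Anticipated main obstacle.} Once Proposition~\ref{prop:1bit_moments} is in hand, the argument is essentially bookkeeping, and the only delicate point is verifying that \emph{every} slack term --- the constant $\gamma$, the combinatorial factor $\log\big(\frac{k^2}{\delta_1^2}{k\choose|\sdif|}^2\big)$, the $\Theta(p^{|\sdif|})$-versus-binomial discrepancy, the Chebyshev margin $\delta_2$, and the $\psi,\psi'$ remainders --- is $o\!\big(|\sdif|\log p/I_{\sdif,\seq}(b_s)\big)$ and can be folded into the single $\eta$. This relies crucially on the low-SNR estimates $I_{\sdif,\seq}(b_s)=\Theta(\bmin^2)$ and $V_{\sdif,\seq}(b_s)=\Theta(\bmin^2)$ holding \emph{uniformly} over all partitions (valid here only because $k=\Theta(1)$ and $\bmin=\Theta(\bmax)$), and, for the converse, on the monotonicity-in-$n$ device that pins $n$ to order $\frac{\log p}{\bmin^2}$ so that the Chebyshev remainder still vanishes.
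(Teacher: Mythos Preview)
Your proposal is correct and follows essentially the same approach as the paper's proof: invoke Theorems~\ref{thm:simplified_ach}--\ref{thm:simplified_conv} via Procedure~\ref{pr:procedure}, use Proposition~\ref{prop:1bit_moments}(ii) for the denominator and $\log{p\choose|\sdif|}=|\sdif|\log p\,(1+o(1))$ for the numerator, absorb all $O(1)$ remainder terms (including $\gamma$ from \eqref{eq:discrete_gamma}) into $\eta$, and check \eqref{eq:1bit_n_cond_a} via the monotonicity-in-$n$ device exactly as in Corollary~\ref{cor:linear_genK}. One cosmetic point: Proposition~\ref{prop:1bit_moments}(iv) only gives $V_{\sdif,\seq}(b_s)=O(\bmin^2)$, not $\Theta(\bmin^2)$, but since only the upper bound enters Chebyshev this does not affect your argument.
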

\begin{proof}
     We obtain \eqref{eq:1bit_n_cond1} and \eqref{eq:1bit_n_cond2} from \eqref{eq:final_ach} and \eqref{eq:final_conv} respectively.  The denominators are obtained directly from part  (ii) of Proposition \ref{prop:1bit_moments}, and the numerators follow from the identity $\log{p \choose |\sdif|} = (|\sdif| \log p)(1+o(1))$, which holds whenever $k=\Theta(1)$ and hence $|\sdif|=\Theta(1)$.  By the assumption $k=\Theta(1)$, the remaining terms in \eqref{eq:final_ach} (including the choice of $\gamma$ in \eqref{eq:discrete_gamma}) can be factored into $\eta$.  The condition in \eqref{eq:1bit_n_cond_a} is implied by \eqref{eq:1bit_n_cond1} (or by \eqref{eq:1bit_n_cond2} when equality holds) by the same argument as Corollary \ref{cor:linear_genK}.
\end{proof}

\begin{cor} \label{cor:1bit_genK}
    Under the preceding setup for the 1-bit model with $k=\Theta(p)$, $\sigma^2 = \Theta(1)$, and the entries of $\beta_s$ deterministically equaling a common value $b_0$ such that $b_0^2 = \Theta\big(\frac{\log p}{p}\big)$, we have $\pe \to 1$ provided that
    \begin{align}
        n &\le \frac{\log p}{ \frac{1}{2} \frac{\frac{b_0^2}{\sigma^2}}{\sqrt{2\pi k\frac{b_0^2}{\sigma^2}}} \EE\Big[ W \log\frac{1-Q(W)}{Q(W)} \Big] } (1-\eta) \label{eq:1bit_n_cond3} \\
            &= \Theta\big( p \sqrt{\log p} \big)
    \end{align}
    for some $\eta\in(0,1)$, where $W \sim N(0,1)$.
\end{cor}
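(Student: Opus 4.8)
The plan is to apply Theorem~\ref{thm:simplified_conv} with the singleton index set $\Lc=\{1\}$, in the same manner as the proof of Corollary~\ref{cor:1bit_fixedK}, but now invoking the high-SNR characterization of the mutual information in Proposition~\ref{prop:1bit_moments}(iii). Since $\beta_s$ is deterministic with all entries equal to $b_0$, the typical set (Step~1 of Procedure~\ref{pr:procedure}) is trivial, and by the permutation symmetry of the model every partition $(\sdif,\seq)$ with $|\sdif|=1$ yields the same mutual information $I_1$ given by \eqref{eq:1bit_I1_exact} and the same variance, call it $V_1$. Specializing the converse condition \eqref{eq:final_conv} to $\ell=|\sdif|=1$ gives
\[
    n \le \frac{\log(p-k+1) - \log\delta_1}{I_1(1+\delta_2)},
\]
and since $k=\Theta(p)$ we have $\log(p-k+1) = (\log p)(1+o(1))$; choosing $\delta_1\to0$ slowly and $\delta_2$ a sufficiently small constant, the right-hand side is at least $\frac{\log p}{I_1}(1-\eta)$ for any prescribed $\eta\in(0,1)$ and $p$ large. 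Substituting \eqref{eq:1bit_I1_exact} for $I_1$ then produces exactly the threshold \eqref{eq:1bit_n_cond3}, while \eqref{eq:1bit_I1} gives $\frac{\log p}{I_1}=\Theta\big(p\sqrt{\log p}\big)$.

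It then remains to verify that the remainder terms in \eqref{eq:nonasymp_conv} are negligible (Step~3). Because $\beta_s$ is deterministic and, whenever $n$ satisfies the hypothesis, both \eqref{eq:psi_conv} and \eqref{eq:final_conv} hold simultaneously for all singleton partitions, we have $\PP\big[\beta_s\in\Bc'(\delta_1,\delta_2)\big]=1$. For the factor $1-\psi'_1(n,\delta_2)$ we use the Chebyshev bound \eqref{eq:1bit_cheby}, so that $\psi'_1(n,\delta_2)=\frac{V_1}{n\delta_2^2 I_1^2}$; substituting $\sigma^2=\Theta(1)$, $k=\Theta(p)$, and $b_0^2=\Theta\big(\frac{\log p}{p}\big)$ into the variance bound \eqref{eq:1bit_Vlv} of Proposition~\ref{prop:1bit_moments}(iv) gives $V_1=O\big(\frac{\log p}{p}\big)$, and with $I_1=\Theta\big(\frac{\sqrt{\log p}}{p}\big)$ this yields $\psi'_1(n,\delta_2)=O(p/n)$, which vanishes provided $n=\omega(p)$. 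That last condition is automatic: we may assume without loss of generality that the hypothesis holds with equality (a decoder may ignore extra measurements), whence $n=\Theta\big(p\sqrt{\log p}\big)=\omega(p)$. Letting $\delta_1\to0$, Theorem~\ref{thm:simplified_conv} then gives $\pe\ge 1\cdot(1-o(1))-o(1)\to1$.

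The one point that genuinely needs care is precisely this consistency check --- that the Chebyshev requirement $n=\omega(p)$ is compatible with the stated ceiling $n\le\frac{\log p}{I_1}(1-\eta)$ --- and it works only because $\frac{\log p}{I_1}=\omega(p)$ in this high-SNR regime; in a lower-SNR scaling the two conditions could conflict and a sharper concentration inequality would be needed. It is also worth remarking that restricting to $\Lc=\{1\}$ loses nothing asymptotically: for $|\sdif|=\ell$ the numerator $\log{p-k+\ell\choose\ell}$ is $O(\ell\log p)$ while $I_{\sdif,\seq}(b_s)$ is bounded below away from zero once $\ell b_0^2=\Omega(1)$, so every $\ell\ge2$ gives a ratio of order at most $p$, dominated by the order-$p\sqrt{\log p}$ contribution at $\ell=1$. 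Given Proposition~\ref{prop:1bit_moments}, the rest of the argument is routine bookkeeping.
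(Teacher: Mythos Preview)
Your proof is correct and follows essentially the same route as the paper: apply Theorem~\ref{thm:simplified_conv} with $\Lc=\{1\}$, use \eqref{eq:final_conv} together with the asymptotic $I_1$ from Proposition~\ref{prop:1bit_moments}(iii), and verify via Chebyshev and the variance bound \eqref{eq:1bit_Vlv} that the remainder $\psi'_1(n,\delta_2)$ vanishes under $n=\omega(p)$, which is automatic at the threshold. Your final aside about $\ell\ge2$ is inessential to the corollary (which claims only a converse, not optimality of $\ell=1$) and is also imprecise as stated---for instance, when $\ell b_0^2=\Theta(1)$ one has $\sum_{i\in\seq}b_i^2=\Theta(\log p)$ and both entropy terms in \eqref{eq:1bit_Ilv} tend to zero, so $I_{\sdif,\seq}(b_s)$ is \emph{not} bounded away from zero there---but this does not affect the validity of the main argument.
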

\begin{proof}    
    The condition in \eqref{eq:1bit_n_cond3} follows using \eqref{eq:final_conv} with $|\sdif| = 1$; the numerator behaves as $(\log p) (1+o(1))$, and the denominator behaves according to \eqref{eq:1bit_I1_exact}.  The additional condition in  \eqref{eq:1bit_n_cond_b} is satisfied when \eqref{eq:1bit_n_cond3} holds with equality.
\end{proof}

In the same way as \eqref{eq:linear2_ach}--\eqref{eq:linear2_conv}, one can compute \eqref{eq:1bit_n_cond1}--\eqref{eq:1bit_n_cond2} without evaluating all $2^k - 1$ objective values; for a given value of $|\sdif|$, the maximum is achieved by the set $\sdif$ with the smallest value of $\sum_{i \in \sdif}b_i^2$.

The asymptotic identities used in the proof of Corollary \ref{cor:1bit_fixedK} can directly be applied to \eqref{eq:linear2_ach}--\eqref{eq:linear2_conv} with $k=\Theta(1)$ and $\bmin = o(1)$, and the resulting expressions are precisely those in \eqref{eq:1bit_n_cond1}--\eqref{eq:1bit_n_cond2} with $\frac{1}{\pi}$ replaced by $\frac{1}{2}$.  Thus, this is a case where there is only a minor loss in the performance due to the quantization; the corresponding asymptotic number of measurements only increases by a factor of $\frac{\pi}{2} \approx 1.57$.

In contrast, Corollary \ref{thm:simplified_conv} describes a setting where the linear model and its 1-bit counterpart lead to significantly different requirements on the number of measurements. Under the scaling described therein, the necessary and sufficient number of measurements for the linear model behaves as $\Theta(p)$ \cite[Table I]{Rad11}.  Thus, the 1-bit quantization increases the required number of measurements from linear to super-linear in the ambient dimension.

\subsection{1-bit Model with Gaussian $\beta_S$ and Partial Recovery} \label{sec:1BIT_CONT}
 
 We now consider the 1-bit counterpart of the setting studied in Section \ref{sec:GAUSSIAN_CONT}, where $\beta_s$ is i.i.d.~on $N(0,\sigbeta^2)$ for some $\sigbeta^2 = \frac{c_{\beta}}{k}$, and we seek partial recovery as in \eqref{eq:pe_partial}  with $\dmax = \lfloor \alpha^* k \rfloor$.  We make use of the following.
 
 \begin{prop} \label{prop:1bit_boundMI}
     Under the preceding setup for the 1-bit model, the quantity $I_{0,+}$ in \eqref{eq:I_0+} satisfies
     \begin{equation}
         I_{0,+} \le \frac{k}{2} \log\Big( 1 + \frac{n\sigbeta^2}{\sigma^2} \Big) + \sqrt{ k\log\Big( 1 + \frac{n\sigbeta^2}{\sigma^2} \Big)  }. \label{eq:1bit_bmi_I0+}
     \end{equation}
     for some universal constant $c'_0$.
\end{prop}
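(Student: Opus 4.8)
The plan is to peel off the positive-part operation with an elementary identity, reduce the resulting mutual-information term to the linear-model bound already established in Proposition~\ref{prop:boundMI}, and control the remaining ``negative-part'' term by a universal constant via a change of measure. Write $X := \log\frac{P_{\Yv|\Xv_s\beta_s}(\Yv|\Xv_s,\beta_s)}{P_{\Yv|\Xv_s}(\Yv|\Xv_s)}$ for the vector information density in \eqref{eq:I_0+}, with $(\beta_s,\Xv_s,\Yv)$ distributed as in \eqref{eq:joint_dist_n}. Its mean under this law is $\EE[X] = I(\beta_s;\Yv\,|\,\Xv_s) =: I_0$, so from $[x]^+ = x + [-x]^+$ we get
\begin{equation}
    I_{0,+} = \EE\big[[X]^+\big] = I_0 + \EE\big[[-X]^+\big], \nonumber
\end{equation}
and it suffices to bound the two terms on the right separately.

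For the first term, let $\Yvtilde := \Xv_s\beta_s + \Zv$, with $\Zv$ the i.i.d.\ Gaussian noise vector, be the pre-quantization observations, so that $\Yv = \sign(\Yvtilde)$ entrywise and, conditioned on $\Xv_s$, the chain $\beta_s \to \Yvtilde \to \Yv$ is Markov. The pair $(\beta_s,\Yvtilde)$ is exactly the linear model of Section~\ref{sec:GAUSSIAN_CONT} (same $P_X$, $\sigbeta^2$, $\sigma^2$), with $\Yvtilde$ playing the role of the observation vector, so the data-processing inequality combined with \eqref{eq:bmi_I0} yields $I_0 \le I(\beta_s;\Yvtilde\,|\,\Xv_s) \le \frac{k}{2}\log(1 + n\sigbeta^2/\sigma^2)$. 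No new computation specific to the $1$-bit channel is needed for this step.

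For the second term, I would fix $(\Xv_s,\beta_s)=(\xv_s,b_s)$ and change measure from $P_{\Yv|\Xv_s\beta_s}(\cdot|\xv_s,b_s)$ to $P_{\Yv|\Xv_s}(\cdot|\xv_s)$; these are mutually absolutely continuous since both are strictly positive on the finite alphabet $\{-1,1\}^n$ when $\sigma>0$, with $e^{X}$ the corresponding likelihood ratio. Hence
\begin{equation}
    \EE\big[[-X]^+ \,\big|\, \Xv_s=\xv_s,\beta_s=b_s\big] \;=\; \EE_{\Yv\sim P_{\Yv|\Xv_s}(\cdot|\xv_s)}\big[\, e^{X}[-X]^+ \,\big] \;\le\; \tfrac{1}{e}, \nonumber
\end{equation}
where the inequality uses $e^{x}[-x]^+ = (-x)e^{x}\openone\{x<0\} \le \sup_{t>0} t e^{-t} = e^{-1}$. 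Averaging over $(\Xv_s,\beta_s)$ gives $\EE[[-X]^+]\le e^{-1}$, so in total $I_{0,+} \le \frac{k}{2}\log(1+n\sigbeta^2/\sigma^2) + e^{-1}$. This is stronger than \eqref{eq:1bit_bmi_I0+} as soon as $k\log(1+n\sigbeta^2/\sigma^2)\ge e^{-2}$, which holds in every regime used in Corollary~\ref{cor:1bit_partial} (there $k\to\infty$ and $n\sigbeta^2/\sigma^2 = \Theta(\log(p/k))\to\infty$); the ``universal constant $c_0'$'' in the statement may thus be taken to be $1$.

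The individual calculations are routine; the only real decision is the route for the second term. The tempting alternative, $I_{0,+}\le I_0 + \tfrac12\sqrt{V_0}$ via Cauchy--Schwarz, would require controlling the variance $V_0$ of the $1$-bit information density, and---unlike the clean estimate $V_0\le 2n$ available for the linear model in Proposition~\ref{prop:boundMI}---this variance is awkward because its positive tail is governed by $\log\frac{1}{P_{\Yv|\Xv_s}(\Yv|\Xv_s)}$, which is not uniformly small. The change-of-measure argument avoids this entirely: $e^{x}[-x]^+$ involves only the negative part of $X$ and is bounded by the absolute constant $e^{-1}$ regardless of the model. I would therefore view ``recognizing that the positive-part correction is universally $O(1)$'' as the one nontrivial idea; once that is in hand the proof is three lines.
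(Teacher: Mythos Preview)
Your proof is correct, and in fact sharper than the paper's. Both arguments handle the $I_0$ term identically, via the data-processing inequality applied to the Markov chain $\beta_s \to \Yvtilde \to \Yv$ together with \eqref{eq:bmi_I0}. The difference is in the positive-part correction: the paper simply invokes the inequality $I_{0,+} \le I_0 + \sqrt{2I_0}$ from \cite{Bar00}, which after substituting \eqref{eq:bmi_I0} reproduces \eqref{eq:1bit_bmi_I0+} exactly. Your change-of-measure step instead bounds $\EE[[-X]^+]$ by the absolute constant $e^{-1}$, yielding $I_{0,+} \le I_0 + e^{-1}$; this is strictly stronger than the paper's bound whenever $I_0 \ge (2e^2)^{-1}$, and in particular throughout the regime used in Corollary~\ref{cor:1bit_partial}. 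Your route is self-contained (no external citation needed) and model-agnostic---the pointwise bound $e^{x}[-x]^+ \le e^{-1}$ holds for any log-likelihood ratio---whereas the paper's route has the virtue of tracking $I_0$ in the remainder, which would be the relevant thing if one cared about the small-$I_0$ regime. For the purposes of Section~\ref{sec:1BIT_CONT}, either bound suffices, since the correction term is absorbed into $\eta$ anyway.
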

\begin{proof}
    By the data processing inequality, $I_0$ must satisfy \eqref{eq:bmi_I0} even in the 1-bit setting.  We immediately obtain \eqref{eq:1bit_bmi_I0+} from the identity $I_{0,+} \le I_0 + \sqrt{2I_0}$ given in \cite{Bar00}.
\end{proof}

We now turn to the steps for providing a counterpart to Corollary \ref{cor:linear_partial}.  We define the function
\begin{multline}
    \Psi(\alpha,c_{\beta},\sigma) := \EE\bigg[ H_2\bigg( Q\bigg( W \sqrt{\frac{c_{\beta} (1-g(\alpha)) }{ \sigma^2 + c_{\beta} g(\alpha)}} \bigg) \bigg)  \\ - H_2\bigg( Q\bigg( W \sqrt{\frac{c_{\beta}}{\sigma^2}} \bigg) \bigg) \bigg], \label{eq:1bit_Psi}
\end{multline}
where $W \sim N(0,1)$, and $g(\alpha)$ is defined in \eqref{eq:g_alpha}.

\subsubsection*{Step 1}

We choose the same typical set $\Tc_{\beta}$ as that in Section \ref{sec:GAUSSIAN_CONT}, thus ensuring that \eqref{eq:empiricalB} holds for all sequences of typical vectors.  It follows that $\sum_{i\in\sdif}b_i^2 \to c_{\beta}g(\alpha)$ and $\sum_{i\in\seq}b_i^2 \to c_{\beta}(1-g(\alpha))$ for the pair $(\sdif,\seq)$ with corresponding sizes $(\ell,k-\ell)$ ($\ell = \lfloor \alpha k \rfloor$) such that $\sum_{i\in\sdif}b_i^2$ is minimized.  We observe from \eqref{eq:1bit_Ilv} that minimizing $\sum_{i\in\sdif}b_i^2$ also amounts to minimizing $I_{\sdif,\seq}(b_s)$ for a fixed value of $|\sdif|$, as was the case for the linear model.  If $\frac{|\sdif|}{k}$ converges to a given constant $\alpha$, then the corresponding mutual information converges as follows, in accordance with \eqref{eq:1bit_Ilv} and \eqref{eq:1bit_Psi}:
\begin{equation}
    I_{\sdif,\seq}(b_s) \to \Psi(\alpha,c_{\beta},\sigma). \label{eq:1bit_I_asymp}
\end{equation}

\subsubsection*{Step 2}

We make use of the general concentration inequality given in Proposition \ref{prop:gen_discrete} in Appendix \ref{sec:CONCENTRATION}; setting $\delta = \delta_2 I_{\sdif,\seq}(b_s)$ in \eqref{eq:conc_gen_disc} in Appendix \ref{sec:CONCENTRATION} gives
\begin{align}
    & \psi_{\ell}(n,\delta_2) - \psi'_{\ell}(n,\delta_2) \nonumber \\ 
    & \hspace*{-0.5ex}= \max_{(\sdif,\seq,b_s)\,:\,|\sdif|=\ell} 2\exp\bigg(- \frac{(\delta_2 I_{\sdif,\seq}(b_s))^2 n}{2(8|\Yc|+2\delta_2 I_{\sdif,\seq}(b_s))} \bigg). \label{eq:1bit_gen_disc}
\end{align}

\subsubsection*{Step 3}

We choose $\gamma = \frac{I_{0,+}}{\delta_0}$ as in \eqref{eq:gamma_markov}, ensuring that $P_0(\gamma) \le \delta_0$. The other remainder terms are controlled in the same way as Section \ref{sec:GAUSSIAN_CONT}.  We again use the fact that the typical realizations $b_s$ of $\beta_s$ satisfy \eqref{eq:empiricalB}, and yield $\sum_{i\in\sdif} b_i^2 = \Omega(1)$, and hence $\alpha_{\sdif}^2 = \Theta(1)$ in \eqref{eq:alpha_linear}. We also have $I_{\sdif,\seq}(b_s) = \Theta(1)$ in \eqref{eq:1bit_Ilv}; this is seen by noting that the smallest mutual information for a fixed $|\sdif| = \lfloor \alpha k \rfloor$ satisfies \eqref{eq:1bit_I_asymp}, and the mutual information upper bounded by $\log2$ since the observations are binary.  It follows that the exponent in \eqref{eq:1bit_gen_disc} behaves as $\Theta(n)$; hence, following the arguments in Section \ref{sec:GAUSSIAN_CONT}, we conclude that the second term in \eqref{eq:nonasymp_ach} vanishes provided that 
\begin{equation}
    n = \Omega(k) \label{eq:1bit_n_cond_c}
\end{equation}
with a sufficiently large implied constant.  Once again, for the converse part, one can analogously show that the weaker condition $n=\omega(1)$ suffices.

\subsubsection*{Step 4}

Combining the above steps, we get the following.

\begin{cor} \label{cor:1bit_partial}
    Under the preceding setup for the 1-bit model with $k\to\infty$ and $k= o(p)$, $\sigma^2 = \Theta(1)$, $\sigbeta^2 = \frac{c_{\beta}}{k}$ for some $c_{\beta} > 0$,  and $\dmax = \lfloor \alpha^* k \rfloor$ for some $\alpha^*\in(0,1)$, we have $\pe(\dmax) \to 0$ as $p\to\infty$ provided that
    \begin{equation}
    n \ge \max_{\alpha\in[\alpha^*,1]} \frac{ \alpha k \log \frac{p}{k} }{ \Psi(\alpha,c_{\beta},\sigma) } (1 + \eta) \label{eq:1bit_ach2}
    \end{equation}
    for some $\eta > 0$, where $\Psi$ is defined in \eqref{eq:1bit_Psi}. Conversely, $\pe(\dmax) \to 1$ as $p\to\infty$ whenever
    \begin{equation}
    n \le \max_{\alpha\in[\alpha^*,1]} \frac{ (\alpha - \alpha^*) k \log \frac{p}{k} }{ \Psi(\alpha,c_{\beta},\sigma) } (1 - \eta) \label{eq:1bit_conv2}
    \end{equation}
    for some $\eta > 0 $.  
\end{cor}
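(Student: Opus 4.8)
The plan is to follow the proof of Corollary~\ref{cor:linear_partial} essentially verbatim, replacing the linear mutual information by its 1-bit counterpart and swapping the Chebyshev choice of $\gamma$ for the Markov choice dictated by Proposition~\ref{prop:1bit_boundMI}. For the achievability part I would start from \eqref{eq:final_ach} with $|\sdif| = \lfloor \alpha k \rfloor$ for $\alpha \in [\alpha^*,1]$, invoke the limiting identity $I_{\sdif,\seq}(b_s) \to \Psi(\alpha,c_{\beta},\sigma)$ from \eqref{eq:1bit_I_asymp}, valid on the typical set $\Tc_{\beta}$ chosen exactly as in Section~\ref{sec:GAUSSIAN_CONT}, and use $k = o(p)$ together with Stirling's approximation to rewrite $\log\binom{p-k}{\lfloor\alpha k\rfloor}$ as $\alpha k\log\frac{p}{k}$ up to terms absorbable into $\eta$. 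The case $\alpha = \alpha^*$ follows by continuity of $\Psi$ in its first argument, and the maximization over the continuum $[\alpha^*,1]$ is justified because the attainable ratios $|\sdif|/k$ are dense in $[0,1]$ as $k\to\infty$.

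Controlling the remainder terms in \eqref{eq:nonasymp_ach_p} is where the 1-bit-specific work enters. The factor $\log\big(\frac{k^2}{\delta_1^2}\binom{k}{|\sdif|}^2\big)$ in \eqref{eq:final_ach} is $O(k)$ once $\delta_1 \to 0$ sufficiently slowly, hence negligible against the numerator $\alpha k\log\frac{p}{k} = \omega(k)$. For $\gamma = I_{0,+}/\delta_0$ chosen as in \eqref{eq:gamma_markov}, I would argue as follows: taking \eqref{eq:1bit_ach2} with equality (which is without loss of generality, since the decoder may discard extra measurements) gives $n = \Theta(k\log\frac{p}{k})$, so $\frac{n\sigbeta^2}{\sigma^2} = \Theta(\log\frac{p}{k})$ and Proposition~\ref{prop:1bit_boundMI} yields $I_{0,+} = O(k\log\log\frac{p}{k})$; choosing $\delta_0$ to vanish slowly (e.g.\ $\delta_0 = \Theta(1/\log\log\frac{p}{k})$) then makes $\gamma = o(k\log\frac{p}{k})$, so it too factors into $\eta$. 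The same scaling $n = \Theta(k\log\frac{p}{k})$ yields $n = \Omega(k)$, i.e.\ \eqref{eq:1bit_n_cond_c}, which together with \eqref{eq:1bit_gen_disc} --- whose exponent is $\Theta(n)$ on the typical set, since there $I_{\sdif,\seq}(b_s) = \Theta(1)$ (bounded below by \eqref{eq:1bit_I_asymp} and above by $\log 2$) and $|\Yc| = 2$ --- ensures that the $\psi_\ell$ term in \eqref{eq:nonasymp_ach_p} vanishes.

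For the converse I would invoke Theorem~\ref{thm:simplified_conv_p} with $\Lc$ consisting of the sizes $\lfloor\alpha k\rfloor$ for $\alpha$ ranging over a fine discretization of $[\alpha^*,1]$, and analyze \eqref{eq:final_conv_p}. The only new term relative to the achievability denominator is $\log\sum_{d=0}^{\dmax}\binom{p-k}{d}\binom{|\sdif|}{d}$; bounding this by $\max_{d\le\dmax}\log\big((1+\dmax)\binom{p-k}{d}\binom{k}{d}\big)$, the part involving only $k$, namely $\log\big((1+\dmax)\binom{k}{d}\big) = O(k)$, is absorbed into $\eta$, and what remains is $\log\binom{p-k}{\dmax}$, which by Stirling and $\dmax = \lfloor\alpha^*k\rfloor$ equals $\alpha^*k\log\frac{p}{k}(1+o(1))$; combined with the $\alpha k\log\frac{p}{k}$ numerator this produces the $(\alpha - \alpha^*)$ factor in \eqref{eq:1bit_conv2}. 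That $\max_{\ell\in\Lc}\psi'_\ell$ stays bounded away from one again follows from the $\Theta(n)$ exponent in \eqref{eq:1bit_gen_disc}, requiring only $n = \omega(1)$, and letting $\delta_1 \to 0$ slowly disposes of the remaining additive term in \eqref{eq:nonasymp_conv_p}.

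I expect the main obstacle to be the bookkeeping around $\gamma = I_{0,+}/\delta_0$: unlike the linear model we have no clean bound on the variance $V_0$ to feed into Chebyshev, so one must lean on Proposition~\ref{prop:1bit_boundMI} and carefully balance the slowly vanishing $\delta_0$ against the $\log\log\frac{p}{k}$ growth of $I_{0,+}/k$. All other steps are routine transcriptions of the linear-model argument, with the $\chi^2$-based typical set and the function $g(\cdot)$ entering only through the already-established limit \eqref{eq:1bit_I_asymp}.
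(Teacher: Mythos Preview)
Your proposal is correct and follows essentially the same approach as the paper's proof: both transplant the linear-model argument of Corollary~\ref{cor:linear_partial}, substitute the 1-bit mutual information limit \eqref{eq:1bit_I_asymp}, and handle the $\gamma$ remainder via the Markov bound \eqref{eq:gamma_markov} using Proposition~\ref{prop:1bit_boundMI} with a slowly vanishing $\delta_0$. Your treatment is in fact slightly more detailed than the paper's (which simply says the numerators are ``obtained in an identical fashion to Corollary~\ref{cor:linear_partial}'' and chooses $\delta_0 = \frac{\log\log p}{\sqrt{\log p}}$), but the substance is the same.
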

\begin{proof}
    As usual, we begin with the conditions in \eqref{eq:final_ach} and \eqref{eq:final_conv}.  The denominators in \eqref{eq:1bit_ach2}--\eqref{eq:1bit_conv2} follow directly by applying \eqref{eq:1bit_I_asymp}. Moreover, the terms $\alpha k\log\frac{p}{k}$ and $(\alpha - \alpha^*) k\log\frac{p}{k}$ in the numerators are obtained in an identical fashion to Corollary \ref{cor:linear_partial} once we show that there exists a vanishing sequence of constants $\delta_0$, indexed by $p$, such that the remainder term $\gamma = \frac{I_{0,+}}{\delta_0}$ resulting from \eqref{eq:gamma_markov} can be factored into $\eta$.  To see this, we note that the right-hand side of \eqref{eq:1bit_ach2} behaves as $\Theta(k \log p)$, whereas from Proposition \ref{prop:1bit_boundMI} (with the scalings $n=\Theta(k\log p)$ and $\sigbeta^2 = \Theta\big(\frac{1}{k}\big)$),  $I_{0,+}$ behaves as $O\big(k\log\log p\big)$.  We may thus set $\delta_0$ to be (for example) $\frac{\log\log p}{\sqrt{\log p}}$.   Finally, we observe that \eqref{eq:1bit_n_cond_c} holds whenever \eqref{eq:1bit_ach2} holds, and similarly for the converse part.
\end{proof}

The main difference in \eqref{eq:1bit_ach2}--\eqref{eq:1bit_conv2} compared to the linear counterparts in \eqref{eq:linear3_ach}--\eqref{eq:linear3_conv} is the behavior in the limit as $c_{\beta} := k\sigbeta^2 \to \infty$.  As stated following Corollary \ref{cor:linear_partial}, the denominator in the linear setting behaves as $(\log c_{\beta}) (1+o(1))$, thus tending towards infinity.  In contrast, for the 1-bit setting, we have $\Psi \le \log 2$ due to the fact that $H_2(\cdot) \in [0,\log 2]$, and thus the denominator cannot grow unbounded.  These observations are consistent with Corollary \ref{cor:1bit_genK}, which shows that 1-bit CS can require significantly more measurements compared to the linear setting when the signal-to-noise ratio (SNR) is sufficiently high.  

\subsection{Numerical Evaluations for Partial Recovery} \label{sec:NUMERICAL_PARTIAL}
 
 In this subsection, we present numerical calculations for the settings considered in Sections \ref{sec:GAUSSIAN_CONT} and \ref{sec:1BIT_CONT}.  We set $\alpha^* = 0.1$, $\sigma^2 = 1$,  and $k = o(p)$.  We consider values of $\sigbeta^2$ of the form $\sigbeta^2 = \frac{c_{\beta}}{k}$ for fixed $c_{\beta}$.  Similarly to \cite{Ree12,Ree13}, we present our results in terms of
 \begin{equation}
     \SNRdB := 10\log\frac{k\sigbeta^2}{\sigma^2} = 10\log c_{\beta}, \label{eq:SNRdB}
 \end{equation}
 which represents the per-sample SNR in dB.
 
  \begin{figure}
      \begin{centering}
          \includegraphics[width=0.95\columnwidth]{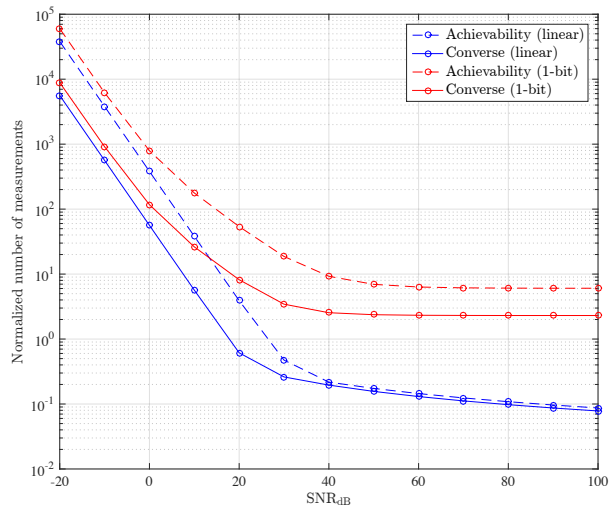}
          \par
      \end{centering}
   
      \caption{ Asymptotic thresholds on the number of measurements required for partial support recovery for the linear and 1-bit models, with $\alpha^* = 0.1$.  The number of measurements is normalized by  $k\log\frac{p}{k}$, and $\SNRdB$ is defined in \eqref{eq:SNRdB}. \label{fig:partial_recovery}}
  \end{figure}
 
 Figure \ref{fig:partial_recovery} plots the asymptotic thresholds on the number of measurements from Corollaries \ref{cor:linear_partial} and \ref{cor:1bit_partial}.  For both the linear and 1-bit settings, there is a close correspondence between the necessary and sufficient number of measurements.  The bounds for the two models nearly coincide at low SNR, which is consistent with Corollary \ref{cor:1bit_fixedK}.
 
 The behavior of the bounds at high SNRs is also consistent with our previous discussions.  In the linear setting, the ratio between the bounds narrows to approximately $1.11$ as the SNR grows large, which coincides with the value $\frac{1}{1-\alpha^*}$ given in the discussion following Corollary \ref{cor:linear_partial}.  Moreover, as discussed following Corollary \ref{cor:1bit_partial}, the number of measurements steadily decreases for increasing SNRs for the linear model, while saturating at an asymptotic limit for the 1-bit model.

\subsection{Group Testing} \label{sec:GROUP_TESTING}

\subsubsection{Noiseless Case with Exact Recovery}

Here we consider the noiseless group testing problem, where each observation is deterministically generated according to
\begin{equation}
Y = \openone\Big\{ \bigcup_{i \in S} \{X_i = 1\} \Big\}. \label{eq:gt_noiseless}
\end{equation}
We consider Bernoulli measurement matrices with $P_X(1) = 1 - P_X(0) = \frac{\nu}{k}$, where $\nu$ is a constant not depending on $p$.  Here there is no latent variable $\beta_s$, which can equivalently be thought of as corresponding to $\beta_s$ equaling the vector of ones deterministically.  This implies that $I_{\sdif,\seq}(b_s)$ depends only on $\ell = |\sdif|$, and we emphasize this by writing it as $I_{\ell}$.  Our setting readily handles both fixed and growing $k$; since the former is already well-understood \cite{Mal78,Ati12,Laa14}, we focus our attention on the case that $k\to\infty$, and in particular on the case that $k = \Theta(p^\theta)$ for some $\theta\in(0,1)$.

\begin{prop} \label{prop:gt_boundI}
    Under the noiseless group testing setup, consider arbitrary sequences of sparsity levels $k\to\infty$ and $\ell \in \{1,\dotsc,k\}$, indexed by $p$.  If $\frac{\ell}{k} = o(1)$, then 
    \begin{equation}
         I_{\ell} = \bigg(e^{-\nu}\nu \frac{\ell}{k} \log\frac{k}{\ell}\bigg)(1+o(1)). \label{eq:gt_I_ord}
    \end{equation}
    Moreover, if $\frac{\ell}{k}\to\alpha \in(0,1]$, then
    \begin{equation}
        I_{\ell} = e^{-(1-\alpha)\nu} H_2\big(e^{-\alpha \nu}\big) (1+o(1)). \label{eq:gt_I_const}
   \end{equation}
\end{prop}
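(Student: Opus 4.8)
The plan is to first obtain a closed form for $I_\ell$ and then carry out elementary asymptotics in the two stated regimes. Fix a partition $(\sdif,\seq)$ of $s$ with $|\sdif|=\ell$, so that $|\seq|=k-\ell$, and write $U := \openone\{X_i=1 \text{ for some } i\in\sdif\}$ and $V := \openone\{X_i=1 \text{ for some } i\in\seq\}$, so that by \eqref{eq:gt_noiseless} we have $Y = \max\{U,V\}$. Since $Y$ is a deterministic function of $(X_{\sdif},X_{\seq})$ we have $H(Y|X_{\sdif},X_{\seq})=0$, and since the conditional law of $Y$ given $X_{\seq}$ depends on $X_{\seq}$ only through $V$ we have $H(Y|X_{\seq}) = H(Y|V)$; conditioned on $V=1$ the output $Y$ equals $1$ deterministically, whereas conditioned on $V=0$ we have $Y=U$. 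Using that the entries of $X$ are i.i.d.\ $\Bernoulli(\nu/k)$, so that $\PP[V=0] = (1-\nu/k)^{k-\ell}$ and $\PP[U=1] = 1-(1-\nu/k)^\ell$, together with $H_2(q)=H_2(1-q)$, we obtain
\[
I_\ell = H(Y|X_{\seq}) - H(Y|X_{\sdif},X_{\seq}) = \Big(1-\tfrac{\nu}{k}\Big)^{k-\ell} H_2\!\Big(\big(1-\tfrac{\nu}{k}\big)^{\ell}\Big).
\]

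\textbf{Step 2: the sublinear regime $\ell/k = o(1)$.} From $\log(1-\nu/k) = -\nu/k + O(k^{-2})$ one gets $(1-\nu/k)^{k-\ell} = \exp\!\big(-\nu(1-\ell/k)+O(1/k)\big) = e^{-\nu}(1+o(1))$ since $\ell/k\to0$. Putting $\epsilon := 1-(1-\nu/k)^\ell$, the same expansion gives $\epsilon = 1 - \exp\!\big(-\nu\tfrac{\ell}{k}(1+o(1))\big) = \nu\tfrac{\ell}{k}(1+o(1))$, where $1-e^{-x}=x(1+o(1))$ applies because $\nu\ell/k\to0$. A Taylor expansion of the binary entropy at its endpoint gives $H_2(1-\epsilon) = \epsilon\log\tfrac1\epsilon + O(\epsilon)$, hence $H_2(1-\epsilon) = \epsilon\log\tfrac1\epsilon\,(1+o(1))$ since $\log\tfrac1\epsilon\to\infty$; and $\log\tfrac1\epsilon = \log\tfrac{k}{\ell} - \log\nu + o(1) = \log\tfrac{k}{\ell}\,(1+o(1))$ because $\ell/k\to0$ forces $\log\tfrac{k}{\ell}\to\infty$. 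Multiplying the three estimates yields $I_\ell = e^{-\nu}\,\nu\tfrac{\ell}{k}\log\tfrac{k}{\ell}\,(1+o(1))$, which is \eqref{eq:gt_I_ord}.

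\textbf{Step 3: the linear regime $\ell/k\to\alpha\in(0,1]$, and the main difficulty.} Here the same logarithmic expansion gives $(1-\nu/k)^{k-\ell}\to e^{-(1-\alpha)\nu}$ and $(1-\nu/k)^\ell\to e^{-\alpha\nu}\in(0,1)$, after which continuity and strict positivity of $H_2$ on $(0,1)$ give $H_2\big((1-\nu/k)^\ell\big)\to H_2(e^{-\alpha\nu})>0$; the product then yields \eqref{eq:gt_I_const}. The only genuinely delicate point is Step 2: one must check that the various $(1+o(1))$ factors hold uniformly over the admissible range of $\ell$, from $\ell=\Theta(1)$ up to $\ell=o(k)$, and in particular that $\log\tfrac1\epsilon$ and $\log\tfrac{k}{\ell}$ agree to leading order — which is exactly where the hypothesis $\ell/k=o(1)$ (equivalently $k/\ell\to\infty$) is used. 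Everything else is routine Taylor estimation.
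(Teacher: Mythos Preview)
Your proof is correct and follows essentially the same approach as the paper: both start from the closed form $I_\ell = (1-\nu/k)^{k-\ell} H_2\!\big((1-\nu/k)^\ell\big)$ and then perform the same Taylor/limit analyses in the two regimes. The only difference is that the paper cites this formula from \cite{Ati12}, whereas you derive it directly from the entropy decomposition $I_\ell = H(Y|X_{\seq}) - H(Y|X_{\sdif},X_{\seq})$, which is a welcome self-contained addition but not a different method.
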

\begin{proof}
    See Appendix \ref{sec:PROOFS_GT}.
\end{proof}

We proceed with the steps of Procedure \ref{pr:procedure}.

\subsubsection*{Step 1}

The first step is trivial; $\beta_s$ is deterministic, and thus the typical set $\Tc_{\beta}$ is a singleton.

\subsubsection*{Step 2}

In contrast to the previous examples, we use different concentration inequalities to handle different values of $\ell$.  Moreover, in accordance with Remark \ref{rem:delta2}, we let $\delta_2$ depend on $\ell$, writing it as $\delta_{2,\ell}$.  For ``large'' values of $\ell$ (to be made precise below), we will apply the general bound in Proposition \ref{prop:gen_discrete} in Appendix \ref{sec:CONCENTRATION}.  For ``small'' values of $\ell$, we use the following to obtain an improved bound.

\begin{prop} \label{prop:conc_gt}
    For the noiseless group testing problem, consider sequences $k\to\infty$ and $\ell$, indexed by $p$, such that $\frac{\ell}{k} \to 0$.  For any $\epsilon > 0$ and $\delta_{2,\ell} \in (0,1)$ bounded away from zero and one, the following holds for sufficiently large $p$:
    \begin{multline}
    \PP\Big[ \imath^n(\Xv_{\sdif}; \Yv | \Xv_{\seq}, b_s) \le nI_{\ell}(1-\delta_{2,\ell}) \Big] \\ \le \exp\bigg(-n\frac{\ell}{k} e^{-\nu}\nu\Big( (1-\delta_{2,\ell})\log(1-\delta_{2,\ell}) + \delta_{2,\ell} \Big)(1-\epsilon)\bigg) \label{eq:conc_gt}
    \end{multline}
    for all $(\sdif,\seq)$ with $|\sdif|=\ell$.
\end{prop}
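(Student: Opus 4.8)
The plan is to analyze the lower-tail probability of the conditionally i.i.d.\ sum $\imath^n(\Xv_{\sdif}; \Yv | \Xv_{\seq}, b_s) = \sum_{i=1}^n \imath(X^{(i)}_{\sdif}; Y^{(i)} | X^{(i)}_{\seq})$ via a Chernoff bound, and to show that the exponent reduces, to leading order, to the stated expression when $\ell/k \to 0$. Since $\beta_s$ is deterministic here, I drop the conditioning on $b_s$ and work directly with the single-letter information density $\imath(X_{\sdif}; Y | X_{\seq})$ under the noiseless group testing model with $P_X(1) = \nu/k$. The key structural observation is that, with probability $1 - O(\ell/k)$, none of the $\ell$ columns in $\sdif$ is active in a given test, and on that event $\imath(X_{\sdif};Y|X_{\seq}) = 0$ (since then $Y$ is determined by $X_{\seq}$ alone, so the two conditional distributions in the log-ratio agree); the information density is nonzero only on the $O(\ell/k)$-probability event that at least one $\sdif$-column fires. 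Thus $\imath$ is a ``rare'' random variable: zero with probability $1 - \Theta(\ell/k)$, and otherwise of order $\Theta(\log(k/\ell))$ in a suitable sense. This is exactly the regime of a Poisson-type large-deviations exponent, which is what produces the term $(1-\delta_{2,\ell})\log(1-\delta_{2,\ell}) + \delta_{2,\ell}$.

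The steps I would carry out, in order: (1) Write the Chernoff bound $\PP[\imath^n \le nI_\ell(1-\delta_{2,\ell})] \le \exp\!\big(n \inf_{\lambda \ge 0}[\lambda I_\ell(1-\delta_{2,\ell}) + \log \EE[e^{-\lambda \imath}]]\big)$, using the conditional independence across the $n$ tests. (2) Expand the moment generating function $\EE[e^{-\lambda\imath}]$: conditioning on whether any $\sdif$-column is active, $\EE[e^{-\lambda\imath}] = (1 - q) \cdot 1 + q\cdot\EE[e^{-\lambda\imath}\mid \text{active}]$ where $q = \PP[\text{at least one }\sdif\text{-column active}] = 1 - (1-\nu/k)^\ell = (\nu\ell/k)(1+o(1))$. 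The second term should contribute $o(q)$ after one identifies that, on the active event, $e^{-\lambda\imath}$ is governed by $P_{Y|X_\seq}(y|x_\seq)/P_{Y|X_{\sdif}X_\seq}(y|x_{\sdif},x_\seq)$ raised to the $\lambda$, which — tracking the ``all-$\seq$-columns-silent'' sub-event that drives $I_\ell$ — is of order $(\ell/k)^{\lambda}$ or similar; the precise bookkeeping uses Proposition~\ref{prop:gt_boundI}, namely $I_\ell = (e^{-\nu}\nu(\ell/k)\log(k/\ell))(1+o(1))$. (3) Substitute, obtaining $\log\EE[e^{-\lambda\imath}] = -q(1 - o(1)) = -(e^{-\nu}\nu\ell/k)(1+o(1))$ for the relevant range of $\lambda$, more carefully $\log\EE[e^{-\lambda\imath}] \le -\tfrac{\ell}{k}e^{-\nu}\nu\,(1 - e^{-\lambda \cdot \log(k/\ell)(1+o(1))})(1+o(1))$ or the appropriate Poisson-mean expression. (4) Optimize over $\lambda \ge 0$: writing $t = \lambda\log(k/\ell)$ and using $I_\ell(1-\delta_{2,\ell}) \approx e^{-\nu}\nu(\ell/k)\log(k/\ell)(1-\delta_{2,\ell})$, the bracket becomes $(\ell/k)e^{-\nu}\nu\,[\,t(1-\delta_{2,\ell}) - (1 - e^{-t})\,](1+o(1))$, whose infimum over $t \ge 0$ is attained at $e^{-t} = 1-\delta_{2,\ell}$ and equals $(\ell/k)e^{-\nu}\nu\,[\,(1-\delta_{2,\ell})\log(1-\delta_{2,\ell}) + \delta_{2,\ell}\,]$ (noting $\delta_{2,\ell}$ bounded away from $0$ and $1$ keeps the optimizing $t$ in a compact set, so the $o(1)$ terms are uniform). (5) Multiply by $-n$, absorb the $(1+o(1))$ into $(1-\epsilon)$ for $p$ large, and take the union-bound-free statement ``for all $(\sdif,\seq)$ with $|\sdif| = \ell$'' from the fact that by symmetry the bound is identical for every such partition.

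The main obstacle is Step~2–3: making precise the claim that the contribution of the ``active'' event to the MGF is a clean Poisson-type term with the right constant $e^{-\nu}\nu$, and that lower-order corrections (multiple $\sdif$-columns firing, $\seq$-columns also firing, edge effects from $(1-\nu/k)^\ell$ versus $e^{-\nu\ell/k}$) really are $o(1)$ relative to the leading exponent, \emph{uniformly} in $\lambda$ over the range that matters for the optimization. This requires carefully splitting the expectation according to the sign/magnitude of $\imath$ and using the asymptotics of $I_\ell$ from Proposition~\ref{prop:gt_boundI} to control the positive part; the hypothesis $\ell/k \to 0$ and $\delta_{2,\ell}$ bounded away from $0,1$ are exactly what make these corrections controllable. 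The other steps are routine Chernoff-bound calculus once the MGF estimate is in hand.
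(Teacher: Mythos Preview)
Your Chernoff-bound strategy is sound in principle and would eventually recover the exponent, but your decomposition of the single-letter information density is incorrect, and this matters for the MGF bookkeeping. You claim that $\imath(X_{\sdif};Y|X_{\seq}) = 0$ whenever no $\sdif$-column is active, ``since then $Y$ is determined by $X_{\seq}$ alone.'' This is false: even when $X_{\sdif}=\bzero$, the denominator $P_{Y|X_{\seq}}$ marginalizes over $X_{\sdif}$ and differs from the numerator. Concretely, if $X_{\seq}=\bzero$ and $X_{\sdif}=\bzero$ then $Y=0$ and $\imath = \log\frac{1}{(1-\nu/k)^\ell} = \tfrac{\nu\ell}{k}(1+o(1))$, not zero. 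The event on which $\imath$ truly vanishes is $\{X_{\seq}\ne\bzero\}$, which has probability $1-e^{-\nu}(1+o(1))$, a \emph{constant}, not $1-O(\ell/k)$. So the correct three-point description is: $\imath=0$ on $\{X_{\seq}\ne\bzero\}$ (prob $\approx 1-e^{-\nu}$); $\imath\approx \tfrac{\nu\ell}{k}$ on $\{X_{\seq}=\bzero,X_{\sdif}=\bzero\}$ (prob $\approx e^{-\nu}$); and $\imath\approx\log\tfrac{k}{\ell}$ on $\{X_{\seq}=\bzero,X_{\sdif}\ne\bzero\}$ (prob $\approx e^{-\nu}\tfrac{\nu\ell}{k}$). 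With this corrected, your MGF calculation does go through and lands on the stated exponent (the $\tfrac{\nu\ell}{k}$ contribution is lower-order after the substitution $t=\lambda\log(k/\ell)$).

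The paper takes a cleaner and more direct route. Rather than computing an MGF, it observes (using exactly the three cases above) that $\imath^n \ge N_1\big(\log\tfrac{k}{\ell}\big)(1-\epsilon_1)$ almost surely, where $N_1$ is the number of tests in which $X_{\seq}=\bzero$ and $X_{\sdif}\ne\bzero$. Since $I_\ell = \big(e^{-\nu}\nu\tfrac{\ell}{k}\log\tfrac{k}{\ell}\big)(1+o(1))$, the event $\{\imath^n \le nI_\ell(1-\delta_{2,\ell})\}$ is contained in a lower-tail event for $N_1\sim\mathrm{Binomial}(n,q)$ with $q=e^{-\nu}\nu\tfrac{\ell}{k}(1+o(1))$. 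The standard Binomial Chernoff bound $\PP[N_1\le nq(1-\delta)] \le e^{-nq[(1-\delta)\log(1-\delta)+\delta]}$ then gives the result immediately. This sidesteps the MGF analysis entirely and makes the origin of the exponent $(1-\delta)\log(1-\delta)+\delta$ transparent: it is the Binomial lower-tail rate function, not something pieced together from the MGF of $\imath$.
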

\begin{proof}
    See Appendix \ref{sec:PROOFS_GT}.
\end{proof}

From the bounds in \eqref{eq:conc_gt} and \eqref{eq:conc_gen_disc} in Appendix \ref{sec:CONCENTRATION}, we may fix $\epsilon > 0$ and choose the following when $p$ is sufficiently large:
\begin{itemize}
    \item For $\ell \le \ell \le \lfloor\frac{k}{\log k}\rfloor$:
        \begin{multline}
            \hspace*{-6ex}\psi_\ell(n,\delta_{2,\ell}) = \\ \hspace*{-3ex} \exp\bigg(-n\frac{\ell}{k} e^{-\nu}\nu\Big( (1-\delta_{2,\ell})\log(1-\delta_{2,\ell}) + \delta_{2,\ell} \Big)(1-\epsilon)\bigg). \label{eq:gt_psi1}
        \end{multline}
    \item For $\ell > \lfloor\frac{k}{\log k}\rfloor$:
\end{itemize}
\begin{equation}
    \psi_\ell(n,\delta_{2,\ell}) = 2\exp\bigg(- \frac{(\delta_{2,\ell} I_{\ell})^2 n}{2(16+2\delta_{2,\ell} I_{\ell})} \bigg).  \label{eq:gt_psi2}
\end{equation}
For the converse, we only use the latter of these two cases, setting $\psi'_{\ell}(n,\delta_{2,\ell}) = 2\exp\big(- \frac{(\delta_{2,\ell} I_{\ell})^2 n}{2(16+2\delta_{2,\ell} I_{\ell})} \big)$.

\subsubsection*{Step 3}

Since $\beta_s$ is deterministic, we may trivially set $\gamma=0$ to obtain $P_0(\gamma) = 0$ in \eqref{eq:P_0}.

For the converse, we set $\Lc = \{k\}$ in Theorem \ref{thm:simplified_conv}.  From the above choice of $\psi'_{\ell}$ and the growth of $I_k$ in \eqref{eq:gt_I_const}, we immediately obtain that $\psi'_{k} \to 0$ whenever $n=\omega(1)$.  The achievability part requires more effort; we summarize the findings in the following proposition.

\begin{prop} \label{prop:gt_psi}
    Let $k = \Theta( p^{\theta} )$ for some $\theta \in (0,1)$. 
    
    (i) For any $\eta > 0$, there exists $\delta_2^{(1)} \in (0,1)$ and a choice of $\epsilon > 0$ in \eqref{eq:gt_psi1} such that $\sum_{\ell=1}^{\lfloor \frac{k}{\log k} \rfloor} {k \choose \ell} \psi_{\ell}(n,\delta_{2}^{(1)}) \to 0$ provided that 
    \begin{equation}
        n \ge \frac{ \frac{\theta}{1-\theta} k\log\frac{p}{k} }{ e^{-\nu}\nu } (1+\eta). \label{eq:gt_n_cond_psi}
    \end{equation}
    (ii) For any $\delta_2^{(2)} \in (0,1)$, we have $\sum_{\lfloor \frac{k}{\log k} \rfloor +1}^k {k \choose \ell} \psi_{\ell}(n,\delta_{2}^{(2)}) \to 0$ provided $n = \Omega\big(k\log\frac{p}{k}\big)$.
\end{prop}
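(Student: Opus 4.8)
The statement concerns only the two deterministic sums $\sum_{\ell}{k\choose\ell}\psi_\ell(n,\cdot)$ with $\psi_\ell$ taken to be the explicit bounds \eqref{eq:gt_psi1} (for $\ell\le\lfloor k/\log k\rfloor$) and \eqref{eq:gt_psi2} (for $\ell>\lfloor k/\log k\rfloor$), so the plan is purely one of combinatorial estimation. Throughout I would use that $k=\Theta(p^\theta)$ forces $\log\frac pk=\frac{1-\theta}{\theta}(\log k)(1+o(1))$, so that the hypothesis of (i) reads $n\ge\frac{k\log k}{e^{-\nu}\nu}(1+\eta)(1+o(1))$ while $n=\Omega(k\log\frac pk)$ in (ii) is the same as $n=\Omega(k\log k)$; and I would rely only on the crude bounds ${k\choose\ell}\le k^\ell$ and ${k\choose\ell}\le\exp(\ell\log\frac{ek}{\ell})$ together with $0\le I_\ell\le\log 2$.

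For part (i), write $h(\delta):=(1-\delta)\log(1-\delta)+\delta$; $h$ is increasing on $[0,1)$ with $h(\delta)\uparrow 1$ as $\delta\uparrow 1$. I would fix a constant $\delta_2^{(1)}\in(0,1)$ close enough to $1$ and a small constant $\epsilon>0$ so that $(1+\tfrac\eta2)h(\delta_2^{(1)})(1-\epsilon)\ge 1+\tfrac\eta4$. For $p$ large the hypothesis gives $n\,e^{-\nu}\nu\ge k\log k(1+\tfrac\eta2)$, so from \eqref{eq:gt_psi1}, uniformly in $\ell$, $\psi_\ell(n,\delta_2^{(1)})=\exp\!\big(-n\tfrac\ell k e^{-\nu}\nu\,h(\delta_2^{(1)})(1-\epsilon)\big)\le\exp\!\big(-\ell\log k(1+\tfrac\eta4)\big)=k^{-\ell(1+\eta/4)}$, hence ${k\choose\ell}\psi_\ell\le k^{-\ell\eta/4}$ and $\sum_{\ell=1}^{\lfloor k/\log k\rfloor}{k\choose\ell}\psi_\ell\le\frac{k^{-\eta/4}}{1-k^{-\eta/4}}\to 0$.

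For part (ii), I would first upgrade Proposition \ref{prop:gt_boundI} to two uniform lower bounds on $I_\ell$: fixing a constant $\alpha_0\in(0,e^{-1})$, (a) $I_\ell\ge c_1\tfrac\ell k\log\tfrac k\ell$ for all $\tfrac k{\log k}<\ell\le\alpha_0 k$, and (b) $I_\ell\ge c_1'$ for all $\alpha_0 k<\ell\le k$, for constants $c_1,c_1'>0$ depending only on $\nu,\alpha_0$ and all large $p$. These follow by a subsequence/compactness argument: along any sequence in the stated range pass to a subsequence with $\ell/k\to\alpha^*$; Proposition \ref{prop:gt_boundI} then gives either $I_\ell/(\tfrac\ell k\log\tfrac k\ell)\to e^{-\nu}\nu$ (if $\alpha^*=0$) or $I_\ell\to\phi(\alpha^*):=e^{-(1-\alpha^*)\nu}H_2(e^{-\alpha^*\nu})>0$, and since $\alpha\mapsto\phi(\alpha)/(\alpha\log\tfrac1\alpha)$ extends continuously and positively to $\alpha=0$ (with value $e^{-\nu}\nu$) while $\phi$ is continuous and positive on $[\alpha_0,1]$, one may pick $c_1,c_1'$ below the respective positive infima and derive a contradiction should the uniform bound fail for infinitely many $p$. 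Granting (a)--(b), I would split the sum. On $\alpha_0 k<\ell\le k$ use ${k\choose\ell}\le 2^k$, $I_\ell\ge c_1'$ and $I_\ell\le\log 2<1$ in \eqref{eq:gt_psi2} to get ${k\choose\ell}\psi_\ell\le 2\exp\!\big(k\log 2-\tfrac{(\delta_2^{(2)}c_1')^2}{36}n\big)\le 2\exp(-k\log k)$ for $p$ large; there are at most $k$ such terms, so this partial sum is $o(1)$. On $\tfrac k{\log k}<\ell\le\alpha_0 k$ use ${k\choose\ell}\le\exp(\ell\log\tfrac{ek}\ell)\le\exp(2\ell\log\tfrac k\ell)$ (valid as $\alpha_0<e^{-1}$ forces $\log\tfrac k\ell>1$) and, from \eqref{eq:gt_psi2} with $I_\ell\le\log 2<1$ and $I_\ell\ge c_1\tfrac\ell k\log\tfrac k\ell$, $\psi_\ell\le 2\exp\!\big(-\tfrac{(\delta_2^{(2)}c_1)^2}{36}(\tfrac\ell k)^2(\log\tfrac k\ell)^2 n\big)$, so that ${k\choose\ell}\psi_\ell\le 2\exp\!\big(\ell\log\tfrac k\ell\,\big[2-\tfrac{(\delta_2^{(2)}c_1)^2}{36}\cdot\tfrac{\ell\log(k/\ell)}{k^2}\,n\big]\big)$; since $\tfrac\ell k\log\tfrac k\ell\ge\tfrac{\log\log k}{\log k}$ over this range (as $x\log\tfrac1x$ increases on $(0,e^{-1})$) and $n=\Omega(k\log k)$, the bracket $\to-\infty$ uniformly, so eventually ${k\choose\ell}\psi_\ell\le 2\exp(-\ell\log\tfrac k\ell)\le 2\exp(-\tfrac{k\log\log k}{\log k})$, and again at most $k$ terms make this partial sum $o(1)$.

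The combinatorial bookkeeping above is routine; the step I expect to be the main obstacle is establishing the uniform lower bounds (a)--(b) on $I_\ell$ over the whole intermediate range $\tfrac k{\log k}<\ell\le k$ at each fixed $p$, since Proposition \ref{prop:gt_boundI} supplies only pointwise (per-sequence) asymptotics. This requires both the compactness upgrade to uniformity and a choice of the split point $\alpha_0$ (and of $c_1,c_1'$) for which the $o(1)$-regime estimate and the constant-regime estimate patch together seamlessly across the range.
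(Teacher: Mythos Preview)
Your argument is correct. For part (i) it is in fact cleaner than the paper's: the paper splits $\sum_{\ell\le k/\log k}$ at $\ell=\lfloor\log k\rfloor$ and uses the sharper estimate $\log{k\choose\ell}=(\ell\log\tfrac{k}{\ell})(1+o(1))$ in each piece, whereas you observe that the crude bound ${k\choose\ell}\le k^\ell$ already suffices once $\psi_\ell\le k^{-\ell(1+\eta/4)}$, collapsing the whole thing to a geometric series. Both reach the same threshold \eqref{eq:gt_n_cond_psi}; your route just avoids the split.

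For part (ii) the two proofs are essentially the same computation organized differently. The paper does not split at all: it asserts (using the scaling of $I_\ell$ from Proposition~\ref{prop:gt_boundI}) that the condition reduces to $n=\Omega\big(\max_\ell \ell\log\tfrac{k}{\ell}\big/\big(1+(\tfrac{\ell}{k}\log\tfrac{k}{\ell})^2\big)\big)$, then notes this maximum is $\Theta\big(\tfrac{k\log k}{\log\log k}\big)$ for $\ell>k/\log k$. You instead split at $\alpha_0 k$ and treat the two ranges with explicit constants. Your worry about upgrading the per-sequence asymptotics of Proposition~\ref{prop:gt_boundI} to uniform lower bounds on $I_\ell$ is legitimate and the compactness argument you give handles it; the paper simply takes this uniformity for granted when it invokes ``the scaling laws of $I_\ell$''. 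Either way the conclusion and the required $n=\Omega(k\log\tfrac{p}{k})$ are identical.
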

\begin{proof}
    See Appendix \ref{sec:PROOFS_GT}.
\end{proof}

The idea here is that for the smaller values of $\ell$, it is the concentration inequality that dominates the final bound, so we let $\delta_{2,\ell} = \delta_2^{(1)}$ be closer to one to provide better concentration behavior.  For large values of $\ell$, the opposite is true, so we let $\delta_{2,\ell} = \delta_2^{(2)}$ be close to zero.

\subsubsection*{Step 4}

We obtain the following corollary by combining the previous steps and applying asymptotic simplifications.

\begin{cor} \label{cor:gt}
    For the noiseless group testing problem with $k=\Theta(p^{\theta})$ ($\theta\in(0,1)$) and an optimized parameter $\nu$, we have $\pe \to 0$ as $p\to\infty$ provided that
    \begin{equation}
        n \ge \inf_{\nu>0}\max\bigg\{ \frac{ \theta }{ e^{-\nu}\nu(1-\theta) }, \frac{ 1 }{ H_2(e^{-\nu}) }\bigg\} \Big(k \log{\frac{p}{k}}\Big) (1 + \eta)   \label{eq:gt_ach}
    \end{equation}
    for some $\eta > 0$.  Conversely, we have $\pe \to 1$ as $p\to\infty$ whenever
    \begin{equation}
    n \le \frac{ k \log{\frac{p}{k}} }{ \log 2 }  (1 - \eta) \label{eq:gt_conv}
    \end{equation}
    for some $\eta > 0 $. 
\end{cor}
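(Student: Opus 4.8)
The plan is to prove both parts by feeding the group-testing ingredients assembled in Steps~1--4 of Section~\ref{sec:GROUP_TESTING} (following Procedure~\ref{pr:procedure}) into Theorems~\ref{thm:simplified_ach} and~\ref{thm:simplified_conv}, i.e.\ into the conditions \eqref{eq:final_ach} and \eqref{eq:final_conv}, and then performing Stirling-type simplifications and the outer optimization over $\nu$. For achievability the two competing constraints on $n$ are: (a) \eqref{eq:final_ach} itself, evaluated for every partition with $|\sdif|=\ell$ using the mutual information $I_\ell$ of Proposition~\ref{prop:gt_boundI}; and (b) the requirement of Proposition~\ref{prop:gt_psi} that the remainder sum $\sum_{\ell}\binom{k}{\ell}\psi_\ell(n,\delta_{2,\ell})$ in \eqref{eq:nonasymp_ach} vanish. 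I would take the maximum of the thresholds they impose and then the infimum over $\nu>0$. For the converse I would invoke Theorem~\ref{thm:simplified_conv} with $\Lc=\{k\}$.

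\emph{Achievability.} First I would use Stirling's approximation with $k=o(p)$ to write $\log\binom{p-k}{\ell}=\ell\log\frac{p}{k}(1+o(1))+O\!\big(\ell\log\tfrac{k}{\ell}\big)$, and note that for $\ell=\Theta(k)$ the correction $O\!\big(\ell\log\tfrac{k}{\ell}\big)$, the combinatorial term $2\log\big(k\binom{k}{\ell}\big)$ in the numerator of \eqref{eq:final_ach}, and the $-\log\delta_1$ term (with $\delta_1\to0$ slowly) are all $O(k)=o\big(k\log\frac{p}{k}\big)$ and hence absorbable into $\eta$; since $\gamma=0$ here (Step~3), no $P_0$ penalty appears. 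Plugging Proposition~\ref{prop:gt_boundI} into \eqref{eq:final_ach}: for $\ell/k\to\alpha\in(0,1]$ the condition reads $n\ge\frac{\alpha k\log(p/k)}{e^{-(1-\alpha)\nu}H_2(e^{-\alpha\nu})}(1+\eta)$, while for $\ell/k\to0$ the right-hand side is $O\!\big(\frac{k\log(p/k)}{\log(k/\ell)}\big)$, which over the relevant range $1\le\ell\le\lfloor k/\log k\rfloor$ is at most $O\!\big(\frac{k\log(p/k)}{\log\log k}\big)=o\big(k\log\frac{p}{k}\big)$ and therefore not binding. A short calculus check shows $\sup_{\alpha\in(0,1]}\frac{\alpha}{e^{-(1-\alpha)\nu}H_2(e^{-\alpha\nu})}$ is attained at $\alpha=1$, so \eqref{eq:final_ach} contributes the threshold $\frac{1}{H_2(e^{-\nu})}\big(k\log\frac{p}{k}\big)$ --- the second term inside the maximum in \eqref{eq:gt_ach}. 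For the remainder, Proposition~\ref{prop:gt_psi}(ii) handles $\ell>\lfloor k/\log k\rfloor$ under $n=\Omega\big(k\log\frac{p}{k}\big)$ (dominated by the above), while Proposition~\ref{prop:gt_psi}(i) handles $\ell\le\lfloor k/\log k\rfloor$ under $n\ge\frac{(\theta/(1-\theta))\,k\log(p/k)}{e^{-\nu}\nu}(1+\eta)$; here I would substitute $\log k=\frac{\theta}{1-\theta}\log\frac{p}{k}(1+o(1))$, valid since $k=\Theta(p^{\theta})$, which yields the first term inside the maximum in \eqref{eq:gt_ach}. Taking the maximum of the two thresholds and then $\inf_{\nu>0}$ gives \eqref{eq:gt_ach}.

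\emph{Converse.} I would apply Theorem~\ref{thm:simplified_conv} with $\Lc=\{k\}$, i.e.\ $\sdif=s$, $\seq=\emptyset$, $|\sdif|=k$. Then $I_{\sdif,\seq}(b_s)=I_k=H_2(e^{-\nu})(1+o(1))$ by \eqref{eq:gt_I_const} with $\alpha=1$, and in particular $I_k\le\log 2$ because the observations are binary. The numerator of \eqref{eq:final_conv} is $\log\binom{p}{k}-\log\delta_1=k\log\frac{p}{k}(1+o(1))$ after Stirling and $\delta_1,\delta_2\to0$ slowly. Since $\beta_s$ is deterministic the typical set is a singleton and $\PP[\beta_s\in\Bc'(\delta_1,\delta_2)]=1$, while $\psi'_k(n,\delta_2)\to0$ whenever $n=\omega(1)$ by \eqref{eq:gt_psi2} with $I_k=\Theta(1)$; hence \eqref{eq:nonasymp_conv} gives $\pe\to1$ as soon as $n\le\frac{k\log(p/k)}{H_2(e^{-\nu})}(1-\eta)$. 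Because $H_2(e^{-\nu})\le\log 2$ for every $\nu>0$, the stated weaker condition $n\le\frac{k\log(p/k)}{\log 2}(1-\eta)$ implies this for every Bernoulli design, which is \eqref{eq:gt_conv}.

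\emph{Main obstacle.} The genuinely delicate work is already encapsulated in the auxiliary results: the sharp small-$\ell$ concentration bound of Proposition~\ref{prop:conc_gt} and the $\ell$-dependent tuning of $\delta_{2,\ell}$ (closer to one for small $\ell$, close to zero for large $\ell$) needed in Proposition~\ref{prop:gt_psi} to keep the factorial-weighted remainder sum under control. Within the corollary proof itself, the one step that is not pure bookkeeping is verifying that $\ell=k$ dominates both \eqref{eq:final_ach} and \eqref{eq:final_conv}, i.e.\ that $\alpha\mapsto\frac{\alpha}{e^{-(1-\alpha)\nu}H_2(e^{-\alpha\nu})}$ is maximized at $\alpha=1$ on $(0,1]$; I would establish this by differentiating the logarithm of the objective and checking the sign (equivalently, a short convexity/monotonicity argument on $\alpha\mapsto H_2(e^{-\alpha\nu})$). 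It is also worth noting --- though not needed for the statement --- that at $\nu=\log 2$ the first term of \eqref{eq:gt_ach} equals $\frac{2\theta}{(1-\theta)\log 2}$, which is $\le\frac{1}{\log 2}$ exactly when $\theta\le\frac{1}{3}$, so in that regime the achievability and converse thresholds coincide.
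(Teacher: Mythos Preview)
Your proposal is correct and follows essentially the same route as the paper: both parts feed Propositions~\ref{prop:gt_boundI}--\ref{prop:gt_psi} into Theorems~\ref{thm:simplified_ach} and~\ref{thm:simplified_conv} (with $\Lc=\{k\}$ for the converse), absorb the secondary numerator terms via Stirling and $k=o(p)$, and identify the maximizing $\alpha=1$. The only cosmetic differences are that the paper verifies the $\alpha=1$ maximization via the substitution $\lambda=e^{-\alpha\nu}$ and monotonicity of $\lambda\log(1/\lambda)/H_2(\lambda)$ rather than direct differentiation, and it records the small-$\ell$ growth of \eqref{eq:final_ach} as $\Theta\big(\tfrac{k\log(p/k)}{\log(k/\ell)}+k\big)$ (your expression drops the harmless additive $\Theta(k)$, which is still $o(k\log\tfrac{p}{k})$).
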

\begin{proof}
    We first consider the achievability part.  We immediately obtain the first term in the maximum in \eqref{eq:gt_ach} from \eqref{eq:gt_n_cond_psi}, so it remains to derive the second term.  We start with \eqref{eq:final_ach}; by substituting $\gamma = 0$ and taking $\delta_1 \to 0$ sufficiently slowly, we obtain
    \begin{equation}
        n \ge \max_{\ell=1,\dotsc,k} \frac{ \log{ {{p-k} \choose \ell} } + 2\log\big(k {k \choose \ell}\big) }{I_{\ell}(1 - \delta_{2,\ell})} \big(1+o(1)\big). \label{eq:gt_n_cond3}
    \end{equation}
    Using \eqref{eq:gt_I_ord}--\eqref{eq:gt_I_const} and the asymptotic identity $\log {{p-k} \choose \ell} = \Theta\big( \ell \log \frac{p}{\ell} \big)$ we see that the objective in \eqref{eq:gt_n_cond3} behaves as
    \begin{equation}
        \Theta\bigg( \frac{k \log \frac{p}{\ell} }{ 1 + \log\frac{k}{\ell}} \bigg) \label{eq:gt_growth}
   \end{equation}
   whenever the constants $\{\delta_{2,\ell}\}$ are bounded away from one. This behaves as $\Theta\big(k \log \frac{p}{k}\big)$ when $\frac{\ell}{k} = \Theta(1)$, and as $\Theta\big( \frac{ k \log \frac{p}{k} }{ \log\frac{k}{\ell} } + k \big)$ when $\frac{\ell}{k} = o(1)$ (the latter of these is seen by writing $\log\frac{p}{\ell} = \log\frac{p}{k} + \log\frac{k}{\ell}$).  Thus, the maximum in \eqref{eq:gt_n_cond3} can only be achieved by a sequence such that $\frac{\ell}{k} = \Theta(1)$. Moreover, with $\frac{\ell}{k} = \Theta(1)$, we see from the assumption $k = o(p)$ that the term $2\log\big(k {k \choose \ell}\big) = O(k)$ is dominated by $\log { {p-k} \choose \ell } = \Theta\big( k \log \frac{p}{k} \big)$, and can thus be factored into the $o(1)$ remainder term in \eqref{eq:gt_n_cond3}.  This yields the condition
    \begin{equation}
        n \ge \max_{\ell=1,\dotsc,k} \frac{\ell \log \frac{p}{\ell}}{ I_{\ell}(1 - \delta_{2,\ell}) }\big(1+o(1)\big). \label{gt:growth2}
    \end{equation}
     Since the maximum can only be achieved asymptotically if $\frac{\ell}{k} = \Theta(1)$, we proceed by considering $\frac{\ell}{k} \to \alpha$ for some arbitrary $\alpha\in(0,1]$.  Under this scaling, $\ell\log\frac{p}{\ell}$ behaves as $\big(\alpha k\log\frac{p}{k}\big) (1+o(1))$.  Moreover, according to Proposition \ref{prop:gt_psi}, we can choose $\delta_{2,\ell}$ to be arbitrarily small for all $\ell$ values except those below $\lfloor \frac{k}{\log k} \rfloor$.  Such values behave as $o(k)$, and thus do achieve the maximum in \eqref{gt:growth2}.  Combining these observations with \eqref{eq:gt_I_const}, the right-hand side of \eqref{gt:growth2} yields the condition
    \begin{equation}
       n \ge \max_{\alpha\in(0,1]}\frac{\alpha k\log\frac{p}{k}}{ e^{-(1-\alpha)\nu} H_2\big(e^{-\alpha \nu}\big) } \big(1+\eta\big), \label{gt:growth3}
   \end{equation}
   where $\eta$ may be arbitrarily small.  By a change of variable $\lambda = e^{-\alpha\nu}$, the coefficient to $k\log\frac{p}{k}$ can be written as $\frac{1}{\nu}e^{\nu} \frac{\lambda\log\frac{1}{\lambda}}{H_2(\lambda)}$.  This is easily verified to be decreasing in $\lambda\in[0,1]$, which implies that the maximizing value of $\alpha$ is one, and yields the second term in \eqref{eq:gt_ach}.
    
    The converse part is similar but considerably simpler; by setting $\Lc=\{k\}$ in Theorem \ref{thm:simplified_ach}, we obtain $\alpha=1$ immediately.  The denominator $\log2$ in \eqref{eq:gt_conv} is obtained by maximizing $H_2(e^{-\nu})$ over $\nu$, and the condition $n\to\infty$ stated before Proposition \ref{prop:gt_psi} is clearly satisfied when \eqref{eq:gt_conv} holds with equality.
\end{proof}

By setting $\nu=\log2$ in \eqref{eq:gt_ach}, it is readily verified that the necessary and sufficient conditions coincide for $\theta \le \frac{1}{3}$, and in fact yield the same threshold as \emph{adaptive} group testing \cite{Bal13}.  To our knowledge, this was only known previously in the limit as $\theta\to0$ \cite{Ald14}.  Further comparisons to previous works are provided at the end of this subsection.

\subsubsection{Noisy Case with Exact Recovery} \label{sec:GT_NOISY}

We now turn to the noisy counterpart of \eqref{eq:gt_noiseless}:
\begin{equation}
    Y = \openone\bigg\{ \bigcup_{i \in S} \{X_i = 1\} \bigg\} \oplus Z, \label{eq:gtn_model}
\end{equation}
where $Z\in\{0,1\}$ is additive noise, and $\oplus$ denotes modulo-2 addition.  For concreteness, we focus on the case that $Z \sim \mathrm{Bernoulli}(\rho)$ for some $\rho\in(0,\frac{1}{2})$ not varying with $p$, though other noise models also fall into our framework (e.g., see \cite{Ati12}).  As discussed below, we do not attempt to provide results with constants that are optimized to the same extent as the noiseless case, and we thus set $\nu = \log 2$, i.e., $P_X \sim \mathrm{Bernoulli}\big( \frac{\log 2}{k} \big)$.

We follow Procedure \ref{pr:procedure} in a similar fashion to the noiseless case, altering the statements of Proposition \ref{prop:gt_boundI}--\ref{prop:gt_psi} accordingly.   To avoid repetition, we give the modified propositions and their proofs in Appendix \ref{sec:PROOFS_GT_NOISY}, and state the resulting corollary here.  The main difference is that in the analog of Proposition \ref{prop:gt_psi}, we let $\delta_2^{(1)}$ remain arbitrary, thus leading to the optimization parameter $\delta_2$ in the following.
 
 \begin{cor} \label{cor:gt_noisy}
      Under the preceding setup for the noisy group testing problem with $\rho \in (0,0.5)$, $\nu = \log 2$, and $k=\Theta(p^{\theta})$ ($\theta\in(0,1)$), we have $\pe \to 0$ as $p\to\infty$ provided that
      \begin{multline}
          n \ge \inf_{\delta_2\in(0,1)}\max\bigg\{ \zeta(\rho,\delta_2,\theta) , \frac{ 1 }{\log 2 - H_2(\rho) }\bigg\} \Big( k \log\frac{p}{k} \Big) \\ \times (1 + \eta) \label{eq:gtn_ach}
      \end{multline}
      for some $\eta > 0$, where
      \begin{multline}
          \zeta(\rho,\delta_2,\theta) := \frac{2}{\log 2} \max\bigg\{\frac{ 2(1+\frac{1}{3}\delta_2 (1-2\rho)) \frac{\theta}{1-\theta}}{\delta_2^2 (1-2\rho)^2 } , \\ \frac{\frac{1+4\theta}{1-\theta} }{(1-2\rho)\log\frac{1-\rho}{\rho} (1-\delta_2)} \bigg\}. \label{eq:gt_zeta}
      \end{multline}
        Conversely, we have $\pe \to 1$ as $p\to\infty$ whenever
      \begin{equation}
      n \le \frac{ k \log{\frac{p}{k}} }{ \log 2 - H_2(\rho) }  (1 - \eta). \label{eq:gtn_conv}
      \end{equation}
      for some $\eta > 0 $. 
 \end{cor}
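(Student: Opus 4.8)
The plan is to run Procedure \ref{pr:procedure} exactly as in the proof of Corollary \ref{cor:gt}, substituting for the noiseless quantities $I_\ell$, $\psi_\ell$ and $\psi'_\ell$ their noisy counterparts, which are established in Appendix \ref{sec:PROOFS_GT_NOISY} (the analogs of Propositions \ref{prop:gt_boundI}--\ref{prop:gt_psi}, obtained under the fixed choice $P_X \sim \mathrm{Bernoulli}(\tfrac{\log 2}{k})$). Since $\beta_s$ is deterministic, Step 1 is trivial and the typical set is a singleton; in Step 3 we take $\gamma = 0$ so that $P_0(\gamma) = 0$, and let $\delta_1 \to 0$ slowly. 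Hence the only terms in \eqref{eq:nonasymp_ach} and \eqref{eq:nonasymp_conv} that matter are the rate condition \eqref{eq:final_ach}/\eqref{eq:final_conv} and the concentration remainders $\sum_{\ell}\binom{k}{\ell}\psi_\ell(n,\delta_{2,\ell})$ and $\max_{\ell\in\Lc}\psi'_\ell(n,\delta_{2,\ell})$.

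For the converse, I set $\Lc = \{k\}$ in Theorem \ref{thm:simplified_conv}. At $\ell = k$ one has $\seq = \emptyset$, so $I_k = I(X_s;Y) = \log 2 - H_2(\rho)$, because with $\nu = \log 2$ the clean test outcome is asymptotically $\mathrm{Bernoulli}(\tfrac12)$ and is then passed through a $\mathrm{BSC}(\rho)$; this is the noisy analog of \eqref{eq:gt_I_const} at $\alpha = 1$. Since the output alphabet has size two, Proposition \ref{prop:gen_discrete} gives $\psi'_k(n,\delta_2) = 2\exp\bigl(-\tfrac{(\delta_2 I_k)^2 n}{2(16+2\delta_2 I_k)}\bigr) \to 0$ whenever $n = \omega(1)$. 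Substituting $|\sdif| = k$ into \eqref{eq:final_conv} and using $\log\binom{p}{k} = \bigl(k\log\tfrac{p}{k}\bigr)(1+o(1))$, together with $\delta_2\to0$ slowly, yields \eqref{eq:gtn_conv}, with the required condition $n\to\infty$ holding automatically at equality.

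For achievability, I begin from \eqref{eq:final_ach} with $\gamma = 0$, giving $n \ge \max_\ell \tfrac{\log\binom{p-k}{\ell} + 2\log(k\binom{k}{\ell})}{I_\ell(1-\delta_{2,\ell})}(1+o(1))$. The noisy analog of Proposition \ref{prop:gt_boundI} shows that $I_\ell$ scales as $\Theta\bigl(\tfrac{\ell}{k}\log\tfrac{k}{\ell}\bigr)$ when $\ell/k = o(1)$ and as $\Theta(1)$ when $\ell/k = \Theta(1)$, exactly as in the noiseless case; combined with $\log\binom{p-k}{\ell} = \Theta(\ell\log\tfrac{p}{\ell})$ this forces the maximizing $\ell$ to satisfy $\ell/k = \Theta(1)$, for which $2\log(k\binom{k}{\ell}) = O(k)$ is negligible against $\log\binom{p-k}{\ell} = \Theta(k\log\tfrac{p}{k})$. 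Taking $\ell/k\to\alpha$, the same monotonicity argument as in Corollary \ref{cor:gt} shows the maximizer is $\alpha = 1$, where $I_k = (\log 2 - H_2(\rho))(1+o(1))$, so the rate condition contributes the $\tfrac{1}{\log 2 - H_2(\rho)}$ term of the maximum in \eqref{eq:gtn_ach}. The remaining and most delicate part is to control $\sum_\ell\binom{k}{\ell}\psi_\ell(n,\delta_{2,\ell})$, which is the content of the noisy analog of Proposition \ref{prop:gt_psi}: one splits $\ell$ into a ``small'' range $\ell\le\lfloor k/\log k\rfloor$, handled by a Chernoff-type bound (the noisy analog of Proposition \ref{prop:conc_gt}) whose exponent is proportional to $n\tfrac{\ell}{k}(1-2\rho)\log\tfrac{1-\rho}{\rho}(1-\delta_2)$ and where — unlike the noiseless case — $\delta_2^{(1)} = \delta_2$ is kept arbitrary, and a ``large'' range handled by the Bernstein bound of Proposition \ref{prop:gen_discrete} with a vanishing $\delta_2^{(2)}$; bounding $\binom{k}{\ell}\le\exp(\ell\log\tfrac{ek}{\ell})$ and using $k = \Theta(p^\theta)$ (so $\log\tfrac{p}{k} = \tfrac{1-\theta}{\theta}\log k\cdot(1+o(1))$), these sums vanish precisely when $n$ exceeds the two arguments of the inner $\max$ defining $\zeta(\rho,\delta_2,\theta)$ in \eqref{eq:gt_zeta}, times $k\log\tfrac{p}{k}$; optimizing over $\delta_2$ then produces the $\inf_{\delta_2\in(0,1)}$ in \eqref{eq:gtn_ach}. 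The main obstacle is exactly this bookkeeping — tracking the two competing concentration exponents against $\binom{k}{\ell}$ uniformly in $\ell$, and verifying that the small-$\ell$ regime is what produces the $\tfrac{\theta}{1-\theta}$ and $\tfrac{1+4\theta}{1-\theta}$ factors in $\zeta$ — whereas the rest is a routine repetition of the noiseless argument.
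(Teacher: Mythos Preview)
Your converse argument and the overall structure are correct and match the paper. However, there is a genuine gap in the achievability part, stemming from a wrong asymptotic for $I_\ell$ in the small-$\ell$ regime.

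In the noisy case, the analog of Proposition \ref{prop:gt_boundI} gives
\[
I_\ell = \Big(e^{-\nu}\nu\,\tfrac{\ell}{k}\,(1-2\rho)\log\tfrac{1-\rho}{\rho}\Big)(1+o(1))
\]
when $\ell/k\to 0$ --- there is \emph{no} $\log(k/\ell)$ factor, unlike the noiseless case. (The reason is that with noise, $H_2(\xi\star\rho)-H_2(\rho)$ is linear in $1-\xi$ near $\xi=1$, rather than behaving like $(1-\xi)\log\tfrac{1}{1-\xi}$.) Consequently, your claim that the maximizer of the rate condition \eqref{eq:final_ach} must lie at $\ell=\Theta(k)$ fails: at $\ell=1$ the ratio in \eqref{eq:final_ach} behaves as
\[
\frac{k\log p + 4k\log k}{e^{-\nu}\nu(1-2\rho)\log\tfrac{1-\rho}{\rho}\,(1-\delta_2)}
= \frac{2}{\log 2}\cdot\frac{\tfrac{1+4\theta}{1-\theta}}{(1-2\rho)\log\tfrac{1-\rho}{\rho}\,(1-\delta_2)}\cdot k\log\tfrac{p}{k}\,(1+o(1)),
\]
where the $4k\log k$ comes from the $2\log\bigl(k\binom{k}{\ell}\bigr)$ term, which is \emph{not} negligible here since the denominator no longer carries a $\log k$. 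This is precisely the second argument of the inner $\max$ in $\zeta(\rho,\delta_2,\theta)$.

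So the decomposition of \eqref{eq:gtn_ach} is: the first term in $\zeta$ (the one with $\delta_2^2(1-2\rho)^2$ in the denominator) comes from the small-$\ell$ concentration bound (the noisy Proposition \ref{prop:conc_gt}, whose exponent is $n\tfrac{\ell}{k}e^{-\nu}\nu\cdot\tfrac{\delta_2^2(1-2\rho)^2}{2(1+\tfrac13\delta_2(1-2\rho))}$, not the expression you wrote); the second term in $\zeta$ comes from the rate condition \eqref{eq:final_ach} at $\ell=1$; and the $\tfrac{1}{\log 2 - H_2(\rho)}$ term comes from the rate condition at $\ell=\Theta(k)$. Your proposal collapses the first two into the concentration step and discards the small-$\ell$ rate condition, which is where the $\tfrac{1+4\theta}{1-\theta}$ factor actually originates.
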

 \begin{proof}
     See Appendix \ref{sec:PROOFS_GT_NOISY}.
 \end{proof}
 
As we will see in the numerical examples below, Corollary \ref{cor:gt_noisy} provides an exact asymptotic threshold for a narrower range of $\theta$ values compared to the noiseless case.  This is due to the difficulty in  precisely characterizing the concentration behavior of the information density tail probabilities.  Nevertheless, the second term in the maximum in \eqref{eq:gtn_ach} is always dominant for sufficiently small $\theta$, thus matching the converse.  To see this, we first note that the first term in the maximum in \eqref{eq:gt_zeta} tends to zero as $\theta \to 0$, and cannot be dominant in this limit.  This implies that $\delta_2$ may be arbitrarily close to zero provided that $\theta$ is sufficiently small.  Assuming then that $\delta_2 $ and $\theta$ are small and the maximum in \eqref{eq:gt_zeta} is achieved by the second term, we can write $\zeta(\rho,\delta_2,\theta) \approx \frac{2}{\log 2}\frac{1}{ (1-2\rho) \log\frac{1-\rho}{\rho} }$.  This is strictly smaller than $\frac{1}{\log 2 - H_2(\rho)}$; see Proposition \ref{prop:gt_cmp} in Appendix \ref{sec:PROOFS_GT_NOISY}.
 
 \subsubsection{Partial Recovery}

The consideration of partial recovery (\emph{cf.}~\eqref{eq:pe_partial}) in fact leads to \emph{simpler} expressions and proofs, as seen in the following.

\begin{cor} \label{cor:gt_partial}
    Under the preceding setup for the group testing problem with $\rho\in[0,0.5)$ (i.e., possibly noiseless), $\nu=\log 2$, $k\to\infty$, $k=o(p)$, and $\dmax = \lfloor \alpha^* k \rfloor$ for some $\alpha^*\in(0,1)$, we have $\pe(\dmax) \to 0$ as $p\to\infty$ provided
    \begin{equation}
        n \ge \frac{ k \log{\frac{p}{k}} }{ \log 2 - H_2(\rho) }  (1 + \eta) \label{eq:gt_ach3}
    \end{equation}
    for some $\eta > 0$. Conversely, $\pe(\dmax) \to 1$ as $p\to\infty$ whenever
    \begin{equation}
        n \le \frac{(1-\alpha^{*}) \big(k \log{\frac{p}{k}}\big)}{ \log 2 - H_2(\rho) }  (1 - \eta) \label{eq:gt_conv3}
    \end{equation}
    for some $\eta > 0 $. 
\end{cor}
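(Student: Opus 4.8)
The plan is to invoke the partial-recovery bounds of Theorems~\ref{thm:simplified_ach_p} and~\ref{thm:simplified_conv_p} through Procedure~\ref{pr:procedure}, exploiting the fact that only partitions with $\ell := |\sdif| \ge \dmax + 1 = \lfloor \alpha^* k \rfloor + 1 = \Theta(k)$ enter the analysis, which is what makes this corollary easier than Corollaries~\ref{cor:gt} and~\ref{cor:gt_noisy}. Since $\beta_s$ is deterministic (the all-ones vector), Step~1 is trivial ($\Tc_{\beta}$ is a singleton), and in Step~3 one may take $\gamma = 0$, so $P_0(\gamma) = 0$. For $\ell/k \to \alpha \in [\alpha^*,1]$, the (noisy) analogue of Proposition~\ref{prop:gt_boundI} gives $I_\ell \to I(\alpha) := e^{-(1-\alpha)\nu}\big(H_2\big(\rho + (1-2\rho)(1 - e^{-\alpha\nu})\big) - H_2(\rho)\big)$ with $\nu = \log 2$; in particular $I(\alpha) = \Theta(1)$ on this range and $I(1) = H_2(1/2) - H_2(\rho) = \log 2 - H_2(\rho)$, the denominator in the statement. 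Because $Y$ is binary the information density is bounded, so the variances $V_{\sdif,\seq}(b_s)$ are $O(1)$ for these $\ell$.

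For achievability I would use, in Step~2, the Bernstein-type inequality of Proposition~\ref{prop:gen_discrete}, yielding $\psi_\ell(n,\delta_2) = 2\exp(-\Theta(n\delta_2^2))$ uniformly over $\ell \in \{\dmax+1,\dots,k\}$; since $\binom{k}{\ell} \le 2^k$, the second term of~\eqref{eq:nonasymp_ach_p} vanishes once $n\delta_2^2 = \omega(k)$, which is implied by the condition on $n$ below provided $\delta_2 \to 0$ slowly enough. Condition~\eqref{eq:final_ach}, after absorbing the $2\log\big(k\binom{k}{\ell}\big) = O(k)$ term and the Stirling corrections into $\eta$ (using $k = o(p)$, so $\log\binom{p-k}{\ell} = \alpha k \log\frac{p}{k}(1+o(1))$ when $\ell = \lfloor \alpha k \rfloor$), then reduces to $n \ge \max_{\alpha \in [\alpha^*,1]} \frac{\alpha k \log\frac{p}{k}}{I(\alpha)}(1+\eta)$. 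The remaining, and only genuinely nontrivial, point is that this maximum is attained at $\alpha = 1$: for $\rho = 0$ this is exactly the change-of-variables monotonicity computation in the proof of Corollary~\ref{cor:gt} (the map $\lambda \mapsto \lambda\log(1/\lambda)/H_2(\lambda)$ is decreasing on $(0,1]$), and for $\rho > 0$ one argues analogously that $\alpha \mapsto I(\alpha)/\alpha$ is nonincreasing on $[0,1]$ using the formula for $I(\alpha)$. This yields~\eqref{eq:gt_ach3}.

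For the converse I would apply Theorem~\ref{thm:simplified_conv_p} with $\Lc = \{k\}$, i.e.\ the single partition $\sdif = s$, $\seq = \emptyset$, so that $I_k \to \log 2 - H_2(\rho)$. Condition~\eqref{eq:final_conv_p} with $|\sdif| = k$ then bounds $n$ above by $\big(\log\binom{p}{k} - \log\sum_{d=0}^{\dmax}\binom{p-k}{d}\binom{k}{d} - \log\delta_1\big)/I_k$; bounding the sum by $(\dmax+1)\binom{p-k}{\dmax}2^k$ (valid since $\binom{p-k}{d}$ is increasing for $d \le \dmax = o(p)$) and applying Stirling's approximation gives a numerator equal to $k\log\frac{p}{k} - \dmax\log\frac{p}{k} \pm O(k) = (1-\alpha^*)k\log\frac{p}{k}(1+o(1))$ once $\delta_1 \to 0$ slowly; dividing by $I_k = (\log 2 - H_2(\rho))(1+o(1))$ gives~\eqref{eq:gt_conv3}. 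Finally, Chebyshev's inequality (Proposition~\ref{prop:chebyshev}) gives $\psi'_k(n,\delta_2) = O(1/(n\delta_2^2)) \to 0$ since $n \to \infty$, and $\PP[\beta_s \in \Bc'] = 1$ by determinism, so~\eqref{eq:nonasymp_conv_p} forces $\pe(\dmax) \to 1$. The main obstacle is the one-variable optimization certifying that $\alpha = 1$ maximizes the achievability bound; the rest is the routine bookkeeping of Procedure~\ref{pr:procedure}, here considerably lighter because the restriction $|\sdif| > \dmax$ removes any need for the delicate small-$\ell$ estimate of Proposition~\ref{prop:conc_gt}, and the argument goes through for all $k = o(p)$ rather than only $k = \Theta(p^\theta)$.
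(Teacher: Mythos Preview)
Your proposal is correct and follows essentially the same route as the paper: invoke Theorems~\ref{thm:simplified_ach_p}--\ref{thm:simplified_conv_p}, use Proposition~\ref{prop:gen_discrete} for concentration on the range $\ell > \dmax$ (so the small-$\ell$ analysis of Proposition~\ref{prop:conc_gt} is unnecessary), take $\Lc=\{k\}$ for the converse, and obtain the $(1-\alpha^*)$ factor by the same Stirling-type estimate used in Corollary~\ref{cor:linear_partial}. The only minor deviations are cosmetic: the paper keeps $\delta_2$ fixed and absorbs it into $\eta$ rather than letting $\delta_2\to0$, and for the noisy $\alpha=1$ optimization it defers to \cite[Thm.~3a]{Mal78} rather than arguing monotonicity of $I(\alpha)/\alpha$ directly.
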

\begin{proof}
    The achievability part follows the proofs of Corollaries \ref{cor:gt} and \ref{cor:gt_noisy}, except that the ``small'' values of $\ell$ need not be handled.  That is, we only make use of the general concentration inequality in \eqref{eq:conc_gen_disc} in Appendix \ref{sec:CONCENTRATION}, and we end up with the single condition in \eqref{eq:gt_ach3}.  For the converse part, we again choose $\Lc=\{k\}$ in Theorem \ref{thm:simplified_conv_p}, and the steps are again similar, with the multiplicative factor $1-\alpha^{*}$ arising via identical reasoning to Corollary \ref{cor:linear_partial}.
\end{proof}

Corollary \ref{cor:gt_partial} shows that at least for sufficiently small $\theta$ (e.g., $k = O\big(p^\frac{1}{3}\big)$ in the noiseless case), there is not much to be saved by moving from exact recovery to partial recovery: Allowing for a fraction $\alpha^{*}$ of errors leads to at most a reduction in the number of measurements of a multiplicative factor $1-\alpha^{*}$. 
 
\subsubsection{Numerical Evaluations}
 
In Figure \ref{fig:gt_noiseless}, we compare the bounds in Corollary \ref{cor:gt} with existing asymptotic bounds in the literature.  For convenience, we switch to base-2 logarithms and plot the asymptotic limit of the ratio $\frac{k\log_2\frac{p}{k}}{n}$, so that a higher value corresponds to fewer measurements.  We see that our achievability bound improves on all of the existing bounds; however, we note that the Combinatorial Optimal Matching Pursuit (COMP) \cite{Cha11} and Definite Defective (DD) \cite{Ald14a} algorithms are computationally tractable and do not require knowledge of $k$. 

\begin{figure}
     \begin{centering}
         \includegraphics[width=0.95\columnwidth]{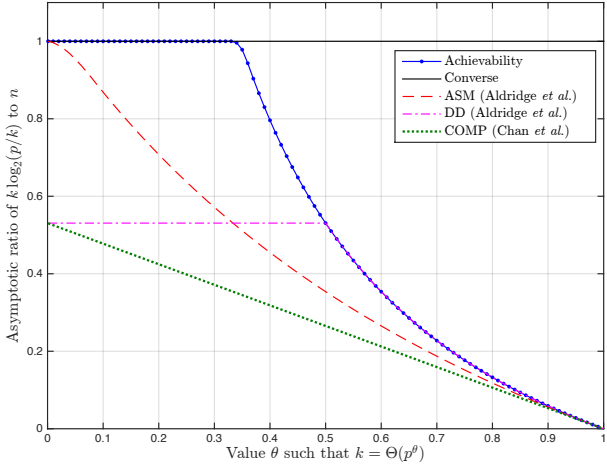}
         \par
     \end{centering}
  
     \caption{ Asymptotic thresholds on the number of measurements required for noiseless group testing. \label{fig:gt_noiseless}}
 \end{figure}
 
 \begin{figure}
      \begin{centering}
       \includegraphics[width=0.95\columnwidth]{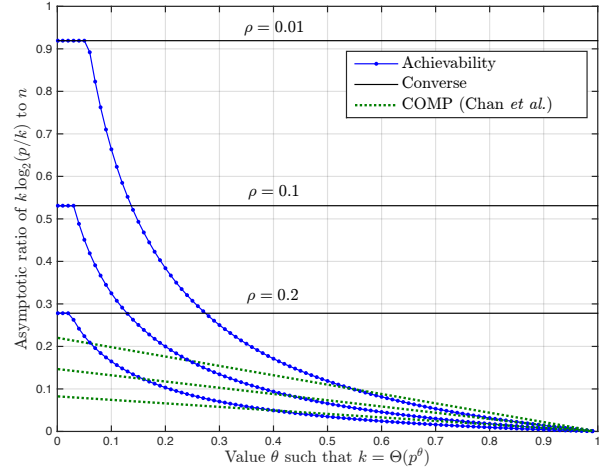}
          \par
      \end{centering}
   
      \caption{ Asymptotic thresholds on the number of measurements required for noisy group testing. \label{fig:gt_noisy}}
\end{figure}

The converse bound shown is known to hold even for adaptive measurement matrices \cite{Bal13}.  Thus, a key implication of our results is that adaptivity provides no asymptotic gain over non-adaptive Bernoulli measurements when $k = O(p^{\frac{1}{3}})$.  It remains an important open problem to derive \emph{practical} decoding schemes for achieving the bound in the non-adaptive setting. 

 Figure \ref{fig:gt_noisy} provides an analogous plot for the noisy case, with three different noise levels (i.e., values of $\rho$).  In each case, we obtain an exact threshold for sufficiently small $\theta$, albeit over a narrower range than the noiseless case.  Once again, the converse is known to hold even in the adaptive setting \cite{Cha11}, and we have thus provided cases where non-adaptive Bernoulli measurements yield the same asymptotics as optimal adaptive measurements.  To our knowledge, this has not been shown previously even in the limit as $\theta \to 0$. 

\subsection{General Strong Impossibility result for Discrete Observation Models} \label{sec:STRONG_IMP_DISC}

Equation \eqref{eq:uniformV} in Appendix \ref{sec:CONCENTRATION} bounds the variance of the information density uniformly in terms of the output alphabet size for models with discrete observations.  Notable examples include group testing, the 1-bit model (or more generally, quantizations with more than two levels), and logistic regression. We obtain the following general strong impossibility result (i.e., conditions under which $\pe \to 1$) by combining Proposition \ref{prop:chebyshev} in Appendix \ref{sec:CONCENTRATION} with a variant of Theorem \ref{thm:simplified_conv}. 

\begin{cor} \label{cor:gen_finite}
	If the observations lie in a finite set $\Yc \subset \RR$ with probability one, then $\pe \to 1$ whenever there exist vanishing sequences $\delta_{1,p} \to 0$ and $\epsilon_{p} \to 0$ such that
	\begin{equation}
		n \ge \max_{(\sdif,\seq)\,:\,\sdif\ne\emptyset} \frac{\log {{p-k+|\sdif|} \choose |\sdif|}  - \log\delta_{1,p} }{ I_{\sdif,\seq}(b_s) + \sqrt{\frac{|\Yc|}{n\epsilon_{p}}} } \label{eq:gen_finite}
	\end{equation}
	for all $b_s \in \RR^k$ within a set whose probability under $P_{\beta_s}$ approaches one.
\end{cor}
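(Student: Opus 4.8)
\emph{Proof outline.} The plan is to apply Theorem~\ref{thm:conv} directly, replacing the multiplicative concentration step of Section~\ref{sec:APPLYING} and Theorem~\ref{thm:simplified_conv} by an \emph{additive} one; this is the ``variant'' alluded to in the statement. Let $\mathcal{B}$ denote the set of $b_s$ appearing in the corollary (so $P_{\beta_s}(\mathcal{B}) \to 1$), fix $b_s \in \mathcal{B}$, and --- invoking the refined genie argument of Section~\ref{sec:PROOFS}, which permits $(\sdif,\seq)$ to be chosen as an arbitrary function of $b_s$ --- let $(\sdif,\seq) = (\sdif(b_s),\seq(b_s))$ attain the maximum in \eqref{eq:gen_finite}. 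Conditioned on $\beta_s = b_s$, the quantity $\imath^n(\Xv_{\sdif};\Yv|\Xv_{\seq},b_s)$ is a sum of $n$ i.i.d.\ terms with mean $nI_{\sdif,\seq}(b_s)$ and variance $nV_{\sdif,\seq}(b_s)$.

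First I would invoke the uniform variance bound \eqref{eq:uniformV} of Appendix~\ref{sec:CONCENTRATION}, which for observations confined to a finite set $\Yc$ controls $V_{\sdif,\seq}(b_s)$ by a constant multiple of $|\Yc|$, uniformly over all partitions and all $b_s$ (the constant is harmless, since it can be absorbed into $\epsilon_p$). Applying Chebyshev's inequality with the additive deviation $\delta := \sqrt{|\Yc|/(n\epsilon_p)}$ then yields
\[
    \PP\Big[\,\imath^n(\Xv_{\sdif};\Yv|\Xv_{\seq},b_s) > n\big(I_{\sdif,\seq}(b_s)+\delta\big)\ \big|\ \beta_s = b_s\,\Big]\ \le\ \frac{V_{\sdif,\seq}(b_s)}{n\delta^2}\ \le\ \epsilon_p ,
\]
so that, whenever $n\big(I_{\sdif,\seq}(b_s)+\delta\big)$ does not exceed $\log{p-k+|\sdif| \choose |\sdif|} + \log\delta_{1,p}$ --- which is the source of the sample-complexity bound \eqref{eq:gen_finite} (an implicit condition in $n$, since $\delta$ itself depends on $n$), the term $\log\delta_{1,p}$ being asymptotically negligible for $\delta_{1,p}$ vanishing slowly enough --- the threshold event in \eqref{eq:conv_sl} has conditional probability at least $1-\epsilon_p$.

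Averaging this over $b_s \in \mathcal{B}$ and substituting into Theorem~\ref{thm:conv} gives $\pe \ge (1-\epsilon_p)\,P_{\beta_s}(\mathcal{B}) - \delta_{1,p}$; letting $\delta_{1,p},\epsilon_p \to 0$ and using $P_{\beta_s}(\mathcal{B}) \to 1$ yields $\pe \to 1$.

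I expect the main obstacle to be the genie / change-of-measure bookkeeping inside Theorem~\ref{thm:conv} that lets $(\sdif,\seq)$ depend on $b_s$: this is what converts $\pe \gtrsim \max_{(\sdif,\seq)}\PP[\,\cdot\,]$ into the stronger $\pe \gtrsim \PP\big[\max_{(\sdif,\seq)}(\,\cdot\,)\big]$, and underlies the ``$\max$ inside'' form of \eqref{eq:gen_finite}. A secondary but essential point is that the additive slack $\delta$ must replace the multiplicative slack $\delta_2\,I_{\sdif,\seq}(b_s)$ of Theorem~\ref{thm:simplified_conv}: for finite-alphabet models $I_{\sdif,\seq}(b_s)$ may be arbitrarily small (even zero) for some partitions and some $b_s$, in which case a multiplicative Chebyshev bound is vacuous, whereas the additive version controls the tail uniformly through the single bound $V_{\sdif,\seq}(b_s) = O(|\Yc|)$ --- which is exactly why the term $\sqrt{|\Yc|/(n\epsilon_p)}$ appears in the denominator of \eqref{eq:gen_finite}.
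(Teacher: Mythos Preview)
Your proposal is correct and matches the paper's own argument essentially step for step: the paper likewise bypasses the multiplicative formulation of Theorem~\ref{thm:simplified_conv}, replaces \eqref{eq:B_cond_conv}--\eqref{eq:psi_conv} by their additive analogues, applies Chebyshev together with the uniform variance bound \eqref{eq:uniformV}, and sets $\delta=\sqrt{|\Yc|/(n\epsilon_p)}$ and $\delta_1=\delta_{1,p}$ to reach the conclusion. Your remark that the constant $(4/e)^2$ can be absorbed into $\epsilon_p$, and your explanation of why the additive slack is essential when $I_{\sdif,\seq}(b_s)$ may vanish, are both exactly on point.
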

\begin{proof}
    In this application, we do not use Theorem \ref{thm:simplified_conv} directly, but instead follow the arguments leading up to it with \eqref{eq:B_cond_conv}--\eqref{eq:psi_conv} replaced by
    \begin{equation}
        \log{{p-k+|\sdif|} \choose |\sdif|} - \log \delta_1 \ge n(I_{\sdif,\seq}(b_s) + \delta); \label{eq:B_cond_conv_2}
    \end{equation}
    and
    \begin{multline}
        \PP\Big[ \imath^n(\Xv_{\sdif}; \Yv | \Xv_{\seq}, b_s) \le n(I_{\sdif,\seq}(b_s) + \delta) \,\big|\, \beta_s = b_s \Big] \\ \ge 1 - \frac{|\Yc|(\frac{4}{e})^2}{\delta^2 n}. \label{eq:psi_conv_2}
    \end{multline}
    By Proposition \ref{prop:chebyshev} in Appendix \ref{sec:CONCENTRATION}, we have for all $(\sdif,\seq,b_s)$ that \eqref{eq:psi_conv_2} holds, so the analogous probability to that on the right-hand side of \eqref{eq:nonasymp_conv} is dictated only by \eqref{eq:B_cond_conv_2}.  Moreover, the right-hand side of \eqref{eq:psi_conv_2} tends to one upon setting $\delta = \sqrt{\frac{|\Yc|}{n\epsilon_p}}$ for some $\epsilon_p \to 0$.  By also setting $\delta_1 = \delta_{1,p} \to 0$ (so that the analogous additive term to that of $\delta_1$ in \eqref{eq:nonasymp_conv} vanishes), we see that \eqref{eq:B_cond_conv_2} coincides with \eqref{eq:gen_finite}.
\end{proof}

When $\beta_s$ deterministic, this theorem recovers a recent result by Tan and Atia \cite{Tan14}, which was proved using combinatorial techniques.  Our result is in fact slightly stronger in the sense that the additive term in the denominator only behaves as $\omega\big(\frac{1}{\sqrt{n}}\big)$, whereas the corresponding term in \cite{Tan14} behaves as $\omega\big(\frac{1}{n^{1/4}}\big)$.  Thus, in our result, the mutual information term remains the dominant one in a wider range of settings.  

%
%
\section{Proofs of General Bounds} \label{sec:PROOFS}

Here we provide the proofs of Theorems \ref{thm:ach1} and \ref{thm:conv}, and then give the changes required to obtain the results for partial recovery in Section \ref{sec:PARTIAL}.  As mentioned previously, the proofs bear some resemblance to those of mixed channels in channel coding \cite[Sec.~3.3]{Han03}.  However, the analysis here is more involved, primarily due to the fact that the ``codewords'' $\Xv_{s}$ are not independent for different values of $s\in\Sc$, but instead share common columns corresponding to the overlapping parts of the support set.  See \cite{Wai09,Wan10,Jin11} for further discussions on the differences between support recovery and channel coding.

\subsection{Proof of Theorem \ref{thm:ach1}} \label{sec:PROOF_GEN_ACH}

\subsubsection{Initial Non-Asymptotic Bound}

Recall the definitions of the random variables in \eqref{eq:joint_dist}--\eqref{eq:joint_dist_n}, and the information densities in \eqref{eq:idens_multi}--\eqref{eq:idens_b}.  We fix the constants $\gamma_{1},\dotsc,\gamma_{k}$ arbitrarily, and consider a decoder that searches for the unique set $s\in\Sc$ such that
\begin{equation}
\imath(\xv_{\sdif}; \yv | \xv_{\seq} ) > \gamma_{|\sdif|} \label{eq:decoder_cond}
\end{equation}
for all $2^{k} - 1$ partitions $(\sdif,\seq)$ of $s$ with $\sdif\ne\emptyset$.  An error occurs if no such $s$ exists, if multiple exist, or if such a set differs from the true value. 

Since the joint distribution of $(\beta_s,\Xv_s,\Yv_s \,|\, S=s)$ is the same for all $s$ in our setup (\emph{cf.}~Section \ref{sec:SETUP}), and the decoder that we have chosen exhibits a similar symmetry, we can condition on a fixed and arbitrary value of $S$, say $s=\{1,\dotsc,k\}$.  By the union bound, the error probability is upper bounded by
\begin{multline}
\pe \le \PP\bigg[ \bigcup_{(\sdif,\seq)} \Big\{ \imath(\Xv_{\sdif}; \Yv | \Xv_{\seq} ) \le \gamma_{|\sdif|} \Big\} \bigg] \\ +  \sum_{\sbar \in \Sc \backslash \{s\}} \PP\Big[\imath(\Xvdif; \Yv | \Xveq) > \gamma_{|\sdif|} \Big], \label{eq:thresh2} 
\end{multline}
where here and subsequently we let the condition $\sdif\ne\emptyset$ remain implicit.  The first term in \eqref{eq:thresh2} corresponds to the true set failing the threshold test, and the second term corresponds to some incorrect set $\sbar$ passing the threshold test.  In the summand of the second term, we have upper bounded the probability of an intersection of $2^k - 1$ events by just one such event, namely, the one corresponding to $\sdif = \sbar \backslash s$ and $\seq = s \cap \sbar$.

Using the shorthand $\ell := |\sbar \backslash s|$, we can weaken the second probability in \eqref{eq:thresh2} as follows:
\begin{align}
&\PP\Big[\imath(\Xvdif; \Yv | \Xveq) > \gamma_{\ell} \Big] \nonumber \\
&= \sum_{\xveq,\xvdif,\yv} P_X^{n \times (k-\ell)}(\xveq) P_X^{n \times \ell}(\xvdif) P_{\Yv|\Xv_{\seq}}(\yv|\xveq) \nonumber \\
    &\qquad \times\openone\bigg\{ \log\frac{P_{\Yv|\Xv_{\sdif}\Xv_{\seq}}(\yv|\xvdif,\xveq)}{P_{\Yv|\Xv_{\seq}}(\yv|\xveq)} > \gamma_{\ell} \bigg\} \label{eq:ach_na_3} \\
&\le \sum_{\xveq,\xvdif,\yv} P_X^{n \times (k-\ell)}(\xveq) P_X^{n \times \ell}(\xvdif) \nonumber \\
    &\qquad \times P_{\Yv|\Xv_{\sdif}\Xv_{\seq}}(\yv|\xvdif,\xveq) e^{-\gamma_{\ell}} \label{eq:ach_na_4} \\
&= e^{-\gamma_{\ell}}, \label{eq:ach_na_5}
\end{align}
where in \eqref{eq:ach_na_3} we used the fact that the output vector depends only on the columns of $\xv_{\sbar}$ corresponding to entries of $\sbar$ that are also in $s$, and \eqref{eq:ach_na_4} follows by bounding $P_{\Yv|\Xv_{\seq}}$ using the event within the indicator function, and then upper bounding the indicator function by one.  Substituting \eqref{eq:ach_na_5} into \eqref{eq:thresh2} gives
\begin{multline}
\pe \le \PP\bigg[ \bigcup_{(\sdif,\seq)} \Big\{ \imath(\Xv_{\sdif}; \Yv | \Xv_{\seq} ) \le \gamma_{\ell} \Big\} \bigg] \\ + \sum_{\ell=1}^k {{p-k} \choose \ell}{k \choose \ell}e^{-\gamma_{\ell}}, \label{eq:thresh3}
\end{multline}
where the combinatorial terms arise from a standard counting argument \cite{Wai09}.

Note that while the bound in \eqref{eq:thresh3} appears to be simpler than that in the theorem statement, it is difficult to directly apply it to specific problems, since $\imath(\Xv_{\sdif}; \Yv | \Xv_{\seq} )$ is not an i.i.d.~summation in general.

\subsubsection{Completion of the Proof}

We fix the constants $\gamma'_1,\dotsc,\gamma'_{\ell}$ arbitrarily, and apply the following elementary steps with $\ell = |\sdif|$:
\begin{align}
    &\PP\bigg[ \bigcup_{(\sdif,\seq)} \Big\{ \imath(\Xv_{\sdif}; \Yv | \Xv_{\seq} ) \le \gamma_{\ell} \Big\} \bigg] \nonumber \\
    &= \PP\bigg[ \bigcup_{(\sdif,\seq)} \bigg\{ \log\frac{ P_{\Yv|\Xv_{\sdif}\Xv_{\seq}}(\Yv|\Xv_{\sdif},\Xv_{\seq}) }{ P_{\Yv|\Xv_{\seq}}(\Yv|\Xv_{\seq}) } \le \gamma_{\ell} \bigg\} \bigg]  \label{eq:ach_disc1} \\
    &\le \PP\bigg[ \bigcup_{(\sdif,\seq)} \bigg\{ \log\frac{ P_{\Yv|\Xv_{\sdif}\Xv_{\seq}}(\Yv|\Xv_{\sdif},\Xv_{\seq}) }{ P_{\Yv|\Xv_{\seq}}(\Yv|\Xv_{\seq}) } \le \gamma_{\ell} \nonumber \\
            & \qquad\qquad\qquad \,\cap\,\log\frac{ P_{\Yv|\Xv_{\seq}}(\Yv|\Xv_{\seq}) }{ P_{\Yv|\Xv_{\seq}\beta_s}(\Yv|\Xv_{\seq},\beta_s) } \le \gamma'_{\ell} \bigg\} \bigg] \nonumber \\ 
            & \quad + \PP\bigg[ \bigcup_{(\sdif,\seq)} \bigg\{ \log\frac{ P_{\Yv|\Xv_{\seq}}(\Yv|\Xv_{\seq}) }{ P_{\Yv|\Xv_{\seq}\beta_s}(\Yv|\Xv_{\seq},\beta_s) } > \gamma'_{\ell} \bigg\} \bigg] \label{eq:ach_disc2} 
\end{align}
\begin{align}
    &\le \PP\bigg[ \bigcup_{(\sdif,\seq)} \bigg\{ \log\frac{ P_{\Yv|\Xv_{\sdif}\Xv_{\seq}}(\Yv|\Xv_{\sdif},\Xv_{\seq}) }{ P_{\Yv|\Xv_{\seq}\beta_s}(\Yv|\Xv_{\seq},\beta_s) } \le \tilde{\gamma}_{\ell} \bigg\} \bigg] \nonumber \\ 
        & \quad + \PP\bigg[ \bigcup_{(\sdif,\seq)} \bigg\{ \log\frac{ P_{\Yv|\Xv_{\seq}}(\Yv|\Xv_{\seq}) }{ P_{\Yv|\Xv_{\seq}\beta_s}(\Yv|\Xv_{\seq},\beta_s) } > \gamma'_{\ell} \bigg\} \bigg],\label{eq:fein1_split}
\end{align}
where $\tilde{\gamma}_{\ell} = \gamma_{\ell} + \gamma'_{\ell}$. The second term in \eqref{eq:fein1_split} is upper bounded as
\begin{align}
    & \PP\bigg[ \bigcup_{(\sdif,\seq)} \bigg\{ \log\frac{ P_{\Yv|\Xv_{\seq}}(\Yv|\Xv_{\seq}) }{ P_{\Yv|\Xv_{\seq}\beta_s}(\Yv|\Xv_{\seq},\beta_s) } > \gamma'_{\ell} \bigg\} \bigg] \nonumber \\
    &\le \sum_{(\sdif,\seq)} \PP\bigg[ \log\frac{ P_{\Yv|\Xv_{\seq}}(\Yv|\Xv_{\seq}) }{ P_{\Yv|\Xv_{\seq}\beta_s}(\Yv|\Xv_{\seq},\beta_s) } > \gamma'_{\ell} \bigg] \label{eq:ach_denom_ub0} \\
    & = \sum_{(\sdif,\seq)}\sum_{b_s,\xv_{\seq},\yv}P_{\beta_s}(b_s)P_{X}^{n\times(k-\ell)}(\xv_{\seq}) \nonumber \\
        & \quad\qquad\times P_{\Yv|\Xv_{\seq}\beta_s}(\yv|\xv_{\seq},b_s) \nonumber \\
        & \quad\qquad \times\openone\bigg\{ \log\frac{ P_{\Yv|\Xv_{\seq}}(\yv|\xv_{\seq}) }{ P_{\Yv|\Xv_{\seq}\beta_s}(\yv|\xv_{\seq},b_s) } > \gamma'_{\ell} \bigg\} \\
    &\le \sum_{(\sdif,\seq)}\sum_{b_s,\xv_{\seq},\yv}P_{\beta_s}(b_s)P_{X}^{n\times(k-\ell)}(\xv_{\seq}) \nonumber \\
        & \qquad\qquad\qquad\qquad\qquad \times P_{\Yv|\Xv_{\seq}}(\yv|\xv_{\seq}) e^{-\gamma'_{\ell}} \\
    & = \sum_{\ell=1}^k {k \choose \ell}e^{-\gamma'_{\ell}}, \label{eq:ach_denom_ub}
\end{align}
where \eqref{eq:ach_denom_ub0} follows from the union bound, and the remaining steps follow the arguments used in \eqref{eq:ach_na_3}--\eqref{eq:ach_na_5}.

We now upper bound the first term in \eqref{eq:fein1_split}.  The numerator in \eqref{eq:fein1_split} equals $P_{\Yv|\Xv_{s}}(\Yv|\Xv_{s})$ for all $(\sdif,\seq)$ (\emph{cf.}, \eqref{eq:p_split1}), and we can thus write the overall term as
\begin{multline}
    \PP\bigg[ \log P_{\Yv|\Xv_{s}}(\Yv|\Xv_{s}) \\ \le \max_{(\sdif,\seq)} \big\{ \log P_{\Yv|\Xv_{\seq}\beta_s}(\Yv|\Xv_{\seq},\beta_s) + \gamma_{\ell} + \gamma'_{\ell} \big\} \bigg]. \label{eq:simp_gen_1}
\end{multline}
Using the same steps as those used in \eqref{eq:ach_disc1}--\eqref{eq:fein1_split}, we can upper bound this by
\begin{multline}
    \hspace*{-2ex}\PP\bigg[ \log P_{\Yv|\Xv_s\beta_s}(\Yv|\Xv_{s},\beta_s) \\ \le \max_{(\sdif,\seq)} \big\{ \log P_{\Yv|\Xv_{\seq}\beta_s}(\Yv|\Xv_{\seq},\beta_s) + \gamma_{\ell} + \gamma'_{\ell} + \gamma \big\} \bigg] \\ + \PP\bigg[ \log \frac{P_{\Yv|\Xv_s,\beta_s}(\Yv|\Xv_{s},\beta_s)}{P_{\Yv|\Xv_{s}}(\Yv|\Xv_{s})} > \gamma \bigg] \label{eq:simp_gen_2}
\end{multline}
for any constant $\gamma$.  Reversing the step in \eqref{eq:simp_gen_1}, this can equivalently be written as
\begin{multline}
    \hspace*{-2ex}\PP\bigg[ \bigcup_{(\sdif,\seq)} \bigg\{ \log\frac{ P_{\Yv|\Xv_{\sdif}\Xv_{\seq}\beta_s}(\Yv|\Xv_{\sdif},\Xv_{\seq},\beta_s) }{ P_{\Yv|\Xv_{\seq}\beta_s}(\Yv|\Xv_{\seq},\beta_s) } \\ \le \gamma_{\ell} + \gamma'_{\ell} + \gamma \bigg\} \bigg] + \PP\bigg[ \log \frac{P_{\Yv|\Xv_s,\beta_s}(\Yv|\Xv_{s},\beta_s)}{P_{\Yv|\Xv_{s}}(\Yv|\Xv_{s})} > \gamma \bigg]. \label{eq:simp_gen_3}
\end{multline}
Observe that the first logarithm appearing here is precisely the information density in \eqref{eq:idens_bn}.  Moreover, the choices
\begin{align}
    \gamma_{\ell} &= \log\bigg(\frac{k}{\delta_1}{{p-k} \choose \ell}{k \choose \ell}\bigg) \label{eq:gamma_l} \\
    \gamma'_{\ell} &= \log\bigg(\frac{k}{\delta_1}{k \choose \ell}\bigg) \label{eq:gamma'_l}
\end{align}
make \eqref{eq:ach_denom_ub} and the second term in \eqref{eq:thresh3} be upper bounded by $\delta_1$ each.  Hence, and combining \eqref{eq:fein1_split} with \eqref{eq:ach_denom_ub} and \eqref{eq:simp_gen_3}, and recalling that $\ell = |\sdif|$, we obtain \eqref{eq:thresh_sl}.

\subsection{Proof of Theorem \ref{thm:conv}} \label{sec:PROOF_GEN_CONV}

As has been done in several previous proofs of information-theoretic converse bounds for sparsity pattern recovery \cite{Wan10,Ree13,Ati12}, we consider an argument based on a genie.  As explained formally below, the genie reveals some of elements of the support set to the decoder, which is left to estimate the remaining entries.  An important novelty in our arguments is that we also let the revealed indices depend on the random non-zero entries of $\beta$; this leads to the improvement stated following Theorem \ref{thm:simplified_conv}.

It will prove convenient to present the proof under the following assumption of symmetry.  

\begin{assump}  \label{ass:genie}
    The pair $(\sdif(b_s),\seq(b_s))$ in Theorem \ref{thm:conv} satisfies the following property:  If $b'_s$ is a permutation of $b_s$, then the entries of $b'_s$ indexed by $\sdif(b'_s)$ (respectively, $\seq(b'_s)$) are a permutation of the entries of $b_s$ indexed by $\sdif(b_s)$ (respectively, $\seq(b_s)$).
\end{assump}

We claim that the theorem statement under this assumption also implies the more general case.  To see this, we use the symmetry of $P_{Y|X_S\beta_S}$ with respect to $S$ from in Section \ref{sec:SETUP}, and the fact that $\Xv$ has an i.i.d.~distribution.  Among all the possible choices of functions $(\sdif(\cdot),\seq(\cdot))$, there always exists a pair that maximizes the lower bound in \eqref{eq:conv_sl} and satisfies Assumption \ref{ass:genie}.  More precisely, for any realization of $\beta_s$, the probability in \eqref{eq:conv_sl} is determined by entries appearing in the partition $(\beta_{\sdif},\beta_{\seq})$ but not by their order, so one can always maximize \eqref{eq:conv_sl} by forming this partition in a manner which is symmetric with respect to permutations of $\beta_s$.

We now formally define the genie-aided setup as follows:
\begin{enumerate}
    \item Generate a random $k$-dimensional vector $\betatil' \sim P_{\beta_S}$.
    \item Given $\betatil'$, let $\betatildif'$ and $\betatileq'$ be the subvectors indexed by $\sdif(\betatil')$ and $\seq(\betatil')$ respectively.
    \item Let $\betatildif$ (respectively, $\betatileq$) be a uniformly random permutation of $\betatildif'$ (respectively, $\betatileq'$). 
    \item Generate $\Seq$ uniformly on $\Sceq(\ell)$, defined to contain the $p \choose k - \ell$ subsets of $\{1,\dotsc,p\}$ having cardinality $k - \ell$, where $\ell = |\betatildif|$.  Set $\beta_{\Seq} = \betatileq$.
    \item Generate $\Sdif$ uniformly on $\Scdif(\Seq)$, defined to contain the ${p - k + \ell} \choose \ell$ subsets of $\{1,\dotsc,p\} \backslash \Seq$ having cardinality $\ell$.  Set $\beta_{\Sdif} = \betatildif$.
    \item Set $S = \Sdif \cup \Seq$ and $\beta_{S^c} = 0$.  The measurement matrix $\Xv$ is i.i.d.~on $P_{X}$, and the observation vector $\Yv$ is generated from $S$, $\Xv$, and $\beta$ according to \eqref{eq:single_output}, as in the original problem setup.
    \item Reveal the indices  $\Seq$ and the vectors $\betatildif$ and $\betatileq$ to the decoder.  The decoder forms an estimate $\Sdifhat$ of $\Sdif$, and an error occurs if $\Sdifhat \ne \Sdif$.
\end{enumerate} 
The joint distribution of $S$ and $\beta$ is the same here as in the original setup: The support set is uniform on the $p \choose k$ elements of $\Sc$, and the distribution of the non-zero entries $\beta_S$ is that of a uniformly random permutation of $\betatil' \sim P_{\beta_S}$.  Since $P_{\beta_S}$ is permutation-invariant by assumption, this yields $\beta_S \sim P_{\beta_S}$ as required.  Thus, the only difference in this modified setup is that the decoder has further information, and it follows that any converse for this setup implies the same converse for the original setup.

Throughout the proof, we make use of the random variables defined in the preceding steps, departing from the notation implicitly conditioned on $S$ equaling a fixed value $s$ (see \eqref{eq:joint_dist_n}) until the final step in obtaining \eqref{eq:conv_sl}.

We first study the error probability for the genie-aided setting conditioned on $(\Seq,\betatildif,\betatileq) = (\seq,\btildif,\btileq)$, denoted by $\pe(\seq,\btildif,\btileq)$.  By the identity $\PP[\Ac] = \PP[\Ac \cap \Ec] + \PP[\Ac \cap \Ec^c]$, we have for any event $\Ac(\seq,\btildif,\btileq)$ that
\begin{multline}
    \pe(\seq,\btildif,\btileq) \ge \PP[\Ac(\seq,\btildif,\btileq)] \\ - \PP[\Ac(\seq,\btildif,\btileq) \cap \text{no error}]. \label{eq:conv_init}
\end{multline}
We fix the constant $\gamma_{\ell}$ and choose
\begin{equation}
    \Ac(\seq,\btildif,\btileq) = \big\{ \imath^n(\Xv_{\Sdif};\Yv|\Xv_{\seq},\btil) \le \gamma_{\ell} \big\},
\end{equation}
where $\ell = k - |\seq|$, and $\btil := \btil(\btildif,\btileq,\sdif,\seq)$ equals $\btildif$ (respectively, $\btileq$) on the entries indexed by $\sdif$ (respectively, $\seq$).  Using the definitions in \eqref{eq:idens_bn}--\eqref{eq:idens_b}, and defining $\Dc(\sdif | \seq,\btildif,\btileq)$ to be the set of pairs $(\xv,\yv)$ such that the decoder outputs $\sdif$ given $(\seq,\btildif,\btileq,\xv,\yv)$, we obtain
\begin{align} 
    &\PP[\Ac(\seq,\btildif,\btileq) \cap \text{no error}] \nonumber \\
    &=\sum_{\sdif\in\Scdif(\seq)} \frac{1}{{{p-k+\ell} \choose \ell}} \sum_{(\xv,\yv)\in\Dc(\sdif | \seq,\btildif,\btileq)} P_{X}^{n \times p}(\xv) \nonumber \\
       	&\qquad \times P^n_{Y|X_{\sdif}X_{\seq}\beta_s}(\yv|\xv_{\sdif},\xv_{\seq},\btil) \nonumber \\
       	&\qquad \times \openone\bigg\{ \log\frac{P^n_{Y|X_{\sdif}X_{\seq}\beta_s}(\yv|\xv_{\sdif},\xv_{\seq},\btil)}{P^n_{Y|X_{\seq}\beta_s}(\yv|\xv_{\seq},\btil)} \le \gamma_{\ell} \bigg\} \label{eq:vh_deriv1}
\end{align}
\begin{align}
    &\le \frac{1}{{{p-k+\ell} \choose \ell}}\sum_{\sdif\in\Scdif(\seq)} \sum_{(\xv,\yv)\in\Dc(\sdif | \seq,\btildif,\btileq)}  P_{X}^{n \times p}(\xv) \nonumber \\
        &\qquad\qquad\qquad\qquad\qquad \times P^n_{Y|X_{\seq}\beta_s}(\yv|\xv_{\seq},\btil)e^{\gamma_{\ell}} \label{eq:vh_deriv2} \\
    &= \frac{e^{\gamma_{\ell}}}{{ {p-k+\ell} \choose \ell }},  \label{eq:vh_deriv3}
\end{align}
where \eqref{eq:vh_deriv1} follows since an error occurs if and only if $(\xv,\yv)\notin\Dc(\sdif | \seq,\btildif,\btileq)$, \eqref{eq:vh_deriv2} follows by upper bounding $P^n_{Y|X_{\seq}\beta_s}$ using the event in the indicator function, and \eqref{eq:vh_deriv3} follows since the sets $\Dc(\sdif | \seq,\btildif,\btileq)$ are disjoint, and their union over $\sdif$ is the entire space of $(\xv,\yv)$ pairs.

Averaging \eqref{eq:conv_init} over $(\Seq,\betatil',\betatildif,\betatileq)$ and applying \eqref{eq:vh_deriv3}, we obtain
\begin{align}
\pe &\ge \sum_{\btil'} P_{\beta_S}(\btil') \sum_{\btildif,\btileq} \PP\big[ (\betatildif,\betatileq) = (\btildif,\btileq) \,|\, \btil' \big] \nonumber \\
    &\times \sum_{\seq\in\Seq(\ell)}\sum_{\sdif\in\Sdif(\seq)} \frac{1}{{p \choose {k -\ell}}}  \frac{1}{{{p-k+\ell} \choose \ell}} \nonumber \\ 
    &\times \bigg(\PP\Big[ \imath^n(\Xv_{\sdif};\Yv|\Xv_{\seq},\btil) \le \gamma_{\ell} \,\big|\, \sdif,\seq,\btildif,\btileq \Big] \nonumber \\
    &\qquad\qquad\qquad\qquad\qquad\qquad\qquad-  \frac{ e^{\gamma_{\ell}}}{{ {p-k+\ell} \choose \ell }}\bigg),\label{eq:vh_deriv4}
\end{align}
where $\ell = |\btildif|$, and the conditioning on $\btil'$ is a shorthand for $\betatil' = \btil'$, and similarly for the second probability.  Finally, we claim that this recovers \eqref{eq:conv_sl} upon setting
\begin{align}
	\gamma_{\ell} &= \log{{p-k+\ell} \choose \ell} + \log \delta_1.
\end{align}
To see this, we first note that all of the terms in the summations over $\sdif$ and $\seq$ in \eqref{eq:vh_deriv4} are equal, since in the probability appearing in the summand, the entries $\btil_{\sdif}$ and $\btil_{\seq}$ are the same for any such pair, namely, $\btil_{\sdif} = \btildif$ and $\btil_{\seq} = \btileq$  (recall also the symmetry of $P_{Y|X_S\beta_S}$ with respect to $S$ assumed in Section \ref{sec:SETUP}).  Due to Assumption \ref{ass:genie}, this probability also coincides with that in \eqref{eq:conv_sl} with $b_s := \btil'$, regardless of the realization of $(\betatildif,\betatileq)$ given $\betatil'$; the only randomness in the corresponding distribution is that of the two random permutations of the subvectors.

\subsection{Extensions to Partial Recovery} \label{sec:PROOF_PARTIAL}

The achievability analysis in Section \ref{sec:PROOF_GEN_ACH} extends immediately to handle the partial recovery criterion in \eqref{eq:pe_partial}, since we have already split the error events according to the amount of overlap between the true support and the incorrect support.  The only difference is that the decoder searches for a set $s$ such that \eqref{eq:decoder_cond} holds whenever $|\sdif| > \dmax$ (as opposed to $\sdif\ne\emptyset$), and chooses one arbitrarily if multiple such $s$ exist. It follows that Theorem \ref{thm:ach1} remains true when the union in \eqref{eq:thresh_sl} is restricted to $|\sdif|\in\{\dmax+1,\dotsc,k\}$.

The extension of the converse analysis in Section \ref{sec:PROOF_GEN_CONV} is less immediate, but still straightforward.  We first recall the observation from \cite{Ree13} that the performance metric in \eqref{eq:pe_partial} allows us to focus without loss of generality on decoders such that the estimated support $\hat{S}$ (or $\Sdifhat \cup \Seq$ in the genie-aided setting) has cardinality $k$ almost surely.  For any such decoder, the definition in \eqref{eq:pe_partial} is unchanged when the second term in the union is removed.

We restrict the partitions $(\sdif(b_s),\seq(b_s))$ of $s$ to satisfy $|\sdif(b_s)| > \dmax$.  In \eqref{eq:vh_deriv1}--\eqref{eq:vh_deriv2}, we change the definition of $\Dc(\sdif | \seq,\btildif,\btileq)$ to be the set of pairs $(\xv,\yv)$ such that the decoder outputs a sequence $\sdifhat$ such that $|\sdif \backslash \sdifhat| \le \dmax$.  This means that the sets $\Dc(\cdot | \seq,\btildif,\btileq)$ are no longer disjoint.  However, we can easily count the number of such sets that each $(\xv,\yv)$ pair falls into.  For fixed $(\seq,\sdif)$ and $d\in\{0,\dotsc,\dmax\}$, the number of sets $\sdifhat \subseteq \{1,\dotsc,p\} \backslash \seq$ such that $|\sdif \backslash \sdifhat| = d$ is ${{p-k} \choose d}{|\sdif| \choose {|\sdif| - d}} = {{p-k} \choose d}{|\sdif| \choose d}$.  Thus, each $(\xv,\yv)$ pair is included in $\sum_{d=0}^{\dmax}{{p-k} \choose d}{|\sdif| \choose d}$ of the sets $\Dc(\cdot | \seq,\btildif,\btileq)$, and \eqref{eq:vh_deriv3} is replaced by
\begin{align}
    \PP[\Ac(\seq,\btildif,\btileq) \cap \text{no error}] \le \frac{ \sum_{d=0}^{\dmax}{{p-k} \choose d}{\ell \choose d} }{{ {p-k+\ell} \choose \ell }} e^{-\gamma_{\ell}}.
\end{align}
Thus, Theorem \ref{thm:conv} remains true when the pair $(\sdif(\cdot),\seq(\cdot))$ is constrained to satisfy $|\sdif|\in\{\dmax+1,\dotsc,k\}$, and ${ {p-k+|\sdif|} \choose |\sdif| }$ is replaced by ${ {p-k+|\sdif|} \choose |\sdif| } - \sum_{d=0}^{\dmax}{{p-k} \choose d}{|\sdif| \choose d}$.

%
%
\section{Conclusion} \label{sec:CONCLUSION}

Taking an approach motivated by thresholding techniques in channel coding, we have presented a framework for developing necessary and sufficient conditions on the number of measurements for exact and partial support recovery with probabilistic models.  We have provided several new results for the linear, 1-bit, and group testing models, as well as general discrete observation models.  In several cases, we have provided exact asymptotic thresholds on the number of measurements with strong converse results.

There are several possible directions for future research.  While we have focused on i.i.d.~measurement matrices, it would be of significant interest to consider other types of random matrices, and to present converse results that hold for arbitrary measurement matrices, subject to suitable constraints such as power constraints.  We provided some work in these directions for specific models in \cite{Sca16b,Sca16c}.

One could also attempt to move from standard sparsity models to structured sparsity models \cite{Bar10}, and from probabilistic guarantees with random $\beta$ to minimax guarantees.  There are several additional non-linear models that our general results could be applied to, such as the Poisson and gamma models.  Finally, it may be interesting to apply similar analysis techniques to other statistical problems beyond support recovery.

\appendices

\section{Concentration Inequalities} \label{sec:CONCENTRATION}

In order to apply our general bounds to specific models, we use concentration inequalities to obtain expressions for $\psi_{\ell}$ and $\psi'_{\ell}$ in \eqref{eq:psi_ach} and \eqref{eq:psi_conv}, seeking to make the corresponding terms in \eqref{eq:nonasymp_ach} and \eqref{eq:nonasymp_conv} vanish.  Here we present two general inequalities that will be used throughout Section \ref{sec:EXAMPLES}.

\begin{prop} \label{prop:chebyshev}
    For general observation models, we have for all $(\sdif,\seq,b_s)$ and $\delta>0$ that
    \begin{multline}
    \PP\Big[ \big|\imath^n(\Xv_{\sdif}; \Yv | \Xv_{\seq}, b_s) - nI_{\sdif,\seq}(b_s) \big| \ge n\delta \,\Big|\, \beta_s = b_s \Big] \\ \le \frac{V_{\sdif,\seq}(b_s)}{\delta^2 n}, \label{eq:conc_linear1}
    \end{multline}
    where $V_{\sdif,\seq}(b_s)$ is defined in \eqref{eq:Vb}.  Moreover, if the observations lie in a finite set $\Yc \subset \RR$ with probability one, then the following holds for all $(\sdif,\seq,b_s)$ and $\delta>0$:
    \begin{equation}
    V_{\sdif,\seq}(b_s) \le |\Yc|\Big( \frac{4}{e} \Big) ^2. \label{eq:uniformV}
    \end{equation}
\end{prop}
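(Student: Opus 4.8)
The proof of Proposition \ref{prop:chebyshev} splits into two independent claims. The first bound \eqref{eq:conc_linear1} is a direct application of Chebyshev's inequality: conditioned on $\beta_s = b_s$, the quantity $\imath^n(\Xv_{\sdif}; \Yv | \Xv_{\seq}, b_s)$ is a sum of $n$ i.i.d.\ terms, each distributed as $\imath(X_{\sdif}; Y | X_{\seq}, b_s)$ under \eqref{eq:joint_dist} conditioned on $\beta_s = b_s$ (this is the structure noted in Step 1 of Section \ref{sec:APPLYING}). Its conditional mean is $nI_{\sdif,\seq}(b_s)$ by \eqref{eq:condMI}, and its conditional variance is $nV_{\sdif,\seq}(b_s)$ by \eqref{eq:Vb} and independence. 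Chebyshev's inequality then gives $\PP[\,|\imath^n - nI_{\sdif,\seq}(b_s)| \ge n\delta \mid \beta_s = b_s\,] \le nV_{\sdif,\seq}(b_s)/(n\delta)^2$, which simplifies to \eqref{eq:conc_linear1}. No obstacle here — it is a one-line argument once the i.i.d.\ structure is in place.

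The second claim \eqref{eq:uniformV} is the substantive part. The plan is to bound the single-letter variance $V_{\sdif,\seq}(b_s) = \var[\imath(X_{\sdif}; Y | X_{\seq}, b_s) \mid \beta_s = b_s]$ uniformly, using only that $Y$ takes values in a finite set $\Yc$. First I would write $V_{\sdif,\seq}(b_s) \le \EE[\imath(X_{\sdif}; Y | X_{\seq}, b_s)^2 \mid \beta_s = b_s]$, dropping the square of the mean since it only decreases the bound. Next, recalling \eqref{eq:idens_b}, the information density is $\imath(x_{\sdif}; y | x_{\seq}, b_s) = \log\frac{P_{Y|X_{\sdif}X_{\seq}\beta_s}(y|x_{\sdif},x_{\seq},b_s)}{P_{Y|X_{\seq}\beta_s}(y|x_{\seq},b_s)}$. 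The key step is to condition further on $(X_{\sdif}, X_{\seq}) = (x_{\sdif}, x_{\seq})$: the conditional second moment of $\imath$ given these becomes $\sum_{y \in \Yc} P_{Y|X_{\sdif}X_{\seq}\beta_s}(y|x_{\sdif},x_{\seq},b_s) \big(\log\frac{P_{Y|X_{\sdif}X_{\seq}\beta_s}(y|\cdot)}{P_{Y|X_{\seq}\beta_s}(y|\cdot)}\big)^2$. For each fixed $y$, I would bound $P_{Y|X_{\sdif}X_{\seq}\beta_s}(y|\cdot)(\log\frac{P_{Y|X_{\sdif}X_{\seq}\beta_s}(y|\cdot)}{P_{Y|X_{\seq}\beta_s}(y|\cdot)})^2$ by a universal constant. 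The function $t \mapsto t(\log(t/s))^2$ on $t \in [0,1]$, for $s \in [0,1]$ fixed, is maximized — after a short calculus exercise — at a value bounded by $(4/e)^2$ (the maximum of $t(\log t)^2$ over $t\in(0,1]$ is $4e^{-2}$, attained at $t=e^{-2}$, and one checks the ratio form is controlled by the same constant once one accounts for $s$ via monotonicity in $s$). Summing over the $|\Yc|$ values of $y$ gives the conditional bound $|\Yc|(4/e)^2$, and since this holds for every $(x_{\sdif}, x_{\seq})$, taking the expectation over $(X_{\sdif}, X_{\seq})$ preserves it, yielding \eqref{eq:uniformV}.

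The main obstacle is pinning down the precise constant $(4/e)^2$ in the per-symbol bound. One must verify that $\sup_{t \in (0,1], s \in (0,1]} t(\log(t/s))^2 \le (4/e)^2$ in a way that covers the relevant regime; the subtlety is that $\log(t/s)$ can be large and positive when $t \gg s$, but then $t$ is bounded, while when $t \ll s$ the factor $t$ is small — so one needs a careful case split or a clean monotonicity argument (e.g.\ noting that for fixed $t$ the expression is increasing in $1/s$ only up to $s \ge t$, and that in our setting $P_{Y|X_{\sdif}X_{\seq}\beta_s}(y|\cdot) \le$ is related to $P_{Y|X_{\seq}\beta_s}(y|\cdot)$ by averaging, so in fact the log is bounded below). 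I would handle this by observing that $P_{Y|X_{\seq}\beta_s}(y|x_{\seq},b_s) = \sum_{x_{\sdif}}P_X^{\ell}(x_{\sdif})P_{Y|X_{\sdif}X_{\seq}\beta_s}(y|x_{\sdif},x_{\seq},b_s)$ is a mixture, hence the ratio $P_{Y|X_{\sdif}X_{\seq}\beta_s}/P_{Y|X_{\seq}\beta_s}$ can be arbitrarily large but is always at least $0$; the worst case for $t(\log(t/s))^2$ with the mixture constraint reduces to bounding $g(t) := t(\log t)^2$ on $(0,1]$ whose maximum is exactly $4e^{-2}$. The remaining routine step is confirming that incorporating the denominator $s \le 1$ does not inflate this beyond $(4/e)^2$, which follows since $\log(t/s) \ge \log t$ for $s \le 1$ gives a bound in the wrong direction, so one instead argues directly on $P_{Y|X_{\sdif}X_{\seq}\beta_s}(y)\,[\log(P_{Y|X_{\sdif}X_{\seq}\beta_s}(y)) - \log(P_{Y|X_{\seq}\beta_s}(y))]^2$ and uses that the negative part of the log contributes a term controlled by $\sum_y P_{Y|X_{\sdif}X_{\seq}\beta_s}(y)(\log P_{Y|X_{\seq}\beta_s}(y))^2$, again bounded via $g$. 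I would present this constant-chasing compactly, citing the elementary bound $\sup_{t\in(0,1]} t(\log t)^2 = 4e^{-2}$ and deferring the full case analysis to a one-paragraph computation.
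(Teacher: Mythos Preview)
Your treatment of \eqref{eq:conc_linear1} is fine and matches the paper: conditioned on $\beta_s=b_s$, $\imath^n$ is an i.i.d.\ sum with mean $nI_{\sdif,\seq}(b_s)$ and variance $nV_{\sdif,\seq}(b_s)$, and Chebyshev gives the bound.

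For \eqref{eq:uniformV}, however, your argument has a real gap. You try to bound, for each fixed $(x_{\sdif},x_{\seq})$, the quantity $\sum_{y} P_1(y)\big(\log\tfrac{P_1(y)}{P_2(y)}\big)^2$ with $P_1 = P_{Y|X_{\sdif}X_{\seq}\beta_s}(\cdot|x_{\sdif},x_{\seq},b_s)$ and $P_2 = P_{Y|X_{\seq}\beta_s}(\cdot|x_{\seq},b_s)$. You then split the square and claim the cross term $\sum_y P_1(y)(\log P_2(y))^2$ is ``again bounded via $g$''. It is not: the weighting measure is $P_1$, not $P_2$, so the elementary bound $\sup_{t\in(0,1]} t(\log t)^2 = 4e^{-2}$ does not apply. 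Indeed, this pointwise-in-$(x_{\sdif},x_{\seq})$ sum is unbounded in general (take $|\Yc|=2$, $P_1=(1,0)$, $P_2=(\epsilon,1-\epsilon)$ with $\epsilon\to 0$), so no amount of constant-chasing will close the argument at this level of conditioning.

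The paper's fix is to apply the triangle inequality at the level of the \emph{full} $L^2$ norm (Minkowski), not pointwise: writing $\imath = \log P_{Y|X_{\sdif}X_{\seq}\beta_s} - \log P_{Y|X_{\seq}\beta_s}$, one has
\[
\EE\big[\imath^2\big]^{1/2} \le \EE\Big[\big(\log\tfrac{1}{P_{Y|X_{\sdif}X_{\seq}\beta_s}(Y|\cdot)}\big)^2\Big]^{1/2} + \EE\Big[\big(\log\tfrac{1}{P_{Y|X_{\seq}\beta_s}(Y|\cdot)}\big)^2\Big]^{1/2}.
\]
The first term is evaluated by conditioning on $(x_{\sdif},x_{\seq})$, giving $\sum_y P_1(y)(\log\tfrac{1}{P_1(y)})^2 \le |\Yc|(2/e)^2$. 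For the second term, the integrand does not depend on $X_{\sdif}$, so one integrates $X_{\sdif}$ out first; then $Y$ given $X_{\seq}$ has law $P_2$, and the inner expectation is $\sum_y P_2(y)(\log\tfrac{1}{P_2(y)})^2 \le |\Yc|(2/e)^2$. Now the measure matches the argument of the logarithm in \emph{both} terms, the bound $z(\log\tfrac{1}{z})^2 \le (2/e)^2$ applies cleanly, and squaring the sum gives $|\Yc|(4/e)^2$. The missing idea in your sketch is precisely this: split before conditioning, and condition differently for the two pieces.
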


Before providing the proof, we state the following generalization of \eqref{eq:uniformV} to higher-order moments.

\begin{prop} \label{prop:gen_discrete}
    If the observations lie in a finite set $\Yc \subset \RR$ with probability one, then the following holds  for all $(\sdif,\seq,b_s)$ and $\delta > 0$:
    \begin{multline}
    \PP\bigg[ \big|\imath^n(\Xv_{\sdif}; \Yv | \Xv_{\seq}, b_s) - nI_{\sdif,\seq}(b_s) \big| \ge n\delta \,\Big|\, \beta_s = b_s \bigg] \\ \le 2\exp\bigg(- \frac{\delta^2 n}{2(8|\Yc|+2\delta)} \bigg). \label{eq:conc_gen_disc}
    \end{multline}
\end{prop}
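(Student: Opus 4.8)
The plan is to establish a Bernstein-type tail bound for the conditionally i.i.d.\ sum $\imath^n(\Xv_{\sdif};\Yv|\Xv_{\seq},b_s) = \sum_{i=1}^n \imath(X^{(i)}_{\sdif}; Y^{(i)} | X^{(i)}_{\seq}, b_s)$. Conditioned on $\beta_s = b_s$, the summands are i.i.d.\ with mean $I_{\sdif,\seq}(b_s)$, so it suffices to control the individual terms and apply a standard Bernstein inequality (e.g.\ \cite[Ch.~2]{Bou13}). The key quantity to bound is the centered single-letter information density $\imath(x_{\sdif}; y | x_{\seq}, b_s) - I_{\sdif,\seq}(b_s)$, and I will need both a uniform (almost-sure) bound $B$ on its magnitude and a bound on its variance $V_{\sdif,\seq}(b_s)$; then Bernstein gives a tail of the form $2\exp\!\big(-\frac{\delta^2 n}{2(V + B\delta)}\big)$, and the claimed form \eqref{eq:conc_gen_disc} follows by inserting the bounds $V_{\sdif,\seq}(b_s) \le 4|\Yc|$ and $B \le 4|\Yc|$, say (constants to be tracked so as to land on $8|\Yc| + 2\delta$ in the denominator).

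The main obstacle is that the single-letter information density $\imath(x_{\sdif}; y | x_{\seq}, b_s) = \log\frac{P_{Y|X_{\sdif}X_{\seq}\beta_s}(y|x_{\sdif},x_{\seq},b_s)}{P_{Y|X_{\seq}\beta_s}(y|x_{\seq},b_s)}$ is \emph{not} bounded above in general: for a fixed output symbol $y$ in the finite set $\Yc$, the conditional probability in the denominator can be arbitrarily small while the numerator is order one, so $\imath$ can be arbitrarily large positive. Hence a naive sup-norm bound fails. The trick — which is the heart of the argument and also underlies \eqref{eq:uniformV} — is to bound not $\imath$ itself but the moments $\EE\big[|\imath|^m \,\big|\, \beta_s=b_s\big]$, exploiting the change-of-measure structure. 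Concretely, for the positive part one writes, for each output value $y \in \Yc$,
\begin{equation}
\EE\Big[ \big([\imath]^+\big)^m \,\big|\, X_{\seq}=x_{\seq}, Y = y, \beta_s = b_s\Big] \le \sup_{t \ge 0} t^m e^{-t} \cdot (\text{change-of-measure factor}),
\end{equation}
using that $\PP[\imath > t] \le e^{-t}$ by the usual argument (as in \eqref{eq:ach_na_3}--\eqref{eq:ach_na_5}), and that $\sup_{t\ge0} t^m e^{-t} = (m/e)^m$. Summing over the at most $|\Yc|$ output values and using that the negative part of $\imath$ is bounded below by $\log P_{Y|X_{\seq}\beta_s}(y|x_{\seq},b_s) \ge$ (something controllable, or handled by a symmetric argument on the reversed likelihood ratio), one obtains $\EE[|\imath - I_{\sdif,\seq}(b_s)|^m \,|\, \beta_s = b_s] \le m!\, |\Yc| (c)^{m}$ for a suitable absolute constant $c$ — i.e.\ a Bernstein moment condition with variance proxy $O(|\Yc|)$ and scale parameter $O(1)$.

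With this moment bound in hand, I would invoke the Bernstein inequality in the form: if $\EE[|W - \EE W|^m] \le \frac12 m!\, v\, c^{m-2}$ for all $m \ge 2$, then $\PP[|\sum_{i=1}^n (W_i - \EE W_i)| \ge n\delta] \le 2\exp\!\big(-\frac{n\delta^2}{2(v + c\delta)}\big)$. Substituting $v = \Theta(|\Yc|)$ and $c = \Theta(1)$, and absorbing the constants appropriately, yields \eqref{eq:conc_gen_disc} with the stated denominator $2(8|\Yc| + 2\delta)$. The bookkeeping on constants is routine; the only genuinely delicate point is the moment estimate above, and in particular ensuring the $m!$ growth comes out cleanly from $\sup_t t^m e^{-t} = (m/e)^m \le m!$ combined with the summation over $|\Yc|$ symbols. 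This same computation, specialized to $m=2$, is exactly what proves \eqref{eq:uniformV} with the constant $(4/e)^2$, so the proof of Proposition \ref{prop:gen_discrete} subsumes that of the second part of Proposition \ref{prop:chebyshev}.
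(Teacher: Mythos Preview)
Your overall plan is right and matches the paper's: apply a Bernstein inequality via moment control of the single-letter information density, with the key estimate $\EE[|\imath|^q]=O\!\big(|\Yc|\,q!\,c^{q}\big)$ coming from the elementary bound $(q/e)^q\le q!$ together with a sum over the $|\Yc|$ output symbols. The conclusion then drops out with $\tau=8n|\Yc|$ and $c=2$, exactly as stated.

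However, the specific mechanism you propose for bounding the \emph{positive} moments has a gap. You invoke $\PP[\imath>t]\le e^{-t}$ ``by the usual argument (as in \eqref{eq:ach_na_3}--\eqref{eq:ach_na_5})''. Those equations bound the probability that an \emph{incorrect} block $\Xvdif$---independent of $\Yv$ given $\Xveq$---scores high; the change of measure there goes from $P_{\Yv|\Xv_{\seq}}$ to $P_{\Yv|\Xv_{\sdif}\Xv_{\seq}}$ and picks up $e^{-\gamma}$. In the concentration problem at hand, $(X_{\sdif},Y)$ are drawn from the \emph{true} joint given $(X_{\seq},b_s)$, so $\EE[e^{\imath}]$ is a second-moment-type quantity and is not $\le 1$ in general; the claimed tail bound fails. (The reverse direction $\PP[\imath<-t]\le e^{-t}$ \emph{is} valid, since $\EE[e^{-\imath}]=1$ under the true measure, but that handles only the negative part.) Conditioning on $Y=y$ does not rescue the positive part either, because the posterior of $X_{\sdif}$ given $Y$ is not $P_X^{\ell}$.

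The paper's fix avoids change-of-measure entirely and treats both signs symmetrically: by Minkowski,
\[
\EE\big[|\imath|^q\big]^{1/q}\;\le\;\EE\!\Big[\Big(\log\tfrac{1}{P_{Y|X_{\sdif}X_{\seq}\beta_s}(Y|X_{\sdif},X_{\seq},b_s)}\Big)^{q}\Big]^{1/q}
+\EE\!\Big[\Big(\log\tfrac{1}{P_{Y|X_{\seq}\beta_s}(Y|X_{\seq},b_s)}\Big)^{q}\Big]^{1/q},
\]
and each term is bounded using only the pointwise inequality $z\big(\log\tfrac{1}{z}\big)^q\le(q/e)^q$ for $z\in[0,1]$: conditioning on the relevant $x$-variables and summing over $y\in\Yc$ gives at most $|\Yc|(q/e)^q$. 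This yields $\EE[|\imath|^q]\le (q/e)^q\cdot 4|\Yc|\cdot 2^{q-2}\le \tfrac{q!}{2}\cdot 8|\Yc|\cdot 2^{q-2}$, precisely the Bernstein moment condition with $\tau=8n|\Yc|$, $c=2$, and hence the denominator $2(8|\Yc|+2\delta)$ in \eqref{eq:conc_gen_disc}. Your identification of the $(m/e)^m\le m!$ step and the $|\Yc|$ factor is spot on; what was missing is that the route to the moment bound is the Minkowski split plus the $z\log^q(1/z)$ maximum, not a likelihood-ratio tail bound.
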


In the remainder of this appendix, we prove these propositions.  Equation \eqref{eq:conc_linear1} follows from Chebyshev's inequality, so we focus our attention on \eqref{eq:uniformV}--\eqref{eq:conc_gen_disc}.  We make use of the following form of Bernstein's inequality \cite[Sec.~2.8]{Bou13}.

\begin{lem} \label{lem:bernstein}
    Let $W_1,\dotsc,W_n$ be independent real-valued random variables such that
    \begin{align}
        \sum_{i=1}^n \EE[W_i^2] &\le \tau \\
        \sum_{i=1}^n \EE[ |W_i|^q ] &\le \frac{q!}{2}\tau c^{q-2} \qquad (q \ge 3)
    \end{align}
    for some $\tau,c>0$.  Then
    \begin{equation}
    \PP\bigg[ \sum_{i=1}^n \big( W_i - \EE[W_i] \big) \ge t \bigg] \le \exp\bigg( \frac{t^2}{2(\tau+ct)} \bigg) \label{eq:bernstein}
    \end{equation}
    for all $t>0$.
\end{lem}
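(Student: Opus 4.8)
The final statement to prove is Lemma~\ref{lem:bernstein}, a standard form of Bernstein's inequality for sums of independent random variables whose moments satisfy the stated growth condition. The plan is to derive it from the exponential Markov (Chernoff) bound together with a bound on the moment generating function that exploits the hypothesis on $\sum_i \EE[|W_i|^q]$.

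First I would center the variables, writing $\bar W_i := W_i - \EE[W_i]$, and for $\lambda \in (0, 1/c)$ bound $\log\EE[e^{\lambda \bar W_i}]$ by expanding the exponential in its Taylor series: $\EE[e^{\lambda \bar W_i}] = 1 + \sum_{q\ge 2} \frac{\lambda^q}{q!}\EE[\bar W_i^q]$, using $\EE[\bar W_i] = 0$ to kill the linear term. The second-moment and $q$-th moment hypotheses, applied to the centered variables (one checks that centering only changes constants, or one assumes as is conventional that the moment hypotheses are stated for the centered variables), give $\sum_i \EE[\bar W_i^2] \le \tau$ and $\sum_i \EE[|\bar W_i|^q] \le \frac{q!}{2}\tau c^{q-2}$. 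Summing the per-variable bounds via $1 + x \le e^x$ yields $\sum_i \log\EE[e^{\lambda\bar W_i}] \le \sum_{q\ge 2}\frac{\lambda^q}{q!}\sum_i \EE[|\bar W_i|^q] \le \frac{\lambda^2\tau}{2}\sum_{q\ge 2}(\lambda c)^{q-2} = \frac{\lambda^2\tau}{2(1-\lambda c)}$, where the geometric series converges precisely because $\lambda c < 1$.

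Next I would apply the Chernoff bound: by independence and Markov's inequality, $\PP[\sum_i \bar W_i \ge t] \le e^{-\lambda t}\prod_i \EE[e^{\lambda\bar W_i}] \le \exp\!\big(-\lambda t + \frac{\lambda^2\tau}{2(1-\lambda c)}\big)$. The remaining task is to optimize over $\lambda \in (0,1/c)$. Rather than solving the optimization exactly, I would plug in the standard near-optimal choice $\lambda = \frac{t}{\tau + ct}$ (which indeed lies in $(0,1/c)$), for which $1 - \lambda c = \frac{\tau}{\tau+ct}$, and verify by direct substitution that $-\lambda t + \frac{\lambda^2\tau}{2(1-\lambda c)} = -\frac{t^2}{\tau+ct} + \frac{t^2}{2(\tau+ct)} = -\frac{t^2}{2(\tau+ct)}$, which gives exactly the claimed bound \eqref{eq:bernstein}. (I note in passing that the statement as written has $\exp$ of a positive quantity, which must be a typo for $\exp\big(-\frac{t^2}{2(\tau+ct)}\big)$; I would write the correct inequality.)

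I do not anticipate a serious obstacle here, since this is a textbook argument; the one point requiring a little care is the passage from the moment hypotheses on $W_i$ to the corresponding bounds on the centered variables $\bar W_i$, and the bookkeeping to ensure $\lambda c < 1$ throughout so the geometric series and the MGF bound are valid. Since the paper cites \cite[Sec.~2.8]{Bou13} for this lemma, an alternative acceptable to me would be to simply invoke that reference; but the self-contained derivation above is short enough to include.
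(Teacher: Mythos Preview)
Your proposal is correct. The paper does not actually prove this lemma; it simply states it with a citation to \cite[Sec.~2.8]{Bou13} and then applies it. Your self-contained derivation via the Chernoff bound, Taylor expansion of the MGF, summation of the resulting geometric series under $\lambda c<1$, and the explicit substitution $\lambda=\frac{t}{\tau+ct}$ is exactly the textbook argument found in that reference, and your arithmetic is right (including your observation that the displayed exponent in \eqref{eq:bernstein} is missing a minus sign).

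The one caveat you already flagged is the only genuine subtlety: as written, the lemma imposes the moment hypotheses on the uncentered $W_i$ while bounding the centered sum. The standard Bernstein condition in \cite{Bou13} is stated for centered variables, and in every application the paper makes of this lemma (Propositions~\ref{prop:conc_linear2} and~\ref{prop:gen_discrete}) the $W_i$ are already centered, so $\bar W_i=W_i$ and the issue is moot. If you want a fully general statement matching the lemma as printed, you would need to transfer the moment bounds to $\bar W_i$; but for the paper's purposes it suffices to note that the applications have $\EE[W_i]=0$.
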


To bound the moments of $\imath$, we follow the arguments of \cite[Rmk.~3.1.1]{Han03} and \cite[App.~D]{Tan14a}. Recall the definition of the information density in \eqref{eq:idens_b}.  For any $q \ge 2$, we have from Minkowski's inequality that
\begin{multline}
\EE\big[ |\imath(X_{\sdif}; Y | X_{\seq}, b_s)|^q\big]^{1/q} \\ \le \EE\bigg[ \Big(\log\frac{1}{ P_{Y|X_{\sdif}X_{\seq}\beta_s}(Y|X_{\sdif},X_{\seq},b_s) }\Big)^q\bigg]^{1/q} \\ + \EE\bigg[ \Big(\log\frac{ 1 }{ P_{Y|X_{\seq}\beta_s}(Y|X_{\seq},b_s) }\Big)^q\bigg]^{1/q}, \label{eq:disc_unif1}
\end{multline}
where here and subsequently we implicitly condition on $\beta_s = b_s$.  For any given $(x_{\sdif},x_{\seq})$, the remaining averaging over $Y$ in the first term has the form
\begin{multline}
    \sum_{y} P_{Y|X_{\sdif}X_{\seq}\beta_s}(y|x_{\sdif},x_{\seq},b_s) \\ \times\Big(\log\frac{1}{ P_{Y|X_{\sdif}X_{\seq}\beta_s}(y|x_{\sdif},x_{\seq},b_s) }\Big)^q,
\end{multline}
and is thus upper bounded by $|\Yc| \big( \frac{q}{e} \big)^{1/q}$, since the function $f(z) = z \log^{q}\frac{1}{z}$ has a maximum value of $\big( \frac{q}{e} \big)^{1/q}$ for $z\in[0,1]$.  Handling the second term in \eqref{eq:disc_unif1} similarly, we obtain 
\begin{equation}
    \EE\big[ \big|\imath(X_{\sdif}; Y | X_{\seq}, b_s)\big|^q\big]^{1/q} \le 2\Big( |\Yc| \Big( \frac{q}{e} \Big)^{q} \Big)^{1/q},
\end{equation}
or equivalently
\begin{align}
\EE\big[ \big|\imath(X_{\sdif}; Y | X_{\seq}, b_s)\big|^q\big] 
    & \le \Big( \frac{q}{e} \Big)^{q} 4|\Yc| 2^{q-2} \label{eq:disc_unif4} \\
    & \le \frac{q!}{2} 8|\Yc| 2^{q-2}, \label{eq:disc_unif5}
\end{align}
where \eqref{eq:disc_unif5} follows since $\big( \frac{q}{e} \big)^{q} \le q!$.

We obtain \eqref{eq:uniformV} by setting $q=2$ in \eqref{eq:disc_unif4}.  Furthermore, we obtain Proposition \ref{prop:gen_discrete} using Lemma \ref{lem:bernstein} with $c = 2$, $\tau = n\cdot8|\Yc|$, and $t=\delta n$.

\section{Proofs of Auxiliary Results for the Linear Model} \label{sec:LINEAR_PROOFS}

\subsection{Proof of Proposition \ref{prop:conc_linear2}}

We again use Lemma \ref{lem:bernstein}, and we thus seek suitable values for $\tau$ and $c$.  Throughout the proof, we consider the random variables $(X_{\sdif},X_{\seq},Y)$ distributed according to \eqref{eq:distr_sl}, implicitly conditioning on $\beta_s=b_s$.  From \eqref{eq:linear_model}, we have $Z = Y - \sum_{i \in s}X_i b_i$, and a direct calculation gives
\begin{gather}
    P_{Y|X_{\sdif}X_{\seq}\beta_s}(Y|X_{\sdif},X_{\seq},b_s) = \phi(Z; 0, \sigma^2) 
\end{gather}
\vspace*{-3ex}
\begin{multline}
    P_{Y|X_{\seq}\beta_s}(Y|X_{\seq},b_s) \\ = \phi\Big( \sum_{i\in\sdif} X_i b_i + Z; 0, \sigma^2 + \sum_{i\in\sdif}b_i^2 \Big),
\end{multline}
where $\phi(\cdot;\mu,\sigma^2)$ is the $N(\mu,\sigma^2)$ density function.  Substituting these into \eqref{eq:idens_b} gives 
\begin{multline}
\imath(X_{\sdif}; Y | X_{\seq}, b_s) = I_{\sdif,\seq}(b_s) - \frac{Z^2}{2\sigma^2} \\ + \frac{1}{2\big(\sigma^2 + \sum_{i\in\sdif}b_i^2 \big)} \bigg( \sum_{i\in\sdif} X_i b_i + Z \bigg)^2, \label{eq:n_i_dens}
\end{multline}
where $I_{\sdif,\seq}(b_s)$ is given in \eqref{eq:I_linear}.

The mean of \eqref{eq:n_i_dens} is $I_{\sdif,\seq}(b_s)$, and we will apply Lemma \ref{lem:bernstein} with $W_i$ corresponding to the sum of the second and third terms on the right-hand side.  We can write these in terms of independent $N(0,1)$ random variables (denoted by $\Zhat_1$ and $\Zhat_2$) as follows:
\begin{align}
W &= - \frac{\Zhat_1^2}{2} + \frac{1}{2(\sigma^2 + \siglv^2)} \big( \sigma \Zhat_1 + \siglv \Zhat_2  \big)^2 \label{eq:linear4} \\
  &= \frac{\siglv^2}{2(\sigma^2 + \siglv^2)}\big( \Zhat_2^2 - \Zhat_1^2 \big) + \frac{\sigma\siglv}{\sigma^2 + \siglv^2} \Zhat_1\Zhat_2, \label{eq:linear4a}
\end{align}
where we have used the definitions in the proposition statement, and \eqref{eq:linear4a} follows from simple manipulations.  Defining $\Zhatmax = \max\{ |\Zhat_1|,|\Zhat_2| \}$, we have the following with probability one:
\begin{align}
    |W| &\le \frac{\siglv^2}{2(\sigma^2 + \siglv^2)} 2\Zhatmax^2 + \frac{\sigma\siglv}{\sigma^2 + \siglv^2} \Zhatmax^2 \label{eq:linear5} \\
    &= \frac{\siglv(\sigma + \siglv)}{\sigma^2 + \siglv^2} \Zhatmax^2. \label{eq:linear6}
\end{align}
Since $\EE[\Zhatmax^4] \le \EE[\Zhat_1^4 + \Zhat_2^4] = 6$, we obtain
\begin{align}
    \EE[W^2] &\le 6 \bigg(\frac{\siglv(\sigma + \siglv)}{\sigma^2 + \siglv^2}\bigg)^2. \label{eq:linear7}
\end{align}
Similarly, we can bound the higher moments as follows:
\begin{align}
    \EE[|W|^q] &\le \bigg(\frac{\siglv(\sigma + \siglv)}{\sigma^2 + \siglv^2}\bigg)^{q} \EE[\Zhat_1^{2q} + \Zhat_2^{2q}] \label{eq:linear9} \\
    &\le \bigg(\frac{2\siglv(\sigma + \siglv)}{\sigma^2 + \siglv^2}\bigg)^{q} \frac{2}{\sqrt{\pi}} \Gamma\Big(q + \frac{1}{2}\Big) \label{eq:linear10} \\
    &\le 2 \cdot \bigg(\frac{2\siglv(\sigma + \siglv)}{\sigma^2 + \siglv^2}\bigg)^{q} \cdot q! \label{eq:linear11},
\end{align}
where \eqref{eq:linear10} follows by the same argument as \eqref{eq:linear6} and the fact that the $2q$-th moment of an $N(0,1)$ random variable is $\frac{2^q}{\sqrt{\pi}} \Gamma\big(q + \frac{1}{2}\big)$, and \eqref{eq:linear11} follows since $\Gamma\big(q + \frac{1}{2}\big) \le \sqrt{\pi} q!$.

Combining \eqref{eq:linear7} and \eqref{eq:linear11}, we see that the random variables $W_i = \imath(X_{\sdif}^{(i)}; Y^{(i)} | X^{(i)}_{\seq}, b_s) - I_{\sdif,\seq}(b_s)$ satisfy the conditions of Lemma \ref{lem:bernstein} with $\tau = n\cdot4\alpha_{\sdif}^2$ and $c=\alpha_{\sdif}$ (see \eqref{eq:alpha_linear}).  We thus obtain the desired result from \eqref{eq:bernstein} by identifying $t = \delta n$.

\subsection{Proof of Proposition \ref{prop:boundMI}}

Since $\Yv = \Xv_s\beta_s + \Zv$, we have
\begin{align}
    I_0 = I(\beta_s;\Yv|\Xv_s) &= H(\Yv|\Xv_s) - H(\Yv|\Xv_s,\beta_s) \\
                                   &= H(\Xv_s\beta_s + \Zv|\Xv_s) - H(\Zv). \label{eq:pf_bmi2}
\end{align}
From \cite[Ch.~9]{Cov01}, we have $H(\Zv) = \frac{n}{2}\log(2\pi e \sigma^2)$ and $H(\Xv_s\beta_s + \Zv|\Xv_s = \xv_s) = \frac{1}{2}\log\big( (2\pi e)^n \det( \sigma^2 \Iv_n + \sigbeta^2 \xv_s \xv_s^T ) \big) $, where $\Iv_n$ is the $n \times n$ identity matrix.  Averaging the latter over $\Xv_s$ and substituting these into \eqref{eq:pf_bmi2} gives
\begin{align}
    I_0 &= \frac{1}{2} \EE\Big[  \log \det \Big( \Iv_n + \frac{\sigbeta^2}{\sigma^2} \Xv_s \Xv_s^T \Big) \Big] \\
         &= \frac{1}{2} \EE\Big[ \log \det \Big( \Iv_k + \frac{\sigbeta^2}{\sigma^2} \Xv_s^T \Xv_s \Big) \Big] \label{eq:pf_bmi4} \\
         &= \frac{1}{2} \sum_{i=1}^k \EE\Big[ \log \Big( 1+ \frac{\sigbeta^2}{\sigma^2} \lambda_i (\Xv_s^T \Xv_s) \Big) \Big] \label{eq:pf_bmi5} \\
         &\le \frac{k}{2} \log \Big( 1 + \frac{n\sigbeta^2}{\sigma^2} \Big), \label{eq:pf_bmi6}
\end{align}
where \eqref{eq:pf_bmi4} follows from the identity $\det(\Iv + \Av\Bv) = \det(\Iv + \Bv\Av)$, \eqref{eq:pf_bmi5} follows by writing the determinant as a product of eigenvalues (denoted by $\lambda_i(\cdot)$), and \eqref{eq:pf_bmi6} follows from Jensen's inequality and the following calculation:
\begin{equation}
    \frac{1}{k}\EE\Big[ \sum_{i=1}^k \lambda_i (\Xv_s^T \Xv_s) \Big] = \frac{1}{k} \EE[ \Tr( \Xv_s^T \Xv_s )] = \EE[ \Xv_1^T \Xv_1 ] = n.
\end{equation}
This concludes the proof of \eqref{eq:bmi_I0}.

We now turn to the bounding of the variance.  Again using the fact that $\Yv = \Xv_s\beta_s + \Zv$, we have
\begin{align}
    &\log \frac{P_{\Yv|\Xv_s,\beta_s}(\Yv|\Xv_{s},\beta_s)}{P_{\Yv|\Xv_{s}}(\Yv|\Xv_{s})} \nonumber \\
        &= \log\frac{P_{\Zv}(\Zv)}{P_{\Yv|\Xv_{s}}(\Xv_s\beta_s + \Zv|\Xv_s)} \\
        &= I_0 - \frac{1}{2\sigma^2}\Zv^T\Zv \nonumber \\
            &~~ + \frac{1}{2}(\Xv_s\beta_s + \Zv)^T \big( \sigma^2 \Iv + \sigbeta^2 \Xv_s\Xv_s^T \big)^{-1} (\Xv_s\beta_s + \Zv), \label{eq:pf_bmi8}
\end{align}
where $P_{\Zv}$ is the density of $\Zv$, and \eqref{eq:pf_bmi8} follows by a direct substitution of the densities $P_{\Zv} \sim N(\bzero,\sigma^2\Iv)$ and $P_{\Yv|\Xv_S}(\cdot|\xv_s) \sim N(\bzero, \sigma^2\Iv + \sigbeta^2\xv_s\xv_s^T)$, where $\bzero$ is the zero vector.  Observe now that  $\frac{1}{\sigma^2} \Zv^T \Zv$ is a sum of $n$ independent $\chi^2$ random variables with one degree of freedom (each having a variance of $2$), and hence, the second term in \eqref{eq:pf_bmi8} has a variance of $\frac{n}{2}$.  Moreover, by writing $\Mv^{-1} = (\Mv^{-\frac{1}{2}})^T \Mv^{-\frac{1}{2}}$ for the symmetric positive definite matrix $\Mv = \sigma^2 \Iv + \sigbeta^2 \Xv_s\Xv_s^T$, where $(\cdot)^{-\frac{1}{2}}$ denotes the positive definite matrix square root of the inverse, we find that the final term in \eqref{eq:pf_bmi8} is distributed as a sum of $\chi^2$ variables when conditioned on any value of $\Xv_s$, and hence, the same is true unconditionally.  We therefore again obtain a variance of $\frac{n}{2}$, and \eqref{eq:bmi_V0} follows using the identity $\var[A+B] \le \var[A] + \var[B] + 2\max\{\var[A], \var[B]\}$.

\section{Proofs of Auxiliary Results for the 1-bit Model} \label{sec:1BIT_PROOFS}

We first write down the relevant probability distributions and information densities conditioned on a fixed value $b_s$ of $\beta_s$.  Under the model $Y = \sign\big(\sum_{i\in s}X_i b_i + Z\big)$ with $X_i \sim N(0,1)$ and $Z \sim N(0,\sigma^2)$, we have
\begin{align}
    P_{Y|X_s\beta_s}(1|x_s,b_s) 
        &= \PP\bigg[ Z \ge - \sum_{i\in s}x_i b_i\bigg] \\
        &= Q\bigg( - \frac{1}{\sigma} \sum_{i\in s}x_i b_i \bigg). \label{eq:1bit_p1}
\end{align}
Similarly, for any partition of $s$ into $(\sdif,\seq)$, we can write $Y = \sign\big(\sum_{i\in \seq}X_i b_i + \sum_{i\in \sdif}X_i b_i + Z\big)$ and use the same steps to conclude that
\begin{align}
    P_{Y|X_{\seq}\beta_s}(1|x_{\seq},b_s) = Q\bigg( \frac{ -\sum_{i\in \seq}x_i b_i }{\sqrt{\sigma^2 + \sum_{i\in \sdif}b_i^2}} \bigg). \label{eq:1bit_p2}
\end{align}
The corresponding probabilities for $y=0$ are one minus these expressions, which amounts to multiplying the argument to the Q-function by $-1$.  Substitution into \eqref{eq:idens_b} gives
\begin{equation}
    \imath(x_{\sdif}; y | x_{\seq}, b_s) = \log \frac{ Q\Big( -y \frac{1}{\sigma} \sum_{i\in s}x_i b_i\Big) }{ Q\Big( \frac{ -y \sum_{i\in \seq}x_i b_i }{\sqrt{\sigma^2 + \sum_{i\in \sdif}b_i^2}} \Big) } \label{eq:1bit_i_dens}
\end{equation}
for $y\in\{-1,1\}$.

Throughout this appendix, we will use the fact that the first two derivatives of the function
\begin{equation}
    f(x) := H_2(Q(x)) \label{eq:1bit_f}
\end{equation}
are given by
\begin{align}
    f'(x)  &= \log\frac{1-Q(x)}{Q(x)} \frac{-1}{\sqrt{2\pi}} e^{-\frac{x^2}{2}} \label{eq:1bit_f'} \\
    f''(x) &= -\frac{1}{2\pi} e^{-x^2}\frac{1}{Q(x)(1-Q(x))} \nonumber \\
        & \hspace*{13ex} + \log\frac{1-Q(x)}{Q(x)} \frac{x}{\sqrt{2\pi}} e^{\frac{-x^2}{2}}. \label{eq:1bit_f''}
\end{align} 

\subsection{Proof of Proposition \ref{prop:1bit_moments} Part (i)}

Recalling that the coefficients $X_i$ ($i\in s$) are i.i.d. on $N(0,1)$, we directly obtain from \eqref{eq:1bit_p1} that
\begin{align}
    H(Y|X_s,\beta_s=b_s)
        &= \EE\bigg[ H_2\bigg( Q\bigg( \frac{1}{\sigma} \sum_{i\in s}X_i b_i \bigg) \bigg)\bigg] \\
        &= \EE\bigg[ H_2\bigg( Q\bigg( W \sqrt{\frac{1}{\sigma^2}\sum_{i \in s} b_i^2} \bigg) \bigg) \bigg],
\end{align}
where $W \sim N(0,1)$.  By evaluating $H(Y|X_{\seq},\beta_s=b_s)$ similarly using \eqref{eq:1bit_p2} and taking the difference between the two, we obtain \eqref{eq:1bit_Ilv}.

\subsection{Proof of Proposition \ref{prop:1bit_moments} Part (ii)}

We obtain from \eqref{eq:1bit_f'}--\eqref{eq:1bit_f''} that $f'(0) = 0$ and $f''(0) = -\frac{2}{\pi}$.  By performing further differentiations, one can also verify that $f^{(3)}(0) = 0$, and that $|f^{(4)}(x)|$ is uniformly upper bounded by $f^{(4)}(0) = \frac{8(\pi - 1)}{\pi^2}$.  We thus obtain via a fourth-order Taylor expansion that
\begin{multline}
    \log 2 - \frac{1}{\pi}x^2 - \frac{4(\pi - 1)}{3\pi^2}x^4 \le H_2(Q(x)) \\ \le \log 2 - \frac{1}{\pi}x^2 + \frac{4(\pi - 1)}{3\pi^2}x^4 \label{eq:1bit_pfii_1}
\end{multline}
for all $x \in \RR$.  Substituting \eqref{eq:1bit_pfii_1} into \eqref{eq:1bit_Ilv} and noting that the fourth moments of the arguments to $H_2(Q(\cdot))$ therein decay to zero strictly faster than the second moments (by the assumptions on $k$, $\bmin$ and $\bmax$), we obtain
\begin{equation}
    I_{\sdif,\seq}(b_s) = \frac{1}{\pi}\bigg(  \frac{1}{\sigma^2}\sum_{i \in s} b_i^2 - \frac{\sum_{i\in\seq}b_i^2}{\sigma^2 + \sum_{i\in\sdif}b_i^2} \bigg)(1+o(1)).
\end{equation}
Again using the assumptions on $k$, $\bmin$ and $\bmax$, we observe that the denominator is dominated by the term $\sigma^2$, thus yielding \eqref{eq:1bit_Ilv_asymp}.

\subsection{Proof of Proposition \ref{prop:1bit_moments} Part (iii)}

In this part, we have assumed that the values $\{b_i\}$ take a common value $b_0$.  Since $\sigma^2=\Theta(1)$, we may set $\sigma^2=1$ without loss of generality; the implied constant can be factored into $b_0$.  In this case, \eqref{eq:1bit_Ilv} with $\ell=1$ simplifies to
\begin{multline}
   I_1 =  \EE\bigg[ H_2\bigg( Q\bigg( W \sqrt{ \frac{(k-1)b_0^2}{1 + b_0^2} } \bigg) \bigg) \\ - H_2\bigg( Q\Big( W \sqrt{kb_0^2} \Big) \bigg) \bigg]. \label{eq:1bit_Ilv_simp}
\end{multline}
By the assumptions $k=\Theta(p)$ and $b_0^2 = \Theta\big( \frac{\log p}{p} \big)$, it is easily verified by a Taylor expansion of the function $f(z) = \frac{1}{\sqrt{1+z}}$ as $z\to0$ that $\sqrt{\frac{(k-1)b_0^2}{1 + b_0^2}} = \sqrt{kb_0^2}\big(1-\frac{b_0^2}{2} + o(b_0^2)\big)$.  For convenience, we write this identity as
\begin{equation}
    \sqrt{\frac{(k-1)b_0^2}{1 + b_0^2}} = \sqrt{kb_0^2}\big(1 - \zeta b_0^2\big), \label{eq:sqrt_Taylor}
\end{equation}
where $\zeta$ is a constant depending on $p$ such that $\zeta \to \frac{1}{2}$.  Substituting \eqref{eq:sqrt_Taylor} into \eqref{eq:1bit_Ilv_simp}, we obtain
\begin{multline}
   I_1 =  \EE\bigg[ H_2\bigg( Q\Big( W \sqrt{kb_0^2}\big(1 - \zeta b_0^2\big) \Big) \bigg) \\ - H_2\bigg( Q\Big( W \sqrt{kb_0^2} \Big) \bigg) \bigg]. \label{eq:1bit_Ilv_simp2}
\end{multline}
The next step is to Taylor expand the function $f(x) = H_2(Q(x))$.  For any $x$ and $\delta>0$, we have
\begin{equation}
    f(x - \delta) = f(x) + \frac{\delta}{\sqrt{2\pi}} \log\frac{1-Q(x)}{Q(x)}e^{-\frac{x^2}{2}} + \frac{\delta^2}{2}f''(x - \delta_0) \label{eq:1bit_taylor}
\end{equation}
for some $\delta_0 \in [0,\delta]$, where the middle term follows from \eqref{eq:1bit_f'}.  Next, we claim that $f''$ in \eqref{eq:1bit_f''} is bounded as follows:
\begin{equation}
    |f''(x)| \le \frac{2}{\sqrt{2\pi}} (1+|x|) e^{\frac{-x^2}{2}} + \frac{|x|^3}{\sqrt{2\pi}} e^{\frac{-x^2}{2}}. \label{eq:1bit_f''_bound}
\end{equation}
In the case that $x \ge 0$, this is seen by applying $Q(x) \ge \frac{1}{\sqrt{2\pi}(1+x)}e^{-\frac{x^2}{2}}$ and $1-Q(x) \ge \frac{1}{2}$ to obtain the first term, and applying $Q(x) \le e^{-x^2}$ (and hence $\log\frac{1-Q(x)}{Q(x)} = \log\big(\frac{1}{Q(x)} - 1\big) \le x^2$) to obtain the second term (e.g., see \cite{Fan12} for bounds on the Q-function).  The case $x < 0$ follows since \eqref{eq:1bit_f''} is symmetric about zero.

Substituting \eqref{eq:1bit_taylor} into \eqref{eq:1bit_Ilv_simp2} with the identifications $x=W\sqrt{kb_0^2}$ and $\delta = W\sqrt{kb_0^2}\zeta b_0^2$, we can write 
\begin{equation}
    T_1 - T_2 - T_3 \le I_1 \le T_1 + T_2 + T_3,
\end{equation}
where
\begin{align}
    T_1 &:=  \zeta b_0^2 \EE\bigg[ \frac{W \sqrt{kb_0^2}}{\sqrt{2\pi}} \log\frac{1-Q(W \sqrt{kb_0})}{Q(W \sqrt{kb_0})}e^{-\frac{W^2 kb_0^2}{2}} \bigg] \\
    T_2 &:=  (\zeta b_0^2)^2 \EE\bigg[ \frac{W^2 kb_0^2}{\sqrt{2\pi}} \big(1 + |W|\sqrt{kb_0^2}\big) e^{-\frac{W^2 kb_0^2}{2}\big(1 - \zeta b_0^2\big)^2} \bigg] \\
    T_3 &:=  (\zeta b_0^2)^2 \EE\bigg[ \frac{W^2 kb_0^2}{2\sqrt{2\pi}} |W|^3 (kb_0^2)^{3/2} e^{-\frac{W^2 kb_0^2}{2}\big(1 - \zeta b_0^2\big)^2} \bigg],
\end{align}
and where for $T_2$ and $T_3$ we used the fact that $\delta_0 \in [0,\delta]$ in \eqref{eq:1bit_taylor} to upper bound the corresponding terms by the value at $\delta_0=0$ or $\delta_0 = \delta$.

We will complete the proof by showing that $T_1$ behaves as \eqref{eq:1bit_I1_exact} (with $\sigma^2 = 1$), and that $T_2$ and $T_3$ behave as $o\big( \frac{\sqrt{\log p}}{p} \big)$.  Letting $\phi(\cdot)$ denote the standard normal PDF, we have
\begin{align}
    T_1 &= \frac{\zeta b_0^2}{\sqrt{2\pi}} \int_{-\infty}^{\infty} \phi(w) w \sqrt{kb_0^2} \log\frac{1-Q(w\sqrt{kb_0^2})}{Q(w\sqrt{kb_0^2}} \nonumber \\
        & \hspace*{30ex}\times e^{-\frac{w^2 kb_0^2}{2}} dw \label{eq:1bit_T1_1} \\
        &= \frac{\zeta b_0^2}{\sqrt{2\pi}} \int_{-\infty}^{\infty} \phi\Big( \frac{t}{\sqrt{kb_0^2}} \Big) t \log\frac{1-Q(t)}{Q(t)} e^{-\frac{t^2}{2}} \frac{1}{\sqrt{kb_0^2}} dt \label{eq:1bit_T1_2} \\
        &= \frac{\zeta b_0^2}{\sqrt{2\pi kb_0^2}} \int_{-\infty}^{\infty} \frac{1}{\sqrt{2\pi}} e^{-\frac{t^2}{2}(1 + \frac{1}{kb_0^2})} t \log\frac{1-Q(t)}{Q(t)} dt \label{eq:1bit_T1_3} \\
        &= \frac{\zeta b_0^2}{\sqrt{2\pi kb_0^2}} \frac{1}{\sqrt{1 + \frac{1}{kb_0^2}}} \int_{-\infty}^{\infty} \frac{1}{\sqrt{2\pi(1 + \frac{1}{kb_0^2})^{-1}}} \nonumber \\
            & \hspace*{15ex} \times e^{-\frac{t^2}{2}(1 + \frac{1}{kb_0^2})} t \log\frac{1-Q(t)}{Q(t)} dt \label{eq:1bit_T1_4} \\
        &= \frac{1}{2} \frac{b_0^2}{\sqrt{2\pi kb_0^2}} \EE\Big[ W \log\frac{1-Q(W)}{Q(W)} \Big] (1+o(1)), \label{eq:1bit_T1_5}
\end{align}
where \eqref{eq:1bit_T1_2} follows by a change of variable of the form $t = w\sqrt{k b_0^2}$, \eqref{eq:1bit_T1_3} follows from the definition of $\phi$, and \eqref{eq:1bit_T1_5} follows since $\zeta \to \frac{1}{2}$, and since the integral in \eqref{eq:1bit_T1_4} is the average of $t\log\frac{1-Q(t)}{Q(t)} \openone\{t \ge 0\} $ over an $N(0,(1 + \frac{1}{kb_0^2})^{-1})$ random variable; since $kb_0^2 \to \infty$, this converges to the corresponding average over $W \sim N(0,1)$, which is easily verified to be finite.  

The terms $T_2$ and $T_3$ are handled similarly to $T_1$, so we only briefly comment on the analysis of $T_3$.  By the same arguments as those leading to \eqref{eq:1bit_T1_3}, we obtain
\begin{equation}
    T_3 = \frac{(\zeta b_0^2)^2}{2\sqrt{2\pi kb_0^2}} \int_{-\infty}^{\infty} \frac{1}{\sqrt{2\pi}} e^{-\frac{t^2}{2}\big( (1 - \zeta b_0^2)^2 + \frac{1}{kb_0^2} \big)}  |t|^5 dt.
\end{equation}
The integral is once again $\Theta(1)$, and thus $T_3 = \Theta\Big( \frac{b_0^4}{\sqrt{kb_0^2}} \Big)$, which decays to zero strictly faster than \eqref{eq:1bit_T1_5}.

\subsection{Proof of Proposition \ref{prop:1bit_moments} Part (iv)}

We again assume without loss of generality that $\sigma^2 = 1$.  Defining $\Weq := \sum_{i\in\seq} X_i b_i$ and $\Wdif := \sum_{i\in\sdif} X_i b_i$, it follows from \eqref{eq:1bit_i_dens} that
\begin{equation}
    \imath(X_{\sdif}; Y | X_{\seq}, b_s) = \log \frac{ Q\big( -Y(\Wdif+\Weq)\big) }{ Q\big( -Y \tau \Weq \big)}, \label{eq:1bit_iii_0} 
\end{equation}
where $\tau := \frac{1}{1+\sum_{i\in\sdif}b_i^2}$, and we implicitly condition on $\beta_s = b_s$.  Using \eqref{eq:1bit_p1} and the fact that the variance is upper bounded by the second moment, we have
\begin{align}
    & V_{\sdif,\seq}(b_s) \nonumber \\
    &\le \EE\bigg[ Q\big( -(\Wdif+\Weq) \big) \bigg(\log \frac{ Q\big( -(\Wdif+\Weq)\big) }{ Q\big( -\tau \Weq \big)}\bigg)^2 \nonumber \\
    & \qquad + Q\big( \Wdif+\Weq \big) \bigg(\log \frac{ Q\big( \Wdif+\Weq\big) }{ Q\big( \tau \Weq \big)}\bigg)^2 \bigg] \label{eq:1bit_iii_1} \\
        &= 2 \EE\bigg[ Q\big( \Wdif+\Weq \big) \bigg(\log \frac{ Q\big( \Wdif+\Weq\big) }{ Q\big( \tau \Weq \big)}\bigg)^2 \bigg] \label{eq:1bit_iii_2}, 
\end{align}
where \eqref{eq:1bit_iii_2} follows since the distributions of $\Wdif$ and $\Weq$ are symmetric about zero, and the two are independent.

The function $g(x) := - \log Q(x)$ is convex, and hence it lies above any given tangent vector.  This implies that
\begin{align}
    |g(x_1) - g(x_2)| &\le \max\big\{ |g'(x_1)|, |g'(x_2)| \big\} |x_1 - x_2| \\
        &\le \big( |g'(x_1)| + |g'(x_2)| \big)  |x_1 - x_2|, \label{eq:1bit_iii_4}
\end{align}
where $g'(x) = -\frac{\phi(x)}{Q(x)}$ is the derivative of $g$.  Writing the logarithm of the ratio in \eqref{eq:1bit_iii_2} as a difference of logarithms and applying \eqref{eq:1bit_iii_4}, we obtain
\begin{equation}
    V_{\sdif,\seq}(b_s) \le 2(T_1 + T_2),
\end{equation}
where, overloading the notation from part (iii), we define
\begin{align}
    & T_1 := \iint \fdif(\wdif)\feq(\weq) Q(\wdif+\weq) \nonumber \\
        & \times \bigg( \frac{\phi(\wdif + \weq)}{Q(\wdif + \weq)} \bigg)^2 \big( |\wdif| + (1-\tau)|\weq| \big)^2 d\wdif d\weq \label{eq:1bit_iii_T1} \\
    & T_2 := \iint \fdif(\wdif)\feq(\weq) Q(\wdif+\weq) \nonumber \\
        & \times\bigg( \frac{\phi(\tau\weq)}{Q(\tau\weq)} \bigg)^2 \big( |\wdif| + (1-\tau)|\weq| \big)^2 d\wdif d\weq \label{eq:1bit_iii_T2}
\end{align}
with $\fdif$ and $\feq$ denoting the densities of $\Wdif$ and $\Weq$.  The function $Q(x)\big(\frac{\phi(x)}{Q(x)}\big)^2$ lies between $0$ and $\frac{1}{2}$, and hence $T_1 \le \frac{1}{2}\EE\big[ \big( |\Wdif| + (1-\tau)|\Weq| \big)^2 \big]$, yielding
\begin{equation}
    T_1 = O\big( \EE[\Wdif]^2 + (1-\tau)^2 \EE[\Weq]^2 \big).
    \label{eq:1bit_T1bound}
\end{equation}
We will further simplify this expression below, but we first bound $T_2$, which requires more effort.  

We split the integral over $\RR^2$ in \eqref{eq:1bit_iii_T2} according to whether $|\wdif| \le \frac{1}{2}|\weq|$ or $|\wdif| > \frac{1}{2}|\weq|$; the resulting expressions are denoted by $T_{1,1}$ and $T_{1,2}$ respectively.  In each case, we use the following standard bounds on the Q-function (e.g., see \cite{Fan12}):
\begin{align}
    \frac{\phi(\tau\weq)}{Q(\tau\weq)} 
        &\le \begin{cases} 1+\tau\weq & \weq \ge 0 \\ 1 & \weq < 0 \end{cases} \label{eq:1bit_cases1} \\
    Q(\wdif+\weq) 
        &\le \begin{cases} \frac{1}{2}e^{-\frac{(\wdif + \weq)^2}{2}} & \wdif + \weq \ge 0 \\ 1 & \wdif + \weq < 0. \end{cases} \label{eq:1bit_cases2}
\end{align}

To bound $T_{1,1}$, we note that the condition $|\wdif| \le \frac{1}{2}|\weq|$ implies that $\sign(\wdif + \weq) = \sign(\weq)$, and hence only two of the four combinations of the cases in \eqref{eq:1bit_cases1}--\eqref{eq:1bit_cases2} can occur.  When $\weq < 0$, we can use the second of each of these cases to upper bound the integrand in \eqref{eq:1bit_iii_T2} by $\fdif(\wdif)\feq(\weq) \big( |\wdif| + (1-\tau)|\weq| \big)^2$.  On the other hand, when $\weq \ge 0$ we can use the first of each of the cases to upper bound the integrand by 
\begin{multline}
    \fdif(\wdif)\feq(\weq) \frac{1}{2}e^{-\frac{(\wdif + \weq)^2}{2}} \\ \times \big(1+\tau|\weq|\big)^2 \big( |\wdif| + (1-\tau)|\weq| \big)^2.
\end{multline}
Again using the condition $|\wdif| \le \frac{1}{2}|\weq|$, we find that $e^{-\frac{(\wdif + \weq)^2}{2}} \le e^{-\frac{1}{8}\weq^2}$.  Since $\tau \le 1$ by its definition following \eqref{eq:1bit_iii_0}, it follows that $e^{-\frac{(\wdif + \weq)^2}{2}}  \big(1+\tau\weq\big)^2$ is upper bounded by a universal constant, and we are again left only with $\fdif(\wdif)\feq(\weq) \big( |\wdif| + (1-\tau)|\weq| \big)^2$.  Combining the two cases, we conclude that
\begin{equation}
    T_{2,1} = O\big( \EE[\Wdif^2] + (1-\tau)^2 \EE[\Weq^2] \big).
    \label{eq:1bit_T21bound}
\end{equation}

To upper bound $T_{2,2}$, we upper bound the integrand in \eqref{eq:1bit_iii_T2} by
\begin{align}
    &\fdif(\wdif)\feq(\weq) \big(1+\tau|\weq|\big)^2 \big( |\wdif| + (1-\tau)|\weq| \big)^2 \label{eq:1bit_iii_end1} \\
    &\qquad \le \fdif(\wdif)\feq(\weq) \big(1+2|\wdif|\big)^2 \big( 3|\wdif| \big)^2, \label{eq:1bit_iii_end2}
\end{align}
where \eqref{eq:1bit_iii_end1} follows by taking the higher of the two cases in both \eqref{eq:1bit_cases1} and \eqref{eq:1bit_cases2}, and \eqref{eq:1bit_iii_end2} follows since $|\wdif| > \frac{1}{2}|\weq|$ and $\tau \in [0,1]$.  It follows that
\begin{equation}
    T_{2,2} = O\big( \EE[\Wdif^2] + \EE[\Wdif^4] \big).
    \label{eq:1bit_T22bound}
\end{equation}

We now observe that the first two terms in \eqref{eq:1bit_Vlv} account for all of the terms in \eqref{eq:1bit_T1bound}, \eqref{eq:1bit_T21bound} and \eqref{eq:1bit_T22bound} except for $(1-\tau)^2 \EE[\Weq^2]$.  Recalling that $\tau = \frac{1}{1+\sum_{i\in\sdif}b_i^2}$, we see that $(1-\tau)^2 = \Theta(1)$ whenever $\sum_{i\in\sdif}b_i^2 = \Omega(1)$, whereas a Taylor expansion yields $(1-\tau)^2 = \Theta\big( \big(\sum_{i\in\sdif}b_i^2\big)^2 \big)$ whenever $\sum_{i\in\sdif}b_i^2 = o(1)$.  Combining these cases, we obtain the third term in \eqref{eq:1bit_Vlv}; recall that $\sigma^2=1$ throughout this proof.

\section{Proofs of Auxiliary Results for Noiseless Group Testing} \label{sec:PROOFS_GT}

\subsection{Proof of Proposition \ref{prop:gt_boundI}}

As stated in \cite[Eq.~(36)]{Ati12}, we have $I_{\ell} = \big(1-\frac{\nu}{k}\big)^{k-\ell} H_2\big( \big(1-\frac{\nu}{k}\big)^\ell \big)$, where $H_2(p)$ is the binary entropy function.  For $k\to\infty$ and $\frac{\ell}{k}\to\alpha$, we immediately obtain \eqref{eq:gt_I_const} using the limits $\big(1-\frac{\nu}{k}\big)^{k-\ell} \to e^{-(1-\alpha)\nu}$ and $\big(1-\frac{\nu}{k}\big)^\ell \to e^{-\alpha\nu}$, along with the continuity of the binary entropy function.  In the case that $\frac{\ell}{k} \to 0$, the analogous limits are $\big(1-\frac{\nu}{k}\big)^{k-\ell} \to e^{-\nu}$ and $\big(1-\frac{\nu}{k}\big)^\ell = 1 - \frac{\nu\ell}{k}(1+o(1))$, and we obtain \eqref{eq:gt_I_ord} using the fact that $H_2(1-\epsilon) = (-\epsilon\log\epsilon)(1+o(1))$ as $\epsilon\to0$. Note also that $\log\frac{k}{\nu\ell} = \big(\log\frac{k}{\ell}\big)(1+o(1))$ since $\frac{k}{\ell} \to \infty$.

\subsection{Proof of Proposition \ref{prop:conc_gt}}

We begin by evaluating the information density in \eqref{eq:idens_b}; for brevity, we write $\imath_{\ell} := \imath(X_{\sdif};Y|X_{\seq},b_s)$ and $\imath_{\ell}^n := \imath(\Xv_{\sdif};\Yv|\Xv_{\seq},b_s)$.  Recalling that $P_X \sim \mathrm{Bernoulli}\big(\frac{\nu}{k}\big)$, $\ell = o(k)$, and we are considering the noiseless case, we obtain the following:
\begin{enumerate}
    \item We have $X_{\seq} \ne \bzero$ with probability $1-\big(1-\frac{\nu}{k}\big)^{k-\ell} = (1 - e^{-\nu})(1+o(1))$, and in this case we have $\imath_{\ell} = 0$.
    \item Given $X_{\seq} = \bzero$, we have $X_{\sdif} \ne \bzero$ with probability $1-\big(1-\frac{\nu}{k}\big)^{\ell} = \frac{\nu\ell}{k}(1+o(1))$, and in this case we have $\imath_{\ell} = \log\frac{1}{1-(1-\frac{\nu}{k})^{\ell}} = \big(\log\frac{k}{\ell}\big)(1+o(1))$. 
    \item Given $X_{\seq} = \bzero$, we have $X_{\sdif} = \bzero$ with probability $\big(1-\frac{\nu}{k}\big)^{\ell} = 1+o(1)$, and in this case we have $\imath_{\ell} = \log\frac{1}{(1-\frac{\nu}{k})^{\ell}} = \frac{\nu\ell}{k}(1+o(1))$.
\end{enumerate}
The asymptotic identities given here follow from the assumption $\ell = o(k)$, along with standard Taylor expansions.

Let $N_0$ (respectively, $N_1$) be the random number of measurements such that $X_{\seq} = \bzero$ and $X_{\sdif} = \bzero$ (respectively, $X_{\seq} = \bzero$ and $X_{\sdif} \ne \bzero$). For any $\epsilon_1 \in (0,1)$, the above observations imply the following with probability one when $p$ is sufficiently large:
\begin{align}
    \imath_{\ell}^n &\ge N_1 \Big(\log\frac{k}{\ell}\Big) (1-\epsilon_1) + N_0 \nu\frac{\ell}{k} (1-\epsilon_1) \\
                     &\ge N_1 \Big(\log\frac{k}{\ell}\Big) (1-\epsilon_1).
\end{align}
We also have from \eqref{eq:gt_I_ord} that $I_{\ell} \le \big(e^{-\nu}\nu \frac{\ell}{k} \log\frac{k}{\ell}\big)(1+\epsilon_1)$ for sufficiently large $p$.  Combining these, we conclude that
\begin{equation}
    N_1 > n\frac{1+\epsilon_1}{1-\epsilon_1} e^{-\nu}\nu\frac{\ell}{k} (1-\delta_2) \implies \imath_{\ell}^n > nI_{\ell}(1-\delta_2).
\end{equation} 
By considering the contrapositive statement, we have for any $\epsilon_2 > 0$ and sufficiently large $p$ that
\begin{multline}
    \PP\Big[ \imath^n(\Xv_{\sdif}; \Yv | \Xv_{\seq}, b_s) \le nI_{\ell}(1-\delta_{2}) \Big] \\ \le \PP\Big[ N_1 \le ne^{-\nu}\nu\frac{\ell}{k}(1-\delta_2)(1+\epsilon_2) \Big]. \label{eq:gt_conc4}
\end{multline}
By the observations at the start of this subsection, we have $N_1 \sim \mathrm{Binomial}(n,q)$ with $q = e^{-\nu}\nu\frac{\ell}{k} (1+o(1))$.  We can thus further upper bound the right-hand of \eqref{eq:gt_conc4} by
\begin{equation}
    \PP\big[ N_1 \le nq(1-\delta_2(1-\epsilon_3)) \big]
\end{equation}
for any $\epsilon_3\in(0,1)$ and sufficiently large $p$; here we have used the fact that $(1-\delta_2)(1+o(1)) = (1-\delta_2(1+o(1))$, since $\delta_2$ is fixed.  It follows from a standard Chernoff-based tail bound for Binomial random variables (e.g., see \cite[Sec.~4.1]{Mot10}) that
\begin{align}
    &\PP\Big[ \imath^n(\Xv_{\sdif}; \Yv | \Xv_{\seq}, b_s) \le nI_{\ell}(1-\delta_{2}) \Big] \nonumber \\ 
    &\le e^{-nq \big((1-\delta_2(1-\epsilon_3))\log(1-\delta_2(1-\epsilon_3)) + \delta_2(1-\epsilon_3) \big)}.
\end{align}
The proof is concluded by substituting $q = e^{-\nu}\nu\frac{\ell}{k} (1+o(1))$ and noting that $\epsilon_3$ may be arbitrarily small.

\subsection{Proof of Proposition \ref{prop:gt_psi}}

For the first part, we write $\sum_{\ell=1}^{\lfloor \frac{k}{\log k} \rfloor} {k \choose \ell} \psi_{\ell}(n,\delta_{2}^{(1)}) =: T_1 + T_2$, where $T_1$ sums the terms from $1$ to $\lfloor \log k \rfloor$, and $T_2$ sums the terms from $\lfloor \log k \rfloor +1$ to $\lfloor \frac{k}{\log k} \rfloor$.  For each of these, we upper bound the summation by the number of terms times the maximum term.  

For $T_1$, there are at most $\log k$ terms, and we apply \eqref{eq:gt_psi1}, with $\delta_{2,\ell} = \delta_2^{(1)}$.  The term $(1-\delta_2^{(1)})\log(1-\delta_2^{(1)}) + \delta_2^{(1)}$ can be made arbitrarily close to one by choosing $\delta_2^{(1)}$ to be sufficiently close to one.  Writing $\log{k \choose \ell} = \big(\ell\log\frac{k}{\ell}\big)(1+o(1))$ and performing some simple rearrangements, we obtain the following condition for $T_1 \to 0$:
\begin{equation}
    n \ge \max_{\ell} \frac{k\log \frac{k}{\ell} + \frac{k}{\ell}\log\log k }{ e^{-\nu}\nu } (1+\eta_1), \label{eq:gt_psi_pf1}
\end{equation}
where $\eta_1$ may be arbitrarily small.  Note that $\log\log k$ arises as the logarithm of the number of terms in the summation.  We obtain \eqref{eq:gt_n_cond_psi} by noting that this bound is minimized at $\ell=1$ and writing $k\log k = \big(\frac{\theta}{1-\theta} k \log \frac{p}{k}\big) (1+o(1))$, which follows from $k=\Theta(p^\theta)$.

For $T_2$, a similar argument yields \eqref{eq:gt_psi_pf1} with $\frac{1}{\ell}\log k$ in place of $\frac{1}{\ell}\log\log k$; this follows by upper bounding the number of terms in the summation by $k$.  Since $\ell \ge \log k$, we have $\frac{1}{\ell}\log k = O(1)$, and we conclude that $T_2 \to 0$ provided that \eqref{eq:gt_n_cond_psi} holds.

Finally, for the second part of the proposition, we substitute \eqref{eq:gt_psi2}.  By an analogous argument to that leading to \eqref{eq:gt_psi_pf1}, along with the scaling laws of $I_{\ell}$ in \eqref{eq:gt_I_ord}--\eqref{eq:gt_I_const}, it is readily verified that it suffices that $n = \Omega\big( \max_{\ell} \frac{ \ell\log\frac{k}{\ell} }{ 1 + ( \frac{\ell}{k} \log \frac{k}{\ell} )^2 } \big)$ with a sufficiently large implied constant.  Using the fact that $\ell > \frac{k}{\log k}$ for this part, this reduces to $\Omega\big( \frac{ k\log k }{\log\log k} \big)$.  Thus, any $\Omega(k\log k)$ scaling suffices, and the proof is concluded by noting that $\log k = \Theta\big(\log\frac{p}{k}\big)$.

\section{Noisy Group Testing} \label{sec:PROOFS_GT_NOISY}

Here we provide the relevant details for noisy group testing, leading to Corollary \ref{cor:gt_noisy}.  We focus our attention on the parts that differ from the noiseless case.  Throughout the appendix, we use the notation $q_1 \star q_2 := q_1q_2 + (1-q_1)(1-q_2)$.  We work with an arbitrary Bernoulli distribution $P_X \sim \mathrm{Bernoulli}\big(\frac{\nu}{k}\big)$ to begin, and later substitute $\nu = \log 2$. 

Before proceeding, we analyze the values taken by the information density 
$\imath_{\ell} := \imath(X_{\sdif};Y|X_{\seq},b_s)$ (with $\ell := |\sdif|$) given in \eqref{eq:idens_b}, under the model in \eqref{eq:gtn_model}:
\begin{enumerate}
    \item We have $X_{\seq} \ne \bzero$ with probability $1-\big(1-\frac{\nu}{k}\big)^{k-\ell}$, and in this case we have $\imath_{\ell} = 0$.
    \item Given $X_{\seq} = \bzero$, we have the following, where we define $\xi := \big(1-\frac{\nu}{k}\big)^{\ell}$:
    \begin{itemize}
      \item $X_{\sdif} = \bzero \cap Y = 0$ with probability $(1-\rho)\xi$, yielding  $\imath_{\ell} = \log\frac{1-\rho}{(1-\rho)\xi + \rho(1-\xi)}$;
      \item $X_{\sdif} = \bzero \cap Y = 1$ with probability $\rho\xi$, yielding  $\imath_{\ell} = \log\frac{\rho}{\rho\xi + (1-\rho)(1-\xi)}$;
      \item $X_{\sdif} \ne \bzero \cap Y = 0$ with probability $\rho(1-\xi)$, yielding  $\imath_{\ell} = \log\frac{\rho}{(1-\rho)\xi + \rho(1-\xi)}$;
      \item $X_{\sdif} \ne \bzero \cap Y = 1$ with probability $(1-\rho)(1-\xi)$, yielding  $\imath_{\ell} = \log\frac{1-\rho}{\rho\xi + (1-\rho)(1-\xi)}$.
    \end{itemize} 
\end{enumerate}
In the case that $\ell = o(k)$, we can write $\xi = 1 - \frac{\nu\ell}{k} (1+o(1))$, yielding the following simplifications:
\begin{enumerate}
	\item The preceding four probabilities behave as $(1-\rho)\big(1 - \frac{\nu\ell}{k}(1+o(1))\big)$, $\rho\big(1 - \frac{\nu\ell}{k}(1+o(1)) \big)$, $\rho\frac{\nu\ell}{k}(1+o(1))$, and $(1-\rho)\frac{\nu\ell}{k}(1+o(1))$.  
	\item The corresponding information densities behave as $\frac{1-2\rho}{1-\rho}\frac{\nu\ell}{k}(1+o(1))$, $-\frac{1-2\rho}{\rho}\frac{\nu\ell}{k}(1+o(1))$, $-\log\frac{1-\rho}{\rho}(1+o(1))$ and $\log\frac{1-\rho}{\rho}(1+o(1))$.  For example, the first of these follows by writing $\log\frac{1-\rho}{(1-\rho)(1-\frac{\nu\ell}{k}) + \rho\frac{\nu\ell}{k} } = \log\frac{1 - \rho}{1 - \rho - (1-2\rho)\frac{\nu\ell}{k}}$, dividing the numerator and denominator by $1-\rho$, and Taylor expanding the logarithm. 
\end{enumerate}

\subsection{Analogs of Propositions \ref{prop:gt_boundI}--\ref{prop:gt_psi}}

The analog of Proposition \ref{prop:gt_boundI} is as follows. 

\begin{prop} \label{prop:gtn_boundI}
    Under the noisy group testing setup in Section \ref{sec:GROUP_TESTING}, consider arbitrary sequences of sparsity levels $k\to\infty$ and $\ell \in \{1,\dotsc,k\}$ (both indexed by $p$).  If $\frac{\ell}{k} = o(1)$, then 
    \begin{equation}
         I_{\ell} = \bigg(e^{-\nu}\nu \frac{\ell}{k} (1-2\rho)\log\frac{1-\rho}{\rho}\bigg)(1+o(1)). \label{eq:gtn_I_ord}
    \end{equation}
    Moreover, if $\frac{\ell}{k}\to\alpha \in(0,1]$, then
    \begin{equation}
        I_{\ell} = e^{-(1-\alpha)\nu} \big(H_2\big(e^{-\alpha \nu} \star \rho\big) - H_2(\rho)\big) (1+o(1)). \label{eq:gtn_I_const}
   \end{equation}
\end{prop}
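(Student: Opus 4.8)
\textbf{Proof proposal for Proposition \ref{prop:gtn_boundI}.}

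The plan is to reuse the closed-form expression for the mutual information $I_{\ell}$ from the noisy group testing analysis and then take the relevant asymptotic limits, exactly paralleling the proof of Proposition \ref{prop:gt_boundI}. First I would write $I_{\ell} = I(X_{\sdif};Y|X_{\seq},b_s)$ explicitly using the itemized list of information-density values given just above this subsection. Conditioning on whether $X_{\seq} = \bzero$, the conditional mutual information vanishes on the event $X_{\seq}\ne\bzero$ (since $\imath_{\ell}=0$ there), so $I_{\ell} = \big(1-\frac{\nu}{k}\big)^{k-\ell}\, \widetilde{I}_{\ell}$, where $\widetilde{I}_{\ell}$ is the mutual information conditioned on $X_{\seq}=\bzero$. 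On that event the channel from $\openone\{X_{\sdif}\ne\bzero\}$ (a $\mathrm{Bernoulli}(1-\xi)$ input, $\xi := \big(1-\frac{\nu}{k}\big)^{\ell}$) to $Y$ is a binary symmetric channel with crossover probability $\rho$, so $\widetilde{I}_{\ell} = H_2\big((1-\xi)\star\rho\big) - H_2(\rho)$, using the notation $q_1\star q_2 := q_1 q_2 + (1-q_1)(1-q_2)$ and the fact that $(1-\xi)\star\rho = \xi\star\rho$ when one argument is replaced by its complement is handled by symmetry of $\star$ in this role. This gives the exact identity $I_{\ell} = \big(1-\frac{\nu}{k}\big)^{k-\ell}\big(H_2\big(\big(1-\big(1-\tfrac{\nu}{k}\big)^{\ell}\big)\star\rho\big) - H_2(\rho)\big)$, which is the noisy analog of the formula $I_{\ell} = \big(1-\frac{\nu}{k}\big)^{k-\ell}H_2\big(\big(1-\frac{\nu}{k}\big)^{\ell}\big)$ cited from \cite[Eq.~(36)]{Ati12}.

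Next I would take limits in the two regimes. For $\frac{\ell}{k}\to\alpha\in(0,1]$: use $\big(1-\frac{\nu}{k}\big)^{k-\ell}\to e^{-(1-\alpha)\nu}$ and $\big(1-\frac{\nu}{k}\big)^{\ell}\to e^{-\alpha\nu}$, so $1-\big(1-\frac{\nu}{k}\big)^{\ell}\to 1-e^{-\alpha\nu}$; then by continuity of $H_2$ and of $\star$, $H_2\big(\big(1-e^{-\alpha\nu}\big)\star\rho\big) = H_2\big(e^{-\alpha\nu}\star\rho\big)$ (again using that replacing an argument of $\star$ by its complement leaves $H_2(q_1\star q_2)$ unchanged, since $(1-q_1)\star q_2 = 1-(q_1\star q_2)$ and $H_2$ is symmetric about $\tfrac12$). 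This yields \eqref{eq:gtn_I_const}. For $\frac{\ell}{k}=o(1)$: write $\big(1-\frac{\nu}{k}\big)^{k-\ell}\to e^{-\nu}$ and $1-\big(1-\frac{\nu}{k}\big)^{\ell} = \frac{\nu\ell}{k}(1+o(1)) =: \epsilon_k\to0$; then $\epsilon_k\star\rho = \rho + (1-2\rho)\epsilon_k$, and a first-order Taylor expansion of $H_2$ around $\rho$ gives $H_2(\rho + (1-2\rho)\epsilon_k) - H_2(\rho) = (1-2\rho)\epsilon_k\, H_2'(\rho)(1+o(1)) = (1-2\rho)\log\frac{1-\rho}{\rho}\,\epsilon_k(1+o(1))$, using $H_2'(\rho) = \log\frac{1-\rho}{\rho}$. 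Substituting $\epsilon_k = \frac{\nu\ell}{k}(1+o(1))$ yields \eqref{eq:gtn_I_ord}. An alternative to the Taylor expansion would be to recombine the four itemized information-density values weighted by their probabilities and collect the $\frac{\nu\ell}{k}$ terms directly, which is the route that makes the $(1-2\rho)\log\frac{1-\rho}{\rho}$ factor most transparent.

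The main obstacle here is mostly bookkeeping rather than any genuine difficulty: one must be careful that the error term in $\epsilon_k = \frac{\nu\ell}{k}(1+o(1))$ is genuinely multiplicative and uniform enough that it survives the Taylor expansion without interfering with the leading-order constant, and one must justify that the remainder in the second-order Taylor term $O(\epsilon_k^2 \max_{|t-\rho|\le\epsilon_k}|H_2''(t)|)$ is $o(\epsilon_k)$ — which holds since $\rho\in(0,\tfrac12)$ is fixed and bounded away from $0$ and $\tfrac12$, so $H_2''$ is bounded on a neighborhood of $\rho$. I would also note, as in the noiseless proof, that the two derivations can be stated cleanly by first establishing the exact identity and only then passing to the limit, which isolates all the asymptotics into standard expansions of $(1-\nu/k)^{k-\ell}$, $(1-\nu/k)^{\ell}$, and $H_2$.
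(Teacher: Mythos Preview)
Your proposal is correct and follows essentially the same route as the paper: establish the exact identity $I_{\ell} = \big(1-\tfrac{\nu}{k}\big)^{k-\ell}\big(H_2(\xi\star\rho)-H_2(\rho)\big)$ and then pass to the limit, with the only difference being that for $\ell/k=o(1)$ the paper averages the itemized information-density values directly rather than Taylor-expanding the $H_2$ identity --- precisely the alternative you already flag. One minor slip: with the paper's convention $q_1\star q_2 = q_1q_2+(1-q_1)(1-q_2)$ you actually get $\epsilon_k\star\rho = (1-\rho)-(1-2\rho)\epsilon_k$, not $\rho+(1-2\rho)\epsilon_k$, but since $H_2$ is symmetric about $\tfrac12$ your Taylor expansion around $\rho$ goes through unchanged.
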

\begin{proof}
	We obtain \eqref{eq:gtn_I_ord} by recalling that the mutual information is the average of the information density, and applying the above-given asymptotic expansions, along with $1-\big(1-\frac{\nu}{k}\big)^{k-\ell} \to e^{-\nu}$.

    To prove \eqref{eq:gtn_I_const}, we write $I(X_{\sdif};Y|X_{\seq}) = H(Y|X_{\seq}) - H(Y|X_{\seq},X_{\sdif})$.  The system model \eqref{eq:gtn_model} immediately gives $H(Y|X_{\seq},X_{\sdif}) = H_2(\rho)$.  Moreover, a direct calculation reveals that $H(Y|X_{\seq}=x_{\seq})$ equals $H_2(\rho)$ if $x_{\seq}$ has an entry equal to one, and $H_2\big(\xi \star \rho\big)$ otherwise, where we again write $\xi := \big(1-\frac{\nu}{k}\big)^{\ell}$.  The proof is concluded by noting that $\xi \to e^{-\alpha\nu}$ when $\frac{\ell}{k} \to \alpha$, and by similarly noting that $\PP[X_{\seq} = \bzero] = \big(1 - \frac{\nu}{k}\big)^{k - \ell} \to e^{-(1-\alpha)\nu}$. 
\end{proof}

As in the noiseless case, we use Proposition \ref{prop:gen_discrete} to characterize $\psi_{\ell}$ for $\ell > \lfloor \frac{k}{\log k} \rfloor$, and $\psi'_{\ell}$ for $\ell = k$.  For $\ell \le \lfloor \frac{k}{\log k} \rfloor$, we instead use the following.

\begin{prop} \label{prop:conc_gtn}
    Under the noisy group testing setup in Section \ref{sec:GROUP_TESTING}, consider sequences $k\to\infty$ and $\ell$, indexed by $p$, such that $\frac{\ell}{k} \to 0$.  For any $\epsilon > 0$ and $\delta_{2} > 0$ not depending on $p$, the following holds for sufficiently large $p$:
    \begin{multline}
    \PP\Big[ \imath^n(\Xv_{\sdif}; \Yv | \Xv_{\seq}, b_s) \le nI_{\ell}(1-\delta_{2}) \Big] \\ \le \exp\bigg(-n\frac{\ell}{k} e^{-\nu}\nu\Big( \frac{\delta_2^2 (1-2\rho)^2}{2(1+\frac{1}{3}\delta_2(1-2\rho))} \Big)(1-\epsilon)\bigg). \label{eq:conc_gt_noisy}
    \end{multline}
    for all $(\sdif,\seq)$ with $|\sdif|=\ell$.
\end{prop}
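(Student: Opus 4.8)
The plan is to mimic the proof of Proposition \ref{prop:conc_gt} for the noiseless case, but replace the Chernoff bound for a Binomial tail by a Bernstein-type bound, since in the noisy case the information density $\imath_\ell$ takes four values (two of them negative), rather than being essentially nonnegative. First I would recall from the case analysis preceding this subsection (the ``analogs'' discussion for noisy group testing) that, conditioned on $\beta_s = b_s$ and on a single measurement, the information density $\imath_\ell$ takes the value $0$ whenever $X_{\seq}\ne\bzero$ (probability $1-(1-\tfrac{\nu}{k})^{k-\ell} \to 1-e^{-\nu}$), and otherwise takes one of four values whose asymptotic behavior was tabulated: roughly $\pm\frac{1-2\rho}{1-\rho}\frac{\nu\ell}{k}$, $\pm\frac{1-2\rho}{\rho}\frac{\nu\ell}{k}$ (the small ones, when $X_{\sdif}=\bzero$) and $\pm\log\frac{1-\rho}{\rho}$ (the $O(1)$ ones, when $X_{\sdif}\ne\bzero$, which occurs with probability $\frac{\nu\ell}{k}(1+o(1))$ given $X_{\seq}=\bzero$). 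The key structural point is that the ``large'' contributions come only from the event $\{X_{\seq}=\bzero,\ X_{\sdif}\ne\bzero\}$, which has probability $e^{-\nu}\nu\frac{\ell}{k}(1+o(1))$ and is therefore rare; on the complementary event $\imath_\ell$ is bounded by $O(\frac{\nu\ell}{k})$ in magnitude.

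Next I would center: write $W_i := \imath_\ell^{(i)} - I_\ell$ for the $i$-th summand, so that $\imath_\ell^n - nI_\ell = \sum_{i=1}^n W_i$ with $\EE[W_i]=0$, and apply Lemma \ref{lem:bernstein} to $\sum_i (-W_i) \ge \delta_2 I_\ell n$, i.e. to the lower tail. The two ingredients needed are bounds of the form $\sum_i \EE[W_i^2]\le \tau$ and $\sum_i \EE[|W_i|^q]\le \frac{q!}{2}\tau c^{q-2}$. I would compute $\EE[\imath_\ell^2]$: the four nonzero outcomes contribute, respectively, terms of order $\frac{\nu\ell}{k}\cdot(\frac{\nu\ell}{k})^2$ (negligible) from the ``$X_{\sdif}=\bzero$'' branches, and terms of order $\frac{\nu\ell}{k}\cdot(\log\frac{1-\rho}{\rho})^2$ from the ``$X_{\sdif}\ne\bzero$'' branches, so $\EE[\imath_\ell^2] = e^{-\nu}\nu\frac{\ell}{k}\big((1-2\rho)\star\text{stuff}\big)(\log\tfrac{1-\rho}{\rho})^2$-ish; more carefully one gets $\EE[\imath_\ell^2] = e^{-\nu}\nu\frac{\ell}{k}\big(\rho(\log\frac{1-\rho}{\rho})^2 \cdot(\tfrac{1}{\rho}?)...\big)(1+o(1))$ — the exact constant must be matched to the target exponent, and the right way to see it is via the moment generating function rather than crude moment bounds (see below). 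Since $|\imath_\ell| \le \max\{\log\frac{1-\rho}{\rho},\ O(\frac{\nu\ell}{k})\} = \log\frac{1-\rho}{\rho}$ for large $p$ (as $\rho$ is a fixed constant in $(0,\tfrac12)$), we may take $c$ a constant multiple of $\log\frac{1-\rho}{\rho}$ and $\tau = n\cdot\EE[\imath_\ell^2]$, and then the higher moments satisfy $\EE[|W_i|^q]\le (2\log\frac{1-\rho}{\rho})^{q-2}\EE[W_i^2] \le \frac{q!}{2}\tau' c^{q-2}$ automatically.

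Applying Bernstein's inequality \eqref{eq:bernstein} with $t = \delta_2 I_\ell n$, $\tau = c_\rho n \frac{\ell}{k}(1+o(1))$ and $c = c'_\rho$, and substituting $I_\ell = e^{-\nu}\nu\frac{\ell}{k}(1-2\rho)\log\frac{1-\rho}{\rho}(1+o(1))$ from Proposition \ref{prop:gtn_boundI}, I would obtain an exponent of the form $-\frac{(\delta_2 I_\ell n)^2}{2(\tau + c\,\delta_2 I_\ell n)} = -n\frac{\ell}{k}e^{-\nu}\nu\cdot\frac{\delta_2^2(1-2\rho)^2}{2(1 + \tfrac13\delta_2(1-2\rho))}(1-\epsilon)$, where the constants $\tfrac12$ and $\tfrac13$ in the denominator are exactly the constants appearing in the one-sided Bernstein bound when $\tau$ is taken as the true variance $\var[\imath_\ell]$ rather than an overestimate — this is why the sharp route is to compute $\var[\imath_\ell] = \EE[\imath_\ell^2] - I_\ell^2$ and the (essential) supremum of $-\imath_\ell$ precisely. \textbf{The main obstacle} is precisely pinning down these two constants so that the exponent comes out with the stated coefficient $\frac{\delta_2^2(1-2\rho)^2}{2(1+\frac13\delta_2(1-2\rho))}$: one needs $\var[\imath_\ell] = e^{-\nu}\nu\frac{\ell}{k}(1-2\rho)(\log\frac{1-\rho}{\rho})\cdot\big(\text{the right factor}\big)(1+o(1))$ and $\sup(-\imath_\ell) \approx \log\frac{1-\rho}{\rho}$ (attained on the $X_{\sdif}\ne\bzero,\ Y=0$ branch), and then Bennett's/Bernstein's inequality with those parameters yields $\frac{t^2}{2(\var n + \tfrac13 b\, t)}$ with $b$ the bound — so one must use the version of Bernstein giving the $\tfrac13 bt$ denominator, and track that $I_\ell/\var[\imath_\ell] \to \frac{1}{\log\frac{1-\rho}{\rho}}$-type ratios cancel correctly. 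Everything else — the reduction to i.i.d. summands via conditioning on $\beta_s$ (here $\beta_s$ is deterministic), the $(1-\epsilon)$ slack to absorb $o(1)$ factors, and the passage from $\{N_1,\ldots\}$-type counting to the MGF — is routine and parallels Proposition \ref{prop:conc_gt}.
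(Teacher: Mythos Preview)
Your approach is essentially the same as the paper's: both apply the Bennett/Bernstein inequality (the version with the $\tfrac{1}{3}Mt$ term in the denominator) to the centered information density, using the asymptotic case analysis of $\imath_\ell$ to extract the second moment and the almost-sure bound.

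The ``main obstacle'' you flag is not really an obstacle, and your uncertainty about the second-moment constant is self-inflicted. The computation is direct: the $X_{\sdif}=\bzero$ branches contribute $O((\ell/k)^2)$ to $\EE[\imath_\ell^2]$ and are negligible, while the $X_{\sdif}\ne\bzero$ branches contribute
\[
\big(1-\tfrac{\nu}{k}\big)^{k-\ell}\cdot\Big[\rho\cdot\tfrac{\nu\ell}{k}\Big(\log\tfrac{1-\rho}{\rho}\Big)^2 + (1-\rho)\cdot\tfrac{\nu\ell}{k}\Big(\log\tfrac{1-\rho}{\rho}\Big)^2\Big](1+o(1)) = e^{-\nu}\nu\tfrac{\ell}{k}\Big(\log\tfrac{1-\rho}{\rho}\Big)^2(1+o(1)),
\]
since $\rho+(1-\rho)=1$. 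There is no need to distinguish $\var[\imath_\ell]$ from $\EE[\imath_\ell^2]$, because $I_\ell^2 = O((\ell/k)^2)$ is lower order. Likewise the almost-sure bound is simply $M=\log\tfrac{1-\rho}{\rho}(1+o(1))$ from the tabulated values. Substituting $\delta=\delta_2 I_\ell$, $v=e^{-\nu}\nu\tfrac{\ell}{k}(\log\tfrac{1-\rho}{\rho})^2$, $M=\log\tfrac{1-\rho}{\rho}$, and $I_\ell=e^{-\nu}\nu\tfrac{\ell}{k}(1-2\rho)\log\tfrac{1-\rho}{\rho}$ into $\frac{\delta^2}{2(v+\tfrac{1}{3}\delta M)}$, the $(\log\tfrac{1-\rho}{\rho})^2$ factors and one power of $e^{-\nu}\nu\tfrac{\ell}{k}$ cancel, and the stated exponent falls out immediately. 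You do not need the factorial-moment version of Bernstein (Lemma~\ref{lem:bernstein}) or any MGF argument; Bennett's form with a second-moment bound and an almost-sure bound suffices.
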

\begin{proof}
	We make use of the asymptotic identities for $\imath_{\ell}$ at the start of this appendix.  We first note that by simple averaging analogous to that used to obtain \eqref{eq:gtn_I_ord}, we have $v := \EE[\imath_{\ell}^2] = e^{-\nu}\nu \frac{\ell}{k} \big(\log^2\frac{1-\rho}{\rho}\big) (1+o(1))$.  Moreover, we have $\imath_{\ell} \le \big(\log\frac{1-\rho}{\rho}\big)(1+o(1))$ with probability one.  Using the form of Bernstein's inequality based on Bennet's inequality \cite[Sec.~2.7]{Bou13}, we have $\PP[\imath^n \le n(I_{\ell} - \delta)] \exp\big(-n \frac{\delta^2}{2(v + \frac{1}{3}\delta M)} \big)$, where $M$ is any almost-sure upper bound on $\imath_{\ell}$.  Setting $\delta = \delta_2 I_{\ell}$, substituting \eqref{eq:gtn_I_ord} and the preceding expressions for $v$ and $M$, and canceling the common terms in the numerator and denominator, we obtain \eqref{eq:conc_gt_noisy}.
\end{proof}

Letting $\psi_{\ell}$ equal the right-hand side of \eqref{eq:conc_gt_noisy} for $\ell \le \lfloor \frac{k}{\log k} \rfloor$, while being the same as in \eqref{eq:gt_psi2} for $\ell > \lfloor \frac{k}{\log k} \rfloor$, we obtain the following.

\begin{prop} \label{prop:gtn_psi}
    Let $k = \Theta( p^{\theta} )$ for some $\theta \in (0,1)$. 
    
    (i) For any $\eta > 0$ and $\delta_2 \in (0,1)$, there exists a choice of $\epsilon > 0$ in \eqref{eq:gt_psi1} such that $\sum_{\ell=1}^{\lfloor \frac{k}{\log k} \rfloor} {k \choose \ell} \psi_{\ell}(n,\delta_{2}) \to 0$ provided
    \begin{equation}
        n \ge \frac{ 2(1+\frac{1}{3}\delta_2(1-2\rho)) \frac{\theta}{1-\theta}}{ e^{-\nu}\nu \delta_2^2 (1-2\rho)^2 } \, \Big(k\log\frac{p}{k}\Big) (1+\eta). \label{eq:gtn_n_cond_psi}
    \end{equation}
    (ii) For any $\delta_2 \in (0,1)$, we have $\sum_{\lfloor \frac{k}{\log k} \rfloor +1}^k {k \choose \ell} \psi_{\ell}(n,\delta_{2}) \to 0$ provided that $n = \Omega\big(k\log\frac{p}{k}\big)$. 
\end{prop}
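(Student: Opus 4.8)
\textbf{Proof plan for Proposition \ref{prop:gtn_psi}.}

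The plan is to mirror the proof of Proposition \ref{prop:gt_psi} essentially verbatim, making only the changes forced by the different expressions for $I_\ell$ (Proposition \ref{prop:gtn_boundI}) and for the small-$\ell$ concentration bound (Proposition \ref{prop:conc_gtn}). For part (i), I would split the sum $\sum_{\ell=1}^{\lfloor k/\log k\rfloor}{k \choose \ell}\psi_\ell(n,\delta_2)$ into $T_1$ (indices $1$ to $\lfloor\log k\rfloor$) and $T_2$ (indices $\lfloor\log k\rfloor+1$ to $\lfloor k/\log k\rfloor$), and in each case bound the sum by the number of terms times the largest term. For $T_1$, substitute the bound from \eqref{eq:conc_gt_noisy} together with $\log{k\choose \ell}=(\ell\log\frac{k}{\ell})(1+o(1))$; after dividing through one obtains the requirement
\begin{equation}
n \ge \max_{\ell}\frac{k\log\frac{k}{\ell} + \frac{k}{\ell}\log\log k}{ e^{-\nu}\nu \big( \frac{\delta_2^2(1-2\rho)^2}{2(1+\frac13\delta_2(1-2\rho))} \big) }(1+\eta_1), \label{eq:gtn_psi_pf1}
\end{equation}
where the $\log\log k$ accounts for the number of terms and $\eta_1$ may be taken arbitrarily small (this is where the freedom in choosing $\epsilon$ in \eqref{eq:conc_gt_noisy} is used). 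The right-hand side is minimized at $\ell=1$, and $k\log k = \big(\frac{\theta}{1-\theta}k\log\frac{p}{k}\big)(1+o(1))$ since $k=\Theta(p^\theta)$; rearranging $\frac{\delta_2^2(1-2\rho)^2}{2(1+\frac13\delta_2(1-2\rho))}$ into the denominator gives exactly \eqref{eq:gtn_n_cond_psi}. For $T_2$, the only change is that $\log\log k$ is replaced by $\log k$ (bounding the number of terms crudely by $k$), and since $\ell\ge\log k$ the extra term $\frac{1}{\ell}\log k=O(1)$ is negligible, so the same condition \eqref{eq:gtn_n_cond_psi} again suffices for $T_2\to0$.

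For part (ii), I would substitute $\psi_\ell$ from \eqref{eq:gt_psi2} and argue exactly as in the second part of the proof of Proposition \ref{prop:gt_psi}: using the scaling laws for $I_\ell$ in \eqref{eq:gtn_I_ord}--\eqref{eq:gtn_I_const} (which differ from the noiseless ones only by bounded multiplicative constants depending on $\rho$), the condition $\sum_{\ell>\lfloor k/\log k\rfloor}{k\choose\ell}\psi_\ell\to0$ reduces to $n=\Omega\big(\max_\ell \frac{\ell\log\frac{k}{\ell}}{1+(\frac{\ell}{k}\log\frac{k}{\ell})^2}\big)$ with a large enough constant; since $\ell>\frac{k}{\log k}$ here, the maximum is $\Theta\big(\frac{k\log k}{\log\log k}\big)$, so any $n=\Omega(k\log k)=\Omega\big(k\log\frac{p}{k}\big)$ suffices, using $\log k=\Theta\big(\log\frac{p}{k}\big)$.

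I expect the main (and essentially only) obstacle to be bookkeeping rather than conceptual: one must check that all the $(1+o(1))$ factors coming from the Taylor expansions of $\imath_\ell$ and from Proposition \ref{prop:conc_gtn} can be absorbed into the arbitrary $\eta$, and in particular that the fixed constant $\delta_2$ (unlike the noiseless case, where $\delta_2^{(1)}$ was pushed toward one) does not interact badly with the $o(1)$ terms — this is handled exactly as in the noiseless proof, since $\delta_2$ is held fixed while $p\to\infty$. Since the structure is identical to Proposition \ref{prop:gt_psi}, I would state the proof concisely, pointing to Appendix \ref{sec:PROOFS_GT} for the shared details and only writing out the arithmetic that produces the modified denominator $e^{-\nu}\nu\delta_2^2(1-2\rho)^2/\big(2(1+\frac13\delta_2(1-2\rho))\big)$ in \eqref{eq:gtn_n_cond_psi}.
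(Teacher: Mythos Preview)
Your proposal is correct and matches the paper's own proof essentially line for line: the paper simply states that the argument is ``nearly identical to that of Proposition \ref{prop:gt_psi}, except that \eqref{eq:conc_gt_noisy} is used in place of \eqref{eq:conc_gt}, and $\delta_2$ is kept arbitrary in the first part,'' which is precisely the plan you describe. One minor slip (shared with the paper's wording in the noiseless case): the right-hand side of \eqref{eq:gtn_psi_pf1} is \emph{maximized}, not minimized, at $\ell=1$, but the conclusion you draw is the correct one.
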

\begin{proof}
	The proof is nearly identical to that of Proposition \ref{prop:gt_psi}, except that \eqref{eq:conc_gt_noisy} is used in place of \eqref{eq:conc_gt}, and $\delta_2$ is kept arbitrary in the first part.
\end{proof}

Note that the choices of $\delta_2$ in the two cases above need not coincide; see Remark \ref{rem:delta2}.

\subsection{Remaining Details in the Proof of Corollary \ref{cor:gt_noisy}}

Recall that we have set $\nu = \log 2$.  This yields $e^{-\nu}\nu$ = $\frac{\log 2}{2}$, and thus the first term in \eqref{eq:gt_zeta} follows  from \eqref{eq:gtn_n_cond_psi}.

Next, we consider the condition in \eqref{eq:final_ach} with $\ell = |\sdif| \le \lfloor \frac{k}{\log k} \rfloor$.  Setting $\gamma = 0$, letting $\delta_1 \to 0$ sufficiently slowly, applying Stirling's approximation, and substituting \eqref{eq:gtn_I_ord}, we obtain the condition
\begin{equation}
	n \ge \max_{\ell} \frac{ k \log \frac{p}{\ell} + 2k\log k + 2\frac{k}{\ell}\log k}{ e^{-\nu}\nu (1-2\rho)\log\frac{1-\rho}{\rho} (1-\delta_2) } (1+o(1)). \label{eq:gtn_main_cond0}
\end{equation}
This is maximized for $\ell = 1$, thus yielding the second term in \eqref{eq:gt_zeta} upon writing $k\log k = \frac{\theta}{1-\theta} \big(k\log\frac{p}{k}\big) (1+o(1))$ and $k \log p = \frac{1}{1-\theta} \big(k\log\frac{p}{k}\big) (1+o(1))$ (since $k=\Theta(p^{\theta})$).

Finally, we consider \eqref{eq:final_ach} with $\ell > \lfloor \frac{k}{\log k} \rfloor$.  In this case, the numerator is dominated by the first term, and for the case that $\frac{\ell}{k} \to \alpha\in(0,1]$, we obtain the  condition 
\begin{equation}
	n \ge \frac{ \alpha k \log \frac{p}{k}}{ e^{-(1-\alpha)\nu} \big(H_2(e^{-\alpha \nu} \star \rho) - H_2(\rho)\big) (1-\delta_2) } (1+o(1)), \label{eq:gtn_main_cond1}
\end{equation}
where we have used \eqref{eq:gtn_I_const}. For the case that $\frac{\ell}{k} \to 0$ with $\ell > \lfloor \frac{k}{\log k} \rfloor$, we obtain a condition of the form \eqref{eq:gtn_main_cond0} where only the first term of the numerator is kept.  Such a condition is clearly dominated by \eqref{eq:gtn_main_cond0}. 

Using the result in \cite[Thm.~3a]{Mal78} in the limiting case that the number of defective items grows large, we have for the worst-case choice of $\alpha \in [0,1]$ and an optimized choice of $\nu > 0$ that the minimax threshold resulting from \eqref{eq:gtn_main_cond1} is obtained with $\alpha = 1$ and $\nu = \log 2$.  Substituting these values yields the second term in \eqref{eq:gtn_ach}.


\subsection{An Auxiliary Result for Comparing the Terms}

The following result allows us to compare the terms appearing in the achievability part of Corollary \ref{cor:gt_noisy}.

\begin{prop} \label{prop:gt_cmp}
	For all $\rho \in (0,0.5)$, we have
	\begin{equation}
		(1-2\rho) \log\frac{1-\rho}{\rho} \ge 4\big(\log 2 - H_2(\rho)\big).
	\end{equation}
\end{prop}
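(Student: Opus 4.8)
The plan is to prove the inequality by reducing it to a statement about a single-variable function and verifying that function is nonnegative on $(0,1/2)$. Write $\rho = \frac{1}{2}(1-t)$ with $t \in (0,1)$, so that $1-2\rho = t$ and $\frac{1-\rho}{\rho} = \frac{1+t}{1-t}$. The left-hand side becomes $t\log\frac{1+t}{1-t}$. For the right-hand side, note $H_2(\rho) = \log 2 - \frac{1}{2}\big[(1+t)\log(1+t) + (1-t)\log(1-t)\big]$, so $\log 2 - H_2(\rho) = \frac{1}{2}\big[(1+t)\log(1+t)+(1-t)\log(1-t)\big]$. Hence the claimed inequality is equivalent to
\begin{equation*}
t\log\frac{1+t}{1-t} \ge 2\big[(1+t)\log(1+t)+(1-t)\log(1-t)\big] \qquad (t\in(0,1)).
\end{equation*}
Define $\phi(t) := t\log\frac{1+t}{1-t} - 2(1+t)\log(1+t) - 2(1-t)\log(1-t)$; the goal is to show $\phi(t)\ge 0$ on $(0,1)$, with $\phi(0)=0$.

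The main step is a power-series (Taylor) comparison. Both sides have convergent Maclaurin expansions on $(-1,1)$. Using $\log\frac{1+t}{1-t} = 2\sum_{m\ge 0}\frac{t^{2m+1}}{2m+1}$, the left-hand term is $t\log\frac{1+t}{1-t} = 2\sum_{m\ge 0}\frac{t^{2m+2}}{2m+1}$. For the right-hand side, $(1+t)\log(1+t) + (1-t)\log(1-t)$ is even in $t$, and one computes its expansion as $\sum_{m\ge 1}\frac{t^{2m}}{m(2m-1)}$ (this follows from differentiating and reindexing, or directly from the known series for $x\log(1+x)$). Thus I would show that the coefficient of $t^{2m}$ (for $m\ge 1$) in $\phi(t)$ equals $\frac{2}{2m-1} - \frac{2}{m(2m-1)} = \frac{2(m-1)}{m(2m-1)} \ge 0$, with equality only at $m=1$. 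Since every coefficient in the power series of $\phi$ is nonnegative and $\phi(0)=0$, we get $\phi(t)\ge 0$ for all $t\in[0,1)$, which is exactly the desired inequality after undoing the substitution.

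I expect the coefficient bookkeeping to be the only real obstacle — in particular getting the series for $(1+t)\log(1+t)+(1-t)\log(1-t)$ correct and lined up index-for-index with the series for $t\log\frac{1+t}{1-t}$. As a safeguard, if the series manipulation turns out to be delicate to present cleanly, an alternative is a derivative argument: show $\phi(0)=0$ and $\phi'(t) = \log\frac{1+t}{1-t} + \frac{2t}{1-t^2} - 2\log(1+t) + 2\log(1-t) = \frac{2t}{1-t^2} - \log\frac{1+t}{1-t} \ge 0$, the last inequality being the elementary fact that $\operatorname{artanh}(t) = \frac12\log\frac{1+t}{1-t} \le \frac{t}{1-t^2}$ on $(0,1)$ (which itself follows from term-by-term comparison of $\sum \frac{t^{2m+1}}{2m+1}$ against $t\sum t^{2m} = \frac{t}{1-t^2}$, valid since $\frac{1}{2m+1}\le 1$). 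Either route closes the proof; I would lead with whichever renders more compactly, and I anticipate the derivative version is slightly shorter.
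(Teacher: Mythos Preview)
Your proof is correct; both the power-series route and the derivative route close the argument cleanly, and your coefficient computation $\frac{2(m-1)}{m(2m-1)}\ge 0$ checks out.

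The paper takes a different tack. It first rewrites the left-hand side algebraically as $\log\frac{1}{\rho(1-\rho)} - 2H_2(\rho)$, so the inequality becomes $\log\frac{1}{\rho(1-\rho)} + 2H_2(\rho) \ge 4\log 2$, and then asserts (without details) that this follows because the function on the left is symmetric about $\rho=1/2$ and attains its minimum there, where it equals $4\log 2$. Your substitution $\rho = \tfrac12(1-t)$ is essentially this same symmetrization, but you go further: instead of leaving ``the minimum is at $\rho=1/2$'' as an exercise, you prove it outright via nonnegative Taylor coefficients (or equivalently $\phi'(t)\ge 0$). So your argument is more self-contained than the paper's, at the cost of a few extra lines of series bookkeeping. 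The paper's version is shorter to state but relies on the reader to supply the calculus; yours actually does the calculus.
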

\begin{proof}
	By some simple manipulations, the left-hand side can be written as $\log\frac{1}{\rho(1-\rho)} - 2H_2(\rho)$, and we may thus equivalently prove that $\log\frac{1}{\rho(1-\rho)} + 2H_2(\rho) \ge 4\log 2$.  This, in turn, can be verified by showing that the minimum of the function $\log\frac{1}{\rho(1-\rho)} + 2H_2(\rho)$ occurs at $\rho=0.5$, i.e., the point about which it is symmetric.
\end{proof} 

\section*{Acknowledgment}

We gratefully acknowledge Ya-Ping Hsieh for helpful comments and suggestions.

\bibliographystyle{IEEEtran}
\bibliography{../JS_References}

\begin{IEEEbiographynophoto}{Jonathan Scarlett}
(S'14 -- M'15) received 
the B.Eng. degree in electrical engineering and the B.Sci. degree in 
computer science from the University of Melbourne, Australia. In 2011, 
he was a research assistant at the Department of Electrical \& Electronic 
Engineering, University of Melbourne.  From October 2011 to August 2014,  
he was a Ph.D. student in the Signal Processing and Communications Group
at the University of Cambridge, United Kingdom. He
is now a post-doctoral researcher with the Laboratory for Information
and Inference Systems at the \'Ecole Polytechnique F\'ed\'erale de Lausanne,
Switzerland.  His research interests are in the areas of information theory, 
signal processing, machine learning, and high-dimensional statistics. 
He received the Cambridge Australia Poynton International Scholarship, and the EPFL Fellows postdoctoral fellowship co-funded by Marie Sk{\l}odowska-Curie.
\end{IEEEbiographynophoto}

\begin{IEEEbiographynophoto}{Volkan Cevher}
(SM'10) received the B.Sc. (valedictorian)
in electrical engineering from Bilkent
University in Ankara, Turkey, in 1999 and the Ph.D.
in electrical and computer engineering from the
Georgia Institute of Technology in Atlanta, GA in
2005. He was a Research Scientist with the University
of Maryland, College Park from 2006-2007 and also
with Rice University in Houston, TX, from 2008-2009.
Currently, he is an Associate Professor at the
Swiss Federal Institute of Technology Lausanne and
a Faculty Fellow in the Electrical and Computer
Engineering Department at Rice University. His research interests include signal
processing theory, machine learning, convex optimization, and information
theory. Dr. Cevher was the recipient of a Best Paper Award at SPARS in 2009,
a Best Paper Award at CAMSAP in 2015, and an ERC StG in 2011.
\end{IEEEbiographynophoto}
 
\end{document}